\long\def\comment#1{}
\long\def\comment#1{}
\newtheorem{algorithm}{Algorithm}
\newtheorem{theorem}{Theorem}
\newtheorem{corollary}{Corollary}
\newtheorem{lemma}{Lemma}
\theoremstyle{definition}
\newtheorem{remark}{Comment}
\newcommand{\be}{\begin{eqnarray}}
\newcommand{\ee}{\end{eqnarray}}
\newcommand{\XX}{\mathcal{X}}
\newcommand{\ba}{\begin{array}}
\newcommand{\ea}{\end{array}}
\newcommand{\bs}{\begin{align}\begin{split}\nonumber}
\newcommand{\bsnumber}{\begin{align}\begin{split}}
\newcommand{\es}{\end{split}\end{align}}
\renewcommand{\(}{\left(}
\renewcommand{\)}{\right)}
\renewcommand{\[}{\left[}
\renewcommand{\]}{\right]}
\renewcommand{\hat}{\widehat}
\newcommand{\N}{\mathcal{N}}
\newcommand{\Z}{G}
\newcommand{\Gn}{\mathbb{G}_n}
\newcommand{\Pn}{\mathbb{P}_n}
\newcommand{\Pp}{P}
\newcommand{\En}{\mathbb{E}_n}
\newcommand{\Ep}{E}
\newcommand{\underf}{\underline{f}}
\newcommand{\Uniform}{{\text{Uniform}}}
\def\RR{ {\Bbb{R}}}
\def\hn{{h}}
\begin{document}

\title[]{Conditional Quantile Processes based on Series or Many Regressors}

\author[]{Alexandre Belloni \and Victor Chernozhukov \and Denis Chetverikov  \and  \\ Iv\'an Fern\'andez-Val}

\date{\tiny{\sc first version} May, 2006,  {\sc this version} of  \today.
The main results of this paper, particularly the pivotal method for
inference based on the entire quantile regression process, were
first presented at the NAWM Econometric Society, New Orleans, January, 2008
and also at the Stats in the Chateau in September 2009. We
are grateful to   Arun Chandraksekhar, Ye Luo, Denis Tkachenko, and  Sami Stouli for
careful readings of several versions of the paper. We thank the journal editor, two anonymous referees, Gary Chamberlain, Andrew Chesher, Holger Dette,
Roger Koenker, Tatiana Komarova, Arthur Lewbel, Oliver Linton,  Whitney Newey, Zhongjun Qu, and  seminar participants at the Econometric Society meeting, CEME Econometrics of Demand conference, CEMMAP master-class, CIREQ High-Dimensional Problems in Econometrics conference, ERCIM conference, ISI World Statistics Congress, Oberwolfach Frontiers in Quantile Regression workshop, Stats in the Chateau, Austin, BU, CEMFI, Columbia, Duke, Harvard/MIT, NUS, Rutgers, Sciences Po, SMU, Upenn, Virginia, and Yale for many
useful suggestions. We are grateful to Adonis Yatchew for giving us permission to use the data set in the empirical application. We gratefully acknowledge research support from
the NSF. The \texttt{R} package \texttt{quantreg.nonpar}  implements some of the methods of this paper \cite{RJ06}.}

\maketitle

\begin{abstract}
\footnotesize{Quantile regression (QR) is a principal regression
method for analyzing  the impact of covariates on outcomes. The
impact is described by the conditional quantile function and its
functionals. In this paper we develop the nonparametric QR-series
framework, covering many regressors as a special case, for
performing inference on the entire conditional quantile function and
its linear functionals.  In this framework, we approximate the
entire conditional quantile function by a linear combination of
series terms with quantile-specific coefficients and estimate the
function-valued coefficients from the data.  We develop large sample
theory for the QR-series coefficient process, namely we obtain
uniform strong approximations to the QR-series coefficient
process by conditionally pivotal and Gaussian processes. Based on these two strong approximations, or couplings, we develop four resampling methods (pivotal, gradient bootstrap, Gaussian, and weighted bootstrap) that can be used for inference on the entire QR-series coefficient function.

We apply these results to obtain estimation and  inference methods
for linear functionals of the conditional quantile function, such as
the conditional quantile function itself, its partial derivatives,
average partial derivatives, and conditional average partial
derivatives. Specifically, we obtain uniform rates of convergence and show how to use the four resampling methods mentioned above for inference on the functionals. All of the above results are for function-valued
parameters, holding uniformly in both the quantile index and the
covariate value, and covering the pointwise case as a by-product. We
demonstrate the practical utility of these results  with an
empirical example, where we estimate the price elasticity function
and test the Slutsky condition of the individual demand for gasoline, as indexed by the individual
unobserved propensity for gasoline consumption.  }



\end{abstract}

\section{Introduction}

Quantile regression (QR) is a principal method for
analyzing  the impact of covariates on outcomes, particularly when
the impact may be heterogeneous. This impact is characterized by
the quantile function of the conditional distribution of the outcome given covariates and its functionals
(Arias, Hallock and Sosa-Escudero \cite{AHS-E2001}, Buchinsky \cite{Buchinsky1994} and Koenker \cite{K2005}). For example, we can model
the log of the individual demand for some good, $Y$, as a function
of the price of the good, the income of the individual, and other
 observed individual characteristics, $X$, and an unobserved
preference  for consuming the good, $U$, as
$$
Y = Q(U,X),
$$
where the function $Q$ is strictly increasing in the  unobservable
$U$. With the normalization that $U\sim \Uniform(0,1)$ and the
assumption that $U$ and $X$ are independent, the function $Q(u,x)$
is the $u$-th quantile of the conditional distribution of $Y$ given $X = x$, i.e. $Q(u,x) =
Q_{Y | X}(u | x)$. This function can be used for policy analysis.
For example, we can determine how changes in taxes for the good
could impact demand heterogeneously across individuals.

In this paper we develop the nonparametric QR-series framework for
performing inference  on the entire conditional quantile function $Q(u,x)$
and its linear functionals.  In this framework, we approximate $Q(u,x)$ by a linear
combination of series terms, $Z(x)'\beta(u)$. The vector $Z(x)$
includes transformations of $x$ that have good approximation
properties such as powers, trigonometrics, local polynomials,
splines, and/or wavelets. The function $u \mapsto \beta(u)$ contains
quantile-specific coefficients that can be estimated from the data
using the QR estimator of Koenker and Bassett \cite{KB78}.  As the
number of series terms grows, the approximation error $Q(u,x)
- Z(x)'\beta(u)$ decreases, approaching zero in the limit. By
controlling the growth of the number of terms, we can obtain
consistent estimators and perform inference on the entire
conditional quantile function and its linear functionals. The QR-series framework also covers as a special case the so called many
regressors model, which is motivated by many new types of data that
emerge in the new information age, such as scanner and online
shopping data.

We describe now the main results in more detail. Let
$u \mapsto \hat\beta(u)$ denote the QR estimator of $u \mapsto \beta(u)$. The
first set of results provides large-sample theory for the normalized \textit{QR-series coefficient process} of increasing dimension $u \mapsto \sqrt{n}(\widehat
\beta(u) - \beta(u))$ that can be used to perform inference on the function $u \mapsto \beta(u)$. We note that inference on the function $u \mapsto \beta(u)$, in particular simultaneous inference on the parameters $\beta(u)$ that holds uniformly over all $u\in\mathcal U$, where $\mathcal U$ is a set of quantile indices of interest, is difficult because the standard asymptotic theory (van der Vaart and Wellner \cite{vdV-W}) based on limit distributions does not help here as the process $u \mapsto \sqrt n(\hat\beta(u) - \beta(u))$ in general does not have a limit distribution, even after an appropriate normalization. Instead, we develop high-quality inferential procedures based on the idea of coupling. The coupling is a construction of two processes on the same probability space that are  uniformly close to each other with high probability. Typically, one of the processes is the process of interest and the other one is a process whose distribution is known up-to a relatively small number of parameters that can be consistently estimated from the data. Thus, being able to construct an appropriate coupling means that we are able to approximate the distribution of the process of interest  by simulating the distribution of the coupling process from the data.

In this paper, we develop two couplings: pivotal and Gaussian, that is, for each sample size $n$, we construct a pivotal process and a Gaussian process on the same probability space as the data that are uniformly close to the process $u \mapsto \sqrt n(\hat\beta(u) - \beta(u))$ with high probability. In other words, these pivotal and Gaussian processes strongly approximate the process $u \mapsto \sqrt n(\hat\beta(u) - \beta(u))$. In addition, we develop four resampling methods (pivotal, gradient bootstrap, Gaussian, and weighted bootstrap) that allow us to approximately simulate the distribution of the pivotal (first two methods) and the Gaussian (last two methods) processes. These results provide an inference theory for the function $u \mapsto \beta(u)$ and also help to develop a unified feasible inference theory for linear functionals of the conditional quantile functions $x \mapsto Q(u,x)$, $u\in\mathcal U$, using any of the four proposed resampling methods. To the best of our knowledge, all of the above results are new.

The existence of the pivotal coupling emerges from the special
nature of QR, where a (sub) gradient of the sample objective
function evaluated at the true values of $\beta(u)$ is pivotal conditional on the
regressors, up-to an approximation error. This coupling allows us to perform high-quality inference based on pivotal and gradient bootstrap methods without
even resorting to Gaussian approximations.  We also show that the
gradient bootstrap method, originally introduced by Parzen, Wei and Ying
\cite{ParzenWeiYing1994} in the parametric context, is effectively a
means of carrying out the conditionally pivotal approximation
without explicitly estimating Jacobian matrices, which may be difficult in the quantile regression context. The conditions for
validity of the pivotal and gradient bootstrap methods require only a mild restriction on the
growth of the number of series terms in relation to the sample size.
To obtain the Gaussian coupling, we use chaining arguments and Yurinskii's
construction. This coupling implies that one can use the Gaussian method to perform inference on the function $u \mapsto \beta(u)$, and we also use this coupling to show that the weighted bootstrap method works to
approximate the distribution of the whole QR-series coefficient process for the same reason as the Gaussian
method. The conditions for validity of the Gaussian and
weighted bootstrap methods, however, may be stronger than those for the pivotal and gradient bootstrap methods.

As a corollary of our results on the process $u \mapsto \sqrt n(\hat\beta(u) - \beta(u))$, we also demonstrate that the QR-series estimator $x\mapsto Z(x)'\hat\beta(u)$ of the function $x\mapsto Q(u,x)$ based on either polynomials or splines has the fastest possible rate of convergence in the $L^2$ norm and that the QR-series estimator based on splines has the fastest possible rate of convergence in the sup norm in H\"{o}lder smoothness classes. These findings on optimality of series estimators in the quantile regression setting complement results in the literature on optimality of series estimators in the mean regression setting; see Newey \cite{W97}, Huang \cite{H98}, Cattaneo and Farrell \cite{CF13}, Belloni et al \cite{BelloniChenChernozhukov2009}, and Chen  and Christensen \cite{CC13}. In particular, our result on optimality in the sup norm is a major extension of Huang's work \cite{H03} on mean regression, as it requires us to establish some fine properties of the QR-series approximation; see the next section for details.

The second set of results provides estimation and inference methods
for linear functionals of the conditional quantile functions,
including
\begin{itemize}
\item[(i)] the conditional quantile function itself,  $(u,x) \mapsto Q(u, x)$,
\item[(ii)] the partial derivative function,  $(u,x) \mapsto \partial_{x_k} Q(u, x)$,
\item[(iii)] the average partial derivative function,  $u \mapsto \int \partial_{x_k} Q(u, x) d \mu(x)$, and
\item[(iv)] the conditional average partial derivative, $(u,x_{k}) \mapsto \int \partial_{x_k} Q(u, x) d \mu(x|x_k)$,
\end{itemize}where $\mu$ is a given measure and $x_k$ is the $k$-th component of $x$.
Specifically, we derive the pointwise rate of convergence and asymptotic normality of the QR-series estimators of the linear functionals. In addition, using our results on the QR-series coefficient process $u \mapsto \sqrt n(\widehat\beta(u) - \beta(u))$, we derive uniform rate of convergence and large sample inference procedures based on the pivotal, gradient bootstrap, Gaussian, and weighted bootstrap methods for the QR-series estimators of the linear functionals. These results provide solutions to a wide range of inference problems. For illustration purposes, we demonstrate how to use these results to construct uniform confidence bands for function-valued linear functionals and how to test shape constraints on the conditional quantile function $x\mapsto Q(u,x)$.
It is noteworthy that all of the above results apply to
function-valued parameters, holding uniformly in both the quantile
index $u$ and the covariate value $x$. We also emphasize that although we do not treat non-linear/non-smooth functionals in this paper, our results on couplings and resampling methods for the process $u \mapsto \sqrt n(\hat\beta(u) - \beta(u))$ are useful for treatment of such functionals.

The paper contributes and builds on the existing important
literature on  conditional quantile estimation.  First and foremost,
we build on the work of He and Shao \cite{HS00} that studied the
many regressors model and gave pointwise limit theorems for the QR
estimator in the case where only one quantile index $u$ is of interest.  We go beyond the
many regressors model to the series model and develop large sample
estimation and inference results for the entire QR process. We also
develop analogous estimation and inference results for the
conditional quantile function and its linear functionals, such as
derivatives, average derivatives, conditional average derivatives,
and others. None of these results were available in the previous
work. We also build on Lee \cite{Lee2003} that studied QR estimation
of partially linear models in the series framework for a single quantile index $u$, and on Horowitz and Lee \cite{HL2005} that studied
nonparametric QR estimation of additive quantile models for a single
quantile index $u$ in a series framework.  Our framework covers these
partially linear models and additive models as important special
cases, and allows us to perform inference on a considerably richer
set of functionals, uniformly across covariate values and a
continuum of quantile indices.   After the first version of this paper appeared in \cite{BCCF11}, Chao, Volgushev and Cheng  \cite{Chao-17} developed related results for QR-series coefficient processes based on weak approximations.\footnote{We refer the reader to  \cite{Chao-17} for a more detailed comparison with our results.}  In a more applied side, Koenker and Schorfheide \cite{KS1994}
 used a smoothing splines QR method to estimate the conditional quantile functions of global temperature over the last century.
 Other important work includes Stute \cite{S86},
Chaudhuri \cite{Chaudhuri-1991}, Chaudhuri, Doksum and Samarov
\cite{Chaudhuri-Doksum-Samarov-1997},  H\"{a}rdle, Ritov, and Song
\cite{hardle-ritov-song2009}, Horderlein and Mammen \cite{HoderleinMammen2009}, Cattaneo, Crump, and Jansson
\cite{cattaneo-crump-jansson2010}, Kong, Linton, and Xia
\cite{kong-linton-xia-2010}, Qu and Yoon \cite{QuYoon2011}, and Guerre and Sabbah \cite{GS12}, among others, but these papers focused on local, non-series, methods.

 Our work also relies on the series literature,
at least in a motivational and conceptual sense. In particular, we
rely on the work of Stone \cite{Stone1982}, Andrews
\cite{Andews1991}, Newey \cite{W97}, Chen and Shen
\cite{Chen-Shen1998}, Chen \cite{Chen2006} and others that
rigorously motivated the series framework as an approximation scheme
and gave pointwise normality results for least squares estimators,
and on Chen \cite{Chen2006} and van de Geer \cite{Geer2002} that
gave (non-uniform) consistency and rate results for general series
estimators, including quantile regression for the case of a single
quantile index $u$.  White \cite{White1992} established non-uniform
consistency of nonparametric estimators of the conditional quantile
function based on a nonlinear series approximation using artificial
neural networks. In contrast to the previous results, our rate
results are uniform in covariate values and quantile indices, and
cover both the quantile function and its functionals. Moreover, we
not only provide estimation rate results, but also derive a full set
of results on feasible inference based on the couplings for the process $u \mapsto \sqrt n(\hat\beta(u) - \beta(u))$.

While relying on previous work for motivation, our results require
to  develop both new proof techniques and new approaches to
inference.  In particular, our proof techniques rely on new maximal
inequalities for function classes with growing moments and uniform
entropy. In addition, as explained above, and in contrast to previous papers, our inference results heavily rely on the idea of the couplings. Yurinskii's coupling was previously used by Chernozhukov, Lee and Rosen \cite{CLR2009} to obtain a Gaussian coupling for the
least squares series estimator, but the use of this technique in our
context is new and much more involved. Thus, we approximate an
entire QR-series coefficient process of an increasing dimension, instead of a
vector of increasing dimension, by a Gaussian process. Finally, it
is noteworthy that our uniform inference results on functionals,
where uniformity is over covariate values, have not had analogs
even in the least squares series literature until recently (the extension of our results
to least squares has been recently published in Belloni et al \cite{BelloniChenChernozhukov2009}).

The results developed in this paper for series (global) estimation are of interest even though some of the results have analogs in the literature on kernel (local) estimation, because series estimators have several attractive features that are not shared by kernel estimators. First, series methods represent the estimate of the whole conditional quantile function $x\mapsto Q(u,x)$ and its linear functionals via a relatively small set of parameters, that is, the estimates $\widehat \beta(u)$ of the QR-series coefficients $\beta(u)$. As long as these parameters are reported, any researcher will be able to calculate the value of the estimate, for example, of $Q(u,x)$ for any $x\in\mathcal X$, which is very convenient. Second, series methods allow to easily impose shape constraints. For example, if it is known that the function $x\mapsto Q(u,x)$ is concave, one can simply impose this constraint on the QR-series optimization problem to obtain concave estimates (imposing shape constraints is especially convenient when B-splines are used; see DeVore and Lorentz \cite{DL93}). Third, as we demonstrate in the empirical example in Section \ref{sub: empirical example}, when the vector $X$ contains many controls in addition to one main covariate of interest, series methods are particularly convenient to impose a partial linear form on the functions $x\mapsto Q(u,x)$, which helps curb the curse of dimensionality.

This paper does not deal with sparse models, where there are some
key series terms and many ``non-key'' series terms which ideally
should be omitted from estimation. In these settings, the goal is to
find and indeed remove most of the ``non-key'' series terms before
proceeding with estimation. Belloni and Chernozhukov \cite{BC-SparseQR} obtained rate results
for quantile regression estimators in this case, but did not provide
inference results. Even though our paper does not explicitly deal
with inference in sparse models after model selection, the methods
and bounds provided herein are useful for analyzing this problem.
Investigating this matter rigorously is a challenging issue, since
it needs to take into account the model selection mistakes in
estimation, and is beyond the scope of the present paper; however,
it is a subject of our ongoing research, see Belloni, Chernozhukov and Kato \cite{belloni2013valid}.

{\bf Plan of the paper.}  The rest of the paper is organized as follows.
In Section \ref{Sec:Series}, we describe the nonparametric QR-series
model and estimators. In Section \ref{Sec:theory_coeff}, we derive
asymptotic theory for the QR-series processes. In Section
\ref{Sec:Functionals}, we give estimation and inference theory for
linear functionals of the conditional quantile function. In Section \ref{Sec:examples}, we present an empirical
application to the demand of gasoline and a computational experiment
calibrated to the application. The computational algorithms to
implement our inference methods are collected in the Appendix. The Supplemental Material \cite{BCCF16} contains some further results, and the proofs of the main results.


{\bf Notation.} In what follows, for all $x = (x_1,\dots,x_m)' \in\mathbb R^m$, we use $\|x\|$ to denote the Euclidean norm of $x$, that is, $\|x\| = (x_1^2 + \dots + x_m^2)^{1/2}$, and we use $\|x\|_{\infty}$ to denote the sup norm of $x$, that is, $\|x\|_{\infty} = \max_{1\leq j\leq m}|x_j|$. Also, we use
$S^{m-1}$ to denote the unit sphere  in $\RR^m$, that is, $S^{m-1} = \{x\in \mathbb R^m\colon \|x\| = 1\}$, and for $r>0$, we use $B_m(0,r)$ to denote the ball in $\mathbb R^m$ with center at $0$ and radius $r$, that is, $B_m(0,r) = \{x\in\mathbb R^m\colon \|x\|\leq r\}$. For all $m\times m$-dimensional matrices, we use $\|A\|$ to denote the operator norm of $A$ (also known as the spectral norm), that is, $\|A\| = \sup_{\alpha\in S^{m-1}}\|A\alpha\|$. For a set $I$, ${\rm diam}(I)=\sup_{v,\bar
v\in I}\|v-\bar v\|$ denotes the diameter of $I$, and
$\text{int}(I)$ denotes the interior of $I$. For any two real
numbers $a$ and $b$, $a\vee b = \max\{a,b\}$ and $a\wedge b =
\min\{a,b\}$. The relation $a_n \lesssim b_n $ means
that the inequality $a_n \leq C b_n$ holds for all $n$ with a
constant $C$ that is independent of $n$. We denote by $P^*$ the probability measure
induced by conditioning on realization of the data
$\mathcal{D}_n = (Y_i,X_i)_{i=1}^n$. We say that a
random variable $\Delta_n= o_{P^*}(1)$ in $P$-probability if for any $\epsilon>0$, we have
$P^*(|\Delta_n|>\epsilon) = o_P(1)$. We typically shall omit the
qualifier ``in $P$-probability." The operator $\Ep$ denotes the
expectation with respect to the probability measure $P$, $\En$
denotes the expectation with respect to the empirical measure, and
$\Gn$ denotes $\sqrt{n}(\En - \Ep)$. Finally, we use $\varepsilon'$ to denote a constant that depends only on the constant $\varepsilon$ but is such that its value can change at each appearance.

\section{Quantile Regression Series Framework}\label{Sec:Series}
\subsection{Model}
We consider the sequence of models indexed by the sample size $n$:
\begin{equation}\label{eq: model}
Y_{i,n} = Q_n(U_{i,n},X_{i,n}),\quad i=1,\dots,n,
\end{equation}
where $Y_{i,n}$ is a response variable, $X_{i,n}$ is a $d_n$-dimensional vector of covariates (elementary regressors) with the support $\XX_n\subset \RR^{d_n}$, $U_{i,n}$ is an unobservable individual ranking that is distributed uniformly on $(0,1)$ and is independent of $X_{i,n}$, and $Q_n\colon [0,1]\times\XX_n\to\RR$ is an unknown function such that for all $x\in\XX_n$, the function $u\mapsto Q_n(u,x)$ is strictly increasing. Since $U_{i,n} | X_{i,n} \sim \Uniform (0,1)$,  for each $u\in(0,1)$, $x \mapsto Q_n(u,x)$ is the $u$th quantile function of  $Y_{i,n}$ conditional on $X_{i,n}$, which we refer to as the conditional $u$-quantile function. For a compact set of quantile indices $\mathcal U\subset(0,1)$, we are interested in estimating the functions $x\mapsto Q_n(u,x)$ and their functionals. For given $n$, we assume that $(X_{i,n},U_{i,n},Y_{i,n})_{i=1}^n$ is a random sample from the distribution of the triple $(X^n,U^n,Y^n)$, where $n$ denotes the sample size.


The model \eqref{eq: model} covers two specifications of major interest:
\begin{itemize}
\item [1.] {\bf Nonparametric model:} In this model, the data generating process does not depend on $n$, that is, for all $n\geq 1$, we have $Y^n = Y$, $X^n = X$, $U^n = U$, and $Q_n = Q$ for a response variable $Y$, a $d$-dimensional vector of covariates $X$ with the support $\XX\subset\RR^d$, an unobservable individual ranking $U$ satisfying $U|X\sim \Uniform(0,1)$, and a function $Q\colon[0,1]\times\XX\to\RR^d$ with the property that for all $x\in\XX$, the function $u\mapsto Q(u,x)$ is strictly increasing. We assume that the functions $x\mapsto Q(u,x)$ are smooth but we do not impose any parametric structure on them. Throughout the paper, we refer to this specification as the NP model.
\item [2.] {\bf Many regressors model:} In this model, the dimension $d_n$ is allowed to grow with $n$ but the function $Q_n$ is linear in its second argument: $Q_n(u,x) = x'\beta_n(u)$ for some vector of coefficients $\beta_n(u)\in\RR^{d_n}$ and all $u\in\mathcal U$ and $x\in\XX_n$. Throughout the paper, we refer to this specification as the MR model.
\end{itemize}
Both models are of interest in econometrics. The NP model is important because it is very flexible as it does not impose any parametric structure on the functions $x\mapsto Q(u,x)$ and also does not require the function $(u,x)\mapsto Q(u,x)$ to be separately additive in $u$; see Matzkin \cite{M03} for extensive evidence on importance of this flexibility in economics. The MR model is also important, and different versions of this model have recently attracted much attention in the literature due to emergence of datasets with information on many variables; see Cattaneo, Jansson, and Newey \cite{CJN15} for some recent advances and also Mammen \cite{M93} for some classical results on the mean regression version of this model. As we demonstrate in this paper, both models can be treated in a unifying quantile regression series framework.

For brevity of notation, we shall omit the index $n$ whenever it does not lead to confusion, that is, we write $Y$, $X$, $U$, $Q$, $d$, and $\mathcal X$ instead of $Y^n$, $X^n$, $U^n$, $Q_n$, $d_n$, and $\XX_n$, respectively, even though we implicitly assume that all these quantities are allowed to depend on $n$. Also, we write $(X_i,U_i,Y_i)_{i=1}^n$ instead of $(X_{i,n},U_{i,n},Y_{i,n})_{i=1}^n$.

\subsection{QR-Series Approximation}
Next, we introduce the QR-series approximation to the function $x\mapsto Q(u,x)$. We start with preparing some notation. Fix $u\in\mathcal U$ and let $x\mapsto Z(x) = (Z_1(x),\dots,Z_m(x))'$ be a vector of series approximating functions of dimension $m=m_n$, where each function $x\mapsto Z_j(x)$ maps $\XX$ into $\RR$. Define the vector of coefficients $\beta(u) = (\beta_1(u),\dots,\beta_m(u))'$ as a solution to the \textit{QR-series approximation problem}:
\begin{equation}\label{define beta}
\min_{\beta \in \Bbb{R}^{m}}\Ep\Big[\rho_{u} (Y - Z(X)'\beta) - \rho_u(Y - Q(u,X))\Big],
\end{equation}
where $\rho_{u} (z) = (u - 1\{z<0\})z$ is the check function  (Koenker \cite{K2005}).\footnote{The optimization problem \eqref{define beta} has a finite solution if $\Ep[|Q(u,X)|]$ is finite. In addition, since the function $z\mapsto \rho_u(z)$ is strictly convex, the solution is unique if the matrix $E[Z(X) Z(X)']$ is non-singular, which is assumed in Condition S below. The  term $\rho_{u} (Y - Q(u,X))$ does not affect the optimization problem but guarantees the existence of the solution when $\Ep[|Y|]$ is not finite.} For the MR model, we assume that $Z(x) = x$ for all $x\in\XX$, so that $m=d$ and the vector $\beta(u)$ defined in \eqref{define beta} coincides with the vector $\beta_n(u)$ in the definition of the model, $Q_n(u,x) = x'\beta_n(u)$. For the NP model, we assume that the vector $Z$ consists of series functions with good approximation properties such as indicators, B-splines (or regression splines), polynomials, Fourier series, and/or compactly supported wavelets.\footnote{Interestingly, in the case of B-splines and compactly supported wavelets, the entire collection of series terms is dependent upon the sample size $n$.}$^,$\footnote{It is possible to combine these sets of approximating functions. For example, when we model gasoline consumption, we can simultaneously use Fourier series to capture seasonal effects and polynomials to capture long term growth.}  We refer the reader to Newey \cite{W97} and Chen \cite{Chen2006} for a careful and detailed description of these series functions; see also Belloni et al \cite{BelloniChenChernozhukov2009} for an overview of recent advances on series approximating functions.

We define the \textit{QR-series approximating function} $x\mapsto Z(x)'\beta(u)$ mapping $\XX$ into $\RR$, and, for all $x\in\XX$, the \textit{QR-series approximation error}
$$
R(u,x) := Q(u,x) - Z(x)'\beta(u).
$$
We will assume that the QR-series approximation error asymptotically vanishes, that is, $\sup_{x\in\XX, u \in \mathcal U}|R(u,x)|\to 0$ as $n\to\infty$. For the MR model, this assumption always holds because $R(u,x) = 0$ for all $x\in\XX$. For the NP model, we will demonstrate that this assumption holds under appropriate conditions as long as $m = m_n \to \infty$ as $n\to\infty$. In turn, given that the QR-series approximation error asymptotically vanishes, it follows that the QR-series approximating function $x\mapsto Z(x)'\beta(u)$ approximates  well the true conditional $u$-quantile function $x\mapsto Q(u,x)$. 

\subsection{QR-Series Estimator} The QR-series approximation motivates the {\em QR-series estimator} of the function $x\mapsto Q(u,x)$:
\begin{equation}\label{eq: quantile regression estimator}
\widehat Q(u,x) = Z(x)'\widehat\beta(u),\quad x\in\XX,
\end{equation}
where $\widehat\beta(u)$ is the Koenker and Bassett \cite{KB78} estimator of $\beta(u)$ that solves the empirical analog of the population problem (\ref{define beta}):
\begin{equation}\label{eq: empirical analog problem}
\min_{\beta\in\RR^m}\En[\rho_u(Y_i - Z_i'\beta)],
\end{equation}
where we denote $Z_i = Z(X_i)$ for all $i=1,\dots,n$. As $n$ gets large, both the estimation error $\widehat Q(u,x) - Z(x)' \beta(u)$ and the approximation error $R(u,x)$ asymptotically vanish.

Since we are interested in estimating the functions $x\mapsto Q(u,x)$ for a set of quantile indices $\mathcal U$, we solve the problem \eqref{eq: empirical analog problem} for all $u\in\mathcal U$ to obtain the \textit{QR-series coefficient process}
$$
\widehat \beta (\cdot) = \{ \hat \beta(u) \colon  u \in \mathcal{U}\}
$$
and the QR-series estimator \eqref{eq: quantile regression estimator} for all $u\in\mathcal U$.
We note that obtaining this estimator is computationally easy even if  $\mathcal U$ contains many quantile indices and the dimension $m$ of the vectors $Z_i$ is large. In particular, one can use the results of Portnoy and Koenker \cite{PortnoyKoenker97}, who developed interior points methods with preprocessing for the problem \eqref{eq: empirical analog problem} that are very efficient and give the solution for multiple quantile indices simultaneously.

\subsection{Main Regularity Conditions}
Let $\kappa\in(0,\infty]$ be some constant that is independent of $n$. Also, for $x\in\XX$, let $\mathcal Y_x$ denote the support of the conditional distribution of $Y$ given $X = x$. Moreover, let $\bar {\mathcal U}$ denote the convex hull of $\mathcal U$. Throughout the paper, we will use the following regularity condition:

\begin{samepage}
\noindent

\textbf{Condition S.}\text{ }\textit{\begin{itemize}
\item[S.1] The data form a triangular array of random variables so that for any given $n$, the data $\mathcal{D}_n = \{(X_i,Y_i) : 1 \leq i \leq n\}$ is an i.i.d. random sample from the distribution of the pair $(X,Y)$.
\item[S.2] (i) The conditional density $f_{Y|X}(y|x)$ is bounded from above uniformly over $y\in\mathcal Y_x$, $x \in \mathcal{X}$, and $n$; (ii) $f_{Y|X}(Q(u,x)|x)$ is bounded away from zero uniformly over $u\in\bar{\mathcal U}$, $x\in\XX$, and $n$; and (iii) the derivative of $y\mapsto f_{Y|X}(y|x)$ is continuous and bounded in absolute value from above uniformly over $y\in\mathcal Y_x$, $x \in \mathcal{X}$, and $n$.
\item[S.3]  The eigenvalues of the Gram matrix $\Sigma=\Ep[Z(X) Z(X)'] $ are bounded
from above and away from zero uniformly over $n$.
\item[S.4] The approximation error $R(u,x)$ satisfies $\sup_{x \in \mathcal{X}, u \in \mathcal{U}} |R(u,x)| \lesssim m^{-\kappa}$.
\end{itemize}
}
\end{samepage}
Condition S.1 requires that the data is i.i.d. but it can be extended to standard time series models at the expense of more technicalities. Condition S.2 imposes mild smoothness assumptions on the conditional density function $f_{Y|X}(y|x)$. Since it follows from simple algebra, see for example \eqref{eq: first derivative Q}, that
$$
\frac{1}{f_{Y|X}(y|x)} = \frac{\partial Q(Q^{-1}(y,x),x)}{\partial u},\quad y\in\mathcal Y_x, \ x\in\mathcal X,
$$
where $y\mapsto Q^{-1}(y,x)$ denotes the inverse of $u\mapsto Q(u,x)$, it is easy to provide a set of conditions in terms of the function $Q(u,x)$ that imply Condition S.2. Indeed, Condition S.2 follows if (i) $\partial Q(u,x)/\partial u$ is bounded away from zero uniformly over $u\in[0,1]$, $x\in\mathcal X$, and $n$; (ii) $\partial Q(u,x)/\partial u$ is bounded from above uniformly over $u\in \bar{\mathcal U}$, $x\in\XX$ and $n$; (iii) $\partial^2 Q(u,x)/\partial u^2$ is bounded in absolute value from above uniformly over $u\in [0,1]$, $x\in\XX$, and $n$.\footnote{Note that we assume that the conditional density $f_{Y|X}(y|x)$ is bounded away from zero only for $y = Q(u,x)$, where $u\in\bar{\mathcal U}$ and $x\in\mathcal X$. This allows us to avoid the stronger condition that assumes that $f_{Y|X}(y|x)$ is bounded away from zero for all $y\in\mathcal Y_x$ and $x\in\mathcal X$. The latter condition can simplify some arguments (see the proof of Lemma \ref{Lemma:AUX_L2sparse}) but it requires the conditional density of $Y$ given $X$ to have bounded support, thus excluding some important distributions such as the Gaussian.}

For the MR model, Condition S.3 implies that there is no perfect multicollinearity among covariates, and Condition S.4 is satisfied with $\kappa = \infty$ since $R(u,x) = 0$ for all $u\in\mathcal U$ and $x\in\XX$.

For the MR model, all conditions can be regarded as primitive. For the NP model, Conditions S.1 and S.2 are also primitive but Conditions S.3 and S.4 depend on the vector of approximating series functions $x \mapsto Z(x)$ used for the estimation. Therefore, below we provide some discussion of these conditions in the NP model. Suppose that $X$ is absolutely continuous with respect to the Lebesgue measure on $\mathcal X$ and let $f_X\colon \XX\to\RR$ denote its pdf. Then it is well-known that Condition S.3 holds if $f_X(x)$ is bounded from above and away from zero uniformly over $x\in\XX$, and the eigenvalues of the matrix
$$
\int_{x\in\XX}Z(x)Z(x)'d x
$$
are bounded from above and away from zero uniformly over $n$; see, for example, Proposition 2.1 in Belloni et al \cite{BelloniChenChernozhukov2009}. In turn, the latter condition holds if, for example, the vector $Z$ consists of functions that are orthonormal on $\XX$. In the case that the former condition is violated in the sense that the density $f_X(x)$ is not bounded away from zero uniformly over all $x\in \XX$, one can consider a subset $\widetilde \XX$ of $\XX$ such that $f_X(x)$ is bounded away from zero uniformly over $x\in\widetilde \XX$ and consider the estimation problem based on the subset of observations $i$ satisfying $X_i\in\widetilde \XX$. This will give the estimate of $Q(u,x)$ for all $x\in\widetilde \XX$ and $u\in\mathcal U$. As the sample size gets larger, one can increase the set $\widetilde \XX$ to extend the estimate of $Q(u,x)$ to a larger set of points. Developing a method how this truncation should be performed in practice, however, is beyond the scope of this paper.

To provide some primitive conditions for Condition S.4 in the NP model, we need to prepare some notation. For a $d$-tuple $\alpha = (\alpha_1,\dots,\alpha_d)$ of nonnegative integers, let $D^\alpha = \partial^{\alpha_1}_{x_1}\cdots \partial^{\alpha_d}_{x_d}$. Also, for $s>0$, let $[s]$ denote the largest integer strictly smaller than $s$. For the constant $C>0$, define the H\"{o}lder ball $\Omega(s,C,\XX)$ as the set of all functions $f\colon \XX\to\RR$ such that
\begin{equation}\label{eq: holder smoothness}
|D^\alpha f(x) - D^\alpha f(\widetilde x)|\leq C\Big(\textstyle{\sum_{j=1}^d} (x_j - \widetilde x_j)^2\Big)^{(s - [s])/2}\text{ and }|D^\beta f(x)|\leq C
\end{equation}
for all $x = (x_1,\dots,x_d)'$ and $\widetilde x = (\widetilde x_1,\dots,\widetilde x_d)'$ in $\mathcal X$ and all $d$-tuples $\alpha = (\alpha_1,\dots,\alpha_d)$ and $\beta = (\beta_1,\dots,\beta_d)$ of nonnegative integers satisfying $\alpha_1 +\dots+ \alpha_d = [s]$ and $\beta_1 + \dots + \beta_d \leq [s]$ (where the left-hand sides of the inequalities in \eqref{eq: holder smoothness} are set to be infinity if the derivatives do not exist). For example, any $s$-times continuously differentiable function belongs to the H\"{o}lder ball $\Omega(s,C,\XX)$ for some $C>0$ as long as $\mathcal X$ is compact. Also, we say that the vector of approximating functions $Z$ consists of tensor products of polynomials if $m = J^d$ for some integer $J>0$ and $Z$ consists of all functions of the form $x = (x_1,\dots,x_d)\mapsto \prod_{j=1}^d x_j^{\alpha_j}$ for some $d$-tuple $\alpha = (\alpha_1,\dots,\alpha_d)$ of nonnegative integers such that $\alpha_j\leq J-1$ for all $j=1,\dots,d$. Finally, we say that the vector of approximating functions $Z$ consists of tensor products of B-splines of order $s_0$ if $m = J^d$ for some integer $J>0$ and $Z$ consists of all functions of the form $x = (x_1,\dots,x_d)\mapsto \prod_{j=1}^d b_{\alpha_j}(x_j)$ for some $d$-tuple $\alpha = (\alpha_1,\dots,\alpha_d)$ of nonnegative integers such that $\alpha_j\leq J-1$ for all $j=1,\dots,d$ where $b_0,\dots,b_{J-1}$ is a sequence of $J$ B-splines of order $s_0$ on the interval $[0,1]$ with uniform knot sequence; see Chen \cite{Chen2006} for more explanations about H\"{o}lder balls and B-splines. The next lemma provides a set of primitive conditions for Condition S.4 in the NP model.

\begin{lemma}[Verification of Condition S.4 in the NP model for polynomials and B-splines]\label{lem: approximation error}
Consider the NP model. Suppose that Conditions S.2 and S.3 hold. In addition, suppose that $\XX = [0,1]^d$. Moreover, suppose that $Q(u,\cdot)\in\Omega(s,C,\XX)$ for all $u\in\mathcal U$ and some $s,C>0$. If the vector of approximating functions $Z$ consists of tensor products of polynomials and $s>d$, then
\begin{equation}\label{eq: approximation error 1}
(E[|R(u,X)|^2])^{1/2}\lesssim m^{-s/d}\text{ and }\sup_{x\in\XX}|R(u,x)|\lesssim m^{1-s/d},
\end{equation}
uniformly over $u\in\mathcal U$. Also, if the vector of approximating functions $Z$ consists of tensor products of B-splines of order $s_0$, $s\wedge s_0 > d$, and $X$ has the pdf $f_X(x)$ bounded from above and away from zero uniformly over $x\in\XX$, then
\begin{equation}\label{eq: approximation error 2}
(E[|R(u,X)|^2])^{1/2}\lesssim m^{-(s\wedge s_0)/d}\text{ and }\sup_{x\in\XX}|R(u,x)|\lesssim m^{-(s\wedge s_0)/d},
\end{equation}
uniformly over $u\in\mathcal U$. Thus, under the presented conditions, Condition S.4 is satisfied with $\kappa= s/d - 1$ in the case of polynomials and with $\kappa = (s\wedge s_0)/d$ in the case of B-splines.
\end{lemma}

\begin{remark}[Importance of Lemma \ref{lem: approximation error}]
Lemma \ref{lem: approximation error} makes precise the nature of the QR-series approximation in the NP model and plays a crucial role in our derivation of the convergence rate of the QR-series estimator for the NP model in the next section. Indeed, it is well-known from the approximation theory that under the assumptions of the lemma, in the case of polynomials, for example, there exists $\beta^m(u)$ such that $\sup_{x\in\XX}|Q(u,x) - Z(x)'\beta^m(u)|\lesssim m^{-s/d}$; see Chen \cite{Chen2006}. However, this result  does not help in our analysis because the QR-series estimator $\widehat\beta_n(u)$ converges in probability to $\beta(u)$, which may or may not be equal to $\beta^m(u)$. We therefore  need to derive  a bound on $\sup_{x\in\XX}|Q(u,x) - Z(x)'\beta(u)|$.

The part of the lemma concerning the B-splines case is particularly important because it allows us to prove in the next section that the QR-series estimator based on B-splines achieves the fastest possible rate of convergence in the sup norm. This part of the lemma is a major extension of a result in Huang \cite{H03}, who obtained similar inequalities with the QR-series approximation error replaced by the least-squares-series approximation error.
\qed
\end{remark}

\begin{remark}[Other series approximating functions]
Other popular choices of the series approximating functions include Fourier series and compactly supported wavelets. Although we do not provide formal results for these choices, we note that under conditions similar to those in Lemma \ref{lem: approximation error}, one can show that Condition S.4 holds with $\kappa = 1/2 - s/d$ in the case of Fourier series and with $\kappa = -(s\wedge s_0)/d$ in the case of compactly supported wavelets, where $s_0$ is the order of the wavelets.
\qed
\end{remark}

\subsection{Additional Notation}\label{lab: additional notation}
The properties of the QR-series coefficient process and of the QR-series estimator depend on the choice of the approximating functions and the dimension of $Z$. Like in the analysis of series estimators of conditional mean functions (see Newey \cite{W97}), the following quantity will play a crucial role in our analysis:
$$
\zeta_m = \sup_{x\in\mathcal X}\|Z(x)\|.
$$
Assuming that $\mathcal X = [0,1]^d$, it is well known that $\zeta_m \lesssim m$ if the vector $Z$ consists of tensor products of polynomials and $\zeta_m \lesssim m^{1/2}$ if the vector $Z$ consists of tensor products of B-splines.

As in the analysis of the parametric quantile regression, the following Jacobian matrix will also play a crucial role in the analysis:
\begin{equation}\label{eq: J matrix}
J(u) = \Ep\Big[f_{Y|X}(Q(u,X) | X) Z(X) Z(X)'\Big], \quad u\in\mathcal U.
\end{equation}
Implementing some of our inference methods will require an estimator of $J(u)$. For the purposes of this paper, we will use Powell's \cite{Powell1984} estimator defined by
\begin{equation}\label{inf-est-hatJ}
\hat J(u) = \frac{1}{2h} \En\Big[
1\{ |Y_i - Z_i'\hat\beta(u)| \leq h \} \cdot Z_i Z_i'\Big],
\end{equation}
where $h$ is some bandwidth value satisfying $h = h_n \to 0$. We will also use the estimator of $\Sigma$ defined by
\begin{equation}\label{inf-est-hatSigma0}
\hat \Sigma = \En[ Z_i Z_i'].
\end{equation}
The properties of $\hat J(u)$ and $\hat \Sigma$ in our high-dimensional setting are established in Lemma \ref{covariance} in Appendix \ref{App:EmpiricalProcess} of the Supplemental Material.


\section{Asymptotic Theory for QR-Series Coefficient Processes} \label{Sec:theory_coeff}
In this section, we study properties of the normalized QR-series coefficient process
$$
\sqrt n(\widehat \beta(\cdot) - \beta(\cdot)) = \Big\{\sqrt n(\hat\beta(u) - \beta(u))\colon u\in\mathcal U\Big\}.
$$
Specifically, we derive the rate of convergence and construct two couplings for this process. The couplings give two processes that are  uniformly close to $\sqrt n(\widehat\beta(\cdot) - \beta(\cdot))$ with high probability, but are such that their distribution can be simulated. In particular, we develop four resampling methods (pivotal, gradient bootstrap, Gaussian, and weighted bootstrap) to simulate the distribution of these processes. In the next section, these results allow us to develop a unified feasible inference theory for all the functionals of interest. We also derive rates of convergence for the QR-series estimator process $\hat Q(\cdot,\cdot) = \{\hat Q(u,x)\colon u\in\mathcal U, x\in\mathcal X\}$ in the NP model. In particular, we show that the QR-series estimator based on either polynomials or B-splines has the fastest possible rate of convergence in the $L^2$ norm and that the QR-series estimator based on B-splines has the fastest possible rate of convergence in the sup norm.

\subsection{Uniform-in-$u$ Rate of Convergence}
As explained in the previous section, given an i.i.d. sample $(X_i,Y_i)_{i=1}^n$ from the distribution of the pair $(X,Y)$, we estimate the coefficient function $\beta(\cdot) = \{\beta(u)\colon u\in\mathcal U\}$ using the QR-series coefficient process $\hat\beta(\cdot) = \{\hat\beta(u)\colon u\in\mathcal U\}$, namely, for each $u \in \mathcal{U}$, we define  $\hat \beta(u)$ as the Koenker and Bassett \cite{KB78} estimator that solves the
empirical analog \eqref{eq: empirical analog problem} of the population problem (\ref{define beta}).
Our first main result  is a uniform-in-$u$ rate of convergence for the
QR-series coefficient process.

%

\begin{theorem}[Uniform-in-$u$ rate of convergence for QR-Series coefficient process]\label{Thm:SeriesRates} Suppose that Condition S holds. In addition, suppose that $m \zeta_m^2\log^2 n =o(n)$ and $m^{-\kappa}\log n = o(1)$. Then
$$ \sup_{u \in
\mathcal{U}} \| \hat \beta(u) - \beta(u) \|
\lesssim_P  \sqrt{m/n}.$$
\end{theorem}
Theorem \ref{Thm:SeriesRates} establishes a rate of convergence of the estimator $\hat\beta(u)$ in our high-dimensional setting that holds uniformly over $u\in\mathcal U$. The theorem complements the rate of convergence results in the literature. Indeed, Koenker and Portnoy \cite{KoenkerPortnoy1987} established rate of convergence results that hold uniformly over $\mathcal{U}$ in the fixed-dimensional setting, and He and Shao \cite{HS00} established rates in the high-dimensional setting for the case when $\mathcal U$ is a singleton (pointwise-in-$u$ rate of convergence). Importantly, the uniform-in-$u$ rate of convergence in Theorem \ref{Thm:SeriesRates} is the same as the pointwise-in-$u$ rate of convergence. The proof of this theorem relies on new concentration inequalities that control the behavior of the eigenvalues of the design matrix $\widehat\Sigma$. Note also that our condition $m\zeta_m^2\log^2 n = o(n)$ is similar to the analogous condition in \cite{HS00}.

Theorem \ref{Thm:SeriesRates} has an implication for the uniform-in-$u$ rate of convergence in the $L^2$ norm of the QR-series estimator in the NP model. Indeed, define
$$
\|h\|_{L^2(X)} = \Big(E[|h(X)|^2]\Big)^{1/2},\quad \text{for }h\colon \XX\to \mathbb R.
$$
We then have the following corollary of Theorem \ref{Thm:SeriesRates}, which is the second main result together with Corollary \ref{cor: sup rate qr series estimator} below on the uniform-in-$u$ rate of convergence in the sup norm of the QR-series estimator in the NP model.
\begin{corollary}[Uniform-in-$u$ $L^2$ rate of convergence for QR-series estimator in the NP model]\label{cor: l2 rate qr series estimator}
Consider the NP model. Suppose that (i) Condition S.1-3 holds. In addition, suppose that (ii) $\XX = [0,1]^d$ and that (iii) $Q(u,\cdot)\in\Omega(s,C,\XX)$ for all $u\in\mathcal U$ and some $s,C>0$. If the vector of approximating functions $Z$ consists of tensor products of polynomials, $m^3\log^2 n = o(n)$, and $m^{1 - s/d}\log n = o(1)$, then
\begin{equation}\label{eq: L2 rate polynomials}
\sup_{u\in\mathcal U}\|\widehat Q(u,\cdot) - Q(u,\cdot)\|_{L^2(X)} \lesssim_P \sqrt{m/n} + m^{-s/d}.
\end{equation}
Also, if the vector of approximating functions $Z$ consists of tensor products of B-splines of order $s_0$, $s\wedge s_0 > d$, $m^2 \log^2 n = o(n)$, $m^{-(s\wedge s_0)/d} \log n = o(1)$, and $X$ has the pdf $f_X(x)$ bounded from above and away from zero uniformly over $x\in\XX$, then
\begin{equation}\label{eq: L2 rate bsplines}
\sup_{u\in\mathcal U}\|\widehat Q(u,\cdot) - Q(u,\cdot)\|_{L^2(X)} \lesssim_P \sqrt{m/n} + m^{-(s\wedge s_0)/d}.
\end{equation}
\end{corollary}

\begin{remark}[QR-series estimator achieves the fastest possible rate of convergence in the $L^2$ norm]
Consider the NP model and suppose that conditions (i)--(iii) of Corollary \ref{cor: l2 rate qr series estimator} hold. If $Z$ consists of a tensor product of polynomials and $s>d$, setting $m = C n^{d/(d+2 s)}$ for some constant $C>0$ satisfies conditions that $m^3\log^2 n = o(n)$ and $m^{1 - s/d}\log n = o(1)$, and so substituting this $m$ into the bound \eqref{eq: L2 rate polynomials} gives
\begin{equation}\label{eq: optimal L2 rate of convergence}
\sup_{u\in\mathcal U}\|\hat Q(u,\cdot) - Q(u,\cdot)\|_{L^2(X)} \lesssim_P n^{-s/(d + 2s)}.
\end{equation}
Similarly, if $Z$ consists of a tensor product of B-splines of order $s_0$, $s_0\geq s>d$, and $X$ has the pdf $f_X(x)$ bounded from above and away from zero uniformly over $x\in\mathcal X$, setting $m = C n^{d/(d+2 s)}$ for some constant $C>0$ satisfies conditions that $m^2\log^2 n = o(n)$ and $m^{-(s\wedge s_0)/d}\log n = o(1)$, and so substituting this $m$ into the bound \eqref{eq: L2 rate bsplines} again gives \eqref{eq: optimal L2 rate of convergence}. Note that the rate in \eqref{eq: optimal L2 rate of convergence} is the optimal $L^2$ rate of convergence for the estimators of nonparametric conditional quantile functions; see Chaudhuri \cite{C91}. Thus, the QR-series estimator based on either polynomials or B-splines has the fastest possible $L^2$ rate of convergence, and as we demonstrate, this rate is actually achieved uniformly in $u\in\mathcal U$. The same results can also be shown for the QR-series estimator based on Fourier series and compactly supported wavelets. This is one of the attractive properties of the QR-series estimator.\qed
\end{remark}


\subsection{Uniform Strong Approximations (Couplings) and Resampling Methods}
Here we derive two couplings yielding strong approximations to the process $\sqrt n(\hat\beta(\cdot) - \beta(\cdot))$ in the form of either a pivotal or a Gaussian process, and develop four resampling methods to approximate the distribution of these processes and thus approximate also the distribution of the original process $\sqrt n(\hat\beta(\cdot) - \beta(\cdot))$. We provide algorithms to implement the resampling methods in Appendix \ref{app:algorithms}.

\subsubsection{Pivotal Coupling:} Let
\begin{equation}\label{Def:U}
\mathbb{U}(u) = \frac{1}{\sqrt{n}} \sum_{i=1}^n Z_i (u-1\{U_i \leq u\}),\quad u\in\mathcal U.
\end{equation}
Note that the process $\mathbb U(\cdot) = \{\mathbb U(u)\colon u\in\mathcal U\}$ is (conditionally) pivotal since conditional on $(Z_i)_{i=1}^n$, the sequence $(U_i)_{i=1}^n$ consists of i.i.d. Uniform$(0,1)$ random variables.  The following theorem, which is the third main result together with the Gaussian coupling in Theorem \ref{theorem: strong} below, shows that the process $\sqrt n(\widehat\beta(\cdot) - \beta(\cdot))$ is strongly approximated by the (conditionally) pivotal process $J^{-1}(\cdot)\mathbb U(\cdot) = \{J^{-1}(u)\mathbb U(u)\colon u\in\mathcal U\}$.

\begin{theorem}[Pivotal Coupling]\label{Thm:MainULA}
Suppose that Condition S holds. In addition, suppose that $m^3\zeta_m^2 =o(n^{1-\varepsilon})$  and $m^{-\kappa+1} = o(n^{-\varepsilon})$ for some constant $\varepsilon>0$. Then
$$
\sqrt{n}\left(\hat \beta(u) -
\beta(u)\right) =  J^{-1}(u)\mathbb{U}(u) +
r(u),\quad u\in\mathcal U,
$$
where
$$
 \sup_{u \in \mathcal{U}}\| r(u) \| \lesssim_P  \frac{m^{3/4} \zeta_m^{1/2} \log^{1/2} n}{n^{1/4}} + \sqrt{m^{1-\kappa}\log n} = o(n^{-\varepsilon'})
 $$
 for some $\varepsilon'>0$.
\end{theorem}

This theorem is important because it has many useful implications. One of the implications is the following result for the uniform-in-$u$ rate of convergence in the sup norm of the QR-series estimator in the NP model.

\begin{corollary}[Uniform-in-$u$ sup-rate of convergence for QR-series estimator in the NP model]\label{cor: sup rate qr series estimator}
Consider the NP model. Suppose that (i) Condition S.1-3 holds. In addition, suppose that (ii) $\XX = [0,1]^d$ and that (iii) $Q(u,\cdot)\in\Omega(s,C,\XX)$ for all $u\in\mathcal U$ and some $s,C>0$. If the vector of approximating functions $Z$ consists of tensor products of polynomials and for some $\varepsilon > 0$, $m^5 = o(n^{1-\varepsilon})$ and $m^{2 - s/d} = o(n^{-\varepsilon})$, then
$$
\sup_{u\in\mathcal U} \sup_{x\in\mathcal X}|\widehat Q(u,x) - Q(u,x)| \lesssim_P \sqrt{m^2\log n/n} + m^{1-s/d}.
$$
Also, if the vector of approximating functions $Z$ consists of tensor products of B-splines of order $s_0$, $X$ has the pdf $f_X(x)$ bounded from above and away from zero uniformly over $x\in\XX$, and for some $\varepsilon > 0$, $m^4 = o(n^{1 - \varepsilon})$ and $m^{1 - (s\wedge s_0)/d} = o(n^{-\varepsilon})$, then
\begin{equation}\label{eq: sup rate bsplines}
\sup_{u\in\mathcal U} \sup_{x\in\mathcal X}|\widehat Q(u,x) - Q(u,x)| \lesssim_P \sqrt{m\log n/n} + m^{-(s\wedge s_0)/d}.
\end{equation}
\end{corollary}
\begin{remark}[B-splines version of QR-series estimator achieves the fastest possible rate of convergence in the sup norm]
Consider the NP model and suppose that conditions (i)--(iii) of Corollary \ref{cor: sup rate qr series estimator} hold. In addition, suppose that $Z$ consists of a tensor product of B-splines of order $s_0$, $s_0\geq s > 3d/2$, and $X$ has the pdf $f_X(x)$ bounded from above and away from zero uniformly over $x\in\mathcal X$. Then setting $m = C(n/\log n)^{d/(d + 2s)}$ for some constant $C>0$ satisfies conditions that $m^4 = o(n^{1-\varepsilon})$ and $m^{1 - (s\wedge s_0)/d} = o(n^{-\varepsilon})$ for some $\varepsilon>0$, and so substituting this $m$ into the bound \eqref{eq: sup rate bsplines} gives
$$
\sup_{u\in\mathcal U}\sup_{x\in\mathcal X}|\hat Q(u,x) - Q(u,x)|\lesssim_P \left(\frac{\log n}{n}\right)^{s/(d + 2s)},
$$
which is the optimal rate of convergence in the sup norm for an estimator of the nonparametric conditional quantile function; see Chaudhuri \cite{C91}. Thus, the QR-series estimator based on B-splines has the fastest possible rate of convergence in the sup norm, and as we demonstrate, this rate is actually achieved uniformly in $u\in\mathcal U$.\footnote{The same results can also be shown for the QR-series estimator based on compactly supported wavelets.} This is another attractive property of the QR-series estimator. \qed
\end{remark}
We also note that although the uniform convergence rate based on polynomials is not optimal, the rate derived in Corollary \ref{cor: sup rate qr series estimator} is faster than the (trivial) uniform rate implied by the $L_2$ rate and the relation between the $L_2$-norm and sup-norm.

\subsubsection{Resampling Methods Based on Pivotal Coupling:} Another implication of Theorem \ref{Thm:MainULA} is that it suggests the following high-quality method to approximate the distribution of the process $\sqrt n(\hat\beta(\cdot) - \beta(\cdot))$, which we refer to as the pivotal method. First, simulate an i.i.d. sequence $(U_i^*)_{i=1}^n$ of Uniform$(0,1)$ random variables that are independent of the data and define
\begin{equation}\label{Def:U*}
\mathbb{U}^*(u) =
\frac{1}{\sqrt{n}} \sum_{i=1}^n Z_i (u-1\{U^*_i \leq u\}),\quad u\in\mathcal U,
\end{equation}
so that conditional on $(Z_i)_{i=1}^n$, the process $\mathbb U^*(\cdot) = \{\mathbb U^*(u)\colon u\in\mathcal U\}$ is a copy of the process $\mathbb U(\cdot)$, and $J^{-1}(\cdot)\mathbb U^*(\cdot) = \{J^{-1}(u)\mathbb U^*(u)\colon u\in\mathcal U\}$ is a copy of $J^{-1}(\cdot)\mathbb U(\cdot)$. Second, calculate the estimators $\hat J(u)$ of the matrices $J(u)$ for all $u\in\mathcal U$ as in \eqref{inf-est-hatJ} of Section \ref{lab: additional notation} (recall that $h$ in the estimators $\widehat J(u)$ is some bandwidth value satisfying $h = h_n \to 0$). Then, as shown in the next theorem, one can use the conditional distribution of the process $\hat J^{-1}(\cdot)\mathbb U^*(\cdot) = \{\widehat J^{-1}(u)\mathbb U^*(u)\colon u\in\mathcal U\}$ given the data, which can be simulated, to approximate the distribution of the process $J^{-1}(\cdot)\mathbb U^*(\cdot)$, and, via Theorem \ref{Thm:MainULA}, also of the process $\sqrt n(\hat\beta(\cdot) - \beta(\cdot))$.

\begin{theorem}[Pivotal Method]\label{Thm:MainULAfeasible}
Suppose that Condition S holds. In addition, suppose that $h\sqrt m = o(n^{-\varepsilon})$, $m^2 \zeta_m^2 = o(n^{1-\varepsilon}\hn)$, and $m^{-\kappa+1/2} = o(n^{-\varepsilon})$ for some constant $\varepsilon > 0$. Then
$$
\widehat J^{-1}(u)\mathbb{U}^*(u) = J^{-1}(u)\mathbb{U}^*(u) +
r(u),\quad u\in\mathcal U,
$$
where
$$
 \sup_{u \in \mathcal{U}}\| r(u) \| \lesssim_P \sqrt{\frac{\zeta_m^2 m^2\log n}{n\hn}} + m^{-\kappa+1/2}   + \hn\sqrt{m}  = o(n^{-\varepsilon'})
$$
for some $\varepsilon'>0$. The stated bound continues to hold in $P$-probability if we replace
the unconditional probability $P$ by the conditional probability
$P^*$.
\end{theorem}
This theorem is the fourth main result together with Theorems \ref{Thm:MainULAstar}, \ref{thm: gaussian method}, and \ref{Thm:MainBootstrap} below on gradient bootstrap, Gaussian, and weighted bootstrap methods.
The pivotal method is closely related to another approach to inference,
which we refer to as the gradient bootstrap method.  This
approach was previously introduced by Parzen, Wei and Ying
\cite{ParzenWeiYing1994} for parametric models with fixed dimension.
We extend it to the considerably more general series
framework studied in this paper. The main idea  is to generate for all $u\in\mathcal U$ the gradient bootstrap estimator $\widehat \beta^*(u)$ as the solution to the perturbed QR problem
\be\label{define betastar}
\min_{\beta \in \Bbb{R}^m}\Big(\En
[\rho_{u} (Y_i - Z_i'\beta)] - \mathbb{U}^*(u)'\beta/\sqrt{n}\Big),
\ee
where  $\mathbb{U}^*(u)$ is defined in (\ref{Def:U*}). Then, as shown in the next theorem, one can use the conditional distribution of the process $\sqrt n(\hat \beta^*(\cdot) - \hat\beta(\cdot)) = \{\sqrt n(\hat \beta^*(u) - \hat\beta(u))\colon u\in\mathcal U\}$ given the data, which can be simulated, to approximate the distribution of the process $J^{-1}(\cdot)\mathbb U^*(\cdot)$, and, via Theorem \ref{Thm:MainULA}, also of the process $\sqrt n(\hat\beta(\cdot) - \beta(\cdot))$.

\begin{theorem}[Gradient Bootstrap Method]\label{Thm:MainULAstar}
Suppose that Condition S holds. In addition, suppose that $m^3\zeta_m^2=o(n^{1-\varepsilon})$ and
$m^{-\kappa+1/2} = o(n^{-\varepsilon})$ for some constant $\varepsilon>0$. Then
$$
\sqrt{n} \left(\hat \beta^*(u) -
\hat \beta(u)\right) = J^{-1}(u)\mathbb{U}^*(u) + r(u),
$$
where
$$
 \sup_{u \in \mathcal{U}} \| r(u) \| \lesssim_P  \frac{m^{3/4} \zeta_m^{1/2} \log^{1/2} n}{n^{1/4}} +  m^{-\kappa+1/2} = o(n^{-\varepsilon'})
 $$
  for some $\varepsilon'>0$.
The stated bound continues to hold in $P$-probability if we replace
the unconditional probability $P$ by the conditional probability
$P^*$.
\end{theorem}
\begin{remark}[Comparison of pivotal and gradient bootstrap methods]
Both the pivotal and gradient bootstrap methods have their own advantages. Perhaps the main advantage of the gradient bootstrap method relative to the pivotal method is that it does not require estimating the matrices $J(u)$, $u\in \mathcal{U}$, which is important because estimating these matrices requires a potentially subjective choice of the bandwidth $h$. In fact, implementing the gradient bootstrap method does not require any choice of smoothing parameters, making it particularly convenient for empirical researchers. On the other hand, an advantage of the pivotal method relative to the gradient bootstrap method is that it is computationally simple as it does not require solving the quantile optimization problem for each simulation of the process $\mathbb U^*(\cdot)$.  \qed
\end{remark}

\subsubsection{Gaussian Coupling:} Next, we turn to a strong approximation based on a sequence of
Gaussian processes. The following theorem shows that for each $n$, one can construct a Gaussian process $G(\cdot) = G_n(\cdot) = \{G_n(u)\colon u\in\mathcal U\}$ such that the process $J^{-1}(\cdot)G(\cdot) = \{J^{-1}(u)G(u)\colon u\in\mathcal U\}$ is with high probability uniformly close to the process $\sqrt n(\hat\beta(\cdot) - \beta(\cdot))$.

\begin{theorem}[Gaussian Coupling]\label{theorem: strong}
Suppose that Condition S holds. In addition, suppose that $m^{7} \zeta_m^6 = o(n^{1-\varepsilon})$ and $m^{-\kappa + 1} = o(n^{-\varepsilon})$ for some constant $\varepsilon>0$. Then
$$
\sqrt n\Big(\widehat\beta(u) - \beta(u)\Big) = J^{-1}(u) G(u) + r(u),\quad u\in\mathcal U,
$$
where $G(\cdot) = G_n(\cdot)$ is a process on $\mathcal U$ that,  conditionally on $(Z_i)_{i=1}^n$, is zero-mean Gaussian with a.s. continuous sample paths and the covariance function
\begin{equation}\label{eq: covariance matrix gaussian process thm 5}
E\Big[G(u_1)G(u_2)'\mid (Z_i)_{i=1}^n\Big] = \En[Z_i Z_i'](u_1\wedge u_2 - u_1 u_2), \ \text{ for all $u_1$ and $u_2$ in $\mathcal U$},
\end{equation}
and
$$
\sup_{u\in\mathcal U}\|r(u)\|  = o_P(n^{-\varepsilon'})
$$
for some $\varepsilon'> 0$.
\end{theorem}
\begin{remark}[Conditions of Theorem \ref{theorem: strong}]
Note that the strong approximation to the process $\sqrt n(\hat\beta(\cdot) - \beta(\cdot))$ by the Gaussian process $J^{-1}(\cdot)G(\cdot)$ constructed in Theorem \ref{theorem: strong} requires the condition that $m^7 \zeta_m^6 = o(n^{1 - \varepsilon})$, which is more restrictive than the corresponding condition in Theorem \ref{Thm:MainULA}, $m^3\zeta_m^2 = o(n^{1 - \varepsilon})$, required for the strong approximation to the process $\sqrt n(\hat\beta(\cdot) - \beta(\cdot))$ by the pivotal process $J^{-1}(\cdot)\mathbb U(\cdot)$. We note that this restrictive condition is sufficient but we do not know whether it is necessary. This condition is a consequence of a step in the proof of Theorem \ref{theorem: strong} that relies upon Yurinskii's coupling. Therefore, improving that step
through the use of another coupling could potentially lead to significant
improvements in the conditions of the theorem; see, in particular, Theorem \ref{thm: gaussian coupling for t process} in the next section. See also \cite{K94} and \cite{CNS15}, where a Hungarian coupling is derived that may give a result similar to that in Theorem \ref{theorem: strong} but under somewhat weaker conditions if $d$ is small and the vector of approximating functions $Z$ consists of a tensor products of B-splines or wavelets. \qed
\end{remark}

\subsubsection{Resampling Methods Based on Gaussian Coupling:} Although Theorem \ref{theorem: strong} requires strong conditions, it is important because it suggests that one can approximate the distribution of the process $\sqrt n(\hat\beta(\cdot) - \beta(\cdot))$ using Gaussian and weighted bootstrap methods, which are wide-spread in the literature in other contexts and which we now describe.

Let us start with the Gaussian method. Let $\hat\Sigma^{1/2}$ denote the square root of the matrix $\hat\Sigma$. Note that the covariance function of the process $G(\cdot)$ conditional on $(Z_i)_{i=1}^n$, given in  \eqref{eq: covariance matrix gaussian process thm 5}, is equal to that of the process $\hat\Sigma^{1/2} B_m(\cdot)$, where $B_m(\cdot) = \{B_m(u)\colon u\in\mathcal U\}$ is a standard $m$-dimensional Brownian bridge, that is, a vector consisting of $m$ independent scalar Brownian bridges. Since the sample path of the Brownian bridge is continuous a.s., it follows that the process $\hat\Sigma^{1/2} B_m(\cdot)$ is a copy of the process $G(\cdot)$, conditional on $(Z_i)_{i=1}^n$. Hence, one can simulate a standard $m$-dimensional Brownian bridge $B_m^*(\cdot) = \{B_m^*(u)\colon u\in\mathcal U\}$ that is independent of the data and define
\begin{equation}\label{eq: gaussian process to simulate}
G^*(u) = G_n^*(u) = \hat \Sigma^{1/2} B_m^*(u),\quad u\in\mathcal U,
\end{equation}
so that conditional on $(Z_i)_{i=1}^n$, the process $G^*(\cdot) = \{G^*(u)\colon u\in\mathcal U\}$ is a copy of the process $G(\cdot)$, and $J^{-1}(\cdot)G^*(\cdot) = \{J^{-1}(u)G^*(u)\colon u\in\mathcal U\}$ is a copy of $J^{-1}(\cdot)G(\cdot)$. Let  $\hat J(u)$ be the estimators of the matrices $J(u)$ for all $u\in\mathcal U$  in  \eqref{inf-est-hatJ}. Then, as shown in the next theorem, one can use the conditional distribution of the process $\hat J^{-1}(\cdot)G^*(\cdot) = \{\hat J^{-1}(u)G^*(u)\colon u\in\mathcal U\}$ given the data, which can be simulated, to approximate the distribution of the process $J^{-1}(\cdot)G^*(\cdot)$, and via Theorem \ref{theorem: strong} also of the process $\sqrt n(\hat \beta(\cdot) - \beta(\cdot))$.

\begin{theorem}[Gaussian Method]\label{thm: gaussian method}
Suppose that Condition S holds. In addition, suppose that $h\sqrt m = o(n^{-\varepsilon})$, $m^2\zeta_m^2 = o(n^{1 - \varepsilon} h)$, and $m^{-\kappa + 1/2} = o(n^{-\varepsilon})$ for some constant $\varepsilon > 0$. Then
$$
\widehat J^{-1}(u) G^*(u) = J^{-1}(u) G^*(u) + r(u),\quad u\in\mathcal U,
$$
where
$$
\sup_{u\in\mathcal U}\|r(u)\| \lesssim_P  \sqrt{\frac{m^2 \zeta_m^2\log n}{n\hn}} + m^{-\kappa+1/2}   + \hn\sqrt{m}  = o(n^{-\varepsilon'})
$$
 for some $\varepsilon' > 0$. The stated bound continues to hold in $P$-probability if we replace
the unconditional probability $P$ by the conditional probability
$P^*$.
\end{theorem}

Another related inference method is the weighted bootstrap method. Pr{\ae}stgaard and Wellner~\cite{Praestgaard-Wellner-93}, Hahn~\cite{H97}, Chamberlain and Imbens~\cite{Chamberlain-Imbens-03}, and Chen and Pouzo~\cite{CP09} previously used this method in the point-wise case, where the set $\mathcal U$ is a singleton. We extend this method to obtain the distributional approximation for the process $\sqrt n(\hat\beta(\cdot) - \beta(\cdot)) = \{\sqrt n(\hat\beta(u) - \beta(u))\colon u\in\mathcal U\}$ where $\mathcal U$ is not a singleton and in fact can be a continuum of quantile indices. To describe the method, consider a set of weights
$\pi_1,...,\pi_n$ that are i.i.d. draws from the distribution of a non-negative random variable $\pi$ with $\Ep[\pi] =1$ and $\Ep[\pi^2] =2$, such as the standard exponential distribution, and that are independent of the data. For all $u\in\mathcal U$, define the weighted
bootstrap estimator $\hat\beta^b(u)$ as the solution to the weighted QR problem
\begin{equation}\label{eq: weighted bootstrap problem}
 \hat \beta^b(u) \in \arg \min_{\beta \in \RR^m} \En[ \pi_i \rho_u(Y_i-Z_i'\beta)].
\end{equation}
Then, as shown in the next theorem, one can use the conditional distribution of the process $\sqrt n(\hat\beta^b(\cdot) - \hat\beta(\cdot)) = \{\sqrt n(\hat\beta^b(u) - \hat\beta(u))\colon u\in\mathcal U\}$ given the data, which can be simulated, to approximate the distribution of the process $J^{-1}(\cdot)G^*(\cdot)$, and via Theorem \ref{theorem: strong} also of the process $\sqrt n(\hat\beta(\cdot) - \beta(\cdot))$.

\begin{theorem}[Weighted Bootstrap Method]\label{Thm:MainBootstrap}
Suppose that Condition S holds. In addition, suppose that $m^{7} \zeta_m^6 = o(n^{1-\varepsilon})$ and $m^{-\kappa + 1} = o(n^{-\varepsilon})$ for some constant $\varepsilon>0$. Moreover, suppose that the random variable $\pi$ is non-negative and satisfies $E[\pi] = 1$, $E[\pi^2] = 2$, $E[\pi^4]\lesssim 1$. Finally, suppose that $\max_{1\leq i\leq n} \pi_i \lesssim \log n$. Then
\begin{equation}\label{eq: main implication thm 6}
\sqrt n\Big(\hat\beta^b(u) - \hat\beta(u)\Big) = J^{-1}(u)\Z^*(u) + r(u),
\end{equation}
where $G^*(\cdot) = G^*_n(\cdot)$ is a process on $\mathcal U$ that, conditionally on $(Z_i)_{i=1}^n$, is zero-mean Gaussian with a.s. continuous sample paths and the covariance function \eqref{eq: covariance matrix gaussian process thm 5}, and
$$
\sup_{u\in\mathcal U}\|r(u)\| \lesssim_P o(n^{-\varepsilon'})
$$
for some $\varepsilon' >0$. Moreover, the stated bound continues to hold in $P$-probability if we replace the
unconditional probability $P$ by the conditional probability $P^*$.
\end{theorem}

\begin{remark}[Comparison of Gaussian and weighted bootstrap methods]
The comparison of the Gaussian and weighted bootstrap methods is similar to that of the pivotal and gradient bootstrap methods. Again both methods have their own advantages. The main advantage of the weighted bootstrap method is arguably that it does not require estimating the matrices $J(u)$, $u\in \mathcal{U}$, which allows us to bypass the need to select a bandwidth $h$. An advantage of the Gaussian method is that it is computationally simple as it does not require solving the quantile optimization problem for each simulation of weights $(\pi_i)_{i=1}^n$.\qed
\end{remark}

\begin{remark}[Comparison of resampling methods based on the pivotal and Gaussian couplings]
Although it is difficult to compare the resampling methods based on the pivotal coupling (pivotal and gradient bootstrap methods) with those based on the Gaussian coupling (Gaussian and weighted bootstrap methods) from a theoretical point of view, our results suggest that the former methods might be more accurate than the latter ones. Indeed, the methods based on the pivotal coupling require weaker conditions (see, however, Theorem \ref{thm: gaussian coupling for t process} in the next section, where it is possible to substantially weaken conditions required for the Gaussian coupling in some examples) and, in addition, developing the Gaussian coupling requires a ``double approximation'': in order to construct a Gaussian process that strongly approximates the original process $\sqrt n(\hat\beta(\cdot) - \beta(\cdot))$, we first construct a coupling of the latter process with the pivotal process, and then we construct a coupling of the Gaussian process with the pivotal process, so that the Gaussian process is coupled with the original process $\sqrt n(\hat\beta(\cdot) - \beta(\cdot))$ via the pivotal process.  In the numerical examples of Section \ref{subsec:simulations} and the companion computational paper \cite{RJ06}, however, we find that the performance of the four methods is similar in finite samples.

In addition, the Gaussian coupling is important because of the existence of well-developed extreme value theory for Gaussian processes; see, for example, Leadbetter, Lindgren and Rootzen \cite{LLR83}. In combination with the Gaussian coupling, this theory can be used to develop inferential procedures for some linear functionals without relying upon resampling methods (that is, with non-bootstrap critical values) like in  Rio \cite{R94}. Moreover, the Gaussian coupling is important because of existence of anti-concentration inequalities for Gaussian processes  (Lemma \ref{lemma:Anti}), which are useful to construct uniform confidence bands for linear functionals in the next section.
\qed
\end{remark}

\section{Linear Functionals of the Conditional Quantile Function}\label{Sec:Functionals}

In addition to  the quantile functions $x \mapsto Q(u,x)$, $u\in\mathcal U$, we are also interested in various linear functionals of these functions. If $x$ is decomposed as $(w,v)$ and $x_k$ denotes the $k$-th component of $x$, examples of particularly useful linear functionals include
\begin{itemize}
\item[1.]  the derivative:  \quad $\theta(u,x) = \partial_{x_k} Q(u,x)$;
\item[2.]  the average derivative:  \quad $\theta(u) = \int \partial_{x_k} Q(u,x) d\mu(x)$;
\item[3.]  the conditional average derivative: \ $\theta(u,w) = \int \partial_{x_k} Q(u,w,v) d\mu(v|w)$.
\end{itemize}
The measures $\mu$ entering the definitions above are assumed to be known but our results can also be extended to include estimated measures. To cover all examples, we denote the linear functional of interest by $\theta(u,w)$, where $w\in\mathcal W\subset \mathbb R^{d_w}$.

Let $I\subset \mathcal U \times \mathcal W$ denote the set of values of $(u,w)$ of interest. For example, if we are interested in
\begin{itemize}
\item  the function $\theta(u,w)$ at a particular point $(u,w)$, then $I = \{(u,w)\}$,
\item the function $u\mapsto \theta(u,w)$ having fixed $w$, then $I =\mathcal{U}\times\{w\}$,
\item the function $w\mapsto \theta(u,w)$ having fixed $u$, then $I = \{u\}\times\mathcal{W}$,
\item the entire function $(u,w)\mapsto \theta(u,w)$, then $I = \mathcal{U}\times\mathcal{W}$.
\end{itemize}

\subsection{QR-Series Approximation} By the linearity of the series approximations, the function $\theta(u,w)$  can be seen as a linear functional of the quantile regression coefficients $\beta(u)$ up to an approximation error, that is,
\begin{equation}\label{eq: cool}
\theta(u,w) = \ell(w)' \beta(u) +  r(u,w), \quad  (u,w) \in I,
\end{equation}
where $\ell(w)' \beta(u)$ is the \textit{QR-series approximation}, with
$\ell(w)$ denoting the $m$-dimensional vector of loadings on the coefficients,
and $r(u,w)$ is the remainder term, which corresponds to the
\textit{QR-approximation error}.  Indeed, this
decomposition arises from the application of different linear
operators $\mathcal{A}$ to the decomposition $Q(u,\cdot) =
Z(\cdot)'\beta(u) + R(u,\cdot)$ and evaluating the resulting
functions at $w$:
\begin{equation}\label{eq: uncool}
\(\mathcal{A} Q(u,\cdot)\)[w] = \(\mathcal{A} Z(\cdot)\)[w]'\beta(u) +  \(\mathcal{A} R(u,\cdot)\)[w].
\end{equation}
In the three examples above the operator $\mathcal{A}$ is given by,
respectively,
\begin{itemize}
\item[1.]  a differential operator:  $(\mathcal{A}g) [x]= (\partial_{x_k}g) [x] $, so that
$$\ell(x) =\partial_{x_k} Z(x), \ \ \  r(u,x) = \partial_{x_k} R(u,x);$$
\item[2.]  an integro-differential operator:  $\mathcal{A} g=\int \partial_{x_k} g(x) d\mu(x)$, so that
$$\ell  = \int \partial_{x_k} Z(x) d \mu(x), \ \ \  r(u) = \int \partial_{x_k} R(u,x) d \mu(x);  $$
\item[3.]  a partial integro-differential operator: $(\mathcal{A}g) [w]=\int \partial_{x_k} g(w,v) d\mu(v|w)$, so that
$$\ell(w) =\int \partial_{x_k} Z(w,v)d\mu(v|w), \ \ \ r(u, w) =  \int \partial_{x_k} R(u,w,v) d\mu(v|w).$$
\end{itemize}
For notational convenience, we use the formulation (\ref{eq: cool}) in the analysis, instead of the motivational formulation (\ref{eq: uncool}).

\subsection{QR-Series Estimator} Given $\widehat Q(u,x) = Z(x)'\widehat\beta(u)$, we use the plug-in estimator
$$
\widehat \theta(u,w) = \ell(w)' \widehat \beta(u),\quad (u,w)\in I,
$$
to estimate $\theta(u,w)$. In cases where $\theta(u,w)$ is known to be monotone with respect to either  $w$ or $u$, we show in Appendix \ref{subset: mono} of the Supplemental Material how to impose this restriction after estimation to improve finite sample properties of  $\widehat \theta(u,w)$. 

In the rest of this section, we provide rates of convergence for $\widehat\theta(u,w)$ as well as the inference tools
that will be valid for inference on the QR-series approximation
$$
\ell(w)' \beta(u), \quad (u,w)  \in I,
$$
and, provided that the QR-approximation error $r(u,w)$ is small enough relative to the estimation noise,
will also be valid for inference on the  functional of interest:
$$
\theta(u,w), \ \ (u,w)  \in I.
$$
Thus,  the QR-series approximation $\ell(w)'\beta(u)$ is an important penultimate
target, whereas the functional $\theta(u,w)$ is the ultimate target. 


\subsection{Pointwise Asymptotic Theory}

We start with the rate of convergence of the estimator $\hat\theta(u,w)$ at a particular quantile index value $u$ and a particular covariate value $w$ (pointwise rate of convergence). In principle, the point $(u,w)$ can depend on $n$, but we suppress the dependence for simplicity of notation. We use the following assumption:

\noindent
\textbf{Condition P.} \textit{The QR-series decomposition  $\theta(u,w) = \ell(w)' \beta(u) +  r(u,w)$ satisfies
$$
\frac{\sqrt{n}|r(u,w)|}{\|\ell(w)\|} = o(1).
$$}
Condition P can be understood as an undersmoothing condition. Although undersmoothing conditions are widely spread in the literature, as Belloni et al \cite{BelloniChenChernozhukov2009} pointed out, there is no theoretically justified procedure in the literature that would lead to a desired level of undersmoothing for the estimators of the linear functionals even for least squares estimators. For example, under conditions of Lemma \ref{lem: approximation error}, when $w = x$, $\theta(u,w) = Q(u,x)$, so that $\ell(w) = Z(x)$ and $r(u,w) = R(u,x)$, and the vector $Z$ consists of a tensor product of B-splines of order $s_0$, Condition P holds as long as $n / m^{1+2(s\wedge s_0)/d} = o(1)$.   

Based on Condition P, we derive the following theorem for the pointwise rate of convergence of $\hat\theta(u,w)$, which is the fifth main result together with Theorem \ref{Thm:DistributionsInferentialpointwise} below on pointwise asymptotic normality of $\hat\theta(u,w)$.
\begin{theorem}[Pointwise Convergence Rate for Linear Functionals]
\label{theorem: pointwise rate}  Suppose that the conditions of
Theorem \ref{Thm:MainULA} hold. In addition, suppose that Condition P holds. Then
$$
| \widehat \theta(u,w) -  \theta(u,w)| \lesssim_P  \frac{\|\ell(w)\|}{\sqrt{n}}.
$$
\end{theorem}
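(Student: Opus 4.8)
The plan is to deduce the statement directly from the pivotal coupling of Theorem~\ref{Thm:MainULA}, combined with an elementary conditional second-moment bound for a linear functional of the pivotal process $\mathbb{U}_{n}(\cdot)$. First I would use the decomposition (\ref{eq: cool}) to write
$$
\widehat\theta(u,w)-\theta(u,w) = \ell(w)'\big(\widehat\beta(u)-\beta(u)\big) - r_n(u,w),
$$
and immediately dispose of the approximation error: by Condition~P.1, $|r_n(u,w)| = o(\|\ell(w)\|/\sqrt n)$, and by Condition~P.2, $\|\ell(w)\| \lesssim \xi_\theta(m,w)$, so $|r_n(u,w)| = o(\xi_\theta(m,w)/\sqrt n)$. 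Hence it suffices to prove $|\ell(w)'(\widehat\beta(u)-\beta(u))| \lesssim_P \xi_\theta(m,w)/\sqrt n$.

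Next I would invoke Theorem~\ref{Thm:MainULA}, whose hypotheses are assumed here, to obtain $\sqrt n(\widehat\beta(u)-\beta(u)) = J_m^{-1}(u)\mathbb{U}_{n}(u) + \delta_n(u)$ with $\|\delta_n(u)\| = o_P(1)$. Contracting with $\ell(w)$ and applying Cauchy--Schwarz to the remainder gives
$$
\sqrt n\,\ell(w)'\big(\widehat\beta(u)-\beta(u)\big) = \ell(w)'J_m^{-1}(u)\mathbb{U}_{n}(u) + \|\ell(w)\|\,o_P(1),
$$
where the last term is $o_P(\xi_\theta(m,w))$ by Condition~P.2. So the whole problem reduces to showing that the pivotal term $\ell(w)'J_m^{-1}(u)\mathbb{U}_{n}(u)$ is $O_P(\xi_\theta(m,w))$.

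The main obstacle — essentially the only non-routine point — is that the crude bound $\|\ell(w)\|\cdot\|J_m^{-1}(u)\|\cdot\|\mathbb{U}_{n}(u)\|$ is too lossy, because $\|\mathbb{U}_{n}(u)\| = O_P(\sqrt m)$; one must exploit the quadratic-form structure instead. Conditionally on $Z_1,\dots,Z_n$, the vector $\mathbb{U}_{n}(u)$ has mean zero and covariance $\widehat\Sigma_m\,u(1-u)$, so
$$
E\big[(\ell(w)'J_m^{-1}(u)\mathbb{U}_{n}(u))^2 \mid Z_1,\dots,Z_n\big] = u(1-u)\,\ell(w)'J_m^{-1}(u)\widehat\Sigma_m J_m^{-1}(u)\ell(w).
$$
By the Remark following Condition~S the eigenvalues of $J_m(u)$ are bounded away from $0$ and $\infty$, and by Theorem~\ref{covariance} (whose hypothesis $\zeta_m^2\log n = o(n)$ is implied by the assumptions of Theorem~\ref{Thm:MainULA}) together with~S.3, the eigenvalues of $\widehat\Sigma_m$ are bounded above with probability tending to one; hence the right-hand side is $\lesssim_P \|\ell(w)\|^2 \lesssim \xi_\theta(m,w)^2$.

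Finally, a conditional Markov/Chebyshev argument upgrades this second-moment bound to $\ell(w)'J_m^{-1}(u)\mathbb{U}_{n}(u) = O_P(\xi_\theta(m,w))$: for the data-dependent variance $V_n := u(1-u)\,\ell(w)'J_m^{-1}(u)\widehat\Sigma_m J_m^{-1}(u)\ell(w)$, one has $P(|\ell(w)'J_m^{-1}(u)\mathbb{U}_{n}(u)| > t\,\xi_\theta(m,w)) \le E[\min(1, V_n/(t^2\xi_\theta(m,w)^2))]$, and $V_n/\xi_\theta(m,w)^2 = O_P(1)$ makes this small for $t$ large. Combining this with the two displays above yields $|\widehat\theta(u,w)-\theta(u,w)| \lesssim_P \xi_\theta(m,w)/\sqrt n$, as claimed.
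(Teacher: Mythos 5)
Your proposal is correct and follows essentially the same route as the paper: triangle inequality to isolate the approximation error via P.1--P.2, the pivotal coupling of Theorem \ref{Thm:MainULA} to reduce to the term $\ell(w)'J_m^{-1}(u)\mathbb{U}_n(u)/\sqrt{n}$, and a second-moment/Chebyshev bound exploiting the quadratic-form structure. The only (immaterial) difference is that you compute the variance conditionally on $Z_1,\dots,Z_n$ and then control $\widehat\Sigma_m$, whereas the paper computes the unconditional second moment directly via S.3, which is slightly shorter.
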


\begin{remark}[Rates and norm of vector of loadings] The rate of convergence of $\widehat \theta(u,w)$ depends on the functional $\theta(u,w)$ through  the norm of the vector of loadings $\ell(w)$. For example, if we are interested in the coefficient $\beta_1(u)$, so that $\theta(u,w) = \beta_1(u)$, which might be a parameter of interest in the Many regressors (MR) model, then $\ell(w) = (1, 0, \ldots, 0)'$, and so $\|\ell(w) \| = 1$, yielding a $\sqrt{n}$-consistent estimator $\widehat\theta(u,w)$. See Comment \ref{comment: xi} below for additional examples of linear functionals with bounds on $\|\ell(w)\|$. \qed
\end{remark}

In order to perform inference, we consider the t-statistic
$$
t(u,w) = \frac{  \hat \theta(u,w) - \theta(u,w) }{ \widehat \sigma(u,w)},
$$
where
\begin{equation}\label{Def:hatsigma2n}
\widehat \sigma^2(u,w) =  u(1-u) \ell(w)' \widehat J^{-1}(u) \widehat \Sigma  \widehat J^{-1}(u)\ell(w)/n
\end{equation}
 is a consistent estimator of
\begin{equation}\label{Def:sigma2n}
 \sigma^2(u,w) = u(1-u) \ell(w)'J^{-1}(u)  \Sigma   J^{-1}(u)\ell(w)/n,
\end{equation}
the asymptotic variance of $\widehat \theta(u,w)$, obtained by the delta method. We can carry out standard inference based on this t-statistic because
$t(u,w) \to_d N(0,1)$, as we establish below.

\begin{theorem}[Pointwise Inference for Linear Functionals]\label{Thm:DistributionsInferentialpointwise}
Suppose that the conditions of Theorem \ref{Thm:MainULA} hold. In addition, suppose that Condition P holds, $h = o(1)$ and $m\zeta_m^2\log^2 n = o(n h)$. Then
$$
t(u,w) \to_d N(0,1).
$$
\end{theorem}
\begin{remark}[Using resampling methods for pointwise inference]
Although it is possible to establish validity of all the resampling methods from the previous section to perform pointwise inference on linear functionals, we do not show these results here because they will follow as a special case from our results below on uniform inference for linear functionals.  We provide an implementation algorithm to perform pointwise inference using the resampling methods in Appendix \ref{app:algorithms}.  \qed
\end{remark}

\subsection{Uniform Asymptotic Theory} Next, we derive the rate of convergence of the estimator $\hat\theta(u,w)$ that holds uniformly over $(u,w)\in I$. We use the following assumption:

\noindent
\textbf{Condition U.}\textit{
\begin{itemize}
\item[U.1] The set $I$ is such that its dimension $d_I$ is fixed and its diameter is bounded uniformly over $n$.
\item[U.2] For some $\varepsilon > 0$, the QR-approximation error $r(u,w)$ satisfies
$$
\sqrt{n}  \sup_{(u,w)\in I}\frac{ | r(u,w) |}{\|\ell(w)\|}  = o(n^{-\varepsilon}).
$$
\item[U.3]  The vector of loadings $\ell(w)$ satisfies
 $$
\|\ell(w)\| \leq \zeta_{m,\theta}\text{ and } \ \left\|\frac{\ell(w)}{\|\ell(w)\|} - \frac{\ell(w')}{\|\ell(w')\|}\right\| \leq \zeta_{m,\theta}^L\|w - w'\|
$$
for all $w,w'\in\mathcal W$, where $\log \zeta_{m,\theta}^L\lesssim  \log n$.
\end{itemize}
}

Condition U.1 on the dimension and the diameter of the set $I$ is mild and can be further relaxed at the expense of additional technicalities. As in the pointwise case, Condition U.2 can be understood as an undersmoothing condition. Condition U.3 requires that the vector of loadings $\ell(w)$ is bounded uniformly over $w\in\mathcal W$ in the Euclidean norm by $\zeta_{m,\theta}$ and the function $w\mapsto \ell(w)/\|\ell(w)\|$ is Lipschitz-continuous in the Euclidean norm with the Lipschitz constant $\zeta_{m,\theta}^L$. We note that the last condition is rather weak because the only requirement on the Lipschitz constant that we impose is that $\log\zeta_{m,\theta}^L \lesssim \log n$. We discuss some bounds on the constant $\zeta_{m,\theta}$ in a separate comment below.

\begin{remark}[Primitive bounds on $\zeta_{m,\theta}$]\label{comment: xi}
The uniform rate of convergence for the estimator $\hat\theta(u,w)$ derived below in Theorem \ref{theorem: uniform rate} will crucially depend on the constant $\zeta_{m,\theta}$ appearing in Condition U. Here we discuss some bounds on this constant. For brevity, we only discuss the case of B-splines and refer to Newey \cite{W97} and Chen \cite{Chen2006} for other choices of approximating functions. We assume that $\mathcal X = [0,1]^d$ and that the vector of approximating functions $Z$ consists of tensor products of B-splines of order $s_0$. As discussed above,  then $\zeta_m = \sup_{x\in\mathcal X} \|Z(x)\|\lesssim \sqrt m$ and it is also possible to verify that for all positive integers $\alpha \leq s_0$,  $\sup_{x\in\mathcal X}\|\partial^{\alpha}_{x_k} Z(x)\| \lesssim m^{1/2 + \alpha / d}$; see for example Chen and Christensen \cite{CC13}. Then \\

\begin{tabular}{ll}
$\bullet$  For &$\theta(u,w) = Q(u,x)$,  $\ell(w)=Z(x)$ and  $\zeta_{m,\theta} \lesssim m^{1/2}$;\\
$\bullet$  For &$\theta(u,w) = \partial_{x_k} Q(u,x)$,  $\ell(w)= \partial^{\alpha}_{x_k} Z(x)$ and $\zeta_{m,\theta} \lesssim m^{1/2 + \alpha/d}$;\\
$\bullet$  For  &$\theta(u) = \int \partial_{x_k} Q(u,x) d\mu(x)$ with ${\rm supp}(\mu) \subset {\rm int}(\mathcal{X})$ and $|\partial_{x_k} \mu(x)| \lesssim 1$,\\
  &$\ell = \int \partial_{x_k} Z(x)  \mu(x) \ dx = -\int Z(x) \partial_{x_k} \mu(x) \ dx$ and $\zeta_{m,\theta} \lesssim 1$;\\
\end{tabular}

\noindent
see Newey \cite{W97} for more explanations on the last bound. \qed
\end{remark}

\subsubsection{Uniform-in-$u$ Rate of Convergence:} The following theorem establishes the uniform  rate of convergence of the QR-series estimator $\widehat \theta(u,w)$, which is the sixth main result.

\begin{theorem}[Uniform Convergence Rate for Linear Functionals]
\label{theorem: uniform rate}  Suppose that the conditions of Theorem
\ref{Thm:MainULA} hold. In addition, suppose that Condition U hold. Then
$$
\sup_{(u,w)\in I} | \widehat \theta(u,w) -  \theta(u,w)| \lesssim_P  \sqrt{\frac{\zeta^2_{m,\theta}\log n}{n}}.
$$
\end{theorem}

The uniform rate of Theorem \ref{theorem: uniform rate}  is the same as the pointwise rate of Theorem \ref{theorem: pointwise rate} up to a small logarithmic factor. As in the pointwise rate result, the norm of the vector of loadings play a role which is controlled by $\zeta_{m,\theta}$ in Condition U. 

\begin{remark}[Comparison of Theorem \ref{theorem: uniform rate}  and Corollary \ref{cor: sup rate qr series estimator}] When $\theta(u,w)$ is the conditional quantile function, the convergence rate of Theorem \ref{theorem: uniform rate} is asymptotically equivalent to the rate of Corollary \ref{cor: sup rate qr series estimator} under the undersmoothing condition U.2. For example, in the case of B-splines, $\zeta^2_{m,\theta}\log n/ n = m \log n/ n$ and $m^{-(s\wedge s_0)/d} = o(\sqrt{m\log n/n})$ under U.2.
\end{remark}

\subsubsection{Gaussian and Pivotal Couplings for $t$-Statistic Processes:} Next, we consider inference on the function $(u,w)\mapsto \theta(u,w)$.  We base inference on  the t-statistic process $t(\cdot,\cdot) = \{t(u,w)\colon (u,w)\in I\}$ defined as follows:
\begin{equation}\label{tnprocess}
t(u,w) = \frac{  \hat \theta(u,w) - \theta(u,w) }{ \widehat \sigma(u,w)},
\end{equation}
where $\widehat \sigma^2(u,w)$, defined in  (\ref{Def:hatsigma2n}), is an estimator
of the asymptotic variance  $\sigma^2(u,w)$ of $\hat\theta(u,w)$ in (\ref{Def:sigma2n}). Using the results in the previous section, we construct pivotal and Gaussian couplings for this process in the following theorem, which is the seventh main result  together with Theorems \ref{thm: gaussian coupling for t process}, \ref{thm: resampling methods t process}, and \ref{thm: weighted bootstrap alternative conditions} below on couplings and resampling methods for the t-statistic process.

\begin{theorem}[Pivotal and Gaussian Couplings for t-statistic Process]\label{thm: couplings for t process}
Suppose that Conditions S and U hold. If $h = o(n^{-\varepsilon})$, $m\zeta_m^2 = o(n^{1 - \varepsilon} h)$, $m^3\zeta_m^2 = o(n^{1 - \varepsilon})$, and $m^{-\kappa + 1} = o(n^{-\varepsilon})$ for some constant $\varepsilon > 0$, then
\begin{equation}\label{eq: pivotal coupling t process}
\sup_{(u,w)\in I}\left|t(u,w) - \frac{\ell(w)'J^{-1}(u)\mathbb U(u)/\sqrt n}{\sigma(u,w)}\right| \lesssim_P o(n^{-\varepsilon'})
\end{equation}
for the process $\mathbb U(\cdot)$ defined in \eqref{Def:U} for some $\varepsilon' > 0$. Also, if $h = o(n^{-\varepsilon})$, $m\zeta_m^2 = o(n^{1 - \varepsilon} h)$, $m^7\zeta_m^6 = o(n^{1 - \varepsilon})$, and $m^{-\kappa + 1} = o(n^{-\varepsilon})$ for some constant $\varepsilon > 0$, then
\begin{equation}\label{eq: gaussian coupling t process}
\sup_{(u,w)\in I}\left|t(u,w) - \frac{\ell(w)'J^{-1}(u) G(u)/\sqrt n}{\sigma(u,w)}\right| \lesssim_P o(n^{-\varepsilon'})
\end{equation}
for the process $G(\cdot)$ defined in Theorem \ref{theorem: strong} for some $\varepsilon' > 0$.
\end{theorem}

The Gaussian coupling is derived in this theorem under rather strong condition $m^7\zeta_m^6 = o(n^{1 - \varepsilon})$. It turns out that it is possible to construct the same coupling under a different set of conditions:

\begin{theorem}[Gaussian Coupling for t-statistic Process under Alternative Conditions]\label{thm: gaussian coupling for t process}
Suppose that Conditions S and U hold. In addition, suppose that $h = o(n^{-\varepsilon})$, $m\zeta_m^2 = o(n^{1 - \varepsilon} h)$, $m^3\zeta_m^2 = o(n^{1 - \varepsilon})$, and $m^{-\kappa + 1} = o(n^{-\varepsilon})$ for some constant $\varepsilon > 0$. Moreover, suppose that $(1 + \zeta_{m,\theta}^L)^{2 d_I}\zeta_m^2 = o(n^{1 - \varepsilon})$. Then \eqref{eq: gaussian coupling t process} holds for the same process $G(\cdot)$ as that used in Theorem \ref{thm: couplings for t process}.
\end{theorem}
\begin{remark}[Comparison of conditions for the Gaussian coupling in Theorems \ref{thm: couplings for t process} and \ref{thm: gaussian coupling for t process}]
The conditions of Theorems \ref{thm: couplings for t process} and \ref{thm: gaussian coupling for t process} required for the Gaussian coupling are non-nested. In particular, Theorem \ref{thm: gaussian coupling for t process} requires the condition $m^3\zeta_m^2 = o(n^{1 - \varepsilon})$ that is weaker than the corresponding condition in Theorem \ref{thm: couplings for t process}, $m^7\zeta_m^6 = o(n^{1 - \varepsilon})$, but it also requires the condition $(1 + \zeta_{m,\theta}^L)^{2 d_I} \zeta_m^2 = o(n^{1 - \varepsilon})$ that is stronger than the corresponding condition in Theorem \ref{thm: couplings for t process}, $\log \zeta_{m,\theta}^L \lesssim \log n$. However, in most cases of practical importance, the conditions of Theorem \ref{thm: gaussian coupling for t process} are substantially weaker than those of Theorem \ref{thm: couplings for t process}. For example, consider the NP model and suppose that we are interested in the conditional quantile function $Q(u,x)$ itself, so that $\ell(\omega) = Z(x)$. Further, suppose that $\mathcal X = [0,1]^d$ and that the vector of approximating functions $z$ consists of tensor products of B-splines.  Then $\zeta_m \lesssim \sqrt m$, as discussed above, and it is also possible to show that $\zeta_{m,\theta}^L \lesssim m^{1/d}$. Hence, in this case Theorem \ref{thm: gaussian coupling for t process} requires that $m^{4\vee (2/d + 3)} = o(n^{1 - \varepsilon})$ since $d_I = 1 + d$ whereas Theorem \ref{thm: couplings for t process} requires $m^{10} = o(n^{1 - \varepsilon})$.
\end{remark}

\subsubsection{Resampling Methods:} As in Section \ref{Sec:theory_coeff}, we can use four resampling methods to approximately simulate the distribution of the pivotal and Gaussian processes. Specifically, define the processes $\mathbb U^*(\cdot)$ and $G^*(\cdot)$ as in \eqref{Def:U*} and \eqref{eq: gaussian process to simulate}, respectively. Recall that conditional on $(Z_i)_{i=1}^n$, these processes are copies of the processes $\mathbb U(\cdot)$ and $G(\cdot)$, respectively, and so the processes
$$
\left\{\frac{\ell(w)'J^{-1}(u)\mathbb U^*(u)/\sqrt n}{\sigma(u,w)}\colon (u,w)\in I\right\} \ \text{ and } \ \left\{\frac{\ell(w)'J^{-1}(u)\mathbb G^*(u)/\sqrt n}{\sigma(u,w)}\colon (u,w)\in I\right\}
$$
are copies of the the pivotal and Gaussian processes
$$
\left\{\frac{\ell(w)'J^{-1}(u)\mathbb U(u)/\sqrt n}{\sigma(u,w)}\colon (u,w)\in I\right\} \ \text{ and } \ \left\{\frac{\ell(w)'J^{-1}(u)\mathbb G(u)/\sqrt n}{\sigma(u,w)}\colon (u,w)\in I\right\},
$$
respectively. Also, define the t-statistic bootstrap process $t^*(\cdot,\cdot) = \{t^*(u,w)\colon (u,w)\in I\}$ for each method as
\begin{equation*}
\begin{array}{llll}
\\
\mbox{pivotal method:}  & \displaystyle t^*(u,w) = \frac{  \ell(w)' \widehat J^{-1}(u) \mathbb{U}^*(u)/\sqrt{n} }{ \widehat \sigma(u,w)};\\
\\
\mbox{gradient bootstrap method:}    & \displaystyle t^*(u,w) = \frac{  \ell(w)' (\hat \beta^*(u)- \hat\beta(u)) }{ \widehat \sigma(u,w)}; \\
\\
\mbox{Gaussian method:}  & \displaystyle  t^*(u,w) = \frac{  \ell(w)' \widehat J^{-1}(u) \Z^*(u)/\sqrt{n} }{ \widehat \sigma(u,w)}; \\
\\
\mbox{weighted bootstrap method:} & \displaystyle  t^*(u,w) =  \frac{  \ell(w)' (\hat \beta^b(u)- \hat \beta(u)) }{ \widehat \sigma(u,w)}.
\end{array}
\end{equation*}

The following theorem shows that the conditional distribution of the t-statistic bootstrap process $t^*(\cdot,\cdot)$ given the data, which can be simulated, approximates the distribution of the pivotal (in the case of pivotal and gradient bootstrap methods) and Gaussian (in the case of Gaussian and weighted bootstrap methods) processes, and via Theorems \ref{thm: couplings for t process} and \ref{thm: gaussian coupling for t process} also of the original t-statistic process $t(\cdot,\cdot)$.

\begin{theorem}[Validity of Resampling Methods for t-statistic Process]\label{thm: resampling methods t process}
Suppose that Conditions S and U hold. In addition, suppose that $h = o(n^{-\varepsilon})$ and $m\zeta_m^2 = o(n^{1 - \varepsilon} h)$ for some constant $\varepsilon > 0$. Moreover, suppose that (i) the conditions of Theorem \ref{Thm:MainULAfeasible} hold in the case of the pivotal method, (ii) the conditions of Theorem \ref{Thm:MainULAstar} hold in the case of gradient bootstrap method, (iii) the conditions of Theorem \ref{thm: gaussian method} hold in the case of Gaussian method, and (iv) the conditions of Theorems \ref{Thm:MainBootstrap} hold in the case of weighted bootstrap method. Then for the pivotal and gradient bootstrap methods,
$$
\sup_{(u,w)\in I}\left|t^*(u,w) - \frac{\ell(w)'J^{-1}(u)\mathbb U^*(u)/\sqrt n}{\sigma(u,w)}\right| \lesssim_P o(n^{-\varepsilon'})
$$
for some $\varepsilon' > 0$.
In addition, for the Gaussian and weighted bootstrap methods,
$$
\sup_{(u,w)\in I}\left|t^*(u,w) - \frac{\ell(w)'J^{-1}(u) G^*(u)/\sqrt n}{\sigma(u,w)}\right| \lesssim_P o(n^{-\varepsilon'})
$$
for some $\varepsilon' > 0$. Moreover, the stated bounds continue to hold in $P$-probability if we replace the unconditional probability $P$ by the conditional probability $P^*$.
\end{theorem}

Note that in the case of weighted bootstrap method, the theorem above imposes the rather strong condition $m^7\zeta_m^6 = o(n^{1 -  \varepsilon})$. Like in the case of Theorem \ref{thm: gaussian coupling for t process}, it turns out that it is possible to obtain the same approximation as in this theorem but under a different set of conditions:
\begin{theorem}[Weighted Bootstrap Method for t-statistic Process under Alternative Conditions]\label{thm: weighted bootstrap alternative conditions}
Suppose that Conditions S and U hold. In addition, suppose that $h = o(n^{-\varepsilon})$, $m\zeta_m^2 = o(n^{1 - \varepsilon} h)$, $m^3\zeta_m^2 = o(n^{1 - \varepsilon})$, and $m^{-\kappa + 1} = o(n^{-\varepsilon})$ for some constant $\varepsilon > 0$. Moreover, suppose that $(1 + \zeta_{m,\theta}^L)^{2 d_I}\zeta_m^2 = o(n^{1 - \varepsilon})$. Finally, suppose that the conditions of Theorem \ref{Thm:MainBootstrap} on the weights $\pi_i$ hold. Then for the weighted bootstrap method,
$$
\sup_{(u,w)\in I}\left|t^*(u,w) - \frac{\ell(w)'J^{-1}(u) G^*(u)/\sqrt n}{\sigma(u,w)}\right| \lesssim_P o(n^{-\varepsilon'})
$$
for some $\varepsilon' > 0$. Moreover, the stated bound continues to hold in $P$-probability if we replace the unconditional probability $P$ by the conditional probability $P^*$.
\end{theorem}

\subsection{Uniform Confidence Bands}
With the help of Theorems \ref{thm: couplings for t process} -- \ref{thm: weighted bootstrap alternative conditions}, we can solve a wide range of inference problems. For example, we can construct uniform confidence bands for linear functionals $(u,w) \mapsto \theta(u,w)$ on $I$, and test shape constraints for the conditional quantile functions $x\mapsto Q(u,x)$. For the former problem, let
$$
V = \sup_{(u,w)\in I} | t(u,w)|
$$
be the maximal t-statistic. Also, let $k(1 - \alpha)$ denote the $(1-\alpha)$ quantile of the distribution of $V$. If $k(1 - \alpha)$ were known, we would have the confidence band
\begin{equation}\label{eq: exact band}
\Big\{[\hat\theta(u,w) - k(1-\alpha)\hat\sigma(u,w), \hat\theta(u,w) + k(1 - \alpha)\hat\sigma(u,w)]\colon (u,w)\in I\Big\}
\end{equation}
covering the whole function $\{\theta(u,w)\colon (u,w)\in I\}$ with probability $1 - \alpha$ exactly. However, $k(1-\alpha)$ is typically unknown, and the confidence band \eqref{eq: exact band} is infeasible. Instead, we approximate $k(1 - \alpha)$ using the resampling methods developed in this paper. Specifically, let
$$
V^* = \sup_{(u,w) \in I} | t^*(u,w)|
$$
be the bootstrap maximal t-statistic, and let $k^*(1-\alpha)$ be the $(1 - \alpha)$ quantile of the conditional distribution of $V^*$ given the data. This quantity can be computed numerically by Monte Carlo methods, as we illustrate in the next section via empirical examples and give precise algorithms in Appendix \ref{app:algorithms}. We then form a two-sided $(1-\alpha)$ uniform confidence band as
$$
\Big\{[\dot{\iota} (u,w), \ddot{\iota}(u,w)] = [ \hat \theta(u,w) - k^*(1-\alpha) \hat \sigma(u,w), \ \hat \theta(u,w) + k^*(1-\alpha) \hat \sigma(u,w)]\colon (u,w) \in I\Big\}.
$$

The following theorem establishes that this confidence band covers the whole function $\{\theta(u,w)\colon (u,w)\in I\}$ with probability $(1-\alpha)$ in large samples. 

\begin{theorem}[Uniform Confidence Bands]\label{theorem: inference using couplings}
Suppose that Conditions S and U hold. In addition, suppose that $m^3\zeta_m^2 = o(n^{1 - \varepsilon})$ and $m^{-\kappa + 1} = o(n^{-\varepsilon})$ for some constant $\varepsilon > 0$. Moreover, suppose that (i) $h\sqrt m = o(n^{-\varepsilon})$ and $m^2\zeta_m^2 = o(n^{1 - \varepsilon} h)$ in the case of pivotal and Gaussian methods and (ii) $h = o(n^{-\varepsilon})$ and $m\zeta_m^2 = o(n^{1 - \varepsilon} h)$ in the case of gradient and weighted bootstrap methods. Finally, suppose that the conditions of Theorem \ref{Thm:MainBootstrap} on the weights $\pi_i$ hold in the case of the weighted bootstrap method.
(1) Then
\begin{equation}\label{eq: conservative}
P \Big( V \leq k^*(1-\alpha) \Big) = 1-\alpha + o(1).
\end{equation}
(2) As a consequence,
\begin{equation}\label{eq: conservative coverage}
P \Big( \theta(u,w) \in [\dot{\iota}(u,w), \ddot{\iota}(u,w)], \mbox{ for all } (u,w) \in I \Big) = 1-\alpha + o(1).
\end{equation}
(3) The width of the confidence band $2k^*(1-\alpha) \hat
\sigma(u,w)$ obeys
\begin{equation}\label{eq: nonconservative width}
2k^*(1-\alpha) \hat \sigma(u,w)  \lesssim_P \sqrt{\frac{\zeta_{m,\theta}^2\log n}{n}}
\end{equation}
uniformly over $(u,w)\in I$.
\end{theorem}


In addition to the validity of the uniform confidence band, Theorem \ref{theorem: inference using couplings} establishes that the width of the uniform confidence band is of the same order as the uniform rate of convergence of the estimator $\hat\theta(u,w)$. 


\begin{remark}[Related literature]
The construction of uniform confidence bands for nonparametric functions has been of large interest both in econometrics and statistics at least from the seventies. Early constructions can be traced back at least to the classic work \cite{BR1973} by Bickel and Rosenblatt. More recent contributions include Claeskens  and Keilegom \cite{CK03}, Horowitz  and Lee \cite{HL09}, Gin\'{e}  and Nickl \cite{GN10}, and Chernozhukov, Chetverikov and Kato \cite{CCK2013}, among many others. Most of the constructions in the literature rely on a two-step strategy. First, the distribution of an estimator of the function of interest is approximated by some Gaussian process uniformly over its domain. Second, extreme value theory is employed to obtain the limit distribution of the supremum of the absolute value of the Gaussian process and its appropriate quantile is used to choose the width of the confidence band. A widely understood problem of this construction, however, is that the limit distribution on the second step may not exist and even if it does, it is often difficult to derive its explicit form. This distribution depends both on the function of interest and on the estimator considered, so that treatment of any new estimation problem requires a separate theorem, and in fact considerable efforts have been devoted to derive this distribution even in relatively simple settings, like density estimation based on projection kernels; see Gin\'{e}  and Nickl \cite{GN10} and references therein. We avoid this problem: instead of deriving the limit distribution on the second step, we rely upon resampling methods developed in this paper. As a result, our construction yields asymptotically exact uniform confidence bands that work generically for all linear functionals of the conditional quantile functions. Our strategy is related to that used in Chernozhukov, Chetverikov and Kato \cite{CCK2013} for the problem of density estimation and is built on Chernozhukov, Lee and Rosen \cite{CLR2009}, who proposed a related strategy for inference on the minimum of a function. \qed
\end{remark}

\subsection{Test of Shape Constraints}
We consider the problem of testing shape constraints for the conditional quantile functions $x\mapsto Q(u,x)$. Let $x_k$ denote the $k$-th component of $x$. We assume that the functions $x\mapsto Q(u,x)$ are twice continuously differentiable and consider three types of shape constraints:
\begin{itemize}
\item[(i)] Monotonicity of $x\mapsto Q(u,x)$ with respect to $x_k$: $\partial_{x_k}Q(u,x) \leq 0$ for all $x\in\mathcal X$ and $u\in \mathcal U$;

\item[(ii)] Concavity of $x\mapsto Q(u,x)$ with respect to $x_k$: $\partial^2_{x_k}Q(u,x) \leq 0$ for all $x\in\mathcal X$ and $u\in \mathcal U$;

\item[(iii)] Concavity of $x\mapsto Q(u,x)$ with respect to $x$: $\alpha'\partial_x^2Q(u,x)\alpha \leq 0$ for all $\alpha\in S^{d-1}$, $x\in\mathcal X$, and $u\in \mathcal U$,
\end{itemize}
where in the third example $\partial_x^2 Q(u,x)$ denotes the $d\times d$-dimensional matrix whose $(j,k)$-th element is given by $\partial_{x_j}\partial_{x_k}Q(u,x)$ for all $j,k = 1,\dots,d$.\footnote{Note that the twice continuously differentiable function $f\colon \mathcal X\to\mathbb R$ is concave if and only if $\alpha'\partial_x^2 f(x)\alpha\leq 0$ for all $x\in\mathcal X$ and $\alpha\in S^{d-1}$. To prove this claim, note that $f$ is concave if and only if the function $t\mapsto f(x + t \alpha)$ mapping $\{t\in \mathbb R\colon x+t\alpha\in \mathcal X\}$ to $\mathbb R$ is concave for all $x\in\mathcal X$ and $\alpha\in S^{d-1}$, which in turns holds if and only if $\alpha'\partial_x^2 f(x)\alpha\leq 0$ for all $x\in\mathcal X$ and $\alpha\in S^{d-1}$.} Note that in the first example we focus on the case where $x\mapsto Q(u,x)$ is decreasing with respect to $x_k$ but we can also consider the case where $x\mapsto Q(u,x)$ is increasing simply by replacing $Y$ by $-Y$ and $u$ by $1 - u$. Similarly, we can consider the case of convexity in the second and third examples.

Now, observe that all three shape constraints discussed above can be expressed using the same notation:
$$
\theta(u,w)\leq 0,\quad\text{for all }(u,w)\in I,
$$
where $\theta(u,w)$ is a linear functional with the vector of loadings being $\ell(w) = \partial_{x_k}Z(x)$ with $w = x$ in the first example, $\ell(w) = \partial^2_{x_k}Z(x)$ with $w = x$ in the second example, and $\ell(w) = (\ell_1(w),\dots,\ell_m(w))'$ where $\ell_j(w)  = \alpha'\partial_x^2 Z_j(x)\alpha$ for all $j=1,\dots,m$ with $w = (x,\alpha)$ in the third example. Hence, we are interested in testing
$$
H_0\colon \sup_{(u,w)\in I}\theta(u,w)\leq 0 \ \text{ against } \ H_1\colon  \sup_{(u,w)\in I}\theta(u,w) > 0.
$$
To test $H_0$ against $H_1$, we consider the one-sided Kolmogorov-Smirnov statistic
$$
T = \sup_{(u,w)\in I} \frac{\hat\theta(u,w)}{\hat\sigma(u,w)}.
$$
Then under $H_0$,
$$
T = \sup_{(u,w)\in I} \frac{\hat\theta(u,w)}{\hat\sigma(u,w)} \leq \sup_{(u,w)\in I}\frac{\hat\theta(u,w) - \theta(u,w)}{\hat\sigma(u,w)} = \sup_{(u,w)\in I} t(u,w).
$$
Note also that too large values of $T$ suggest that $H_0$ is violated. Hence, letting $\tilde k(1 - \alpha)$ denote the $(1 - \alpha)$ quantile of $\sup_{(u,w)\in I}t(u,w)$, we would like to reject $H_0$ if $T > \tilde k(1 - \alpha)$. However, such a test is not feasible because $\tilde k(1 - \alpha)$ is unknown. Instead, we approximate $\tilde k(1 - \alpha)$ using the resampling methods developed in this paper. Specifically, let
$$
T^* = \sup_{(u,w)\in I}t^*(u,w)
$$
be the bootstrap statistic, and let $\tilde k^*(1 - \alpha)$ be the $(1 - \alpha)$ quantile of the conditional distribution of $T^*$ given the data. This quantity can be computed numerically by Monte Carlo methods. Then we reject $H_0$ in favor of $H_1$ if $T > \tilde k^*(1 - \alpha)$. The following theorem shows that this test controls size in large samples.
\begin{theorem}[Test of Shape Constraints]\label{thm: shape constraints}
Suppose that the conditions of Theorem \ref{theorem: inference using couplings} hold. Then under $H_0$,
$$
P\Big( T > \tilde k^*(1 - \alpha)\Big) \leq \alpha + o(1).
$$
Moreover, if $\mathcal M = \mathcal M_n$ is a set of data-generating processes that satisfy $H_0$ and is such that the conditions of Theorem \ref{theorem: inference using couplings} hold uniformly over  this set, then
$$
\sup_{M \in \mathcal M}P_M\Big( T > \tilde k^*(1 - \alpha) \Big) \leq \alpha + o(1),
$$
where $P_M$ denotes probability under the data-generating process $M$.
\end{theorem}

\section{Examples}\label{Sec:examples}

This section  illustrates the finite sample performance of the
estimation and inference methods with two examples. All the calculations were
carried out with the software \verb"R" (\cite{R08}), using the
package \verb"quantreg" for quantile regression (Koenker \cite{Koenker08}). We refer to Appendix \ref{app:algorithms} for implementation algorithms and to the companion computational paper  \cite{RJ06} for software and additional  examples.

\subsection{Empirical Example}\label{sub: empirical example}

To illustrate our methods with real data, we consider an empirical
application to nonparametric estimation of the demand for gasoline. Blundell, Horowitz  and Parey \cite{BHP2012}, Hausman and Newey
 \cite{HN1995}, Schmalensee and Stoker \cite{SS1999}, and Yatchew and No \cite{YN2001} estimated nonparametrically
the average demand function. We estimate nonparametrically the
quantile demand and price elasticity functions and apply our inference
methods to construct confidence bands for the average quantile
price elasticity function and to test the Slutsky condition of consumer demand. We use the same data set as in Yatchew and No \cite{YN2001}, which comes
from the National Private Vehicle Use Survey,
conducted by Statistics Canada between October 1994 and September
1996.\footnote{The data set can be downloaded from Adonis Yatchew's web site at \texttt{ www.economics.utoronto.ca/yatchew/}.} The main advantage of this data set, relative to similar data
sets for the U.S., is that it is based on fuel purchase diaries and
contains detailed household level information on prices, fuel
consumption patterns, vehicles and demographic characteristics. (See
Yatchew and No \cite{YN2001} for a more detailed description of the data.) Our sample selection and variable construction also follow
Yatchew and No \cite{YN2001}. We select into the sample households with non-zero
licensed drivers, vehicles, and distance driven. We focus on regular
grade gasoline consumption. This selection results in a sample
of 5,001 households. Fuel consumption and expenditure are recorded
by the households at the purchase level.


We consider a partially linear specification for the demand function:\footnote{This partially linear specification of the demand function arises from household preferences characterized by the indirect utility function $V(w,v,u) = v^{1-\beta(u)}/[1-\beta(u)]  - G(u,w)$, where $w$ is real gasoline price, $v$ is real income, and $g(u,w) = \partial_w G(u,w)$ (see Lewbel \cite{Lewbel1987}, Th. 1). We thank Arthur Lewbel for pointing this out.}
$$
Y= Q(U,X), \ \ \ Q(U,X)= g(U,W) + V'\beta(U), \ \ \ X =
(W,V),
$$
where $Y$ is the log of total gasoline consumption in liters per
month; $W$ is the log of price in Canadian dollars per liter; $U$ is the unobservable
preference of the household to consume gasoline; and $V$ is a vector
of 28 covariates. Following Yatchew and No \cite{YN2001}, the covariate vector includes the log of age, a dummy for the top coded value of age, the
log of income, a set of dummies for household size, a dummy for
urban dwellers, a dummy for young-single (age less than 36 and
household size of one), the number of drivers, a dummy for more than
4 drivers, 5 province dummies, and 12 monthly dummies. To estimate
the function $w \mapsto g(w, u)$ at each $u$, we consider three different vectors of series approximating functions $w\mapsto Z(w)$: linear, a power orthogonal polynomial of degree 6, and a cubic
B-spline with 5 knots at the $\{0, 1/4, 1/2, 3/4, 1\}$ quantiles of
the observed values of $W$. The series approximation to the function $(u,x) \mapsto Q(u,x)$ takes the following form:
$$
Q(u,x) = Z(w)'\delta(u) + v'\gamma(u) = Z(x)'\beta(u), \quad Z(x) = (Z(w),v), \quad \beta(u) = (\delta(u), \gamma(u)).
$$
The number of series terms in the power and B-spline specifications is selected
by undersmoothing over the specifications chosen by applying cross validation to the corresponding least squares estimators.\footnote{There is potentially a large set of methods that can be used to choose the number of series terms (cross-validation, penalization, the method of Lepski, among others). Indeed, the problem of selecting the number of series terms is a special case of the problem of model selection, and there are several textbooks/monographs in the literature on model selection in abstract settings; for example, Massart \cite{M07} and Koltchinskii \cite{K11}. However, to the best of our knowledge, there are no papers in the literature that apply to the problem of selecting the number of series terms in the nonparametric quantile regression problem studied here. Hence, we have opted to use an ad hoc method that consists of performing cross-validation as if we were to estimate the conditional mean function $x\mapsto E[Y | X = x]$, which is estimated by the series least squares method. Under the implicit assumption that the smoothness of the functions $x\mapsto Q(u,x)$ is similar to that of the function $x\mapsto E[Y | X = x]$, such a cross-validation would yield the number of series terms that approximately equalize variance and bias terms in estimating the functions $x\mapsto Q(u,x)$. We then slightly increase the number of series terms so that the bias term is of smaller order relative to the variance term (that is, to achieve undersmoothing, as stated in Condition U.2), so that valid inference can be performed.}
In the next section, we analyze the size of the
specification error of these series approximations in a numerical
experiment calibrated to mimic this example.

The empirical results for the B-spline specification are reported in Figures \ref{fig: surfaces} and \ref{fig: average elasticity cis}.\footnote{The results for the linear and power specifications are not reported for the sake of brevity. They are similar to the results for the B-spline specification.}
 The first two panels of fig. \ref{fig: surfaces} plot the initial and monotonized estimates of the quantile
demand surface for gasoline as a function of price and the quantile
index, that is
$$
(u,\exp(w)) \mapsto \theta(u,w) = \exp(g(w, u) + v'\beta(u)),
$$
where the value of $v$ is fixed at the sample median values of the
ordinal variables and one for the dummies corresponding to the
sample modal values of the rest of the variables.\footnote{\label{ft:cov_values}The
median values of the ordinal covariates are $\$40K$ for income, $46$
for age, and $2$ for the number of drivers. The modal values for the
rest of the covariates are $0$ for the top-coding of age, $2$ for
household size, $1$ for urban dwellers, $0$ for young-single, $0$
for the dummy of more than 4 drivers, $4$ (Prairie) for province,
and $11$ (November) for month.}
The monotonized estimates  are obtained using the average rearrangement over both
the price and quantile dimensions proposed in
Chernozhukov,  Fern\'andez-Val,  and Galichon \cite{CFG2010-Biometrika}; see Appendix \ref{subset: mono}. The demand surface show most noticeably non-monotone areas with respect
to price at high quantiles, which are removed by the rearrangement. The last panel of fig. \ref{fig: surfaces} shows the estimate of the
quantile price elasticity surface as a function of price and the quantile
index, that is:
$$
(u,\exp(w)) \mapsto \theta(u,w) = \partial_w g(u,w).
$$
The estimates  show substantial heterogeneity of the elasticity across quantiles and
prices, with individuals at the upper quantiles being less sensitive
to high prices.\footnote{These estimates are smoothed by local weighted
polynomial regression  across the price dimension (Cleveland \cite{C1979}),
because the unsmoothed elasticity estimates display very erratic
behavior.}

Fig. \ref{fig: average elasticity cis} shows 90\% uniform confidence
bands for the average quantile price elasticity function
$$
u \mapsto \theta(u) = \int \partial_{w} \ g(u,w) d\mu(w),
$$
over the quantile indices $\mathcal{I} = [0.1,0.9]$, where $\mu$ is the empirical distribution of $W$. The panels of the figure correspond to the pivotal, gradient bootstrap, Gaussian and weighted bootstrap
methods. For the pivotal and Gaussian methods the distribution of
the maximal t-statistic is obtained by 1,000 simulations. The gradient bootstrap uses 199 repetitions. The
weighted bootstrap uses standard exponential weights and 199
repetitions. The confidence bands show that the evidence of
heterogeneity in the elasticities across quantiles is not
statistically significant, because we can trace a horizontal line
within the bands. They also show that there is significant
evidence of negative price sensitivity at most quantiles as the
bands are bounded away from zero for most quantiles.

The Slutsky condition of consumer demand  states that the compensated price elasticity is negative for all the households. Dette, Hoderlein, and Neumeyer \cite{DHN2011} showed that this condition has testable implications for the quantile demand function and its derivatives in heterogeneous demand systems with multiple goods and possible infinite dimensional unobservables.  The one good version of their test is:
\begin{equation}\label{eq: slutsky}
H_0 : S(u,x) \leq 0,  \text{ for all } (u,x) \in \mathcal{U} \times \mathcal{X}, \text{ vs } H_1: S(u,x) > 0,  \text{ for some }(u,x) \in \mathcal{U}\times \mathcal{X},
\end{equation}
where $S(u,x)$ is the compensated quantile price elasticity that in our logarithmic specification takes the form
$$
S(u,x) = \exp(\ell) \partial_w Q(u,x) + \exp(Q(u,x) + w) \partial_\ell Q(u,x), \ x = (w,\ell,c),
$$
which is a smooth function of the quantile demand and derivatives. Here we have partitioned the covariate vector $X$ into $(W,L,C),$ where $W$ is log of price, $L$ is the log of income, and $C$ includes the rest of the covariates.

To test the  functional hypothesis ($\ref{eq: slutsky}$) we use the one-sided Kolmogorov-Smirnov statistic:
$$
K = \max_{(u,x) \in I} \frac{\hat S(u,x)}{\hat \sigma_S(u,x)},
$$
where
$$
\hat S(u,x) = \exp(\ell) \partial_w \hat Q(u,x) + \exp(\hat Q(u,x) + w) \partial_\ell \hat Q(u,x),
$$
is the plug-in series estimator of $S(u,x)$, $\hat Q(u,x)$ is the series estimator of $Q(u,x)$, $\hat \sigma_S(u,x)$ is a delta method estimator of the asymptotic standard deviation of $\hat S(u,x)$, and $I \subseteq \mathcal{U}\times \mathcal{X} $ denotes the set of values of interest. We reject $H_0$ if the p-value of $K$ under $H_0$ is less than $\alpha$, i.e. $\sup_{P \in H_0} P(K  > k) < \alpha$ where $k$ is the realized value of $K$. Dette, Hoderlein, and Neumeyer \cite{DHN2011}  proposed an alternative test based on kernel estimators of the quantile function and its derivatives.

We estimate the distribution of $K$ under $H_0$ by weighted bootstrap with moment selection  to reduce the asymptotic non-similarity on the boundary of composite one sided functional tests  (Linton, Song, and Whang \cite{LSW2010}).
The weighted bootstrap version of $K$ is
$$
K^*(c_n) = \max_{(u,x) \in I} \frac{\hat S^*(u,x) - \hat S(u,x)}{\hat \sigma_S(u,x)} 1[|\hat S(u,x) | < c_n \hat \sigma_S(u,x) ],
$$
where
$$
\hat S^*(u,x) = \exp(\ell) \partial_w \hat Q^*(u,x) + \exp(\hat Q^*(u,x) + w) \partial_\ell \hat Q^*(u,x),
$$
is the bootstrap version of $\hat S(u,x)$, $\hat Q^*(u,x)$ is the series estimator of $Q(u,x)$ in the weighted sample, $1[|\hat S(u,x) | < c_n \hat \sigma_S(u,x) ]$ is the moment selector (Chernozhukov, Hong, and Tamer \cite{CHT2007}, and Andrews and Soares \cite{AS2010}), and $c_n$ is a sequence of thresholds that can grow with $n$. The centering by $ \hat S(u,x)$ imposes the least favorable null hypothesis $S(u,x) = 0$ at all the points $(u,x) \in I$ in the bootstrap  to control the size of the test, whereas the moment selector discards points that are far from this hypothesis with very high probability to increase power. We consider three sequences for $c_n$: no moment selection with $c_n = 0,$ BIC moment selection with $c_n^2 = \log n,$ and LIL selection with $c_n^2 = 2 \log \log n.$ The estimator of the p-value for a realization of the statistic $k$  is the probability that $K^*(c_n)$ is greater than $k$ conditional on the data.


Table \ref{table: slutsky} presents the results of the test of the Slutsky condition in our data set. We set the region $I$ to the product of $\{0.01, 0.02, ..., 0.99\}$ and the observed support of $X$ in the data.
We obtain the p-values by weighted bootstraps with standard exponential weights and 199 replications. Here, we do not find sufficient evidence  to reject the Slutsky condition at standard significance levels in any of the specifications with or without the moment selection.


\begin{table}[h]
  \centering
  \caption{Test of Slutsky Condition}\label{table: slutsky}
  \begin{tabular}{ccccc}\hline\hline
\multicolumn{2}{c}{ } & \multicolumn{3}{c}{P-value$^{\dag}$ }\\ 
\cline{3-5}
Specification & $K$ stat & No selection & BIC selection & LIL selection  \\
\hline
Linear      & 0.47 & 0.95 & 0.76 &  0.58 \\
Power      & 3.63 & 0.30 & 0.30 &  0.28     \\
B-spline   & 2.30 & 0.96 & 0.96 &  0.96    \\
\hline\hline
\multicolumn{5}{l}{\footnotesize{$^{\dag}$P-values obtained by weighted bootstrap with standard exponential weights  }}\\
\multicolumn{5}{l}{\footnotesize{ and 199 replications.}}
\end{tabular}
\end{table}

\subsection{Numerical Example}\label{subsec:simulations}

To evaluate the performance of our estimation and  inference methods
in finite samples, we conduct a Monte Carlo experiment designed to
mimic the previous empirical example. We consider the following
design for the data generating process:
\begin{equation}\label{eq: dgp}
Y = g(W) + V'\beta +  \sigma \Phi^{-1}(U),
\end{equation}
where $g(w) = \alpha_0 + \alpha_1 w + \alpha_2 \sin(2\pi w) +
\alpha_3 \cos(2\pi w) + \alpha_4 \sin(4\pi w) + \alpha_5 \cos(4\pi
w),$  $V$ is the same covariate vector as in the empirical example,
$U \sim U(0, 1),$ and $\Phi^{-1}$ denotes the inverse of the CDF of
the standard normal distribution. The parameters of $g(w)$ and
$\beta$ are calibrated by applying least squares to the data set in
the empirical example and $\sigma$ is calibrated to the least
squares residual standard deviation. We consider linear, power and
B-spline series methods to approximate $g(w)$, with the same number
of series terms and other tuning parameters as in the empirical
example. In practice, we recommend to conduct this type of Monte Carlo experiment with a data generating process that mimics the application at hand to verify the  plausibility of the regularity conditions of the method. 

Figures \ref{fig: estimands demand} and \ref{fig: estimands
elasticity} in the Supplemental Material examine the quality of the series
approximations in the population. They compare the true quantile function
$$(u,\exp(w)) \mapsto \theta(u,w) = g(w) + v'\beta + \sigma \Phi^{-1}(u),$$ and
the quantile price elasticity function $$(u,\exp(w)) \mapsto \theta(u,w) =
\partial_w g(w),$$ to the estimands of the series approximations. In
the quantile demand function the value of  $v$ is fixed at the
sample median values of the ordinal variables and at one for the
dummies corresponding to the sample modal values of the rest of the
variables (see footnote \ref{ft:cov_values}). The estimands are obtained numerically from a mega-sample
(a proxy for infinite population) of $100 \times 5,001$ observations
with the values of $(W,V)$ as in the data set (repeated 100 times)
and with $Y$ generated from the DGP (\ref{eq: dgp}). Although the
derivative function does not depend on $u$ in our design, we do not
impose this restriction on the estimands. Both figures show that the
power and B-spline estimands are close to the true target functions,
whereas the more parsimonious linear approximation misses
important curvature features of the target functions, especially in
the elasticity function.

To analyze the properties of the inference methods in finite
samples, we draw 500 samples from the DGP (\ref{eq:
dgp}) with 4 sample sizes, $n$: $10,002$, $5,001$, $1,000,$ and $500$
observations. For $n=10,002$ we fix $W$ to the values in the data
set repeated twice, for $n=5,001$ we fix $W$ to the values in the data
set, whereas for the smaller sample sizes we draw $W$ with
replacement
 from the values in the data set and keep them fixed across samples. To
speed up computation, we drop the vector $V$ by fixing it at the
sample median values of the ordinal components and at one for the
dummies corresponding to the sample modal values for all the
individuals. We focus on the average quantile price elasticity function
$$
u \mapsto \theta(u) = \int \partial_w g(w) d\mu(w),
$$
over the region $I = [0.1, 0.9]$. We estimate this function using
linear, power and B-spline quantile regression with the same number
of terms and other tuning parameters as in the empirical example.
Although $\theta(u)$ does not change with $u$ in our design, again
we do not impose this restriction on the estimators.  For inference,
we compare the performance of 90\% confidence bands for the entire
elasticity function. These bands are constructed using the pivotal, gradient bootstrap, Gaussian, and weighted
bootstrap methods, all implemented in the same fashion as in the
empirical example. The interval $I$ is approximated by a finite grid
of 91 quantiles $\tilde{I} = \{0.10, 0.11, ..., 0.90\}$.

Table 2 reports estimation and inference results averaged across 500
simulations. The true value of the elasticity function is $\theta(u)
= -0.74$ for all $u \in \tilde I$. Bias and RMSE are the absolute
bias and root mean squared error integrated over $\tilde{I}$. SE/SD
reports the ratios of empirical average standard errors to empirical
standard deviations. SE/SD uses the analytical standard errors from
expression (\ref{Def:hatsigma2n}).
The bandwidth for $\hat J(u)$ is chosen using the Hall-Sheather
option  of the \verb"quantreg" \verb"R" package
(Hall and Sheather \cite{Hall-Sheather1988}). Length gives the empirical average of
the length of the confidence band. SE/SD
and length are integrated over the grid of quantiles $\tilde{I}$.
Cover reports empirical coverage of the confidence bands with
nominal level of 90\%. Stat is the empirical average of the 90\%
quantile of the maximal t-statistic used to construct the bands.
Table 2 shows that the linear estimator has higher absolute bias
than the more flexible power and B-spline estimators, but displays
lower rmse, especially for small sample sizes. The analytical
standard errors provide good approximations to the standard
deviations of the estimators. The confidence bands have empirical
coverage close to the nominal level of 90\% for all the estimators
and sample sizes considered; and both bootstrap bands tend to have
larger average length than the pivotal and Gaussian bands. The source of this difference in coverage might be that the bootstrap methods resample the distribution of the covariates, whereas the pivotal and Gaussian methods condition on the distribution in the sample.

All in all, these results strongly confirm the practical value of the
theoretical results and methods developed in the paper.  They also
support the empirical example by verifying that our estimation and
inference methods  work quite nicely in a very similar setting.

\appendix

\section{Implementation Algorithms}\label{app:algorithms}

Throughout this section we assume that we have a random sample
$\{(Y_i,Z_i): 1 \leq i \leq n\}$. We are interested in approximating
the distribution of the process $\sqrt{n}(\widehat
\beta(\cdot)-\beta(\cdot))$ or of the statistics associated with
functionals of it. Recall that for each quantile  $u \in \mathcal{U}
\subset (0,1)$, we estimate $\beta(u)$ by quantile regression
  $
  \hat\beta(u) = \arg \min_{\beta \in \mathbb{R}^{m}} \En[\rho_u (Y_i -
  Z_i'\beta)],$
the Gram matrix $\Sigma$ by $\widehat \Sigma = \En[Z_iZ_i']$,
and the Jacobian matrix $J(u)$ by
  Powell \cite{Powell1984} estimator
  $
\widehat J(u) = \En[1\{|Y_i-Z_i'\hat\beta(u)|\leq \hn\}\cdot
Z_iZ_i']/2\hn,
  $ where we recommend choosing the bandwidth $\hn$  as in the \verb"quantreg" \verb"R" package with the Hall-Sheather option  (Hall and Sheather \cite{Hall-Sheather1988}).

We begin describing the algorithms to implement the methods to approximate the distribution of the process
$\sqrt{n}(\hat\beta(\cdot)-\beta(\cdot))$ indexed  by $\mathcal{U}$. Let $B$ be a pre-specified number of bootstrap or simulation repetitions.


\begin{algorithm}[Pivotal method] \label{algorithm:pivotal}  (1)
 For $b = 1, \ldots, B$, draw $U_1^b,\ldots,U_n^b$ i.i.d. from $U \sim \Uniform(0,1)$ and
    compute
    $\mathbb{U}^b(u) = n^{-1/2}\sum_{i=1}^nZ_i(u-1\{U_i^b \leq u\}), \ \ u \in \mathcal{U}.$
(2) Approximate the distribution of $\{\sqrt{n}(\hat\beta(u)-\beta(u)) : u \in
    \mathcal{U}\}$ by the empirical distribution of $\{\widehat J^{-1}(u)\mathbb{U}^b(u) : u \in
    \mathcal{U}, 1 \leq b \leq B\}$.
\end{algorithm}

\begin{algorithm}[Gradient bootstrap method] \label{algorithm:perturbation}  (1)  For $b = 1, \ldots, B$, draw $U_1^b,\ldots,U_n^b$ i.i.d. from $U \sim \Uniform(0,1)$ and
    compute
    $\mathbb{U}^b(u) = n^{-1/2}\sum_{i=1}^nZ_i(u-1\{U_i^b \leq u\}), \ \ u \in \mathcal{U}.$
(2) For $b = 1, \ldots, B$, estimate the quantile regression process
    $
    \hat \beta^b(u) = \arg \min_{\beta \in \Bbb{R}^m}\sum_{i=1}^n\rho_{u} (Y_i - Z_i'\beta) + \rho_u(Y_{n+1} -
    X^b_{n+1}(u)'\beta), \ \ u \in \mathcal{U},$
    where $X^b_{n+1}(u) = -\sqrt{n}\ \mathbb{U}^b_n(u)/u,$ and
    $Y_{n+1}=n\max_{1\leq i\leq n}|Y_i|$ to ensure $Y_{n+1} > X^b_{n+1}(u)'\hat
    \beta^b(u)$, for all $u \in \mathcal{U}$. (3) Approximate the distribution of $\{\sqrt{n}(\hat\beta(u)-\beta(u)) : u \in
    \mathcal{U}\}$ by the empirical distribution of $\{ \sqrt{n}(\hat \beta^b(u)-\hat\beta(u)) : u \in
    \mathcal{U}, 1 \leq b \leq B\}$.
\end{algorithm}

\begin{algorithm}[Gaussian method] \label{algorithm:gaussian}  (1)
 For $b = 1, \ldots, B$, generate a $m$-dimensional standard Brownian bridge on $\mathcal{U}$, $B_m^b(\cdot)$. Define $\Z^b_n(u) = \widehat \Sigma^{1/2}B_m^b(u)$ for $u\in \mathcal{U}$. (2) Approximate the distribution of $\{\sqrt{n}(\hat\beta(u)-\beta(u)) : u \in
    \mathcal{U}\}$ by the empirical distribution of $\{\widehat J^{-1}(u)\Z_n^b(u) : u \in
    \mathcal{U}, 1 \leq b \leq B\}$.
\end{algorithm}

\begin{algorithm}[Weighted bootstrap method] \label{algorithm:bayesian}  (1)
 For $b = 1, \ldots, B$, draw $\pi_1^b,\ldots,\pi_n^b$ i.i.d. from the standard exponential  distribution and compute the weighted quantile regression process
    $ \hat \beta^b(u) = \arg \min_{\beta \in \mathbb{R}^{m}} \sum_{i=1}^n \pi_i^b \rho_u (Y_i -
  Z_i'\beta), \ \ u \in \mathcal{U}.$ (2) Approximate the distribution of $\{\sqrt{n}(\hat\beta(u)-\beta(u)) : u \in
    \mathcal{U}\}$ by the empirical distribution of $\{ \sqrt{n}(\hat \beta^b(u)-\hat\beta(u)) : u \in
    \mathcal{U}, 1 \leq b \leq B\}$.
\end{algorithm}

The previous algorithms provide approximations to the distribution
of $\sqrt{n}(\hat\beta(u)-\beta(u))$ that are uniformly valid in $u
\in \mathcal{U}$. We can use these  approximations  directly to make
inference on linear functionals of $Q(u,x)$ including the
conditional quantile functions itself, provided the approximation error is
small as stated in Conditions P and U.
Each linear functional is represented by $\{ \theta(u,w)=\ell(w)'\beta(u)+r_n(u,w)\colon (u,w) \in I\}$,
where $\ell(w)'\beta(u)$ is the series approximation,  $\ell(w) \in \RR^m$ is a loading vector, $r_n(u,w)$ is the remainder term, and $I$ is the set of pairs of quantile indices and covariates values of interest, see Section \ref{Sec:Functionals} for details and examples. Next we provide algorithms to conduct pointwise or uniform inference over linear functionals.

\begin{algorithm}[Pointwise Inference for Linear Functionals] \label{algorithm:pointwise} (1) Compute the variance estimate $ \widehat \sigma^2(u,w) =  u(1-u) \ell(w)' \widehat J^{-1}(u) \widehat \Sigma  \widehat J^{-1}(u)\ell(w)/n $. (2) Using any of the Algorithms 1-4, compute vectors $V_1(u),\ldots, V_B(u)$ whose empirical distribution approximates the distribution of $\sqrt{n}(\hat\beta(u)-\beta(u))$. (3) For $b=1,\ldots,B$, compute the $t$-statistic
$ t^{*b}(u,w) = \left|\frac{ \ell(w)'V_b(u) }{ \sqrt{n}\widehat \sigma(u,w)}\right|$. 
(4) Form a $(1-\alpha)$-confidence interval for $\theta(u,w)$ as $\ell(w)'\hat\beta(u) \pm k(1-\alpha) \hat\sigma(u,w)$, where $k(1-\alpha)$ is the $1-\alpha$ sample quantile of $\{t^{*b}(u,w): 1\leq b \leq B\}$.
\end{algorithm}

\begin{algorithm}[Uniform Inference for Linear Functionals] \label{algorithm:uniform}
 (1) Compute the variance estimates $ \widehat \sigma^2(u,w) = u(1-u) \ell(w)' \widehat J^{-1}(u) \widehat \Sigma  \widehat J^{-1}(u)\ell(w)/n$ for $(u,w) \in I$. (2) Using any of the Algorithms 1-4, compute the processes $V_1(\cdot),\ldots, V_B(\cdot)$ whose empirical distribution approximates the distribution of $\{\sqrt{n}(\hat\beta(u)-\beta(u))\colon u \in \mathcal{U}\}$. (3) For $b=1,\ldots,B$, compute the maximal $t$-statistic
$ \|t^{*b}\|_I = \sup_{(u,w)\in I}\left|\frac{ \ell(w)'V_b(u) }{ \sqrt{n}\widehat \sigma(u,w)}\right|$. 
(4) Form a $(1-\alpha)$-confidence band for $\{ \theta(u,w)\colon (u,w) \in I\}$ as $\{ \ell(w)'\hat\beta(u) \pm k(1-\alpha) \hat\sigma(u,w)\colon (u,w)\in I\}$, where $k(1-\alpha)$ is the $1-\alpha$ sample quantile of $\{\|t^{*b}\|_I: 1\leq b \leq B\}$.
\end{algorithm}


\begin{figure*}[!hp]
\centering
 \includegraphics[width=.95\textheight, height=.95\textwidth, angle=90]{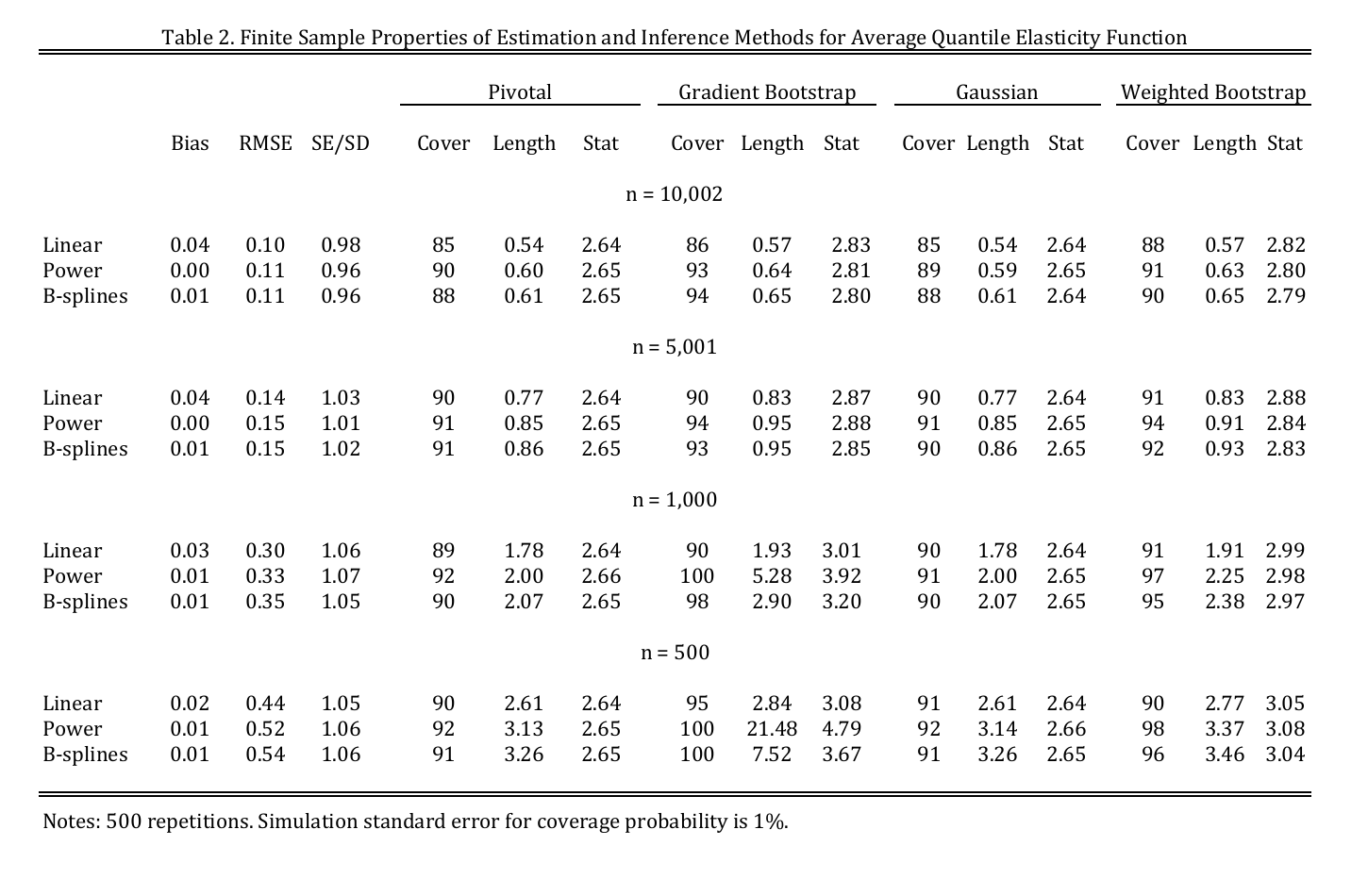}
\end{figure*}

\begin{figure}[!hp]
\centering
 \includegraphics[width=.9\textwidth, height=.9\textheight, angle=0]{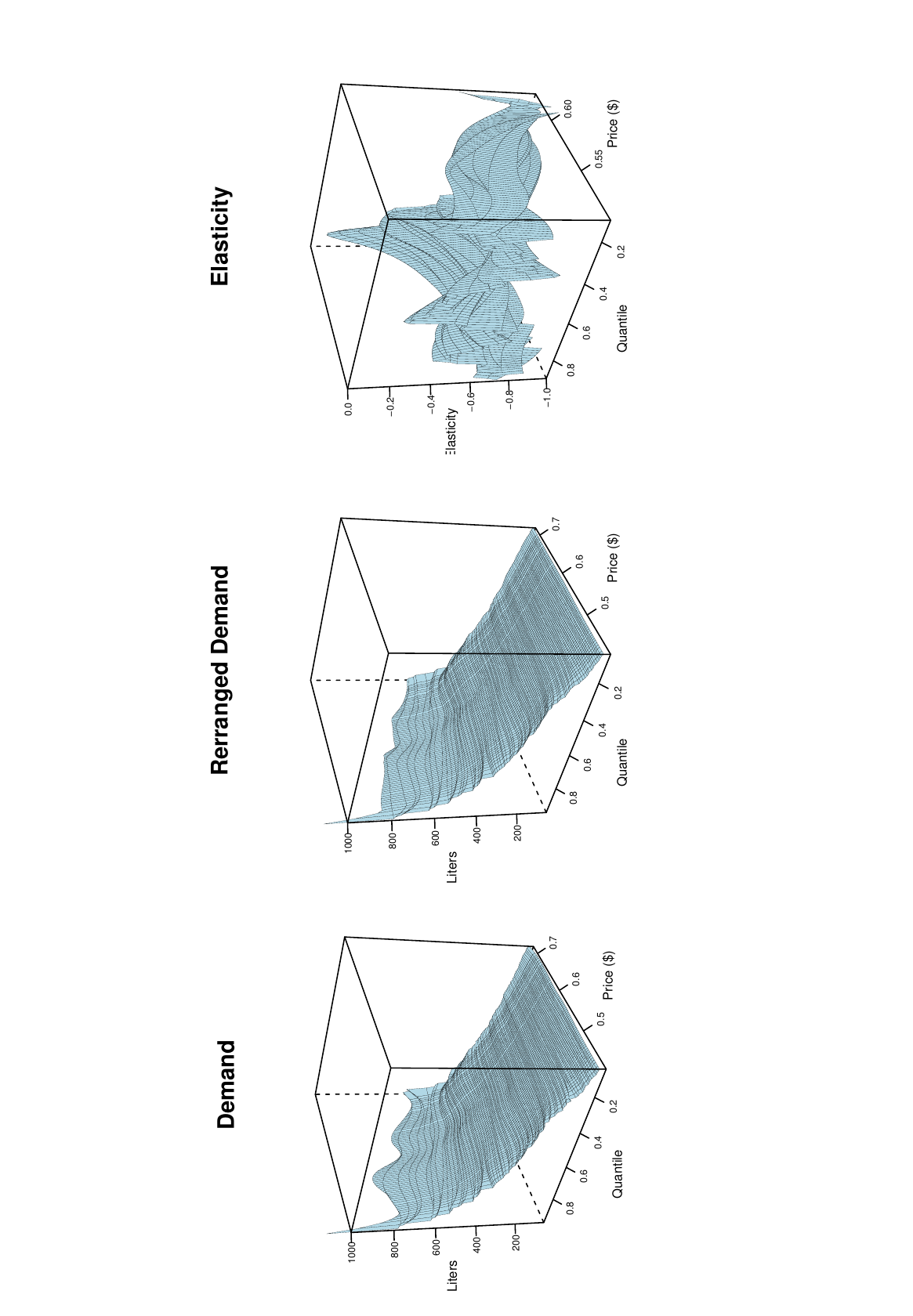}
\caption{\label{fig: surfaces} Demand and price elasticity
surfaces as a function of price and the quantile index using B-spline specification. The elasticity estimates are smoothed by local weighted
polynomial regression with bandwidth 0.5.}
\end{figure}

\begin{figure}[!hp]
\centering
 \includegraphics[width=.9\textwidth, height=.8\textheight]{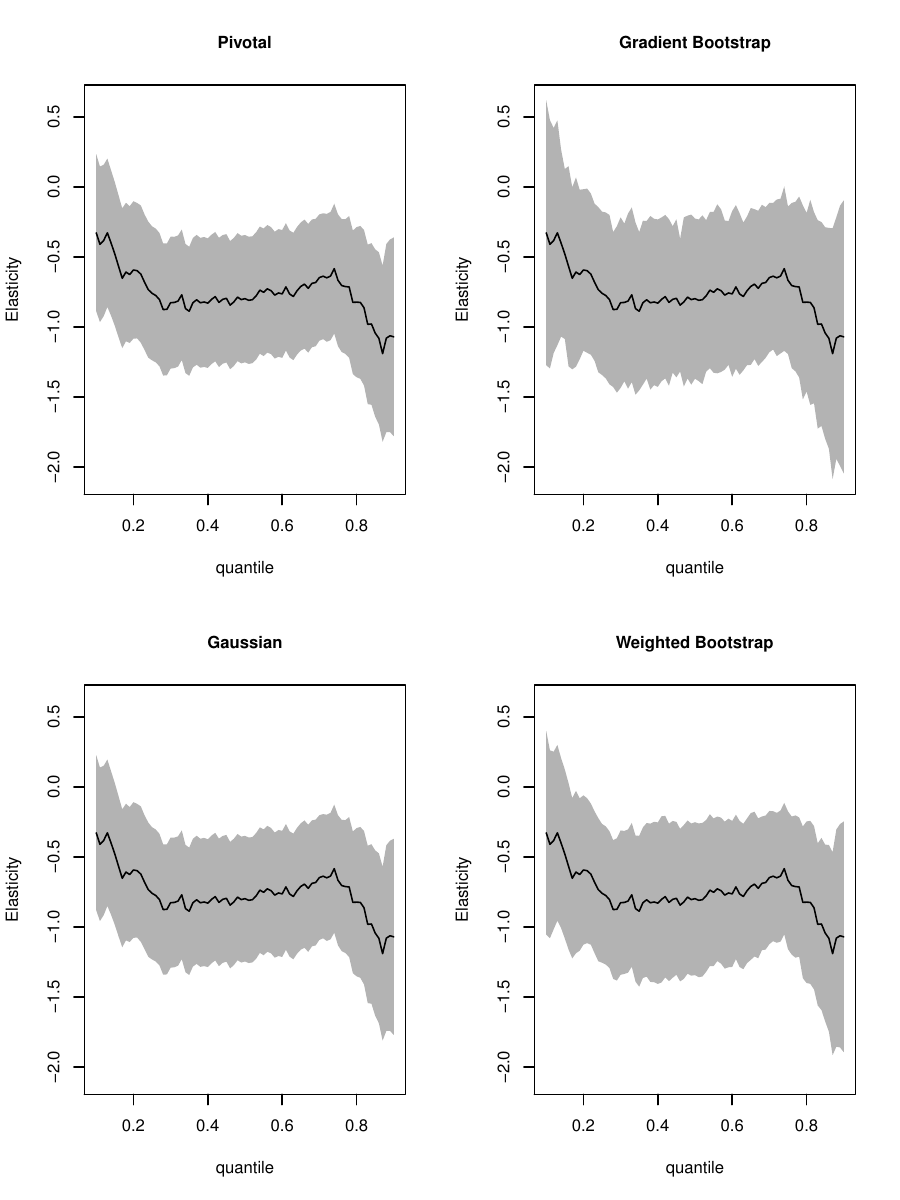}
\caption{\label{fig: average elasticity cis} 90\% Confidence bands
for the average quantile price elasticity function using B-spline specification. Pivotal and Gaussian
bands are obtained by 1,000 simulations. Gradient bootstrap bands are based on 199 bootstrap repetitions. Weighted bootstrap bands
are based on 199 bootstrap repetitions with standard exponential
weights.}
\end{figure}

\newpage
{\large {\bf Supplement to ``Conditional Quantile Processes Based on Series or Many Regressors"}}

By Alexandre Belloni, Victor Chernozhukov, Denis Chetverikov, and Iv\'{a}n Fern\'{a}ndez-Val

\section{Imposing Monotonicity on Linear Functionals}\label{subset: mono}
Consider the setting of Section \ref{Sec:Functionals}. The functionals of interest might be naturally monotone in some of their
arguments. For example, the conditional quantile function is
increasing in the quantile index and the conditional quantile demand
function is decreasing in price and increasing in the quantile
index. Therefore, it might be desirable to impose the same requirements on the estimators
of these functions.

Let $\theta(u,w),$ where $(u,w) \in I,$ be a weakly
increasing function in $(u,w)$, i.e. $\theta(w',u') \leq
\theta(u,w)$ whenever $(w',u') \leq (u,w)$
componentwise.\footnote{If $\theta(u,w)$ is decreasing in $w$, we
take the transformation $\tilde w = - w$ and $\tilde \theta(
\tilde w,u) = \theta(- \tilde w,u),$ where $\tilde \theta(\tilde
w,u)$ is increasing in $\tilde w$.} Let $\widehat \theta$ and
$[\dot{\iota}, \ddot{\iota}]$ be the point and band estimators of
$\theta,$ constructed using one of the methods described in the
previous sections. These estimators might not satisfy the
monotonicity requirement due to either estimation error or imperfect
approximation. However, we can monotonize these estimates and
perform inference  using the  method suggested in Chernozhukov, Fern\'andez-Val  and Galichon \cite{CFG2010-Biometrika}.

Let $q,f\colon  I \to K,$ where $K$ is a bounded subset of
$\mathbb{R}$, and consider any monotonization operator $\mathcal{M}$
that satisfies: (1) a monotone-neutrality condition
\begin{equation}\label{eq: monotone-neutrality}
\mathcal{M} q = q \text{ \ \ if $q$ monotone};
\end{equation}
(2) a distance-reducing condition
\begin{equation}\label{eq: distance-reducing}
\|\mathcal{M}q - \mathcal{M}f \|_I \leq \|q - f \|_I;
\end{equation}
and (3) an order-preserving condition
\begin{equation}\label{eq: order-preserving}
q \leq f \text{ \ \ implies \ \ } \mathcal{M}q \leq \mathcal{M}f.
\end{equation}
Examples of operators that satisfy these conditions include the multivariate
rearrangement (Chernozhukov, Fern\'andez-Val and Galichon \cite{CFG2010-Biometrika}), isotonic projection
(Barlow et al \cite{BarlowEtAl1972}), convex combinations of rearrangement and
isotonic regression (Chernozhukov, Fern\'andez-Val and Galichon \cite{CFG2010-Biometrika}), and
convex combinations of monotone minorants and monotone majorants.

Let $\mathcal{M} \widehat \theta$ and $[\mathcal{M}\dot{\iota}, \mathcal{M}\ddot{\iota}]$ be the monotonized QR-series process and confidence band.  We establish that $\mathcal{M} \widehat \theta$ has smaller estimation
error than the original QR-series process $\widehat \theta$, and that $[\mathcal{M}\dot{\iota}, \mathcal{M}\ddot{\iota}]$  has higher coverage
and smaller length than the original confidence band $[\dot{\iota}, \ddot{\iota}]$. The following result is a corollary  from
Theorems  \ref{theorem: uniform rate} and \ref{theorem: inference using couplings} using the same arguments
as in Propositions 2 and 3 in Chernozhukov, Fern\'andez-Val, and Galichon  \cite{CFG2010-Biometrika}.


\begin{corollary}[Inference for Monotone Linear Functionals]\label{Corollary:Monotonization}
Let $\theta: I \to K$ be weakly increasing over $I$ and
$\widehat \theta$ be the QR-series
process of Theorem \ref{theorem: uniform rate}. If $\mathcal{M}$
satisfies the conditions (\ref{eq: monotone-neutrality}) and
(\ref{eq: distance-reducing}), then the monotonized QR-series process is necessarily closer
to the true value:
$$
\| \mathcal{M}\widehat \theta - \theta \|_I \leq \| \widehat \theta -
\theta \|_I.
$$
Let $[\dot{\iota}, \ddot{\iota}]$ be a confidence band for $\theta$ of Theorem \ref{theorem:
inference using couplings}. If $\mathcal{M}$ satisfies the
conditions  (\ref{eq: monotone-neutrality}) and  (\ref{eq:
order-preserving}), the monotonized confidence bands maintain at least the asymptotic level
of the original bands:
$$
P \Big( \theta(u,w) \in [\mathcal{M}\dot{\iota}(u,w), \mathcal{M}\ddot{\iota}(u,w)]\colon
(u,w) \in I \Big) \geq 1-\alpha + o(1).
$$
If $\mathcal{M}$ satisfies the condition (\ref{eq:
distance-reducing}), the monotonized confidence bands are shorter in length than
the original bands:
$$
\| \mathcal{M}\ddot{\iota} - \mathcal{M}\dot{\iota} \|_I \leq \| \ddot{\iota} -
\dot{\iota} \|_I.
$$
\end{corollary}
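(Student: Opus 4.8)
The plan is to obtain all three conclusions as essentially deterministic consequences of the defining properties (\ref{eq: monotone-neutrality})--(\ref{eq: order-preserving}) of the operator $\mathcal{M}$, combined with the uniform rate of Theorem \ref{theorem: uniform rate} and the uniform coverage statement in part~(2) of Theorem \ref{theorem: inference using couplings}; no genuinely new probabilistic argument is required, and the steps parallel Propositions~2 and~3 in \cite{CFG2010-Biometrika}. We may enlarge the bounded set $K$ so that it contains $\theta(I)$ and, on an event $\Omega_n$ of probability $1-o(1)$ (which exists by Theorem \ref{theorem: uniform rate}), also the ranges of $\widehat \theta$, $\dot{\iota}$ and $\ddot{\iota}$ over $I$; the deterministic inequalities below then hold on $\Omega_n$, and since $P(\Omega_n)\to 1$ this does not affect the asymptotic statements.

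First, for the point-estimator bound, note that $\theta$ is weakly increasing over $I$ by hypothesis, so the monotone-neutrality condition (\ref{eq: monotone-neutrality}) yields $\mathcal{M}\theta = \theta$. Applying the distance-reducing condition (\ref{eq: distance-reducing}) with $q = \widehat \theta$ and $f = \theta$ then gives
$$
\| \mathcal{M}\widehat \theta - \theta \|_I \;=\; \| \mathcal{M}\widehat \theta - \mathcal{M}\theta \|_I \;\leq\; \| \widehat \theta - \theta \|_I ,
$$
which is the asserted inequality; combined with Theorem \ref{theorem: uniform rate} it shows in particular that $\mathcal{M}\widehat\theta$ attains the same uniform convergence rate as $\widehat\theta$.

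Second, for the coverage claim, introduce the event $E_n := \{ \dot{\iota}(u,w) \leq \theta(u,w) \leq \ddot{\iota}(u,w) \text{ for all } (u,w)\in I \}$, which by part~(2) of Theorem \ref{theorem: inference using couplings} satisfies $P(E_n) \geq 1-\alpha+o(1)$. On $E_n$ we have $\dot{\iota} \leq \theta$ and $\theta \leq \ddot{\iota}$ pointwise over $I$; applying the order-preserving condition (\ref{eq: order-preserving}) to each inequality and using $\mathcal{M}\theta = \theta$ gives $\mathcal{M}\dot{\iota} \leq \theta \leq \mathcal{M}\ddot{\iota}$ over $I$. Hence $E_n$ is contained in the event that $[\mathcal{M}\dot{\iota}, \mathcal{M}\ddot{\iota}]$ covers $\theta$ uniformly over $I$, so the probability of the latter is at least $P(E_n) \geq 1-\alpha+o(1)$, as claimed. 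The length claim follows directly from the distance-reducing condition (\ref{eq: distance-reducing}) with $q = \ddot{\iota}$ and $f = \dot{\iota}$, which gives $\| \mathcal{M}\ddot{\iota} - \mathcal{M}\dot{\iota} \|_I \leq \| \ddot{\iota} - \dot{\iota} \|_I$.

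There is no substantive obstacle in this argument. The only points that require care are invoking monotone-neutrality at the correct place, which is legitimate precisely because it is $\theta$ itself, rather than its estimator, that is assumed monotone, and, in the coverage step, keeping track of the direction of the event inclusion so that passing to $\mathcal{M}$ enlarges the covering event rather than shrinking it. Everything else is a direct substitution into the three operator properties and the conclusions of Theorems \ref{theorem: uniform rate} and \ref{theorem: inference using couplings}.
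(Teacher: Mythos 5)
Your proposal is correct and follows essentially the same route the paper intends: the paper simply invokes the arguments of Propositions 2 and 3 of \cite{CFG2010-Biometrika}, which are exactly the deterministic deductions you spell out — monotone-neutrality applied to $\theta$ plus the distance-reducing property for the point estimate and the band length, and order-preservation plus the event inclusion for coverage, combined with Theorems \ref{theorem: uniform rate} and \ref{theorem: inference using couplings}. No gaps.
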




\section{QR-series approximation error and proof of Lemma \ref{lem: approximation error}}\label{Sec:Identification}
In this section we study properties of the QR-series approximating functions and, in particular, we prove Lemma \ref{lem: approximation error}. In what follows, for any function $h\colon \mathcal{X}\to\RR$, we define
$$
Q_u(h)=E\Big[\rho_u(Y-h(X))-\rho_u(Y-Q(u,X))\Big],
$$
so that $\beta(u) \in \arg\min_{\beta\in\RR^m}Q_u(Z'\beta)$. For brevity and with some abuse of notation, depending on the context, we use $h$ to denote either a function or a random variable $h=h(X)$ (the same applies to the vector of functions $Z$). Also, let $\bar f=\sup_{x\in\XX,y\in\mathcal Y_x}f_{Y|X}(y|x)$ and $\underf= \inf_{x\in\mathcal{X},u\in\mathcal U} f_{Y|X}(Q(u,x)|x)$. In addition, let  $f'_{Y|X}(y|x)$ denote the derivative of the function $y\mapsto f_{Y|X}(y|x)$ and define $\bar f' = \sup_{x\in\XX,y\in\mathcal Y_x}|f'_{Y|X}(y|x)|$. Observe that $\bar f$ and $\bar f'$ are bounded from above and $\underf$ is bounded away from zero uniformly over $n$ by Condition S.2.  Moreover, let $\lambda_{\min}$ denote the minimal eigenvalue of the matrix $\Sigma = E[Z Z']$. Observe that $\lambda_{\min}$ is bounded away from zero uniformly over $n$ by Condition S.3.

For $u\in\mathcal U$, consider the best $L_2$-approximation to the conditional $u$-quantile function $x\mapsto Q(u,x)$ by a linear combination of functions in the vector $Z$, namely
\begin{equation}\label{eq: LS optimization problem}
\widetilde\beta(u) = \arg\min_{\beta \in \RR^m} E\Big[|Z(X)'\beta - Q(u,X)|^2\Big].
\end{equation}
Define
\begin{equation}\label{eq: c2 and cinf definitions}
c_{u,2}^2 = \Ep\Big[|Z(X)'\widetilde\beta(u) - Q(u,X) |^2\Big]\text{ and }c_{u,\infty} = \sup_{x\in \mathcal{X}} \Big|Z(x)'\widetilde\beta(u) - Q(u,x)\Big|.
\end{equation}
The next lemma relates the QR-series approximation error $R(u,x) = Q(u,x) - Z(x)'\beta(u)$ to $c_{u,2}$ and $c_{u,\infty}$. Define $\zeta_m = \sup_{x\in\XX}\|Z(x)\|$.
\begin{lemma}[Bounds on QR-series approximation error]\label{Lemma:AUX_L2sparse}
Assume that Conditions S.2-S.3 hold. In addition assume that
$$
c_{u,\infty}^2\leq \frac{\underf^3}{12 \bar f (\bar f')^2}\quad \text{ and }\quad \zeta_m^2 c_{u,2}^2 \leq \frac{\underf^3 \lambda_{\min}}{12\bar f (\bar f')^2}.
$$
Then we have for all $u\in\mathcal U$ that
\begin{align}
&\Ep\Big[ |Z(X)'\beta(u) - Q(u,X)|^2 \Big]  \leq 3 (\bar f/\underf ) c_{u,2}^2,\label{lem: 2 - ineq 1}\\
&\sup_{x\in\mathcal{X}}\Big|Z(x)'\beta(u)-Q(u,x)\Big| \leq \lambda_{\min}^{-1/2}\zeta_m \Big(1 + \sqrt{3\bar f/\underf}\Big)c_{u,2} + c_{u,\infty}.\label{lem: 2 - ineq 2}
\end{align}
\end{lemma}
\begin{proof}[Proof of Lemma \ref{Lemma:AUX_L2sparse}]
Fix $u\in\mathcal U$. Let $g_u\colon \XX\to\RR$ be the function defined by $g_u(x) = Q(u,x)$ for all $x\in\XX$.

\noindent
{\bf Step 1} (Main argument). Recall that $Z = Z(X)$. For notational convenience, let
\begin{equation}\label{eq: bar q definition}
\bar q = \frac{\underf^{3/2}\Big(\Ep\Big[ |Z'\beta(u) - g_u(X) |^2\Big]\Big)^{3/2}}{\bar f'\Ep\Big[ |Z'\beta(u) - g_u(X)|^3\Big]}.
\end{equation}
By Steps 2 and 3 below, we have respectively that
\begin{equation}\label{Def:UB2}
 Q_u(Z'\beta(u)) - Q_u(g_u)  \leq \bar f c_{u,2}^2,
\end{equation}
\begin{equation}\label{Def:LB2}
 Q_u(Z'\beta(u)) - Q_u(g_u) \geq  \frac{\underf\Ep[ |Z'\beta(u) - g_u(X)|^2]}{3} \wedge \left( \frac{\bar q}{3}\sqrt{\underf\Ep\[ |Z'\beta(u) - g_u(X)|^2\]} \right).
\end{equation}
Thus,
\begin{equation}\label{eq: implications steps 2 and 3}
 \frac{\underf\Ep[ |Z'\beta(u) - g_u(X)|^2]}{3} \wedge  \left(\frac{\bar q}{3}\sqrt{\underf\Ep\[ |Z'\beta(u) - g_u(X)|^2\]} \right) \leq \bar f c_{u,2}^2.
\end{equation}
Therefore, if the minimum in \eqref{eq: implications steps 2 and 3} is achieved by the first term, it immediately follows that $\Ep[ |Z'\beta(u) - g_u(X)|^2] \leq 3 (\bar f/\underf) c_{u,2}^2$, which is \eqref{lem: 2 - ineq 1}. On the other hand, if the minimum in \eqref{eq: implications steps 2 and 3} is achieved by the second term, it follows from the inequality $\bar f c_{u,2}^2 \leq \bar q^2/3$, which is established in Step 4 below, that
$$
\frac{\bar q}{3}\sqrt{\underf\Ep\[ |Z'\beta(u) - g_u(X)|^2\]} \leq \frac{\bar q}{\sqrt 3}\cdot (\bar f c_{u,2}^2)^{1/2},
$$
which again gives \eqref{lem: 2 - ineq 1}. Hence, \eqref{lem: 2 - ineq 1} follows.

To prove \eqref{lem: 2 - ineq 2}, note that it follows from the triangle inequality, \eqref{eq: c2 and cinf definitions}, and \eqref{lem: 2 - ineq 1} that
\begin{align*}
\sqrt{\Ep\Big[ |Z'\beta(u) - Z'\widetilde\beta(u)|^2\Big]}
&\leq \sqrt{\Ep\Big[ |Z'\beta(u) - g_u(X)|^2\Big]} + \sqrt{\Ep\Big[ |g_u(X) - Z'\widetilde\beta(u)|^2\Big]}\\
&\leq  \sqrt{3\bar f/\underf}c_{u,2} + c_{u,2} = \Big(1 + \sqrt{3\bar f/\underf}\Big)c_{u,2},
\end{align*}
and so
\begin{align*}\sup_{x\in \mathcal{X}}|Z(x)'\beta(u)-g_u(x)| & \leq \sup_{x\in \mathcal{X}}|Z(x)'\beta(u)-Z(x)'\widetilde\beta(u)|  + \sup_{x\in\XX}|Z(x)'\widetilde\beta(u) - g_u(x)|\\
& \leq \zeta_m\|\beta(u) - \widetilde\beta(u)\| + c_{u,\infty}\\
& \leq \lambda_{\min}^{-1/2}\zeta_m \sqrt{\Ep\[ |Z'\beta(u) - Z'\widetilde\beta(u)|^2\]} + c_{u,\infty}\\
& \leq \lambda_{\min}^{-1/2}\zeta_m \Big(1 + \sqrt{3\bar f/\underf}\Big)c_{u,2} + c_{u,\infty}.
\end{align*}
Hence, \eqref{lem: 2 - ineq 1} follows. This completes Step 1.

\noindent
{\bf Step 2} (Upper Bound). Here we establish \eqref{Def:UB2}. Observe that for any two scalars $w$ and $v$, we have
\begin{equation}\label{eq: knight identity}
\rho_u(w-v) - \rho_u(w) = -v (u - 1\{w\leq 0\}) + \int_0^v(
1\{w\leq t\} - 1\{w\leq 0\})dt.
\end{equation}
Therefore, for any function $h\colon \XX\to\RR$, using the law of iterated expectations and \eqref{eq: knight identity} with $w = Y - g_u(X)$ and $v = h(X) - g_u(X)$, we obtain
\begin{align*}
Q_u(h) - Q_u(g_u)
&=  \Ep\[ \int_0^{h-g_u} (F_{Y|X}(g_u + t|X) - F_{Y|X}(g_u|X) )d t \] \\
& =  \Ep\[ \int_0^{h-g_u} tf_{Y|X}(g_u+\tilde t_{X,t}|X) dt \]  \leq    (\bar f/2) \Ep[| h  - g_u |^2]
\end{align*}
for some $\tilde t_{X,t}$ between $0$ and $t$. Thus,
$$
Q_u(Z'\beta(u)) - Q_u(g_u) \leq Q_u(Z'\widetilde\beta(u))-Q_u(g_u) \leq \bar f c_{u,2}^2,
$$
which gives \eqref{Def:UB2}. This completes Step 2.

\noindent
{\bf Step 3} (Lower Bound). Here we establish \eqref{Def:LB2}. For any function $h\colon \XX\to\RR$, we have
\begin{align}
Q_u(h) - Q_u(g_u) &=  \Ep\[ \int_0^{h-g_u} (F_{Y|X}(g_u + t|X) - F_{Y|X}(g_u|X)) d t \] \label{eq: knight lower bound - 1}\\
& =  \Ep\[ \int_0^{h-g_u} tf_{Y|X}(g_u|X) + (t^2/2)f'_{Y|X}(g_u+\tilde t_{X,t}|X) d t \] \label{eq: knight lower bound - 2}\\
& \geq    (\underf/2) \Ep[| h - g_u|^2]  - (1/6)\bar f ' \Ep[|h - g_u|^3]\label{Eq:KnightLower2}
\end{align}
for some $\tilde t_{X,t}$ between $0$ and $t$. Consider the case $(\underf\Ep\[ |Z'\beta(u) - g_u(X)|^2\])^{1/2} \leq \bar{q}$ for $\bar q$ defined in \eqref{eq: bar q definition}. By definition of $\bar q$, we have
$$
\bar f' \Ep\[ |Z'\beta(u) - g_u(X)|^3\]\leq \underf\Ep\[ |Z'\beta(u) - g_u(X)|^2\],
$$
and so using the inequality (\ref{Eq:KnightLower2}) with $h=Z'\beta(u)$ yields
$$
Q_u(Z'\beta(u)) - Q_u(g_u) \geq \frac{\underf\Ep\[ |Z'\beta(u) - g_u(X)|^2\]}{3},
$$
which implies \eqref{Def:LB2}.

Now consider the case $(\underf\Ep\[ |Z'\beta(u) - g_u(X)|^2\])^{1/2} > \bar q$. Let
$$
h_u(x) = (1-\alpha) Z(x)'\beta(u) + \alpha g_u(x),\quad x\in\XX,
$$
where $\alpha \in (0,1)$ is picked so that $(\underf \Ep\[ |h_u - g_u|^2\])^{1/2} = \bar q$ (we can assume that $\bar q>0$, so that such $\alpha$ exists, since otherwise the claim of the lemma is trivial). Then
$$
1-\alpha = \frac{\bar q}{\sqrt{\underf E[|Z'\beta(u) - g_u(X)|^2]}},
$$
and by convexity of $Q_u$, we have
\begin{align*}
Q_u(Z'\beta(u)) - Q_u(g_u)
& \geq \frac{1}{1-\alpha}\cdot (Q_u(h_u) - Q_u(g_u))\\
& = \frac{\sqrt{\underf\Ep\[ |Z'\beta(u) - g_u(X)|^2\]}}{\bar q} \cdot (Q_u(h_u) - Q_u(g_u)).
\end{align*}
Also, note that
$$
\bar q = \frac{\underf^{3/2}}{\bar f'} \frac{(\Ep[ |Z'\beta(u) - g_u(X)|^2])^{3/2}}{\Ep[ |Z'\beta(u) - g_u(X)|^3]} = \frac{\underf^{3/2}}{\bar f'} \frac{(E[|h_u - g_u|^2])^{3/2}}{E[|h_u - g_u|^3]} = \frac{1}{\bar f'} \frac{\bar q^{3}}{ E[|h_u - g_u|^3]},
$$
so that
$$
\bar f' E[|h_u - g_u|^3] = \bar q^2.
$$
Hence, using the inequality \eqref{Eq:KnightLower2} with $h = h_u$ gives
$$
Q_u(h_u) - Q_u(g_u) \geq (\underf/2) \Ep[| h_u - g_u|^2]  - (1/6)\bar f ' \Ep[|h_u - g_u|^3] = \bar q^2/2 - \bar q^2/6 = \bar q^2/3.
$$
Combining the inequalities above yields
$$
Q_u(Z'\beta(u)) - Q_u(g_u)  \geq
\frac{\bar q}{3}\sqrt{\underf\Ep\[ |Z'\beta(u) - g_u(X)|^2\]},
$$
which implies \eqref{Def:LB2}. This completes Step 3.

\noindent
{\bf Step 4} (Auxiliary inequality). Here we show that $ \bar f c_{u,2}^2 \leq \bar q^2/3$.
Recall that $\zeta_m = \sup_{x\in\XX}\|Z(x)\|$. Also, note that since $\widetilde \beta(u)$ solves the problem \eqref{eq: LS optimization problem}, we have
$$
E\Big[(g_u(X) - Z'\widetilde\beta(u))\cdot(Z'(\beta(u) - \widetilde\beta(u)))\Big] = 0,
$$
and so
$$
E\Big[(g_u(X) - Z'\beta(u))^2\Big] = E\Big[(g_u(X) - Z'\widetilde\beta(u))^2\Big] + E\Big[(Z'(\beta(u) - \widetilde\beta(u)))^2\Big].
$$
Hence,
\begin{align*}
&E\Big[(g_u(X) - Z'\beta(u))^2\Big] \geq E\Big[(g_u(X) - Z'\widetilde\beta(u))^2\Big],\\
&E\Big[(g_u(X) - Z'\beta(u))^2\Big]\geq E\Big[(Z'(\beta(u) - \widetilde\beta(u)))^2\Big].
\end{align*}
Therefore, the ratio $(\Ep[|Z'\beta(u) - g_u(X)|^2])^{3/2}/\Ep[|Z'\beta(u) - g_u(X)|^3]$ is bounded from below by
\begin{align*}
& \frac{\Ep[|Z'\beta(u) - g_u(X)|^2]^{1/2}}{\sup_{x\in\mathcal{X}}|Z(x)'\beta(u) - g_u(x)|}\\
&\qquad \geq \frac{1}{2}\frac{\Ep[|Z'\beta(u) - Z'\widetilde\beta(u)|^2]^{1/2} + \Ep[|Z'\widetilde\beta(u) - g_u(X)|^2]^{1/2}}{\sup_{x\in\mathcal{X}}|Z(x)'\beta(u) - Z(x)'\widetilde\beta(u)| + \sup_{x\in\mathcal{X}}|Z(x)'\widetilde\beta(u) - g_u(x)|} \\
& \qquad \geq \frac{1}{2}\left(\frac{(\Ep[|Z'\beta(u) - Z'\widetilde\beta(u)|^2])^{1/2}}{\sup_{x\in \mathcal{X}}|Z(x)'\beta(u) - Z(x)'\widetilde\beta(u)|} \wedge \frac{(\Ep[|Z'\widetilde\beta(u)-g_u(X)|^2])^{1/2}}{\sup_{x\in\mathcal{X}}|Z(x)'\widetilde\beta(u)-g_u(x)|} \right)\\
&\qquad  \geq \frac{1}{2}\left(\frac{\lambda_{\min}^{1/2} \|\beta(u) - \widetilde\beta(u)\|}{\zeta_m\|\beta(u) -\widetilde\beta(u)\|} \wedge \frac{c_{u,2}}{c_{u,\infty}} \right) = \frac{1}{2}\left(\frac{\lambda_{\min}^{1/2}}{\zeta_m}\wedge\frac{c_{u,2}}{c_{u,\infty}}\right),
\end{align*}
where the inequality in the last line follows from \eqref{eq: c2 and cinf definitions}. So,
$$
\frac{\bar q^2}{3} \geq \frac{\underf^{3}}{12(\bar f')^2}\left(\frac{\lambda_{\min}}{\zeta_m^2}\wedge\frac{c_{u,2}^2}{c_{u,\infty}^2}\right)\geq \bar f c_{u,2}^2
$$
since $c_{u,\infty}^2\leq \underf^3/(12(\bar f')^2 \bar f)$ and $\zeta_m^2 c_{u,2}^2 \leq \underf^3 \lambda_{\min}/(12(\bar f')^2\bar f)$. This completes Step 4 and the proof of the lemma.
\end{proof}
\subsection*{Proof of Lemma \ref{lem: approximation error}}
Fix $u\in\mathcal U$. All inequalities and convergence statements in the proof apply uniformly over $u\in\mathcal U$ but we fix $u$ so that we do not have to repeat ``uniformly over $u\in\mathcal U$'' many times.

First, consider the case of polynomials. We have $\zeta_m\lesssim m$; see Newey \cite{W97}. Also, since $Q(u,\cdot)\in\Sigma(s,C,\XX)$ for some constant $C>0$, it follows that
$$
\inf_{\beta\in\RR^m}\sup_{x\in\XX}|Z(x)'\beta - Q(u,x)|\lesssim m^{-s/d};
$$
see Chen \cite{Chen2006}. Therefore, by \eqref{eq: LS optimization problem} and \eqref{eq: c2 and cinf definitions},
\begin{align*}
c_{u,2}
&= \Big(E[(Z'\widetilde\beta(u) - Q(u,X))^2]\Big)^{1/2}=\min_{\beta \in \RR^m} \Big(E\Big[|Z(X)'\beta - Q(u,X)|^2\Big]\Big)^{1/2}\\
& \lesssim \inf_{\beta\in\RR^m}\sup_{x\in\XX}|Z(x)'\beta - Q(u,x)|\lesssim m^{-s/d}.
\end{align*}
Also, it follows from Proposition 3.1 in Belloni et al \cite{BelloniChenChernozhukov2009} that
$$
c_{u,\infty} = \sup_{x\in\XX}|Z(x)'\widetilde\beta(u) - Q(u,x)| \lesssim (1+\zeta_m)m^{-s/d} \lesssim m^{1-s/d}
$$
since $\zeta_m\lesssim m$. Moreover, under the assumed condition $s>d$, since $c_{u,2}\lesssim m^{-s/d}$ and $c_{u,\infty} \lesssim m^{1-s/d}$, we have that $c_{u,\infty}\to 0$ and $\zeta_m c_{u,2}\to 0$ if $m\to\infty$. Hence, the inequalities \eqref{eq: approximation error 1} follow from Lemma \ref{Lemma:AUX_L2sparse}.

Next, consider the case of B-splines. We have $\zeta_m\lesssim \sqrt m$; see Newey \cite{W97}. Also, since $Q(u,\cdot)\in \Sigma(s,C,\XX)$ for some constant $C>0$, it follows that
$$
\inf_{\beta\in\RR^m}\sup_{x\in\XX}|Z(x)'\beta - Q(u,x)|\lesssim m^{-(s\wedge s_0)/d};
$$
see Chen \cite{Chen2006}. Therefore, by the same argument as above,
$$
c_{u,2} = \Big(E[(Z'\widetilde\beta(u) - Q(u,X))^2]\Big)^{1/2}\lesssim \inf_{\beta\in\RR^m}\sup_{x\in\XX}|Z(x)'\beta - Q(u,x)|\lesssim m^{-(s\wedge s_0)/d}.
$$
Also, it follows from Proposition 3.1 in Belloni et al \cite{BelloniChenChernozhukov2009} that
$$
c_{u,\infty} = \sup_{x\in\XX}|Z(x)'\widetilde\beta(u) - Q(u,x)| \lesssim (1+\zeta_m)m^{-(s\wedge s_0)/d} \lesssim m^{1/2-(s\wedge s_0)/d}
$$
since $\zeta_m\lesssim \sqrt m$. Moreover, under the assumed condition $s\wedge s_0 > d$, since $c_{u,2}\lesssim m^{-(s\wedge s_0)/d}$ and $c_{u,\infty}\lesssim m^{1/2 - (s\wedge s_0)/d}$, we have that $c_{u,\infty}\to 0$ and $\zeta_m c_{u,2}\to 0$ if $m\to \infty$. Hence, the first inequality in \eqref{eq: approximation error 2} follows from Lemma \ref{Lemma:AUX_L2sparse}.

To prove the second inequality in \eqref{eq: approximation error 2}, note that Lemma \ref{Lemma:AUX_L2sparse} also implies that
$$
\sup_{x\in\XX}|Z(x)'\beta(u) - Q(u,x)|\lesssim m^{1/2 - (s\wedge s_0)/d}.
$$
Further, let
\begin{equation}\label{eq: fu0}
f_{u 0} = \int_{x\in\mathcal X}f_X(x)f_{Y|X}(Q(u,x)|x)dx
\end{equation}
and let $X_u$ be a random variable with the support $\XX$ and the pdf
\begin{equation}\label{eq: auxiliary density}
f_u(x) = f_{u 0}^{-1}f_X(x)f_{Y|X}(Q(u,x)|x),\quad x\in\XX.
\end{equation}
Note that $f_u(x)$ is bounded from above and away from zero uniformly over $x\in\XX$. Define
\begin{equation}\label{eq: perturbed optimization problem 1}
\bar \beta(u) = \arg\min_{\beta\in\RR^m}E\Big[|Z(X_u)'\beta - Q(u,X_u)|^2\Big].
\end{equation}
Applying Theorem A.1 and Lemma 5.1 in Huang \cite{H03} shows that
\begin{equation}\label{eq: huang impl 2}
\sup_{x\in\XX}|Z(x)'\bar\beta(u) - Q(u,x)|\lesssim \inf_{\beta\in\RR^m}\sup_{x\in\XX}|Z(x)'\beta - Q(u,x)|\lesssim m^{-(s\wedge s_0)/d}.
\end{equation}
(Note that applying Theorem A.1 in Huang \cite{H03} requires verifying his Conditions A.1--A.3. Condition A.1 holds because $f_u(x)$ is bounded from above and away from zero uniformly over $x\in\XX$; Conditions A.2 and A.3 hold because the vector of approximating functions $Z$ consists of tensor products of B-splines with uniform knot sequence, see the discussion in Huang's paper for clarifications.) Hence, by the triangle inequality,
\begin{equation}\label{eq: huang impl 1}
\sup_{x\in\XX}|Z(x)'(\beta(u) - \bar\beta(u))|\to 0
\end{equation}
as $m\to \infty$. Moreover, since $\bar\beta(u)$ solves the optimization problem \eqref{eq: perturbed optimization problem 1}, we have
\begin{align}
& E\Big[|Z(X_u)'(\beta(u) - \bar\beta(u))|^2\Big]\notag\\
&\qquad = E\Big[|Z(X_u)'\beta(u) - Q(u,X_u)|^2\Big] - E\Big[|Z(X_u)'\bar\beta(u) - Q(u,X_u)|^2\Big].\label{eq: opt probl impl}
\end{align}
Also, let $\underf_u = \inf_{x\in\XX}f_u(x)$ and $\bar f_X = \sup_{x\in\mathcal X}f_X(x)$. Note that both $\underf_u$ and $f_{u 0}$ are strictly positive and both $\bar f_X$ and $\bar f'$ are finite (recall that $f_{u 0}$ is defined in \eqref{eq: fu0} and $\bar f'$ is defined in the beginning of this section).

Now, as in \eqref{eq: knight lower bound - 1}-\eqref{eq: knight lower bound - 2} of the proof of Lemma \ref{Lemma:AUX_L2sparse}, for $g_u(\cdot) = Q(u,\cdot)$, we have
\begin{align*}
Q_u(Z'\beta(u)) - Q_u(g_u)
&=\Ep\[ \int_0^{Z'\beta(u)-g_u} (tf_{Y|X}(g_u|X) + (t^2/2)f'_{Y|X}(g_u+\tilde t_{X,t}|X)) dt \]\\
&\geq (f_{u 0}/2)E\Big[|Z(X_u)'\beta(u) - Q_u(u,X_u)|^2\Big] \\
&\quad- (1/6)(\bar f_X \bar f'/\underf_u)E\Big[|Z(X_u)'\beta(u) - Q_u(u,X_u)|^3\Big]
\end{align*}
and
\begin{align*}
Q_u(Z'\bar\beta(u)) - Q_u(g_u)
&=\Ep\[ \int_0^{Z'\bar\beta(u)-g_u} (tf_{Y|X}(g_u|X) + (t^2/2)f'_{Y|X}(g_u+\tilde t_{X,t}|X)) dt \]\\
&\leq (f_{u 0}/2)E\Big[|Z(X_u)'\bar\beta(u) - Q_u(u,X_u)|^2\Big] \\
&\quad + (1/6)(\bar f_X \bar f'/\underf_u)E\Big[|Z(X_u)'\bar\beta(u) - Q_u(u,X_u)|^3\Big],
\end{align*}
where inequalities follow from the definition of the pdf of $X_u$ in \eqref{eq: auxiliary density}.
Combining these inequalities with \eqref{eq: opt probl impl} and using the fact that $Q_u(Z'\beta(u)) \leq Q_u(Z'\bar\beta(u))$ gives
\begin{align}
&E\Big[|Z(X_u)'(\beta(u) - \bar\beta(u))|^2\Big]\notag\\
&\quad \leq \frac{2\bar f_X \bar f'}{3 f_{u 0} \underf_u}\Big(E\Big[|Z(X_u)'\beta(u) - Q_u(u,X_u)|^3\Big] + E\Big[|Z(X_u)'\bar\beta(u) - Q_u(u,X_u)|^3\Big]\Big).\label{eq: almost there lemma 1}
\end{align}
In turn, $(\bar f_X \bar f')/(f_{u 0} \underf_u)$ is finite, and so the right-hand of the inequality \eqref{eq: almost there lemma 1} is bounded from above up-to a constant by
\begin{align*}
&E\Big[|Z(X_u)'(\beta(u) - \bar\beta(u))|^3\Big] + E\Big[|Z(X_u)'\bar\beta(u) - Q_u(u,X_u)|^3\Big]\\
&\qquad \leq c E\Big[|Z(X_u)'(\beta(u) - \bar\beta(u))|^2\Big] + \sup_{x\in\XX}|Z(x)'\bar\beta(u) - Q_u(u,x)|^3,
\end{align*}
where $c$ is a constant that is arbitrarily small (uniformly over $u\in\mathcal U$) if $m$ is large enough; see \eqref{eq: huang impl 1}. Hence,
\begin{equation}\label{eq: l1 der x}
E\Big[|Z(X_u)'(\beta(u) - \bar\beta(u))|^2\Big] \lesssim \sup_{x\in\XX}|Z(x)'\bar\beta(u) - Q_u(u,x)|^3\lesssim m^{-3(s\wedge s_0)/d},
\end{equation}
where the second inequality follows from \eqref{eq: huang impl 2}. Therefore,
$$
\|\beta(u) - \bar\beta(u)\|^2 \lesssim E\Big[|Z(X_u) '(\beta(u) - \bar\beta(u))|^2\Big]\lesssim m^{-3(s\wedge s_0)/d}.
$$
Conclude that
\begin{align*}
&\sup_{x\in\mathcal X}|Q(u,x) - Z(x)'\beta(u)|
\leq \sup_{x\in\XX}|Q(u,x) - Z(x)'\bar\beta(u)| + \sup_{x\in\XX}|Z(x)'(\bar\beta(u) - \beta(u))|\\
&\quad \lesssim m^{-(s\wedge s_0)/d} + \zeta_m \|\beta(u) - \bar\beta(u)\|\lesssim m^{-(s\wedge s_0)/d} + m^{1/2 - (3/2)(s\wedge s_0)/d}\lesssim m^{-(s\wedge s_0)/d},
\end{align*}
since $(s\wedge s_0) > d$.
This gives the second inequality in \eqref{eq: approximation error 2}. The last claim of the lemma follows from \eqref{eq: approximation error 1} and \eqref{eq: approximation error 2}. This completes the proof of the lemma.
\qed

\section{Proofs for Section \ref{Sec:theory_coeff}}\label{sec: proof of main results}


In this section we gather proofs of Theorems
\ref{Thm:SeriesRates}-\ref{Thm:MainBootstrap} and Corollaries \ref{cor: l2 rate qr series estimator} and \ref{cor: sup rate qr series estimator} from the main text. We
adopt the standard notation of the empirical process literature
(van~der Vaart and Wellner \cite{vdV-W}). In addition, for notational convenience,
we write $\psi_i(\beta,u) = Z_i(1\{Y_i \leq Z_i'\beta\}-u)$, where
$Z_i=Z(X_i)$ and $i=1,\dots,n$. Also, for a sequence of scalars $(r_n)_{n\geq 1}$ such that $r_n \to 0$, we define the set
\begin{equation}\label{eq: Rnm definition}
R_{n,m} := \Big\{ (u,\beta) \in \mathcal{U}\times \RR^m\colon
\| \beta - \beta(u) \| \leq r_n \Big\}
\end{equation}
and the
following error terms:
 \begin{equation}
\begin{array}{rccl}
 \epsilon_0(m,n)& := & \displaystyle \sup_{u \in \mathcal{U}} &\Big\|\Gn(\psi_i (\beta(u),u))\Big\|,\label{eq: three epsilons} \\
 & & \\
\epsilon_1(m,n)& := & \displaystyle \sup_{ (u, \beta) \in R_{n,m}} & \Big\| \Gn
(\psi_i(\beta,
u)) - \Gn (\psi_i(\beta(u),u)) \Big\|, \\
 & &\\
\epsilon_2(m,n)& := & \displaystyle \sup_{  (u, \beta) \in R_{n,m} } & n^{1/2} \Big\|
E[\psi_i(\beta, u)] - E[\psi_i(\beta(u),u)] - J(u)(\beta -
\beta(u))\Big\|, \\
\end{array}
\end{equation}
for the matrix $J(u)$ defined in \eqref{eq: J matrix}.
In what follows, we say that the data are in general position if the observations are independent and for any $\gamma \in \RR^m$, \begin{equation}\label{def:genpos}P\( Y_i = Z_i'\gamma,
\text{for at least one $i=1,\dots,n$}\mid Z_1,\dots,Z_n\) = 0,\end{equation} which holds under Condition S.  Finally, we emphasize that although we formally work with an i.i.d. setting in this paper, most of the results in this appendix do not rely on this assumption and can be extended to more general settings.

\subsection{Proof of Theorem  \ref{Thm:SeriesRates}}
The following technical lemma will be used in the proof of Theorem \ref{Thm:SeriesRates}.

\begin{lemma}[Rates in Euclidian Norm for Perturbed QR Process]\label{Lemma:Rates} Suppose that for all $u\in\mathcal U$, $\hat \beta(u)$ is a minimizer of
\begin{equation}\label{eq: perturbed optimization problem}
\En[\rho_{u}(Y_i - Z_i'\beta)] + \mathcal{A}_n(u)'\beta,
\end{equation}
where $\mathcal A_n(u)$ is the perturbation term, so that the QR-series coefficient $\widehat\beta(u)$ corresponds to the unperturbed case $\mathcal{A}_n (u) = 0$. Also, suppose that Condition S holds. Moreover, suppose that there exists a sequence of scalars $(\eta_n)_{n\geq 1}$ such that $\sup_{u\in\mathcal U}\|\mathcal A_n(u)\|\lesssim_P \eta_n$ and for any constant $B>0$, defining $R_{n,m}$, $\epsilon_0(m,n)$, $\epsilon_1(m,n)$, and $\epsilon_2(m,n)$ above with $r_n = B\eta_n$ gives
\begin{itemize}
\item[R1.] $\epsilon_0(m,n) \lesssim_P \sqrt{n}\eta_n$, %
\item[R2.] $\epsilon_1(m,n) = o_P(\sqrt{n}\eta_n)$, %
\item[R3.] $\epsilon_2(m,n) = o_P(\sqrt{n}\eta_n)$.
\end{itemize}
Then
\begin{equation}\label{Eq:URate} \sup_{u \in
\mathcal{U}}\left\| \hat \beta(u) - \beta(u) \right\| \lesssim_P \eta_n.
\end{equation}
\end{lemma}
\begin{proof}[Proof of Lemma \ref{Lemma:Rates}]
Due to convexity of the objective function \eqref{eq: perturbed optimization problem}, it suffices to
show that for any $\varepsilon>0$, there exists $B<
\infty$ such that
\begin{equation}\label{EventThm1}
P\left( \inf_{u \in \mathcal{U}}\inf_{\alpha\in S^{m-1}}
\alpha' \left[\En \[\psi_i\(\beta, u\)\]+ \mathcal{A}_n(u)\right]|_{\beta = \beta(u) + B \eta_n\alpha}>0
\right) \geq 1- \varepsilon
\end{equation}
for all sufficiently large $n$ since $\En \[\psi_i\(\beta, u\)\]+\mathcal{A}_n(u)$ is a
sub-gradient of the objective function at $\beta$.
To show \eqref{EventThm1}, let $B$ be some large constant to be chosen later and observe that uniformly in $u \in \mathcal{U}$ and $\alpha\in S^{m-1}$,
\begin{align*}
&\sqrt{ n } \alpha ' \En \[\psi_i\left(\beta(u) + B
\eta_n\alpha, u\right)\] \\
&\qquad \geq \Gn
(\alpha' \psi_i(\beta(u),u))  + \sqrt n \alpha' J(u) \alpha B\eta_n - \epsilon_1(m,n) - \epsilon_2(m,n),
\end{align*}
by setting $r_n = B\eta_n$ and observing that $\Ep\[\psi_i(\beta(u),u)\] =
0$ by definition of $\beta(u)$ (see the argument in the proof of Lemma \ref{Lemma:ULA} below).
Further, we have uniformly in $u\in\mathcal U$ and $\alpha \in S^{m-1}$ that
$$
| \Gn (\alpha '\psi_i(\beta(u), u))|  \leq    \sup_{u
\in \mathcal{U}} \| \Gn(\psi_i (\beta(u),u)) \| = \epsilon_0(m,n).
$$
Also, it follows from Condition S that all eigenvalues of the matrix $J(u)$ are bounded below from zero uniformly over $u\in\mathcal U$, so that
$$
\inf_{u\in\mathcal U}\inf_{\alpha\in S^{m-1}}\alpha' J(u)\alpha \geq c
$$
for some constant $c>0$.
Thus, the event of interest in (\ref{EventThm1}) is
implied by the event
 \begin{equation*}
\left\{ \sqrt n c B \eta_n - \epsilon_0(m,n) - \epsilon_1(m,n) -  \epsilon_2(m,n) - \sqrt{n}\sup_{u \in \mathcal{U}}\|\mathcal{A}_n(u)\|
> 0\right \},
 \end{equation*}
and the probability of this event can be made
arbitrarily close to one for all sufficiently large $n$ by setting $B$
sufficiently large since (i) $\epsilon_0(m,n) \lesssim_P \sqrt n\eta_n$ by R1 and $\epsilon_0(m,n)$ does not depend on $B$, (ii) $\sup_{u \in \mathcal{U}}\|\mathcal{A}_n(u)\|\lesssim_P \eta_n$ by assumptions of the lemma and $\sup_{u\in\mathcal U}\|\mathcal A_n(u)\|$ does not depend on $B$, and (iii) $\epsilon_1(m,n)+\epsilon_2(m,n) = o_P(\sqrt n \eta_n)$ by R2 and R3.
This completes the proof of the lemma.
\end{proof}

\begin{proof}[Proof of Theorem \ref{Thm:SeriesRates}.]
The result follows from Lemma \ref{Lemma:Rates} with $\mathcal{A}_n(u)=0$ and $\eta_n=\sqrt{m/n}$ provided we can show that if for any constant $B>0$, we define $\epsilon_0(m,n)$, $\epsilon_1(m,n)$, and $\epsilon_2(m,n)$ with $r_n = B\sqrt{m/n}$, then
\begin{equation}\label{eq: epsilon sum bound}
 \epsilon_0(m,n) \lesssim_P \sqrt m \ \text{ and } \ \epsilon_1(m,n) + \epsilon_2(m,n) = o_P (\sqrt{m}),
\end{equation}
In turn, to show \eqref{eq: epsilon sum bound}, note that by Lemma \ref{Lemma:error-rate-0},
$$\epsilon_0(m,n)\lesssim_P \sqrt{m}\Big(1 + \sqrt{m^{-\kappa}\log n}+\sqrt{m}\zeta_m\log n/\sqrt{n} \Big)\lesssim \sqrt{m} $$
since $m^{-\kappa}\log n = o(1)$ and $m\zeta_m^2\log^2 n = o(n)$. Moreover,  by Lemma \ref{bound e1 and e2},
$$
\epsilon_1(m,n)\lesssim_P \sqrt{m\zeta_m r_n \log n}+m\zeta_m\log n / \sqrt{n} = o(\sqrt{m}),
$$
where we again used $m\zeta_m^2\log^2 n = o(n)$, and
$$
\epsilon_2(m,n)\lesssim \sqrt{n}\zeta_mr_n^2 + \sqrt{n}m^{-\kappa} r_n = o(\sqrt m)
$$
since $m\zeta_m^2 = o(n)$ and $m\to \infty$. This completes the proof of the theorem.
\end{proof}

\subsection{Proof of Corollary \ref{cor: l2 rate qr series estimator}}
We have uniformly over $u\in\mathcal U$ that
\begin{align*}
\|\hat Q(u,\cdot) - Q(u,\cdot)\|_{L^2(X)}
&\leq \|Z(\cdot)'(\hat \beta(u) - \beta(u))\|_{L^2(X)} + \|R(u,\cdot)\|_{L^2(X)}\\
&\lesssim \|\hat \beta(u) - \beta(u)\| + \|R(u,\cdot)\|_{L^2(X)},
\end{align*}
where the first line follows from the triangle inequality and the second from Condition S.3. Further, it follows from Lemma \ref{lem: approximation error} that $\sup_{u\in\mathcal U}\|R(u,\cdot)\|_{L^2(X)}\lesssim m^{-s/d}$ in the case of polynomials and $\sup_{u\in\mathcal U}\|R(u,\cdot)\|_{L^2(X)}\lesssim m^{-(s\wedge s_0)/d}$ in the case of B-splines (note that in the case of polynomials, an application of Lemma \ref{lem: approximation error} requires the condition that $s > d$ but this condition follows from the assumption that $m^{1 - s/d}\log n = o(1)$ imposed in the corollary).

To bound $\sup_{u\in\mathcal U}\|\hat\beta(u) - \beta(u)\|$, we apply Theorem \ref{Thm:SeriesRates}. Lemma \ref{lem: approximation error} implies that Condition S.4 holds with $\kappa = s/d - 1$ in the case of polynomials and with $\kappa = (s\wedge s_0)/d$ in the case of B-splines. Also, we have $\zeta_m \lesssim m$ in the case of polynomials and $\zeta_m \lesssim m^{1/2}$ in the case of B-splines. Thus, in both cases, the conditions that $\zeta_m^2 m \log^2 n = o(n)$ and $m^{-\kappa} \log n = o(1)$ required in Theorem \ref{Thm:SeriesRates} follow from the assumptions of the corollary. Therefore, an application of Theorem \ref{Thm:SeriesRates} gives
$$
\sup_{u\in\mathcal U}\|\hat\beta(u) - \beta(u)\|\lesssim_P \sqrt{m/n}
$$
in both cases. Combining presented bounds gives both claims of the corollary.
\qed

\subsection{Proof of Theorem \ref{Thm:MainULA}}
For $u\in\mathcal U$, consider the function in \eqref{eq: perturbed optimization problem}, where $\mathcal A_n(u)$ is the perturbation term, and let $\widehat\beta(u)$ be a minimizer of this function, so that the QR-series coefficient $\widehat\beta(u)$ corresponds to the unperturbed case $\mathcal A_n(u) = 0$. Define the following approximation error:
\begin{equation}\label{eq: epsilon_3 definition}
\epsilon_3(m,n) :=  \sup_{ u \in \mathcal{U}}  n^{1/2}\Big\|
 \En [\psi_i(\hat\beta(u),u)] + \mathcal{A}_n(u)\Big\|.
\end{equation}
We have the following lemma.
\begin{lemma}[Uniform Linear Approximation]\label{Lemma:ULA}
Consider the setting specified above. Suppose that the data are in general position so that (\ref{def:genpos}) holds, and further that the conditions of Lemma \ref{Lemma:Rates} hold for some sequence of scalars $(\eta_n)_{n\geq 1}$. Then
\begin{equation}\label{Eq:ULA}
\sqrt{n} J(u)\left(\hat \beta(u) - \beta(u)\right) =
-\frac{1}{\sqrt{n}} \sum_{i=1}^n \psi_i(\beta(u),u) - \mathcal{A}_n(u)+
r_n(u),
\end{equation}
where $r_n(u)$ is such that for any $\varepsilon\in(0,1)$,
\begin{equation}\label{eq: sum of three epsilons lemma 4}
\sup_{u \in \mathcal{U}} \| r_n(u) \| \leq \epsilon_1(m,n)+\epsilon_2(m,n)+\epsilon_3(m,n),
\end{equation}
with probability at least $1 - \varepsilon$, where $\epsilon_1(m,n)$ and $\epsilon_2(m,n)$ are defined in \eqref{eq: three epsilons} using $R_{n,m}$ in \eqref{eq: Rnm definition} with $r_n = B\eta_n$ for some sufficiently large constant $B = B(\varepsilon)$.
 \end{lemma}
\begin{proof}[Proof of Lemma \ref{Lemma:ULA}]
First, note that
$\Ep\[\psi_i(\beta(u),u)\]=0$ by definition of $\beta(u)$. Indeed,
despite the possible approximation error, $\beta(u)$ minimizes
$E[\rho_u(Y-Z'\beta)]$ so that $\Ep\[\psi_i(\beta(u),u)\] = 0$
by the first order conditions. Therefore, equation (\ref{Eq:ULA}) can be recast as
$$r_n(u) = n^{1/2}J(u) (\hat \beta(u) -
\beta(u)) + \Gn \(\psi_i(\beta(u),u)\) + \mathcal{A}_n(u).
$$
Second, by Lemma \ref{Lemma:Rates}, for any $\varepsilon\in(0,1)$, there is a constant $B = B(\varepsilon)$ such that with probability $1-\varepsilon$, we have $(u,\hat \beta(u))\in R_{n,m}$ for all $u\in \mathcal{U}$ and all $n$, where $R_{n,m}$ is defined in \eqref{eq: Rnm definition} with $r_n = B\eta_n$. Therefore, we have by the triangle inequality that for all $u\in \mathcal{U}$,
 \begin{align*}
\| r_n(u) \|  & \leq
\Big\| \Gn
(\psi_i(\beta(u),u)) - \Gn (\psi_i(\hat\beta(u),u)) \Big\|  \\
& \quad +  n^{1/2} \Big\|  \Ep \[ \psi_i(\hat\beta(u), u)\]
- \Ep \[ \psi_i(\beta(u),u) \] - J(u) \( \hat\beta(u) -
\beta(u) \) \Big\| \\
& \quad +   n^{1/2} \Big\| \En \[ \psi_i(\hat\beta(u),u)\] + \mathcal{A}_n(u)\Big\|  \\
& \leq   \epsilon_1(m,n) + \epsilon_2(m,n) +
\epsilon_3(m,n)
 \end{align*}
 by the definitions of $\epsilon_1(m,n)$, $\epsilon_2(m,n)$, and $\epsilon_3(m,n)$. The asserted claim follows.
\end{proof}

\begin{proof}[Proof of Theorem \ref{Thm:MainULA}.]
To prove the asserted claim, we apply Lemma \ref{Lemma:ULA} with $\mathcal{A}_n (\cdot) = 0$.
Since conditions $m^3\zeta_m^2 = o(n^{1 - \varepsilon})$ and $m^{-\kappa + 1} = o(n^{-\varepsilon})$, which are assumed in Theorem \ref{Thm:MainULA}, imply conditions $m\zeta_m^2\log^2n = o(n)$ and $m^{-\kappa}\log n = o(1)$, which are assumed in Theorem \ref{Thm:SeriesRates}, it follows that conditions of Lemma \ref{Lemma:Rates} hold with $\eta_n = \sqrt{m/n}$ by the same argument as that used in the proof of Theorem \ref{Thm:SeriesRates}. Thus, since under Condition S, the data are in general position, the conclusion of Lemma \ref{Lemma:ULA} holds, and so for any $\varphi\in(0,1)$, there exists $B>0$ such that \eqref{eq: sum of three epsilons lemma 4} holds with probability at least $1 - \varphi$, where $\epsilon_1(m,n)$ and $\epsilon_2(m,n)$ are defined in \eqref{eq: three epsilons} using $R_{n,m}$ in \eqref{eq: Rnm definition} with $r_n = B\sqrt{m/n}$.

Next, we control $\epsilon_1(m,n)$, $\epsilon_2(m,n)$, and $\epsilon_3(m,n)$ for given $r_n$.  By Lemma \ref{bound e1 and e2},
$$
\epsilon_1(m,n) \lesssim_P \sqrt{m \zeta_m r_n \log n} + \frac{m\zeta_m \log n}{\sqrt{n}} \lesssim \frac{m^{3/4}\zeta_m^{1/2}\log^{1/2}n}{n^{1/4}},
$$
where the second inequality holds since $m^3\zeta_m^2 = o(n^{1-\varepsilon})$. Also by Lemma \ref{bound e1 and e2},
$$
\epsilon_2(m,n) \lesssim_P \sqrt{n}\zeta_m r^2+\sqrt{n}m^{-\kappa}r_n \lesssim \frac{m\zeta_m}{\sqrt n} + m^{1/2 - \kappa}.
$$
Further, by Lemma \ref{bound e3}, we have with probability one that
$$
\epsilon_3(m,n) \leq \frac{m \zeta_m}{\sqrt{n}}.
$$
Combining the inequalities above and using \eqref{Eq:ULA} and \eqref{eq: sum of three epsilons lemma 4} shows that,
$$
\sup_{u\in\mathcal U}\left\|\sqrt n J(u)\Big(\widehat \beta(u) - \beta(u)\Big) + \frac{1}{\sqrt n}\sum_{i=1}^n \psi_i(\beta(u),u)\right\|\lesssim_P \frac{m^{3/4}\zeta_m^{1/2}\log^{1/2}n}{n^{1/4}} + m^{1/2 - \kappa}.
$$
Finally, observe that
$$
\widetilde r_u := \frac{1}{\sqrt{n}} \sum_{i=1}^n Z_i\Big(1\{ Y_i \leq Q(u,X_i)\}-1\{ Y_i \leq Z_i'\beta(u)\}\Big),\quad u\in\mathcal U,
$$
satisfies
$$
\sup_{u\in\mathcal{U}}\|\widetilde r_u\| \lesssim_P \sqrt{m^{1-\kappa}\log n} + \frac{m\zeta_m\log n}{\sqrt{n}}
$$
by Lemma \ref{bound ApproxULA} and note that
$$
\frac{1}{\sqrt n}\sum_{i=1}^n \psi_i(\beta(u),u) = \frac{1}{\sqrt n}\sum_{i=1}^n Z_i (1\{Y_i \leq Z_i'\beta(u)\} - u),\quad u\in\mathcal U
$$
and
$$
\mathbb{U}(u) = \frac{1}{\sqrt{n}} \sum_{i=1}^n Z_i(u-1\{ Y_i \leq Q(u,X_i)\}),\quad u\in\mathcal U.
$$
The asserted claims now follow by noting that all eigenvalues of the matrix $J(u)$ are bounded below from zero uniformly over $u\in\mathcal U$.
\end{proof}

\subsection{Proof of Corollary \ref{cor: sup rate qr series estimator}}
By the triangle inequality, we have for all $u\in\mathcal U$ and $x\in\mathcal X$ that
\begin{equation}\label{eq: cor 2 bound 1}
|\hat Q(u,x) - Q(u,x)| \leq |Z(x)'(\hat\beta(u) - \beta(u))| + |R(u,x)|.
\end{equation}
By Lemma \ref{lem: approximation error},
\begin{equation}\label{eq: cor 2 bound 2}
\sup_{u\in\mathcal U}\sup_{x\in\mathcal X}|R(u,x)|\lesssim m^{1-s/d}
\end{equation}
in the case of polynomials and
\begin{equation}\label{eq: cor 2 bound 3}
\sup_{u\in\mathcal U}\sup_{x\in\mathcal X}|R(u,x)|\lesssim m^{-(s\wedge s_0)/d}
\end{equation}
in the case of B-splines.

To bound $\sup_{u\in\mathcal U}\sup_{x\in\mathcal X}|Z(x)'(\hat\beta(u) - \beta(u))|$, we apply Theorem \ref{Thm:MainULA}. Note that by Lemma \ref{lem: approximation error}, Condition S.4 is satisfied with $\kappa = s/d - 1$ in the case of polynomials and with $\kappa = (s\wedge s_0)/d$ in the case of B-splines. Also, we have $\zeta_m \lesssim m$ in the case of polynomials and $\zeta_m \lesssim m^{1/2}$ in the case of B-splines, so that the conditions $m^3\zeta_m^2 = o(n^{1-\varepsilon})$ and $m^{-\kappa + 1} = o(n^{-\varepsilon})$, required in Theorem \ref{Thm:MainULA}, are satisfied in both cases. Therefore, an application of Theorem \ref{Thm:MainULA} gives
$$
Z(x)'(\hat\beta(u) - \beta(u)) = \frac{1}{\sqrt n}Z(x)'J^{-1}(u)\mathbb U(u) + n^{-1/2}\zeta_m\|r(u)\|
$$
uniformly over $u\in\mathcal U$ and $x\in\mathcal X$ in both cases where we used $|Z(x)'r(u)|\leq \|Z(x)\| \ \|r(u)\| \leq \zeta_m\|r(u)\|$. Note that
$$
E[Z(x)'J^{-1}(u)\mathbb U(u)] = 0
$$
and
$$
E[(Z(x)'J^{-1}(u)\mathbb U(u))^2] \leq |Z(x)'J^{-1}(u)\Sigma J^{-1}(u)Z(x)|\lesssim \|Z(x)\|^2 \leq \zeta_m^2
$$
uniformly over $u\in\mathcal U$ and $x\in\mathcal X$ by Conditions S.2 and S.3. Therefore, the argument like that used in the proof of Theorem \ref{theorem: uniform rate} below shows that
$$
|Z(x)'J^{-1}(u) \mathbb U(u)|\lesssim_P \zeta_m\sqrt{\log n}
$$
uniformly over $u\in\mathcal U$ and $x\in\mathcal X$. Substituting the bound $\zeta_m\lesssim m$ in the case of polynomials and $\zeta_m\lesssim m^{1/2}$ in the case of B-splines shows that uniformly over $u\in\mathcal U$ and $x\in\mathcal X$, under the conditions corresponding to each case we have $\sup_{u\in \mathcal{U}}\|r(u)\|=o_P(1)$, so that
$$
|Z(x)'(\hat\beta(u) - \beta(u))| \lesssim_P \sqrt{m^2\log n/n}
$$
in the former case and
$$
|Z(x)'(\hat\beta(u) - \beta(u))| \lesssim_P \sqrt{m\log n/n}
$$
in the latter case. Combining these bounds with those in \eqref{eq: cor 2 bound 1}, \eqref{eq: cor 2 bound 2}, and \eqref{eq: cor 2 bound 3} gives the asserted claim.
\qed

\subsection{Proof of Theorems \ref{Thm:MainULAfeasible}, \ref{Thm:MainULAstar},  \ref{theorem: strong}, \ref{thm: gaussian method}, and \ref{Thm:MainBootstrap}}

\begin{proof}[Proof of Theorem \ref{Thm:MainULAfeasible}]
Note that
$$
\sup_{u\in \mathcal{U}} \| r(u) \| \leq \sup_{u\in\mathcal U}\Big\|(\widehat J^{-1}(u) - J^{-1}(u))\mathbb U^*(u)\Big\| \leq \sup_{u\in\mathcal U}\Big\|\widehat J^{-1}(u) - J^{-1}(u)\Big\|\cdot \sup_{u\in\mathcal U}\|\mathbb U^*(u)\|.
$$
In addition, by Lemma \ref{Lemma:error-rate-0}, we have $\sup_{u\in\mathcal U}\|\mathbb U(u)\| \lesssim_P \sqrt{m}$, and so
\begin{equation}\label{EqM01}
\sup_{u\in \mathcal{U}} \| \mathbb{U}^*(u) \| \lesssim_P \sqrt{m}
\end{equation}
since $\mathbb U^*(\cdot)$ is a copy of $\mathbb U(\cdot)$ conditional on $(Z_i)_{i=1}^n$. Next, we bound $\sup_{u\in\mathcal U}\|\widehat J^{-1}(u) - J^{-1}(u)\|$. Note that conditions of Theorem \ref{Thm:MainULAfeasible} imply conditions of Lemma \ref{covariance}, and so
$$
\sup_{u\in\mathcal{U}}\|\hat J(u) - J(u)\| \lesssim_P \sqrt{\frac{\zeta_m^2 m\log n}{n\hn}} + m^{-\kappa} + \hn = o(n^{-\varepsilon'}/\sqrt m)
$$
for some $\varepsilon' > 0$. Hence, with probability approaching one, all eigenvalues of $\hat J (u)$ are bounded below from zero uniformly over $u\in\mathcal U$, and so the matrix identity $A^{-1} - B^{-1} = B^{-1}(B-A)A^{-1}$ implies that
$$
\|J^{-1}(u)-\hat J^{-1}(u)\|  =  \|\hat J^{-1}(u)\|\cdot\|\hat J(u) - J(u)\|\cdot \|J^{-1}(u)\| = o_P(n^{-\varepsilon'}/\sqrt m)
$$
uniformly over $u\in \mathcal{U}$. Combining this bound with (\ref{EqM01}) gives the first asserted claim.

Note also that the results continue to hold in $P$-probability if we replace $P$ by $P^*$, which is the second asserted claim, since if a random variable $B_n  = O_P(1)$,
then $B_n = O_{P^*}(1)$. Indeed, the first relation means that $P(|B_n| > \ell_n)= o(1)$ for any $\ell_n \to \infty$, while the second means that $P^*(|B_n| > \ell_n)= o_{P}(1)$ for any $\ell_n \to \infty$. But the second follows from the first from the Markov inequality, observing that $E[P^*(|B_n| > \ell_n)] =  P(|B_n| > \ell_n)= o(1)$. This completes the proof of the theorem.
\end{proof}

\begin{proof}[Proof of Theorem \ref{Thm:MainULAstar}.] The proof is similar to the proof of Theorem \ref{Thm:MainULA} but it applies Lemma \ref{Lemma:ULA} twice, one to the unperturbed problem and one to the perturbed problem with $\mathcal{A}_n (u) = -\mathbb{U}^*(u)/\sqrt{n}$, for every $u\in \mathcal{U}$.

Since conditions of Theorem \ref{Thm:MainULAstar} imply conditions of Theorem \ref{Thm:SeriesRates}, we have by Theorem \ref{Thm:SeriesRates} that $\sup_{u\in\mathcal U}\|\widehat\beta(u) - \beta(u)\|\lesssim_P \eta_n = \sqrt{m/n}$. Similarly, using Lemma \ref{Lemma:Rates} like in the proof of Theorem \ref{Thm:SeriesRates}, we obtain $\sup_{u\in\mathcal U}\|\widehat\beta^*(u) - \beta(u)\|\lesssim_P \eta_n$, since
$$
\{\mathbb{U}^*(u)\}_{u\in\mathcal{U}}=_d \{\mathbb{U}(u)\}_{u\in\mathcal{U}}\lesssim_P \sqrt m
$$
by Lemma \ref{Lemma:error-rate-0}. Then, by applying Lemma \ref{Lemma:ULA} twice, we have for all $u\in\mathcal U$ that
\begin{align*}
\sqrt{n} J(u)\left(\hat \beta^*(u) - \hat \beta(u)\right) & = \sqrt{n} J(u)\left(\hat \beta^*(u) -  \beta(u)\right) - \sqrt{n} J(u)\left(\hat \beta(u) - \beta(u)\right) \\
& =  \mathbb{U}^*(u)
+ r_n^{pert}(u) - r_n^{unpert}(u),
\end{align*}
where $r_n^{pert}(u)$ and $r_n^{unpert}(u)$ are defined by \eqref{Eq:ULA} with $\mathcal A_n(u) = -\mathbb U^*(u)/\sqrt n$ and $\mathcal A_n(u) = 0$, respectively, and the same arguments as those used in the proof of Theorem \ref{Thm:MainULA} show that
$$
\sup_{u \in \mathcal{U}} \Big(\| r_n^{pert}(u)\|+\|r_n^{unpert}(u)\| \Big) \lesssim_P \sqrt{m \zeta_m \eta_n \log n} + \frac{m \zeta_m \log n}{\sqrt{n}} + m^{-\kappa}\sqrt{m}.
$$
The first asserted claim now follows by substituting $\eta_n = \sqrt{m/n}$ and using the growth conditions $m^3\zeta_m^2=o(n^{1-\varepsilon})$ and $m^{-\kappa+1/2}=o(n^{-\varepsilon})$. The second asserted claim follows from the same argument as that used in the proof of Theorem \ref{Thm:MainULAfeasible}. This completes the proof of the theorem.
\end{proof}

\begin{proof}[Proof of Theorem \ref{theorem: strong}]
The proof relies on the following lemma:
\begin{lemma}\label{lem: approximation of u by g}
Suppose that Condition S holds. In addition, suppose that $m^7 \zeta_m^6 = o(n^{1 - \varepsilon})$ for some constant $\varepsilon > 0$. Then there exists a process $G = G_n$ such that
\begin{equation}\label{eq: strong gaussian approximation of u process thm 5}
\sup_{u\in\mathcal U}\|\mathbb U(u) - G(u)\| \lesssim_P o(n^{-\varepsilon'})
\end{equation}
and the process $G$ is conditionally on $(Z_i)_{i=1}^n$ zero-mean Gaussian with a.s. continuous sample paths and the covariance function
$$
E\Big[ G(u_1)G(u_2)'\mid (Z_i)_{i=1}^n\Big] = \En[Z_i Z_i'](u_1\wedge u_2 - u_1 u_2), \ \text{for all $u_1$ and $u_2$ in $\mathcal U$.}
$$
\end{lemma}
\begin{proof}[Proof of Lemma \ref{lem: approximation of u by g}]
Since under our conditions, we have $\zeta_m^2 \log n / n =o(1)$ and $\lambda_{\max} = \sup_{\alpha\in S^{m-1}}E[(Z'\alpha)^2]\lesssim 1$, it follows from Corollary \ref{Corollary:phin} in Appendix \ref{App:EmpiricalProcess} that there exists a constant $C>0$ such that $\sup_{\alpha\in S^{m-1}} \En[(\alpha'Z_i)^2] \leq C$ with probability approaching one. Thus, the asserted claim of the lemma follows from applying Lemma \ref{lemma: strong} conditional on $(Z_i)_{i=1}^n$ on the event $\sup_{\alpha\in S^{m-1}} \En[(\alpha'Z_i)^2] \leq C$.
\end{proof}
Getting back to the proof of Theorem \ref{theorem: strong}, let $G$ be a process constructed in Lemma \ref{lem: approximation of u by g}. Then
\begin{align*}
\sup_{u \in \mathcal{U}} \| & \sqrt{n} (\hat \beta(u) - \beta(u))  - J^{-1}(u) \Z(u) \| \\
 & \leq  \sup_{u \in \mathcal{U}} \| J^{-1}(u) \mathbb{U}(u) - J^{-1}(u) \Z(u) \| + o_P(n^{-\varepsilon'}) \\
& \leq \sup_{u \in \mathcal{U}} \| J^{-1}(u) \| \cdot  \sup_{u \in \mathcal{U}}  \| \mathbb{U}(u) - \Z(u) \| + o_P(n^{-\varepsilon'}) = o_P(n^{-\varepsilon'}),
\end{align*}
where in the second line we invoked Theorem \ref{Thm:MainULA} and in the third line we used \eqref{eq: strong gaussian approximation of u process thm 5} and the fact that all eigenvalues of $J(u)$ are bounded below from zero uniformly over $u\in\mathcal U$. The asserted claim follows.
\end{proof}

\begin{proof}[Proof of Theorem \ref{thm: gaussian method}]
The proof relies on the following lemma:
\begin{lemma}\label{eq: upper bound for g* process}
Suppose that Condition S holds. In addition, suppose that $\zeta_m^2\log n = o(n)$. Then
$$
\sup_{u\in\mathcal U}\|G^*(u)\| \lesssim_P \sqrt m.
$$
\end{lemma}
\begin{proof}[Proof of Lemma \ref{eq: upper bound for g* process}]
Recall that the process $G^*(\cdot)$ is given by
$$
G^*(u) = \hat\Sigma^{1/2} B_m(u),\quad u\in\mathcal U,
$$
where $B_m(\cdot)$ is an $m$-dimensional vector of independent Brownian bridges $B_{m,j}(\cdot)$, $j=1,\dots,m$. In addition,
\begin{align*}
E\Big[\sup_{u\in \mathcal U}\|B_m(u)\|\Big]
& = E\Big[\sup_{u\in\mathcal U} \Big(\textstyle{\sum_{j=1}^m} B_{m,j}(u)^2\Big)^{1/2}\Big]\\
& \leq E\Big[\Big(\textstyle{\sum_{j=1}^m}\sup_{u\in\mathcal U}B_{m,j}(u)^2\Big)^{1/2}\Big]\\
& \leq \Big(\textstyle{\sum_{i=1}^m}E\Big[\sup_{u\in\mathcal U} B_{m,j}(u)^2\Big]\Big)^{1/2}\lesssim \sqrt m.
\end{align*}
Moreover,
$$
\|\hat\Sigma^{1/2}\| = \|\hat\Sigma\|^{1/2} \leq \Big(\|\Sigma\| + \|\hat\Sigma - \Sigma\|\Big)^{1/2} \lesssim_P 1,
$$
by Condition S and Lemma \ref{covariance}. Hence,
$$
\sup_{u\in\mathcal U}\|G^*(u)\| = \sup_{u\in\mathcal U}\left\|\hat \Sigma^{1/2} B_{m}(u)\right\| \lesssim_P \sup_{u\in\mathcal U}\|B_m(u)\| \lesssim_P \sqrt m.
$$
This completes the proof of the lemma.
\end{proof}
Getting back to the proof of Theorem \ref{thm: gaussian method}, note that by Lemma \ref{covariance},
$$
\sup_{u\in\mathcal U}\|J(u) - \widehat J(u)\|\lesssim_P \sqrt{\frac{\zeta_m^2 m \log n}{n h}} + m^{-\kappa} + h = o(n^{-\varepsilon'}/\sqrt m),
$$
and so
$$
\sup_{u\in\mathcal U}\|J^{-1}(u) - \widehat J^{-1}(u)\|\lesssim_P o(n^{-\varepsilon'}/\sqrt m)
$$
like in the proof of Theorem \ref{Thm:MainULAfeasible}. The first asserted claim follows from combining this inequality with the bound
\begin{align*}
\sup_{u\in\mathcal U}\|r(u)\|
&= \sup_{u\in\mathcal U}\|\widehat J^{-1}(u)G^*(u) - J^{-1}(u)G^*(u)\| \\
& \leq \sup_{u\in\mathcal U}\|\widehat J^{-1}(u)- J^{-1}(u)\| \cdot \sup_{u\in\mathcal U}\|G^*(u)\|
\end{align*}
and using Lemma \ref{eq: upper bound for g* process}.
The second asserted claim follows from the same argument as that used in the proof of Theorem \ref{Thm:MainULAfeasible}. This completes the proof of theorem.
\end{proof}

\begin{proof}[Proof of Theorem \ref{Thm:MainBootstrap}]
The proof relies on the following lemma:
\begin{lemma}\label{lem: gaussian approx weighted bootstrap process thm 7}
Suppose that Condition S holds. In addition, suppose that $m^3 \zeta_m^2=o(n^{1-\varepsilon})$ and $m^{-\kappa+1} = o(n^{-\varepsilon})$ for some constant $\varepsilon>0$. Moreover, suppose that the random variable $\pi$ is non-negative and satisfies $E[\pi] = 1$ and $E[\pi^4]\lesssim 1$. Finally, suppose that $\max_{1\leq i\leq n}\pi_i\lesssim_P \log n$. Then
$$
\sqrt{n} \left(\hat \beta^b(u) -
\hat \beta(u)\right) = \frac{J^{-1}(u)}{\sqrt{n}} \sum_{i=1}^n
(\pi_i-1)Z_i(u-1\{U_i \leq u\}) + r(u),
$$
where
$$
 \sup_{u \in \mathcal{U}} \| r(u) \| \lesssim_P  \frac{m^{3/4} \zeta_m^{1/2} \log n}{n^{1/4}} + \sqrt{m^{1-\kappa}\log n} = o(n^{-\varepsilon'})
 $$
 for some $\varepsilon'>0$.
\end{lemma}

\begin{proof}[Proof of Lemma \ref{lem: gaussian approx weighted bootstrap process thm 7}]
First, note that it follows from the first-order conditions of the optimization problem \eqref{eq: weighted bootstrap problem} that $\hat \beta^b(u)$ solves the quantile regression problem \eqref{eq: empirical analog problem} for the rescaled data $(\pi_iY_i,\pi_iZ_i)_{i=1}^n$. Second, recall that the non-negative random variable $\pi$ is such that $E[\pi] = 1$ and $E[\pi^4] \lesssim 1$ and the sequence $(\pi_i)_{i=1}^n$, which consists of i.i.d. random variables with the distribution of $\pi$, is independent of the data and satisfies $\max_{1\leq i\leq n}\pi_i \lesssim_P \log n$. Using these observations, we can follow the proof of Theorem \ref{Thm:MainULA} with $(Z_i,Y_i)_{i=1}^n$ replaced by $(\pi_i Z_i,\pi_iY_i)_{i=1}^n$, so that
$$
\psi_i(\beta,u) = \pi_i Z_i\Big(1\{\pi_i Y_i \leq \pi_i Z_i'\beta\} - u\Big) = \pi_i Z_i\Big(1\{Y_i \leq Z_i'\beta\} - u\Big),
$$
but keeping the same vectors $\beta(u)$ and matrices $J(u)$ to show that
$$
\sqrt n\Big(\hat\beta^b(u) - \beta(u)\Big) = \frac{J^{-1}(u)}{\sqrt n}\sum_{i=1}^n \pi_i Z_i \Big(u - 1\{Y_i \leq Z_i'\beta(u)\}\Big) + r^b(u),
$$
where
$$
\sup_{u\in\mathcal U}\|r^b(u)\| \lesssim_P  \frac{m^{3/4} \zeta_m^{1/2} \log n}{n^{1/4}} + \sqrt{m^{1-\kappa}\log n} = o(n^{-\varepsilon'})
$$
for some $\varepsilon'>0$, where in the proof we replace all applications of the third maximal inequality in Lemma \ref{Lemma:MaxIneq2} by applications of the second maximal inequality in the same lemma, and we also replace $\zeta_m$ by $\zeta_m \log n$, so that $\max_{1\leq i\leq n}\pi_i \|Z_i\| \lesssim_P \zeta_m\log n$. The asserted claim of the lemma follows by combining this result with Theorem \ref{Thm:MainULA}.
\end{proof}

Getting back to the proof of Theorem \ref{Thm:MainBootstrap}, we apply Lemma \ref{lemma: strong} with $v_i = \pi_i-1$, so that $E[v_i] = 0$, $E[v_i^2]=1$, $E[|v_i|^4]\lesssim 1$, and $\max_{1\leq i\leq n} |v_i| \lesssim_P \log n$. The lemma implies that there is a Gaussian process $\Z^*(\cdot) = \Z^*_n(\cdot)$ with the covariance structure $\En[Z_iZ_i'](u\wedge u'-uu')$ such that
$$ \sup_{u \in \mathcal{U}}\left\| \frac{1}{\sqrt{n}}\sum_{i=1}^n(\pi_i-1)Z_i\Big(u-1\{U_i\leq u\}\Big) - \Z^*(u)   \right\| \lesssim_P o(n^{-\varepsilon'})$$
for some $\varepsilon'>0$. Combining this result with Lemma \ref{lem: gaussian approx weighted bootstrap process thm 7} gives the first asserted claim. The second asserted claim follows from the same argument as that used in the proof of Theorem \ref{Thm:MainULAfeasible}. This completes the proof of theorem.
\end{proof}




\section{Proofs of Theorems \ref{theorem: pointwise rate} and \ref{Thm:DistributionsInferentialpointwise}}

\begin{proof}[Proof of Theorem \ref{theorem: pointwise rate}] By Theorem \ref{Thm:MainULA} and Condition P,
\begin{align*}
| \widehat \theta(u,w) -  \theta(u,w)| &\leq | \ell(w)'(\widehat \beta(u) -  \beta(u))| + |r(u,w)| \\
& \leq \frac{|\ell(w)'J^{-1}(u)\mathbb{U}(u)|}{\sqrt{n}} + o_P\Big(\frac{\|\ell(w)\|}{\sqrt{n}}\Big) + o\Big(\frac{\|\ell(w)\|}{\sqrt{n}}\Big).
\end{align*}
In addition,
$$
E\Big[|\ell(w)'J^{-1}(u)\mathbb{U}(u)|^2\Big]\lesssim \|\ell(w)\|^2\|J^{-1}(u)\|^2\sup_{\alpha\in S^{m-1}}E[(\alpha'Z)^2] \lesssim \|\ell(w)\|^2
$$
by Condition S. Combining these bounds gives the asserted claim.
\end{proof}

\begin{proof}[Proof of Theorem \ref{Thm:DistributionsInferentialpointwise}]
By  Theorem \ref{Thm:MainULA} and Condition P,
\begin{align}
t(u,w) &= \frac{\ell(w)'(\widehat \beta(u) - \beta(u))}{\hat \sigma(u,w)} - \frac{r(u,w)}{\hat\sigma(u,w)}\nonumber\\
& = \frac{\ell(w)'J^{-1}(u)\mathbb{U}(u)}{\sqrt{n}\hat\sigma(u,w)} + o_P\Big(\frac{\|\ell(w)\|}{\sqrt{n}\hat\sigma(u,w)}\Big).\label{eq: pointwise asymptotic normality proof}
\end{align}
In addition,
$$
\hat\sigma(u,w) = (1+o_P(1))\sigma(u,w)
$$
by Lemma \ref{covariance} and Condition S. Moreover,  $\sigma(u,w)\gtrsim \|\ell(w)\|/\sqrt n$ under Condition S, and so the second term in \eqref{eq: pointwise asymptotic normality proof} is $o_P(1)$. Hence, the asserted claim follows provided we can show that
$$
\frac{\ell(w)'J^{-1}(u)\mathbb{U}(u)}{\sqrt{n}\sigma(u,w)} \to_d N(0,1).
$$
However, this follows from the Lindeberg-Feller central limit theorem (see Theorem 9.6.1 in Dudley \cite{D04}) since for any $\epsilon > 0$, there exists a constant $C$ such that
\begin{align*}
&E\left[\frac{(\ell(w)'J^{-1}(u)Z(X))^2}{n \sigma^2(u,w)}1\left\{\frac{|\ell(w)'J^{-1}(u)Z(X)|}{n \sigma(u,w)} > \epsilon\right\}\right]\\
&\qquad \leq E\left[\frac{(\ell(w)'J^{-1}(u)Z(X))^2}{n \sigma^2(u,w)}1\left\{\zeta_m > C \sqrt n \epsilon\right\}\right] \to 0
\end{align*}
as $n\to\infty$ under our conditions. This completes the proof of the theorem.
\end{proof}

\section{Proofs of Theorems \ref{theorem: uniform rate}-\ref{theorem: inference using couplings}}

\begin{lemma}[Entropy Bound]\label{Lemma:Entropy}
Suppose that Conditions S and U hold. Consider the class of functions
$$
\mathcal{L}_m = \Big\{ Z\mapsto (\ell(w)/\|\ell(w)\|)'J^{-1}(u)Z  \colon  (u,w)\in I\Big\},
$$
mapping $B_m(0,\zeta_m)$ into $\mathbb R$, and let $L  = 1\vee \sup_{f \in \mathcal{L}_m} |f|$ denote its envelope. Then the uniform entropy numbers of $\mathcal L_m$ satisfy
$$
\sup_{Q} \log N(\epsilon\|L\|_{Q,2}, \mathcal{L}_m, L_2(Q)) \lesssim \log (n/\epsilon),\quad\text{uniformly over $0<\epsilon\leq 1$}.
$$
\end{lemma}
\begin{proof}
For $w\in\mathcal W$, denote $\xi(w) = \ell(w)/\|\ell(w)\|$. By Condition S,
$$
\sup_{u\in\mathcal U}\|J^{-1}(u)\| \lesssim 1
$$
and by Lemma \ref{Lemma:Auxiliary Matrix},
$$
\|J(u) - J(\tilde u)\|\lesssim |u - \tilde u|,\quad\text{uniformly over $u$ and $\tilde u$ in $I$.}
$$
Hence, uniformly over $(u,w)$ and $(\tilde u,\tilde w)$ in $I$ and $Z\in B(0,\zeta_m)$,
\begin{align*}
\Big|\xi(w)'J^{-1}(u)Z - \xi(\tilde w)'J^{-1}(\tilde u)Z\Big| & =   \Big| ( \xi(w) - \xi(\tilde w))'J^{-1}(u)Z + \xi(\tilde w)'(J^{-1}(u)- J^{-1}(\tilde u))Z\Big| \\
& =  \Big| ( \xi(w) - \xi(\tilde w))'J^{-1}(u)Z\Big|  \\
&  \quad  + \Big|\xi(\tilde w)'J^{-1}(\tilde u) (J(\tilde u)- J(u))J^{-1}(u)Z\Big| \\
& \lesssim  \zeta_{m,\theta}^L\|w -\tilde w\| \cdot \|J^{-1}(u)\|\cdot \|Z\|  \\
&  \quad + \|\xi(\tilde w)\|\cdot \|J^{-1}(\tilde u) \| \cdot |\tilde u- u|  \cdot  \|J^{-1}(u)\|\cdot \|Z\|\\
& \leq \zeta_m(1 + \zeta_{m,\theta}^L)  ( \|w -\tilde w\| +  |\tilde u- u| )
\end{align*}
since $\|\xi(\tilde w)\| = 1$. Hence, the asserted claim follows from a standard argument since $L \geq 1$ and the set $I$ is such that its dimension is independent of $n$ and its diameter is bounded from above uniformly over $n$.
\end{proof}

\begin{proof}[Proof of Theorem \ref{theorem: uniform rate}]
The proof relies on the following lemma:
\begin{lemma}\label{lem: ellju max inequality}
Suppose that Conditions S and U hold. Then
$$
\sup_{(u,w)\in I}\Big| (\ell(w)/\|\ell(w)\|)'J^{-1}(u)\mathbb U(u) \Big| \lesssim_P (\log n)^{1/2}.
$$
\end{lemma}
\begin{proof}[Proof of Lemma \ref{lem: ellju max inequality}]
Consider the class of functions
$$
\mathcal{L}_m = \Big\{ (Z,U)\mapsto (\ell(w)/\|\ell(w)\|)'J^{-1}(u)Z  \colon  (u,w)\in I\Big\},
$$
mapping $B_m(0,\zeta_m)\times[0,1]$ into $\mathbb R$. The function $L = C\zeta_m$ for sufficiently large constant $C$ is its envelope and by Lemma \ref{Lemma:Entropy}, its uniform entropy numbers satisfy
$$
\sup_{Q} \log N(\epsilon\|L\|_{Q,2}, \mathcal{L}_m, L_2(Q)) \lesssim \log (n/\epsilon),\quad\text{uniformly over $0<\epsilon\leq 1$}.
$$
 Also, consider the class of functions
$$
\mathcal{G} = \Big\{(Z,U)\mapsto 1\{ U \leq u \} - u \colon  u \in \mathcal{U} \Big\},
$$
mapping $B_m(0,\zeta_m)\times[0,1]$ into $\mathbb R$. The uniform entropy numbers of $\mathcal G$ satisfy
\begin{equation}\label{eq: entropy class G theorem 10}
\sup_{Q}\log N(\epsilon\|G\|_{Q,2},\mathcal G,L_2(Q)) \lesssim \log(1/\epsilon),\quad\text{uniformly over $0<\epsilon\leq 1$},
\end{equation}
where $G(Z,U) = 1$ is its envelope. Hence, by \ref{Lemma:ProductEntropy}, the uniform entropy numbers of the class of functions $\mathcal L_m \mathcal G$ satisfy
$$
\sup_Q \log N(\epsilon\|L G\|_{Q,2},\mathcal L_m\mathcal G, L_2(Q)) \lesssim \log(n/\epsilon),\quad\text{uniformly over $0<\epsilon\leq 1$}.
$$
In addition, by Condition S,
\begin{align*}
&E\Big[\Big((\ell(w)/\|\ell(w)\|)'J^{-1}(u)Z(X)(1\{U\leq u\} - u)\Big)^2\Big] \\
&\qquad = u(1-u)(\ell(w)/\|\ell(w)\|)'J^{-1}(u)\Sigma J^{-1}(u) (\ell(w)/\|\ell(w)\|) \lesssim 1
\end{align*}
uniformly over $(u,w)\in I$, so that
$$
\sup_{f\in \mathcal L_m\mathcal G} E[f^2] \lesssim 1.
$$
Hence, applying the third maximal inequality of Lemma \ref{Lemma:MaxIneq2} gives
\begin{equation}
\sup_{(u,w) \in I} \Big|(\ell(w)/\|\ell(w)\|)'J^{-1}(u)\mathbb{U}(u)\Big|
\lesssim_P \Big( 1 + \frac{ \zeta_m^2 \log n }{n} \Big)^{1/2}\log^{1/2}n \lesssim (\log n)^{1/2}.\label{LinearLinearApproxError}
\end{equation}
This completes the proof of the lemma.
\end{proof}
Getting back to the proof of the theorem, we have by the triangle inequality that
$$
\sup_{(u,w)\in I} | \hat \theta(u,w) - \theta(u,w) |  \leq \sup_{(u,w)\in I}| \ell(w)'(\hat \beta(u) - \beta(u))| + \sup_{(u,w)\in I}|r(u,w)|,
$$
where the second term satisfies
$$
\sup_{(u,w)\in I}|r(u,w)| \leq \zeta_{m,\theta} \sup_{(u,w)\in I}\frac{|  r(u,w)|}{\|\ell(w)\|} = o\left(\frac{\zeta_{m,\theta}}{\sqrt n\log n}\right)
$$
by Condition U. Also, by Theorem \ref{Thm:MainULA}, the first term can be bounded uniformly over $(u,w)\in I$ as
 \begin{align*}
 | \ell(w)'(\hat \beta(u) - \beta(u))|
 & \lesssim_P \frac{|\ell(w)'J^{-1}(u)\mathbb{U}(u)|}{\sqrt{n}} + o_P\Big(\frac{\zeta_{m,\theta}}{\sqrt{n}\log n}\Big) \\
 & \leq \Big|(\ell(w)/\|\ell(w)\|)'J^{-1}(u)\mathbb{U}(u)\Big|\cdot \sqrt{\frac{\zeta_{m,\theta}^2}{n}} + o_P\Big(\frac{\zeta_{m,\theta}}{\sqrt{n}\log n}\Big)\\
 &\lesssim_P \sqrt{\frac{\zeta_{m,\theta}^2\log n}{n}} + o_P\Big(\frac{\zeta_{m,\theta}}{\sqrt{n}\log n}\Big),
\end{align*}
where the second line follows from $\|\ell(w)\| \leq \zeta_{m,\theta}$, holding by Condition U, and the third line from Lemma \ref{lem: ellju max inequality}. The asserted claim follows.
\end{proof}

\begin{proof}[Proof of Theorem \ref{thm: couplings for t process}]
By Lemma \ref{covariance},
$$
\|\widehat\Sigma - \Sigma\| \lesssim_P \sqrt{\frac{\zeta_m^2\log n}{n}} = o(n^{-\varepsilon'})
$$
and
$$
\sup_{u\in\mathcal U}\|\widehat J(u) - J(u)\|
\lesssim_P \sqrt{\frac{m \zeta_m^2 \log n}{n h}} + m^{-\kappa} + h = o(n^{-\varepsilon'}).
$$
Combining these inequalities with Condition S gives
\begin{equation}\label{eq: sigma ratio bound thm 11 new}
\sup_{(u,w)\in I}\left|\frac{\hat\sigma(u,w)}{\sigma(u,w)} - 1\right| = o_P(n^{-\varepsilon'}).
\end{equation}
Further, by Condition S,
\begin{equation}\label{eq: sigma lower bound 0 thm 11 new}
\sigma(u,w)\gtrsim \|\ell(w)\|/\sqrt{n},\quad\text{uniformly over $(u,w)\in I$,}
\end{equation}
and so
\begin{equation}\label{eq: sigma lower bound thm 11 new}
\hat\sigma(u,w) \gtrsim_P \|\ell(w)\|/\sqrt{n},\quad\text{uniformly over $(u,w)\in I$.}
\end{equation}
Next, by Condition U and \eqref{eq: sigma lower bound thm 11 new}, we have uniformly in $(u,w)\in I$ that
 \begin{equation}\label{ZeroSup new}
 t(u,w) = \frac{\hat \theta(u,w)-\theta(u,w)}{\hat \sigma(u,w)} = \frac{\ell(w)'(\hat\beta(u)-\beta(u))}{\hat \sigma(u,w)} +o_P(n^{-\varepsilon'}).
 \end{equation}
Also, by Theorem \ref{Thm:MainULA} and \eqref{eq: sigma lower bound thm 11 new}, we have uniformly over $(u,w)\in I$ that
 \begin{equation}\label{FirstSup new}
 \frac{\ell(w)'(\hat\beta(u)-\beta(u))}{\hat \sigma(u,w)} = \frac{\ell(w)'J^{-1}(u)\mathbb{U}(u)/\sqrt{n}}{\hat \sigma(u,w)} + o_P(n^{-\varepsilon'}).
 \end{equation}
Moreover, by Lemma \ref{lem: ellju max inequality},
\begin{equation}\label{eq: process upper bound thm 11 new}
\sup_{(u,w)\in I} \Big|(\ell(w)/\|\ell(w)\|)'J^{-1}(u)\mathbb{U}(u)\Big| \lesssim_P (\log n)^{1/2}.
\end{equation}
Hence,
\begin{align}
&\sup_{(u,w)\in I}\Big|\frac{\ell(w)'J^{-1}\mathbb U(u)/\sqrt n}{\widehat \sigma(u,w)} - \frac{\ell(w)'J^{-1}\mathbb U(u)/\sqrt n}{\sigma(u,w)}\Big|\notag \\
&\qquad \leq \sup_{(u,w)\in I}\Big| \frac{\ell(w)'J^{-1}\mathbb U(u)/\sqrt n}{\widehat \sigma(u,w)} \Big|\cdot\sup_{(u,w)\in I}\Big| 1 - \frac{\widehat \sigma(u,w)}{\sigma(u,w)}  \Big| = o_P(n^{-\varepsilon'})\label{eq: sigma approximation thm 11 new}
\end{align}
by \eqref{eq: sigma ratio bound thm 11 new} and \eqref{eq: sigma lower bound thm 11 new}.
Combining \eqref{ZeroSup new}, \eqref{FirstSup new}, and \eqref{eq: sigma approximation thm 11 new}, we have uniformly over $(u,w)\in I$ that
\begin{equation}\label{eq: t process linearization main new}
t(u,w) = \frac{\ell(w)'J^{-1}(u)\mathbb U(u)/\sqrt n}{\sigma(u,w)} + o_P(n^{-\varepsilon'}).
\end{equation}
Hence, \eqref{eq: pivotal coupling t process} follows.

To prove \eqref{eq: gaussian coupling t process}, note that by Lemma \ref{lem: approximation of u by g},
$$
\sup_{u\in\mathcal U}\|\mathbb U(u) - G(u)\|\lesssim_P o(n^{-\varepsilon'});
$$
see \eqref{eq: strong gaussian approximation of u process thm 5}. Combining this inequality \eqref{eq: sigma lower bound 0 thm 11 new} and \eqref{eq: t process linearization main new}, we have uniformly over $(u,w)\in I$ that
$$
t(u,w) = \frac{\ell(w)'J^{-1}(u) \mathbb U(u)/\sqrt{n}}{\sigma(u,w)} + o_P(n^{-\varepsilon'}) = \frac{\ell(w)'J^{-1}(u) G(u)/\sqrt{n}}{\sigma(u,w)} + o_P(n^{-\varepsilon'}) \label{eq: final coupling thm 12}.
$$
Hence, \eqref{eq: gaussian coupling t process} follows. This complete the proof of the theorem.
\end{proof}

\begin{proof}[Proof of Theorem \ref{thm: gaussian coupling for t process}]
The proof relies on the following lemma:
\begin{lemma}\label{lem: lipshitz property ell functional}
Suppose that Conditions S and U hold. In addition, define
$$
\ell(u,w) = \left(\frac{\ell(w)'J^{-1}(u)/\sqrt n}{\sigma(u,w)}\right)',\quad (u,w)\in I.
$$
Then
$$
\sup_{(u,w)\in I}\|\ell(u,w)\|\lesssim 1
$$
and
$$
\|\ell(u,w) - \ell(\tilde u,\tilde w)\|\lesssim (1 + \zeta_{m,\theta}^L)\|(u,w) - (\tilde u,\tilde w)\|
$$
uniformly over $(u,w)$ and $(\tilde u,\tilde w)$ in $I$.
\end{lemma}
\begin{proof}[Proof of Lemma \ref{lem: lipshitz property ell functional}]
By Condition S,
\begin{equation}\label{eq: lower and upper bound for sigma}
\sigma(u,w)\gtrsim \|\ell(w)\|/\sqrt n,\quad\text{uniformly over $(u,w)\in I$}
\end{equation}
The first asserted claim follows from this inequality and the fact that all eigenvalues of the matrix $J(u)$ are bounded below from zero uniformly over $u\in\mathcal U$ by Condition S.

To prove the second asserted claim, note that by Lemma \ref{Lemma:Auxiliary Matrix},
$$
\|J(u) - J(\tilde u)\| \lesssim |u - \tilde u|,\quad\text{uniformly over $u$ and $\tilde u$ in $\mathcal U$.}
$$
Combining this inequality with Conditions S and U, it follows from the triangle inequality that
$$
\left|\frac{n\sigma^2(u,w)}{\|\ell(w)\|^2} - \frac{n\sigma^2(\tilde u,\tilde w)}{\|\ell(\tilde w)\|^2}\right|\lesssim (1 + \zeta_{m,\theta}^L)\|(u,w) - (\tilde u,\tilde w)\|
$$
uniformly over $(u,w)$ and $(\tilde u,\tilde w)$ in $I$. In turn, this inequality in combination with \eqref{eq: lower and upper bound for sigma} imply that
$$
\left|\frac{\sqrt n\sigma(u,w)}{\|\ell(w)\|} - \frac{\sqrt n\sigma(\tilde u,\tilde w)}{\|\ell(\tilde w)\|}\right|\lesssim (1 + \zeta_{m,\theta}^L)\|(u,w) - (\tilde u,\tilde w)\|
$$
uniformly over $(u,w)$ and $(\tilde u,\tilde w)$ in $I$ since for any $a,b > 0$, we have
$$
|a - b| = \frac{|a^2 - b^2|}{a + b}.
$$
Further, combining the inequality above and \eqref{eq: lower and upper bound for sigma} gives
\begin{equation}\label{eq: inverse lipshitz}
\left|\frac{\|\ell(w)\|}{\sqrt n\sigma(u,w)} - \frac{\|\ell(\tilde w)\|}{\sqrt n\sigma(\tilde u,\tilde w)}\right| \lesssim (1 + \zeta_{m,\theta}^L)\|(u,w) - (\tilde u,\tilde w)\|
\end{equation}
uniformly over $(u,w)$ and $(\tilde u,\tilde w)$ in $I$ since for any $a,b > 0$, we have
$$
\left|\frac{1}{a} - \frac{1}{b}\right| = \frac{|a - b|}{a b}.
$$
Similarly,
\begin{equation}\label{eq: numerator lipshitz}
\Big\| (\ell(w)/\|\ell(w)\|)'J^{-1}(u) - (\ell(\tilde w)/\|\ell(\tilde w)\|)'J^{-1}(\tilde u) \Big\| \lesssim (1 + \zeta_{m,\theta}^L)\|(u,w) - (\tilde u,\tilde w)\|
\end{equation}
uniformly over $(u,w)$ and $(\tilde u,\tilde w)$. The second asserted claim follows from combining inequalities \eqref{eq: lower and upper bound for sigma}, \eqref{eq: inverse lipshitz}, and \eqref{eq: numerator lipshitz}. This completes the proof of the lemma.
\end{proof}
Getting back to the proof of the theorem, like in the proof of Theorem \ref{thm: couplings for t process}, we have uniformly over $(u,w)\in I$ that
\begin{equation}\label{eq: t process linearization main new 1}
t(u,w) = \frac{\ell(w)'J^{-1}(u)\mathbb U(u)/\sqrt n}{\sigma(u,w)} + o_P(n^{-\varepsilon'});
\end{equation}
see \eqref{eq: t process linearization main new}. Next, it follows from Corollary \ref{Corollary:phin} in Appendix \ref{App:EmpiricalProcess} and Condition S that there exists a constant $C>0$ such that $\sup_{\alpha\in S^{m-1}} \En[(\alpha'Z_i)^2] \leq C$ with probability approaching one. Thus, the asserted claim follows from using \eqref{eq: t process linearization main new 1} and applying Lemma \ref{lemma:LinearStrong} conditional on $(Z_i)_{i=1}^n$ on the event $\sup_{\alpha\in S^{m-1}} \En[(\alpha'Z_i)^2] \leq C$ with
$$
\ell(u,w) = \left(\frac{\ell(w)'J^{-1}(u)/\sqrt n}{\sigma(u,w)}\right)',\quad (u,w)\in I;
$$
note that the conditions of Lemma \ref{lemma:LinearStrong} follow from Lemma \ref{lem: lipshitz property ell functional} and the assumption $(1 + \zeta_{m,\theta}^L)^{2 d_I}\zeta_m^2 = o(n^{1 - \varepsilon})$. This completes the proof of the theorem.
\end{proof}

\begin{proof}[Proof of Theorem \ref{thm: resampling methods t process}]
The proof relies on the following lemma:
\begin{lemma}\label{lem: maximal inequality gaussian process functional thm 13}
Suppose that Conditions S and U hold. Then there exists a constant $C_V>0$ such that with probability $1 - o(1)$,
\begin{equation}\label{eq: conditional maximal inequality gaussian process}
E\left[ \sup_{(u,w)\in I}\Big|(\ell(w)/\|\ell(w)\|)'J^{-1}(u)G(u)\Big| \mid (Z_i)_{i=1}^n\right] \leq C_V\sqrt{\log n}.
\end{equation}
In addition,
$$
\sup_{(u,w)\in I}\Big|(\ell(w)/\|\ell(w)\|)'J^{-1}(u)G(u)\Big| \lesssim_P \sqrt{\log n}.
$$
\end{lemma}
\begin{proof}[Proof of Lemma \ref{lem: maximal inequality gaussian process functional thm 13}]
The second asserted claim follows immediately from the first one. Hence, it suffices to prove the first asserted claim. To do so, note that by Corollary \ref{Corollary:phin} in Appendix \ref{App:EmpiricalProcess}, there exists a constant $C>0$ such that $\sup_{\alpha\in S^{m-1}} \En[(\alpha'Z_i)^2] \leq C$ with probability $1 - o(1)$. We will show that \eqref{eq: conditional maximal inequality gaussian process} holds on the event $\sup_{\alpha\in S^{m-1}} \En[(\alpha'Z_i)^2] \leq C$.

For all $w\in\mathcal W$, denote $\xi(w) = \ell(w)/\|\ell(w)\|$.
Observe that uniformly over $(u,w)\in I$, on the event $\sup_{\alpha\in S^{m-1}}\En[(\alpha' Z_i)^2] \leq C$,
\begin{align*}
E\Big[ |\xi(w)'J^{-1}(u)G(u)|^2 \mid(Z_i)_{i=1}^n\Big]
& = u(1-u)\xi(w)'J^{-1}(u)\hat\Sigma J^{-1}(u)\xi(w) \leq C_r
\end{align*}
for some constant $C_r$ by Condition S. Moreover, by Conditions S and U and Lemma \ref{Lemma:Auxiliary Matrix},
$$
\sup_{(u,w)\in I}\|J^{-1}(u) \xi(w)\| \lesssim 1
$$
and
\begin{align*}
\Big\| J^{-1}(u)\xi(w) - J^{-1}(\tilde u)\xi(\tilde w) \Big\|
&\leq \|J^{-1}(u) - J^{-1}(\tilde u)\|\cdot\|\xi(w)\| + \|J^{-1}(\tilde u)\|\cdot\|\xi(w) - \xi(\tilde w)\|\\
& \lesssim (1 + \zeta_{m,\theta}^L)\|(u,w) - (\tilde u,\tilde w)\|
\end{align*}
uniformly over $(u,w)$ and $(\tilde u,\tilde w)$ in $I$. Hence, by the calculation in the proof of Lemma \ref{Lemma:Fact4-b extension}, on the event $\sup_{\alpha\in S^{m-1}}\En[(\alpha' Z_i)^2] \leq C$,
$$
E\Big[ \Big( \xi(w)'J^{-1}(u)G(u) - \xi(\tilde w)'J^{-1}(\tilde u)G(\tilde u)\Big)^2 \mid (Z_i)_{i=1}^n\Big] \leq C_e(1 + \zeta_{m,\theta}^L)\|(u,w) - (\tilde u,\tilde w)\|,
$$
where $C_e$ is some constant. Combining these inequalities with $\log \zeta_{m,\theta}^L\lesssim \log n$ and using the fact that the dimension of the set $I$ is independent of $n$ and its diameter is bounded uniformly over $n$, \eqref{eq: conditional maximal inequality gaussian process} follows from Dudley's inequality applied conditional on $(Z_i)_{i=1}^n$ on the event $\sup_{\alpha\in S^{m-1}}\En[(\alpha' Z_i)^2] \leq C$; see Corollary 2.2.8 in van~der Vaart and Wellner \cite{vdV-W}).
\end{proof}
Getting back to the proof of the theorem, note that under the conditions $h = o(n^{-\varepsilon})$, $m\zeta_m^2 = o(n^{1 - \varepsilon}h)$, and $m^{-\kappa} = o(n^{-\varepsilon})$, like in the proof of Theorem \ref{thm: couplings for t process}, we have
$$
\sup_{(u,w)\in I}\left| \frac{\hat\sigma(u,w)}{\sigma(u,w)} - 1 \right| = o_P(n^{-\varepsilon'})
$$
and
\begin{equation}\label{eq: sigma lower bound again}
\hat\sigma(u,w) \gtrsim_P \|\ell(w)\|/\sqrt n,\quad\text{uniformly over $(u,w)\in I$};
\end{equation}
see \eqref{eq: sigma ratio bound thm 11 new} and \eqref{eq: sigma lower bound thm 11 new}. Also, by Lemma \ref{lem: ellju max inequality},
$$
\sup_{(u,w)\in I}\Big| (\ell(w)/\|\ell(w)\|)' J^{-1}(u)\mathbb U(u) \Big| \lesssim_P (\log n)^{1/2},
$$
and so
$$
\sup_{(u,w)\in I}\left| \frac{\ell(w)' J^{-1}(u)\mathbb U(u)/\sqrt n}{\hat\sigma(u,w)} \right| \lesssim_P (\log n)^{1/2}.
$$
Hence,
\begin{align*}
&\sup_{(u,w)\in I}\Big|\frac{\ell(w)'J^{-1}\mathbb U(u)/\sqrt n}{\widehat \sigma(u,w)} - \frac{\ell(w)'J^{-1}\mathbb U(u)/\sqrt n}{\sigma(u,w)}\Big|\\
&\qquad \leq \sup_{(u,w)\in I}\Big| \frac{\ell(w)'J^{-1}\mathbb U(u)/\sqrt n}{\widehat \sigma(u,w)} \Big|\cdot\sup_{(u,w)\in I}\Big| 1 - \frac{\widehat \sigma(u,w)}{\sigma(u,w)}  \Big| = o_P(n^{-\varepsilon'}).
\end{align*}
Further, since the process $\mathbb U^*(\cdot)$ is a copy of the process $\mathbb U(\cdot)$ conditional on $(Z_i)_{i=1}^n$, we also have
\begin{equation}\label{eq: useful approximation pivotal process thm 13 new}
\sup_{(u,w)\in I}\left|\frac{\ell(w)'J^{-1}\mathbb U^*(u)/\sqrt n}{\widehat \sigma(u,w)} - \frac{\ell(w)'J^{-1}\mathbb U^*(u)/\sqrt n}{\sigma(u,w)}\right| = o_P(n^{-\varepsilon'}).
\end{equation}
Now, consider the case of the pivotal method. We have uniformly over $(u,w)\in I$ that
\begin{align*}
t^*(u,w)
& = \frac{\ell(w)'\hat J^{-1}(u)\mathbb U^*(u)/\sqrt n}{\hat\sigma(u,w)}\\
& = \frac{\ell(w)'J^{-1}(u)\mathbb U^*(u)/\sqrt n}{\hat\sigma(u,w)} + o_P(n^{-\varepsilon'})
   = \frac{\ell(w)'J^{-1}(u)\mathbb U^*(u)/\sqrt n}{\sigma(u,w)} + o_P(n^{-\varepsilon'}),
\end{align*}
where the second equality follows from Theorem \ref{Thm:MainULAfeasible} and \eqref{eq: sigma lower bound again} and the third from \eqref{eq: useful approximation pivotal process thm 13 new}. The first asserted claim for this case follows.

Next, consider the case of the gradient bootstrap method. We have uniformly over $(u,w)\in I$ that
\begin{align*}
t^*(u,w)
& = \frac{\ell(w)'(\hat\beta^*(u) - \hat\beta(u))}{\hat\sigma(u,w)}\\
& = \frac{\ell(w)'J^{-1}(u)\mathbb U^*(u)/\sqrt n}{\hat\sigma(u,w)} + o_P(n^{-\varepsilon'})
   = \frac{\ell(w)'J^{-1}(u)\mathbb U^*(u)/\sqrt n}{\sigma(u,w)} + o_P(n^{-\varepsilon'}),
\end{align*}
where the second equality follows from Theorem \ref{Thm:MainULAstar} and \eqref{eq: sigma lower bound again} and the third from \eqref{eq: useful approximation pivotal process thm 13 new}. The first asserted claim for this case follows.

Next, consider the case of the Gaussian method. By Lemma \ref{lem: maximal inequality gaussian process functional thm 13}, we have
$$
\sup_{(u,w)\in I}\Big|(\ell(w)/\|\ell(w)\|)' J^{-1}(u)G(u)\Big| \lesssim_P (\log n)^{1/2}.
$$
Hence, using the same steps as above, we obtain
\begin{equation}\label{eq: useful approximation gaussian process thm 13 new}
\sup_{(u,w)\in I}\left|\frac{\ell(w)'J^{-1} G^*(u)/\sqrt n}{\widehat \sigma(u,w)} - \frac{\ell(w)'J^{-1} G^*(u)/\sqrt n}{\sigma(u,w)}\right| = o_P(n^{-\varepsilon'}).
\end{equation}
So, we have uniformly over $(u,w)\in I$ that
\begin{align*}
t^*(u,w)
& = \frac{\ell(w)'\hat J^{-1}(u) G^*(u)/\sqrt n}{\hat\sigma(u,w)}\\
& = \frac{\ell(w)'J^{-1}(u)\mathbb G^*(u)/\sqrt n}{\hat\sigma(u,w)} + o_P(n^{-\varepsilon'})
   = \frac{\ell(w)'J^{-1}(u)\mathbb G^*(u)/\sqrt n}{\sigma(u,w)} + o_P(n^{-\varepsilon'}),
\end{align*}
where the second equality follows from Theorem \ref{thm: gaussian method} and \eqref{eq: sigma lower bound again} and the third from \eqref{eq: useful approximation gaussian process thm 13 new}. The first asserted claim for this case follows.

Finally, consider the case of the weighted bootstrap method. We have uniformly over $(u,w)\in I$ that
\begin{align*}
t^*(u,w)
& = \frac{\ell(w)'(\hat\beta^b(u) - \hat\beta(u))}{\hat\sigma(u,w)}\\
& = \frac{\ell(w)'J^{-1}(u)\mathbb G^*(u)/\sqrt n}{\hat\sigma(u,w)} + o_P(n^{-\varepsilon'})
   = \frac{\ell(w)'J^{-1}(u)\mathbb G^*(u)/\sqrt n}{\sigma(u,w)} + o_P(n^{-\varepsilon'}),
\end{align*}
where the second equality follows from Theorem \ref{Thm:MainBootstrap} and \eqref{eq: sigma lower bound again} and the third from \eqref{eq: useful approximation gaussian process thm 13 new}. The first asserted claim for this case follows.

The second asserted claim follows from the same argument as that used in the proof of Theorem \ref{Thm:MainULAfeasible}. This completes the proof of theorem.
\end{proof}

\begin{proof}[Proof of Theorem \ref{thm: weighted bootstrap alternative conditions}]
Like in the proof of Theorem \ref{thm: couplings for t process},
\begin{align}
&\sup_{(u,w)\in I}\Big|\frac{\hat \sigma(u,w)}{\sigma(u,w)} - 1\Big| = o_P(n^{-\varepsilon'}),\label{eq: sigma ratio bound thm 13 - 3}\\
&\hat \sigma(u,w) \gtrsim_P \|\ell(w)\|/\sqrt n,\quad \text{uniformly over }(u,w)\in I;\label{eq: sigma lower bound thm 13 - 3}
\end{align}
see \eqref{eq: sigma ratio bound thm 11 new} and \eqref{eq: sigma lower bound thm 11 new}. Also, by the same argument as that used in Lemma \ref{lem: ellju max inequality},
$$
\sup_{(u,w)\in I}\left|\frac{(\ell(w)/\|\ell(w)\|)' J^{-1}(u)}{\sqrt n}\sum_{i=1}^n (\pi_i - 1)Z_i(u - 1\{U_i \leq u\})\right| \lesssim_P (\log n)^{1/2}.
$$
In addition, by Condition S, $\|\ell(w)\|/(\sqrt n\sigma(u,w))\lesssim 1$ uniformly over $(u,w)\in I$, and so
$$
\sup_{(u,w)\in I}\left|\frac{\ell(w)' J^{-1}(u)}{\sqrt n\sigma(u,w)}\cdot \frac{1}{\sqrt n}\sum_{i=1}^n (\pi_i - 1)Z_i(u - 1\{U_i \leq u\})\right| \lesssim_P (\log n)^{1/2}.
$$
Combining this inequality with \eqref{eq: sigma ratio bound thm 13 - 3} gives
$$
\sup_{(u,w)\in I}\left|\left(\frac{1}{\sigma(u,w)} - \frac{1}{\hat\sigma(u,w)}\right)\frac{\ell(w)' J^{-1}(u)}{\sqrt n} \frac{1}{\sqrt n}\sum_{i=1}^n (\pi_i - 1)Z_i(u - 1\{U_i \leq u\})\right| = o_P(n^{-\varepsilon'}).
$$
Hence, uniformly over $(u,w)\in I$,
\begin{align}
t^*(u,w)
& = \frac{\ell(w)'(\hat \beta^b(u) - \hat\beta(u))}{\hat\sigma(u,w)}\notag\\
& = \frac{\ell(w)'J^{-1}(u)}{\sqrt n \hat\sigma(u,w)}\cdot\frac{1}{\sqrt n}\sum_{i=1}^n(\pi_i - 1)Z_i(u - 1\{U_i \leq u\}) + o_P(n^{-\varepsilon'})\notag\\
& = \frac{\ell(w)'J^{-1}(u)}{\sqrt n \sigma(u,w)}\cdot\frac{1}{\sqrt n}\sum_{i=1}^n(\pi_i - 1)Z_i(u - 1\{U_i \leq u\}) + o_P(n^{-\varepsilon'}),\label{eq: weighted bootstrap t process final approximation thm 14}
\end{align}
where the second line follows from Lemma \ref{lem: gaussian approx weighted bootstrap process thm 7} and \eqref{eq: sigma lower bound thm 13 - 3}. Next, it follows from Corollary \ref{Corollary:phin} in Appendix \ref{App:EmpiricalProcess} and Condition S that there exists a constant $C>0$ such that $\sup_{\alpha\in S^{m-1}} \En[(\alpha'Z_i)^2] \leq C$ with probability approaching one. Thus, the first asserted claim follows from using \eqref{eq: weighted bootstrap t process final approximation thm 14} and applying Lemma \ref{lemma:LinearStrong} conditional on $(Z_i)_{i=1}^n$ on the event $\sup_{\alpha\in S^{m-1}} \En[(\alpha'Z_i)^2] \leq C$ with
$$
\ell(u,w) = \left(\frac{\ell(w)'J^{-1}(u)/\sqrt n}{\sigma(u,w)}\right)',\quad (u,w)\in I;
$$
note that the conditions of Lemma \ref{lemma:LinearStrong} follow from Lemma \ref{lem: lipshitz property ell functional} and the assumption $(1 + \zeta_{m,\theta}^L)^{2 d_I}\zeta_m^2 = o(n^{1 - \varepsilon})$. The second asserted claim follows from the same argument as that used in the proof of Theorem \ref{Thm:MainULAfeasible}. This completes the proof of theorem.
\end{proof}

\begin{proof}[Proof of Theorem \ref{theorem: inference using couplings}.]
We split the proof into five steps.

\noindent
{\bf Step 1} (Coupling for $V$). Here for the random variable
$$
V = \sup_{(u,w)\in I}\frac{|\hat \theta(u,w) - \theta(u,w)|}{\hat\sigma(u,w)} = \sup_{(u,w)\in I}|t(u,w)|
$$
and each sample size $n$, we show that there exists a random variable $\bar V = \bar V_n$ that is conditional on $(Z_i)_{i=1}^n$ equal in distribution to
\begin{equation}\label{eq: main random variable thm 13}
\sup_{(u,w)\in I} \frac{|\ell(w)'J^{-1}(u) G(u)/\sqrt n|}{\sigma(u,w)}
\end{equation}
and is such that
$$
|V - \bar V| = o_P(n^{-\varepsilon'}),
$$
where $G(\cdot) = G_n(\cdot)$ is a process on $\mathcal U$ that is conditionally on $(Z_i)_{i=1}^n$ zero-mean Gaussian with a.s. continuous sample paths and the covariance function
$$
E\Big[G(u_1) G(u_2)'\mid (Z_i)_{i=1}^n\Big] = \En[Z_i Z_i'](u_1\wedge u_2 - u_1 u_2),\quad \text{for all $u_1$ and $u_2$ in $\mathcal U$.}
$$
To prove this claim, first, note that by the first part of Theorem \ref{thm: couplings for t process}, we have
\begin{equation}\label{eq: first approximation thm 13}
|V - \widetilde V| = o_P(n^{-\varepsilon'}),
\end{equation}
where
$$
\widetilde V = \sup_{(u,w)\in I} \frac{|\ell(w)'J^{-1}(u)\mathbb U(u)/\sqrt n|}{\sigma(u,w)}.
$$

Second, consider the function classes
\begin{align*}
\mathcal W_1 &= \Big\{(Z,U)\mapsto u - 1\{U\leq u\}\colon (u,w)\in I\Big\},\\
\mathcal W_2 &= \Big\{(Z,U)\mapsto \|\ell(w)\|/(\sqrt n\sigma(u,w))\colon (u,w)\in I\Big\},\\
\mathcal W_3 &= \Big\{(Z,U)\mapsto (\ell(w)/\|\ell(w)\|)'J^{-1}(u)Z\colon (u,w)\in I\Big\},
\end{align*}
mapping $B_m(0,\zeta_m)\times[0,1]$ into $\mathbb R$. The function class $\mathcal W_1$ has an envelope $F_1(Z,U) = 1$ and its uniform entropy numbers satisfy
$$
\sup_Q \log N(\epsilon \|F_1\|_{Q,2},\mathcal W_1,L_2(Q))\lesssim \log(1/\epsilon),\quad\text{uniformly over }0<\epsilon\leq 1.
$$
Also, by Condition S, $\|\ell(w)\|/(\sqrt n\sigma(u,w))\lesssim 1$ uniformly over $(u,w)\in I$, and so the function class $\mathcal W_2$ has an envelope $F_2(Z,U) = C$ for sufficiently large constant $C$ and its uniform entropy numbers satisfy
$$
\sup_Q \log N(\epsilon \|F_2\|_{Q,2},\mathcal W_2,L_2(Q))\lesssim \log(1/\epsilon),\quad\text{uniformly over }0<\epsilon\leq 1.
$$
In addition, the function class $\mathcal W_3$ has an envelope $F_3(Z,U) = C\zeta_m$ for sufficiently large constant $C$ and by Lemma \ref{Lemma:Entropy}, the uniform entropy numbers of $\mathcal W_3$ satisfy
$$
\sup_Q \log N(\epsilon \|F_3\|_{Q,2},\mathcal W_3,L_2(Q)) \lesssim \log(n/\epsilon),\quad\text{uniformly over }0<\epsilon\leq 1.
$$
Hence, by Lemma \ref{Lemma:ProductEntropy}, the uniform entropy numbers of the function class
$$
\widetilde {\mathcal F} = \Big\{(Z,U)\mapsto \frac{(\ell(w)/\|\ell(w)\|)'J^{-1}(u) Z (u - 1\{U\leq u\})}{\sqrt n\sigma(u,w)/\|\ell(w)\|}\colon (u,w)\in I\Big\} \subset \mathcal W_1\mathcal W_2\mathcal W_3
$$
satisfy
$$
\sup_Q \log N(\epsilon\|F\|_{Q,2},\widetilde {\mathcal F},L_2(Q))\lesssim \log(n/\epsilon),\quad\text{uniformly over }0<\epsilon\leq 1,
$$
where $F(Z,U) = C\zeta_m$ for sufficiently large constant $C$ is its envelope. Thus, the uniform entropy numbers of the function class $\mathcal F = \widetilde{\mathcal F}\cup(-\widetilde{\mathcal F})$ satisfy
\begin{equation}\label{eq: entropy bound thm 13}
\sup_Q \log N(\epsilon\|F\|_{Q,2},\mathcal F,L_2(Q))\lesssim \log(n/\epsilon),\quad\text{uniformly over }0<\epsilon\leq 1,
\end{equation}
where the function $F$ is its envelope. Moreover,
\begin{equation}\label{eq: moment conditions thm 13}
\sup_{f\in\mathcal F}E[f(Z,U)^2] = 1, \ \sup_{f\in\mathcal F}E[|f(Z,U)|^3] \lesssim \zeta_m, \ \sup_{f\in\mathcal F}E[|f(Z,u)|^4] \lesssim \zeta_m^2
\end{equation}
Now, observe that
$$
\widetilde V = \sup_{f\in\mathcal F}\mathbb G_n f
$$
and that by \eqref{eq: entropy bound thm 13} and \eqref{eq: moment conditions thm 13}, Conditions (A), (B), and (C) of Chernozhukov, Chetverikov and Kato \cite{CCK2016} hold for the function class $\mathcal F$ with $A = n$, $v = C$, $\sigma = 1$, $b = C\zeta_m$, and arbitrarily large $q$. Hence, given that $\zeta_m^2 = o(n^{1 - \varepsilon})$, Theorem 2.1 of Chernozhukov, Chetverikov and Kato \cite{CCK2016} implies that there exists a random variable $\bar V$ with the required distribution and such that
$$
|\widetilde V - \bar V| = o_P(n^{-\varepsilon'}).
$$
Combining this bound with \eqref{eq: first approximation thm 13} gives the claim of this step.

\noindent
{\bf Step 2} (Coupling for $V^*$). Here for the random variable
$$
V^* = \sup_{(u,w)\in I}|t^*(u,w)|
$$
and for each sample size $n$, we show that there exists a random variable $\bar V^* = \bar V^*_n$ that (i) is conditional on $(Z_i)_{i=1}^n$ equal in distribution to the random variable in \eqref{eq: main random variable thm 13}, (ii) depends on the data only via $(Z_i)_{i=1}^n$, and (iii) is such that
\begin{equation}\label{eq: bootstrap bands main approximation}
| V^* - \bar V^*| = o_P(n^{-\varepsilon'}).
\end{equation}
To prove this claim, we consider four resampling methods separately. We start with the pivotal method case. By Theorem \ref{thm: resampling methods t process},   we have
\begin{equation}\label{eq: first approximation pivotal thm 13}
|V^* - \widetilde V^*| = o_P(n^{-\varepsilon'}),
\end{equation}
where
$$
\widetilde V^* = \sup_{(u,w)\in I}\frac{|\ell(w)'J^{-1}(u)\mathbb U^*(u)/\sqrt n|}{\sigma(u,w)}.
$$
Then given that the process $\mathbb U^*(\cdot)$ is a copy of $\mathbb U(\cdot)$ conditional on $(Z_i)_{i=1}^n$ and depends on the data only via $(Z_i)_{i=1}^n$, by the same argument as that used in Step 1, it follows that there exists a random variable $\bar V^*$ with the required distribution and such that
$$
|\widetilde V^* - \bar V^*| = o_P(n^{-\varepsilon'}).
$$
Combining this bound with \eqref{eq: first approximation pivotal thm 13}, gives the claim of Step 2 in this case.

Next, in the gradient bootstrap method case, construction of a random variable $\bar V^*$ with the required distribution follows from the same argument as that used in the pivotal process case.

Next, consider the Gaussian method case. By Theorem \ref{thm: resampling methods t process}, we have uniformly over $(u,w)\in I$ that
$$
t^*(u,w) = \frac{\ell(w)'J^{-1}(u) G^*(u)/\sqrt{n}}{\sigma(u,w)} + o_P(n^{-\varepsilon'}).
$$
Thus, it follows that \eqref{eq: bootstrap bands main approximation} holds for
$$
\bar V^* = \sup_{(u,w)\in I}\frac{|\ell(w)'J^{-1}(u)G^*(u)/\sqrt n|}{\sigma(u,w)},
$$
and since the random variable $\bar V^*$ has the required distribution, the claim of Step 2 in this case holds.

Finally, consider the weighted bootstrap method case.
Like in the proof of Theorem \ref{thm: couplings for t process},
\begin{align}
&\sup_{(u,w)\in I}\Big|\frac{\hat \sigma(u,w)}{\sigma(u,w)} - 1\Big| = o_P(n^{-\varepsilon'}),\label{eq: sigma ratio bound thm 13 - 2}\\
&\hat \sigma(u,w) \gtrsim_P \|\ell(w)\|/\sqrt n,\quad \text{uniformly over }(u,w)\in I;\label{eq: sigma lower bound thm 13 - 2}
\end{align}
see \eqref{eq: sigma ratio bound thm 11 new} and \eqref{eq: sigma lower bound thm 11 new}. Also, by the same argument as that used in Lemma \ref{lem: ellju max inequality},
$$
\sup_{(u,w)\in I}\left|\frac{(\ell(w)/\|\ell(w)\|)' J^{-1}(u)}{\sqrt n}\sum_{i=1}^n (\pi_i - 1)Z_i(u - 1\{U_i \leq u\})\right| \lesssim_P (\log n)^{1/2}.
$$
In addition, by Condition S, $\|\ell(w)\|/(\sqrt n\sigma(u,w))\lesssim 1$ uniformly over $(u,w)\in I$, and so
$$
\sup_{(u,w)\in I}\left|\frac{\ell(w)' J^{-1}(u)}{\sqrt n\sigma(u,w)}\cdot \frac{1}{\sqrt n}\sum_{i=1}^n (\pi_i - 1)Z_i(u - 1\{U_i \leq u\})\right| \lesssim_P (\log n)^{1/2}.
$$
Combining this inequality with \eqref{eq: sigma ratio bound thm 13 - 2} gives
$$
\sup_{(u,w)\in I}\left|\left(\frac{1}{\sigma(u,w)} - \frac{1}{\hat\sigma(u,w)}\right)\frac{\ell(w)' J^{-1}(u)}{\sqrt n} \frac{1}{\sqrt n}\sum_{i=1}^n (\pi_i - 1)Z_i(u - 1\{U_i \leq u\})\right| = o_P(n^{-\varepsilon'}).
$$
Hence, by Lemma \ref{lem: gaussian approx weighted bootstrap process thm 7}, we have uniformly over $(u,w)\in I$ that
\begin{align*}
t^*(u,w)
& = \frac{\ell(w)'(\hat \beta^b(u) - \hat\beta(u))}{\hat\sigma(u,w)}\\
& = \frac{\ell(w)'J^{-1}(u)}{\sqrt n \hat\sigma(u,w)}\cdot\frac{1}{\sqrt n}\sum_{i=1}^n(\pi_i - 1)Z_i(u - 1\{U_i \leq u\}) + o_P(n^{-\varepsilon'})\\
& = \frac{\ell(w)'J^{-1}(u)}{\sqrt n \sigma(u,w)}\cdot\frac{1}{\sqrt n}\sum_{i=1}^n(\pi_i - 1)Z_i(u - 1\{U_i \leq u\}) + o_P(n^{-\varepsilon'}),
\end{align*}
where we used \eqref{eq: sigma lower bound thm 13 - 2} in the second line. Thus, denoting
$$
\widetilde V^* = \sup_{(u,w)\in I}\left| \frac{\ell(w)'J^{-1}(u)}{\sqrt n \sigma(u,w)}\cdot\frac{1}{\sqrt n}\sum_{i=1}^n(\pi_i - 1)Z_i(u - 1\{U_i \leq u\}) \right|,
$$
we obtain
\begin{equation}\label{eq: weighted bootstrap intermediate approx thm 13}
|V^* - \widetilde V^*| = o_P(n^{-\varepsilon'}).
\end{equation}
Now we apply Lemma \ref{lemma:SupremumCoupling} with
\begin{equation}\label{eq: useful notation loading thm 13}
\ell(u,w) = \left(\frac{\ell(w)' J^{-1}(u)}{\sqrt n\sigma(u,w)}\right)'
\end{equation}
and $v_i = \pi_i - 1$ to couple $\widetilde V^*$ with a random variable $\bar V^*$ that has the required distribution. Observe that
\begin{equation}\label{eq: elluw upper bound thm 13}
\sup_{(u,w)\in I}\|\ell(u,w)\| \lesssim 1
\end{equation}
and
\begin{align}
\|\ell(u,w) - \ell(\tilde u,\tilde w)\| \lesssim (1 + \zeta_{m,\theta}^L)\|(u,w) - (\tilde u,\tilde w)\|,\label{eq: lipshitz property elluw thm 13}
\end{align}
uniformly over $(u,w)$ and $(\tilde u,\tilde w)$ in $I$ by Lemma \ref{lem: lipshitz property ell functional}. Note also that $\log \zeta_{m,\theta}^L\lesssim \log n$ by Condition U. Thus, conditions of Lemma \ref{lemma:SupremumCoupling} on $\ell(u,w)$ are satisfied. Moreover, by Corollary \ref{Corollary:phin} in Appendix \ref{App:EmpiricalProcess}, there exists a constant $C>0$ such that $\sup_{\alpha\in S^{m-1}} \En[(\alpha'Z_i)^2] \leq C$ with probability $1 - o(1)$. Thus, applying Lemma \ref{lemma:SupremumCoupling} conditional on $(Z_i)_{i=1}^n$ on the event $\sup_{\alpha\in S^{m-1}} \En[(\alpha'Z_i)^2] \leq C$ shows that there exists a random variable $\bar V^*$ with the required distribution such that
$$
|\widetilde V^* - \bar V^*| = o_P(n^{-\varepsilon}).
$$
Combining this bound with \eqref{eq: weighted bootstrap intermediate approx thm 13} gives \eqref{eq: bootstrap bands main approximation}, so that the claim of Step 2 in this case holds. This completes Step 2.

\noindent
{\bf Step 3} (Approximation of Critical Values). For all $\eta\in(0,1)$, let $\bar k^*(\eta)$ denote the $\eta$-th quantile of the conditional distribution of $\bar V^*$ given the data. Here we show that
\begin{align*}
&P\Big(k^*(1 - \alpha) \leq \bar k^*(1 - \alpha + \nu_{n,1}) + n^{-\varepsilon'}\Big) = 1-o(1),\\
&P\Big(k^*(1 - \alpha) \geq \bar k^*(1 - \alpha - \nu_{n,1}) - n^{-\varepsilon'}\Big) = 1 - o(1),
\end{align*}
for some sequence of positive numbers $(\nu_{n,1})_{n\geq 1}$ converging to zero. Indeed, recall that $k^*(\eta)$ denotes the $\eta$-th quantile of the conditional distribution of $V^*$ given the data and also that $|V^* - \bar V^*| = o_P(n^{-\varepsilon'})$ by Step 2. Hence, the claim of this step follows from Lemma \ref{lemma: quantiles are close}.

\noindent
{\bf Step 4} (Anti-Concentration). Here we show that
\begin{align*}
&P\Big(\bar k^*(1 - \alpha - \nu_{n,1}) - \bar k^*(1 - \alpha - \nu_{n,1} - \nu_{n,2}) > 2 n^{-\varepsilon'}\Big) = 1 - o(1),\\
&P\Big(\bar k^*(1 - \alpha + \nu_{n,1} + \nu_{n,2}) - \bar k^*(1 - \alpha + \nu_{n,1}) > 2 n^{-\varepsilon'}\Big) = 1 - o(1),
\end{align*}
for some sequence of positive numbers $(\nu_{n,2})_{n\geq 1}$ converging to zero. By Lemma \ref{lem: maximal inequality gaussian process functional thm 13}, there exists a constant $C_V > 0$ such that with probability $1 - o(1)$,
$$
E\left[ \sup_{(u,w)\in I}\Big|(\ell(w)/\|\ell(w)\|)'J^{-1}(u)G(u)\Big| \mid (Z_i)_{i=1}^n\right] \leq C_V\sqrt{\log n}.
$$
Also, by Condition S, $\|\ell(w)\|/(\sqrt n\sigma(u,w))\lesssim 1$ uniformly over $(u,w)\in I$, and so with probability $1 - o(1)$,
\begin{equation}\label{eq: maximal inequality for v*}
E\left[ \sup_{(u,w)\in I}\frac{|\ell(w)'J^{-1}(u)G(u)/\sqrt n}{\sigma(u,w)} \mid (Z_i)_{i=1}^n\right] \leq C_V\sqrt{\log n}.
\end{equation}
Hence, existence of the required sequence $(\nu_{n,2})_{n\geq 1}$ follows from recalling that $\bar V^*$ is equal in distribution conditional on $(Z_i)_{i=1}^n$ to the random variable under the expectation sign in \eqref{eq: maximal inequality for v*} and applying Lemma \ref{lemma:Anti} conditional $(Z_i)_{i=1}^n$ on the event in \eqref{eq: maximal inequality for v*}, since $n^{-\varepsilon'}\sqrt{\log n}\to 0$ as $n\to\infty$. This completes Step 4.

\noindent
{\bf Step 5} (Main Argument). Here we complete the proof of the theorem. Observe that
\begin{align*}
P\Big(V > k^*(1 - \alpha)\Big)
& \leq   P\Big( V > \bar k^*(1 - \alpha -\nu_{n,1})  - n^{-\varepsilon'}\Big) + o(1)   \\
& \leq   P\Big( \bar V > \bar k^*(1 - \alpha -\nu_{n,1})  - 2n^{-\varepsilon'}\Big) + o(1)    \\
& \leq   P\Big(\bar V > \bar k^*(1 - \alpha - \nu_{n,1} - \nu_{n,2}) \Big) + o(1)    \\
& \leq   E\Big[P\Big(\bar V > \bar k^*(1 - \alpha - \nu_{n,1} - \nu_{n,2}) |(Z_i)_{i=1}^n\Big)\Big]  + o(1) \\
& \leq   \alpha + \nu_{n,1} + \nu_{n,2} + o(1) = \alpha + o(1),
\end{align*}
where the first line follows from Step 3, the second from Step 1, the third from Step 4, the fourth from the law of iterated expectations, and the fifth from the definition of $\bar V$ and $\bar k^*(1 - \alpha - \nu_{n,1} - \nu_{n,2})$. Similarly,
\begin{align*}
P\Big(V > k^*(1 - \alpha)\Big)
& \geq   P\Big( V > \bar k^*(1 - \alpha + \nu_{n,1})  + n^{-\varepsilon'}\Big) + o(1)   \\
& \geq   P\Big( \bar V > \bar k^*(1 - \alpha + \nu_{n,1})  + 2n^{-\varepsilon'}\Big) + o(1)    \\
& \geq   P\Big(\bar V > \bar k^*(1 - \alpha + \nu_{n,1} + \nu_{n,2}) \Big) + o(1)    \\
& \geq   E\Big[P\Big(\bar V > \bar k^*(1 - \alpha + \nu_{n,1} + \nu_{n,2}) |(Z_i)_{i=1}^n\Big)\Big]  + o(1) \\
& \geq   \alpha - \nu_{n,1} - \nu_{n,2} + o(1) = \alpha + o(1),
\end{align*}
This gives the first asserted claim. The second asserted claim follows immediately from the first one. The last asserted claim follows since $\sup_{(u,w)\in I}\hat\sigma(u,w) \lesssim_P \sqrt{\zeta_{m,\theta}^2 / n}$ and $k^*(1 - \alpha)\lesssim_P  \sqrt{\log n}$ by Steps 3 and 4. This completes the proof of the theorem.
\end{proof}

We use the following lemmas in the proof of Theorem \ref{theorem: inference using couplings}.

\begin{lemma}[Anti-Concentration for Separable Gaussian Processes, Chernozhukov, Chetverikov and Kato \cite{CCK2013}]\label{lemma:Anti} Let $Y=(Y_t)_{t\in T}$ be a separable Gaussian process indexed by a semimetric space $T$ such that $E[Y_t]=0$ and $E[Y_t^2]=1$ for all $t\in T$. Assume that $\sup_{t\in T}|Y_t|<\infty$ a.s. Then for all $\varepsilon \geq 0$ and some absolute constant $A$ we have
$$ \sup_{x\in \RR} P\left(\left|\sup_{t\in T}|Y_t|-x\right| \leq \varepsilon\right) \leq A \varepsilon E\left[\sup_{t\in T}|Y_t|\right].$$
\end{lemma}

\begin{lemma}[Closeness in Probability Implies Closeness of Conditional Quantiles]\label{lemma: quantiles are close}
Let $X_n$ and $Y_n$ be random variables and $\mathcal{D}_n$ be a random vector. Let $F_{X_n}(x|\mathcal{D}_n)$ and $F_{Y_n}(x|\mathcal{D}_n)$
denote the conditional distribution functions, and  $F^{-1}_{X_n}(p|\mathcal{D}_n)$ and $F^{-1}_{Y_n}(p|\mathcal{D}_n)$
denote the corresponding conditional quantile functions. If $|X_n - Y_n| = o_P(\gamma_n)$, then there exists a sequence of positive numbers $(\nu_n)_{n\geq 1}$ converging to zero such that with probability $1 - o(1)$,
$$
F^{-1}_{X_n}(p|\mathcal{D}_n) \leq F^{-1}_{Y_n}(p+\nu_n|\mathcal{D}_n) + \gamma_n \text{ and } F^{-1}_{Y_n}(p|\mathcal{D}_n) \leq F^{-1}_{X_n}(p+\nu_n|\mathcal{D}_n) + \gamma_n
$$
for all $p\in (\nu_n,1 - \nu_n)$.
\end{lemma}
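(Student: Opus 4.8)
The plan is to pass from the unconditional bound $|X_n-Y_n|=o_P(\varepsilon)$ to a bound on the conditional probability $P(|X_n-Y_n|>\varepsilon\mid\mathcal{D}_n)$ valid on an event of probability tending to one, and then to turn the resulting inequality between conditional distribution functions into the claimed inequality between conditional quantile functions by a standard generalized-inverse argument.

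First I would extract a deterministic sequence. Since $|X_n-Y_n|=o_P(\varepsilon)$ gives $P(|X_n-Y_n|>\varepsilon)\to 0$, choose $\nu_n\searrow 0$ with $P(|X_n-Y_n|>\varepsilon)\le\nu_n^2$ for all large $n$. Writing $p_n:=P(|X_n-Y_n|>\varepsilon\mid\mathcal{D}_n)$, the tower property gives $E[p_n]=P(|X_n-Y_n|>\varepsilon)\le\nu_n^2$, so Markov's inequality yields $P(p_n>\nu_n)\le\nu_n\to 0$. Let $B_n:=\{p_n\le\nu_n\}$, so $P(B_n)\to 1$ and, on $B_n$, the conditional law of $(X_n,Y_n)$ given $\mathcal{D}_n$ assigns mass at most $\nu_n$ to $\{|X_n-Y_n|>\varepsilon\}$.

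Next, on $B_n$ with a fixed realization of $\mathcal{D}_n$, I would use the inclusion $\{Y_n\le x\}\subseteq\{X_n\le x+\varepsilon\}\cup\{|X_n-Y_n|>\varepsilon\}$, valid for every $x$, to get $F_{Y_n}(x\mid\mathcal{D}_n)\le F_{X_n}(x+\varepsilon\mid\mathcal{D}_n)+\nu_n$. Evaluating at $x=F^{-1}_{Y_n}(p+\nu_n\mid\mathcal{D}_n)$, which is finite for $p\in(\nu_n,1-\nu_n)$, and using right-continuity of $F_{Y_n}(\cdot\mid\mathcal{D}_n)$ (so that $F_{Y_n}(F^{-1}_{Y_n}(q\mid\mathcal{D}_n)\mid\mathcal{D}_n)\ge q$), I obtain $F_{X_n}(x+\varepsilon\mid\mathcal{D}_n)\ge p$, hence $F^{-1}_{X_n}(p\mid\mathcal{D}_n)\le F^{-1}_{Y_n}(p+\nu_n\mid\mathcal{D}_n)+\varepsilon$. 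Interchanging the roles of $X_n$ and $Y_n$, which is legitimate since the hypothesis and the event $B_n$ are symmetric in the two variables, gives the companion inequality; both then hold for all $p\in(\nu_n,1-\nu_n)$ on $B_n$, as claimed.

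The main obstacle --- essentially the only delicate point --- is the first reduction: the hypothesis is unconditional, whereas the conclusion must hold simultaneously for all $p$ on a single high-probability event, which forces the passage to conditional probabilities via the tower property and Markov step together with the choice of a single deterministic $\nu_n$ serving both as the bound on $P(B_n^c)$ and as the slack in the quantile inequality. The remaining steps --- the elementary set inclusion and the inversion of the c.d.f.\ inequality, including the right-continuity identity for generalized inverses --- are routine.
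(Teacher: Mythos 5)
Your proposal is correct and follows essentially the same route as the paper's proof: pass from the unconditional $o_P$ statement to a bound $P\{|X_n-Y_n|>\varepsilon\mid\mathcal{D}_n\}\le\nu_n$ on an event of probability tending to one via the tower property and Markov's inequality, deduce the shifted c.d.f.\ inequality $F_{Y_n}(x\mid\mathcal{D}_n)\le F_{X_n}(x+\varepsilon\mid\mathcal{D}_n)+\nu_n$, and invert. Your write-up is somewhat more explicit than the paper's (in particular the choice $P(|X_n-Y_n|>\varepsilon)\le\nu_n^2$ and the generalized-inverse step), but the argument is the same.
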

\begin{proof}
Since $|X_n - Y_n| = o_P(\gamma_n)$, there exists a sequence of positive numbers $(\nu_n)_{n\geq 1}$ converging to zero such that $P( |X_n - Y_n| >\gamma_n) = o(\nu_n)$.  Hence,
$$
P\Big(P(|X_n - Y_n| >\gamma_n|\mathcal{D}_n) \leq \nu_n\Big) \to 1,
$$
that is, there is a set $\Omega_n$ of values of $\mathcal D_n$ such that $P(\Omega_n) \to 1$
and $P(|X_n - Y_n| >\gamma_n|\mathcal{D}_n) \leq \nu_n$ for all $\mathcal{D}_n \in \Omega_n$. Now, for all $\mathcal{D}_n \in \Omega_n$,
$$
F_{X_n}(x|\mathcal{D}_n) \geq F_{Y_n+ \gamma_n}(x|\mathcal{D}_n) - \nu_n \text { and } F _{Y_n}(x|\mathcal{D}_n) \geq F_{X_n+ \gamma_n}(x|\mathcal{D}_n) - \nu_n,\quad\text{for all } x \in \Bbb{R},
$$
which implies the inequality stated in the lemma, by definition of the conditional quantile function and equivariance of quantiles to location shifts.
\end{proof}

\begin{proof}[Proof of Theorem \ref{thm: shape constraints}]
The first asserted claim follows from the same argument as that used in the proof of Theorem \ref{theorem: inference using couplings}: simply replace $\sup_{(u,w) \in I} |t(u,w)|$ and $\sup_{(u,w)\in I}|t^*(u,w)|$ by $\sup_{(u,w) \in I} t(u,w)$ and $\sup_{(u,w)\in I}t^*(u,w)$. To prove the second asserted claim, suppose to the contrary that there exists a set of data-generating processes $\mathcal M = \mathcal M_n$ satisfying $H_0$ and such that the conditions of Theorem \ref{theorem: inference using couplings} hold uniformly over this set but the inequality
$$
\sup_{M\in \mathcal M}P_M\Big(T > \tilde k^*(1 - \alpha)\Big) \leq \alpha + o(1)
$$
does not hold. Then there exists $\epsilon > 0$ and a sequence of integers $(n_k)_{k\geq 1}$ such that for each $k\geq 1$, we can find $M_{n_k} \in \mathcal M_{n_k}$ such that
$$
P_{M_{n_k}}\Big(T > \tilde k^*(1 - \alpha)\Big) > \alpha + \epsilon,\quad\text{for all }k\geq 1.
$$
However, this contradicts the first asserted claim because we allow the data-generating process to depend on $n$ and the sequence of data-generating processes $(M_{n_k})_{k\geq 1}$ satisfy conditions of Theorem \ref{theorem: inference using couplings}. The second asserted claim follows. This completes the proof of the theorem.
\end{proof}

\section{A Lemma on Strong Approximation of an Empirical
Process of an Increasing Dimension by a Gaussian Process}\label{App:StrongGaussianApprox}

\begin{lemma}\label{lemma: strong} (Approximation of a Sequence of Empirical Processes of Increasing Dimension by a Sequence of Gaussian Processes) Let $(Z_i)_{i=1}^n$ be a sequence of non-stochastic vectors in $\mathbb R^m$ and consider the empirical process $\mathbb{U}_{n}$ in $[\ell^{\infty}(\mathcal{U})]^m$, $\mathcal{U} \subseteq (0,1)$, defined by
$$
\mathbb{U}_{n}(u) = \mathbb{G}_n\(v_iZ_i \psi_i(u)\), \ \ \psi_i(u) = u-1\{U_i \leq u\},\quad u\in\mathcal U,
$$
where $(U_i,v_i)_{i=1}^n$ is an i.i.d. sequence of pairs of independent random variables where $U_i\sim\Uniform (0,1)$, $E[v_i^2] = 1$, $E[|v_i|^4]\lesssim 1$, and $\max_{1\leq i\leq n}|v_i| \lesssim_P \log n$.  Suppose that the vectors $Z_i$ are such that
$$
\sup_{\alpha\in S^{m-1}}\En\[(\alpha'Z_i)^2\] \lesssim 1, \  \ \max_{1\leq i\leq n}\|Z_i\| \lesssim \zeta_m, \ \ \mbox{and} \ \ m^{7}\zeta_m^6 = o(n^{1-\varepsilon})
$$ where $\zeta_m$ satisfies $1/\zeta_m\lesssim 1$ and $\varepsilon>0$ is some constant.
Then there exists a sequence of zero-mean Gaussian processes $(\Z_n)_{n\geq 1}$ with a.s. continuous paths such that (i) the covariance functions of $G_n$ coincide with those of $\mathbb{U}_{n}$, namely,
$$
E[\Z_n(u) \Z_n(\tilde u)'] = E[\mathbb{U}_{n}(u) \mathbb{U}_{n}(\tilde u)'] = \En[Z_i Z_i'] ( u \wedge \tilde u -  u  \tilde u), \text{ for all }  u \text{ and } \tilde u \in \mathcal{U},
$$
and (ii) $G_n$ approximates $\mathbb{U}_{n}$, namely,
$$
\sup_{u \in \mathcal{U}} \| \mathbb{U}_{n}(u) - \Z_n(u) \| \lesssim_P  o(n^{-\varepsilon'}),
$$
where $\varepsilon'>0$ is some constant.
\end{lemma}

\begin{proof} The proof is based on the use of maximal inequalities and Yurinskii's coupling. 
We define the sequence of projections $\pi_j\colon \mathcal{U} \to \mathcal{U}$, $j=0,1,2,\ldots,\infty$ by
 $\pi_j(u) = u_{k j} =  k/2^j$ if $u \in ( (k-1)/2^j, k/2^j], k = 1,\ldots,2^{j}$. In what follows, given a process $\Z$ in $[\ell^{\infty}(\mathcal{U})]^m$ and its projection $\Z \circ \pi_j$, whose paths are step functions with at most $2^{j}$ steps, we shall identify the process $\Z \circ \pi_j$ with a random vector $\Z \circ \pi_j$ in $\Bbb{R}^{m 2^{j}}$, when convenient. Analogously, given a random vector $W$ in $\Bbb{R}^{m 2^{j}}$, we identify it with a process $W$ in $[\ell^{\infty}(\mathcal{U})]^{m}$, whose paths are step functions with at most $2^{j}$ steps.

The following relations will be proven below for some $\varepsilon'>0$ and some $j = j_n\to\infty$:
\begin{enumerate}
\item (Finite-Dimensional Approximation)  $$r_1 =\sup_{u \in \mathcal{U}}\|\mathbb{U}_{n}(u) - \mathbb{U}_{n}\circ \pi_j(u)\| \lesssim_P o(n^{-\varepsilon'});$$
    \item (Coupling with a Normal Vector) there exists $\N_{nj} =_d N(0, \text{var}[\mathbb{U}_{n} \circ \pi_j])$ such that
$$r_2 = \| \N_{nj} - \mathbb{U}_{n} \circ \pi_j \| \lesssim_P o(n^{-\varepsilon'});$$
 \item (Embedding a Normal Vector into a Gaussian Process) there exists a Gaussian process $\Z_n$ with properties stated in the lemma such that $\N_{nj} = \Z_n \circ \pi_j \text{ a.s.};$
      \item (Infinite-Dimensional Approximation)
$$r_3=\sup_{u \in \mathcal{U}}\|\Z_n(u) - \Z_n\circ \pi_j(u)\| \lesssim_P o(n^{-\varepsilon'}).$$
\end{enumerate}
The result then follows from the triangle inequality:
$$
\sup_{u \in \mathcal{U}} \|\mathbb{U}_{n}(u) - \Z_n(u)\| \leq r_1 + r_2 + r_3.
$$

We now prove relations (1)-(4). Relation (1) follows from
\begin{align}\begin{split}\nonumber
r_1  = \sup_{u \in \mathcal{U}} \|\mathbb{U}_{n}(u) - \mathbb{U}_{n}\circ \pi_j(u)\|
  & \leq  \sup_{|u - \tilde u| \leq 2^{-j} } \|  \mathbb{U}_{n}(u) - \mathbb{U}_{n}(\tilde u) \|  \\
  & \lesssim_P   \sqrt{2^{-j} m \log n} + \sqrt{\frac{m^2\zeta_m^2\log^4n}{n}}\lesssim_P o(n^{-\varepsilon'}),
\end{split}\end{align}
where the first inequality in the second line follows from Lemma \ref{Lemma:Fact1} and second holds by setting $2^j = m n^{\tilde \varepsilon}$ for sufficiently small $\tilde\varepsilon$ and recalling that $m^7\zeta_m^6/n = o(n^{-\varepsilon})$ and $1/\zeta_m\lesssim 1$.

Relation (2) follows from the use of Yurinskii's coupling (Pollard \cite{PollardUsers}, Chapter 10, Theorem 10): Let $\xi_1,\ldots,\xi_n$ be independent zero-mean $p$-vectors such that $\kappa := \sum_{i=1}^n \Ep\[ \|\xi_i\|^3\]$ is finite. Let
$S = \xi_1 + \cdots + \xi_n$. Then for each $\delta>0$, there exists a random vector $T$ with a $N(0, \text{var}(S))$ distribution such that
$$
\Pp\{ \| S- T\| > 3 \delta \} \leq C_0 B \left( 1  + \frac{|\log (1/B)|}{p}   \right) \text{ where }
B:= \kappa p \delta^{-3},
$$
for some universal constant $C_0$.

In order to apply the coupling, we collapse $v_iZ_i \psi_i \circ \pi_j$ to a $p$-vector, and
let
$$
\xi_i = v_iZ_i \psi_i \circ \pi_j \in \Bbb{R}^p, \ \ p = 2^j m
$$
so that $\mathbb{U}_{n} \circ \pi_j = \sum_{i=1}^n \xi_i/\sqrt{n}$. Then
$$
\En\Ep [\|\xi_i\|^3]  =   \En\Ep\left[ \left( \sum_{k=1}^{2^j} \sum_{w=1}^m  \psi_i(u_{kj})^2   v_i^2Z_{iw}^2  \right)^{3/2} \right]  \leq   2^{3j/2}\Ep[|v_i|^3] \En [ \| Z_{i}\|^{3}] \lesssim 2^{3j/2}  \zeta_m^{3}.
$$
Therefore, by Yurinskii's coupling, since $\log n \lesssim 2^jm$, by the choice $2^j=m n^{\tilde \varepsilon}$,
$$
\Pp \left \{ \left \| \frac{\sum_{i=1}^n \xi_i}{\sqrt{n}} - \mathcal{N}_{nj} \right\|  \geq 3 \delta \right\} \lesssim   \frac{ n 2^{3j/2}  \zeta_m^{3} 2^jm}{(\delta \sqrt{n} )^3} =  \frac{  2^{5j/2} m \zeta_m^3}{\delta^3 n^{1/2} } \to 0
$$
by setting $\delta = (2^{5j}m^2\zeta_m^6 \log n / n)^{1/6}$. This verifies relation (2) with
$$r_2 \lesssim_P  \( \frac{2^{5j}m^2\zeta_m^6 \log n}{n} \)^{1/6} = \( \frac{n^{5\tilde\varepsilon}m^7\zeta_m^6\log n}{n} \)^{1/6} = o(n^{-\varepsilon'}),$$
provided that $5\tilde\varepsilon+6\varepsilon' < \varepsilon$.

Relation (3) follows from the a.s. embedding of a finite-dimensional random normal vector into a path of a continuous Gaussian process, which is possible by Lemma \ref{lemma:prescribed}. Here, we note that Lemma \ref{lemma:prescribed} gives an explicit construction of the process $G_n$. As pointed out by a referee, however, Lemma \ref{lemma:prescribed} can be avoided and the process $G_n$ can be constructed implicitly by referring to well-known results. Indeed, let $\tilde G_n$ be a zero-mean Gaussian process with a.s. continuous paths and the same covariance function as that of $\mathbb U_n$. Since $\mathcal N_{n j} =_d \tilde G_n\circ\pi_j$ and $\tilde G_n$ takes values in $[C(\mathcal U)]^m$, which is Polish, it follows from the Vorob'ev-Berkes-Philipp theorem (e.g., see Theorem 1.1.10 in \cite{Dudley2000}), with $\alpha$ and $\beta$ being the laws of $(\mathbb U_n\circ\pi_j,\mathcal N_{n j})$ and $(\tilde G_n\circ\pi_j,\tilde G_n)$, respectively, that one can construct a distribution law on $\mathbb R^{m 2^j}\times \mathbb R^{m 2^j}\times [C(\mathcal U)]^m$ such that its projection on $\mathbb R^{m 2^j}\times \mathbb R^{m 2^j}$ is equal to the law of $(\mathbb U_n\circ\pi_j, \mathcal N_{n j})$ and its projection on $\mathbb R^{m 2^j}\times [C(\mathcal U)]^m$ is equal to the law of $(\tilde G_n\circ\pi_j,\tilde G_n)$. Then the existence of the required Guassian process $G_n$ with $\mathcal N_{n j} = G_n\circ \pi_j$ follows from Lemma 2.7.3 in \cite{Dudley2000}, with $V = \mathbb U_n\circ\pi_j$.

Relation (4) follows from
\begin{align}\begin{split}\nonumber
r_3  = \sup_{u \in \mathcal{U}} \|\Z_n(u) - \Z_n\circ \pi_j(u)\|
  & \leq  \sup_{|u - \tilde u| \leq 2^{-j} } \|  \Z_n(u) - \Z_n(\tilde u) \|  \\
  & \lesssim_P   \sqrt{2^{-j} m \log n} \lesssim_P o(n^{-\varepsilon'}),
\end{split}\end{align}
where the first inequality in the second line follows from Lemma \ref{Lemma:Fact4} and the second holds by the choice of $j$. This completes the proof of the lemma.
\end{proof}

Next, we establish auxiliary lemmas that were used in the preceding proof.

\begin{lemma}[Finite-Dimensional Approximation]\label{Lemma:Fact1}
Consider the setting of Lemma \ref{lemma: strong} and denote $\varphi = \sup_{\alpha \in S^{m-1}} \En[(\alpha'Z_i)^2]$. Then for any $\gamma > 0$, the process $\mathbb U_n$ satisfies
$$
\sup_{|u - \tilde u|\leq \gamma} \| \mathbb{U}_{n}(u) - \mathbb{U}_{n}(\tilde u) \| \lesssim_P \sqrt{\gamma \varphi m\log n} + \sqrt{\frac{m^2 \zeta^2_m \log^4 n}{n}}.
$$
\end{lemma}
\begin{proof}
Consider the function class
$$
\mathcal G_{m,n} = \Big\{(Z,U,v)\mapsto g_{\alpha,u}(Z,U,v) = (\alpha' Z)\cdot (u - 1\{U\leq u\})\colon u\in\mathcal U,\alpha\in S^{m-1}\Big\},
$$
mapping $B_m(0,\zeta_m)\times [0,1]\times\mathbb R$ into $\mathbb R$. Note that $G_{m,n}(Z,U,v) = \zeta_m$ is its envelope. By Lemma \ref{Lemma:VC-G},  the uniform entropy numbers of $\mathcal G_{m,n}$ satisfy
\begin{equation}\label{eq: entropy numbers Gmn}
\sup_Q \log N(\epsilon\|G_{m,n}\|_{Q,2},\mathcal G_{m,n},L_2(Q))\lesssim O(m) \log(1/\epsilon),\quad \text{uniformly over }0<\epsilon\leq 1.
\end{equation}
Next, consider the function class
$$
\widetilde{\mathcal G}_{m,n} = \Big\{(Z,U,v)\mapsto v\cdot g(Z,U,v)\colon g\in\mathcal G_{m,n}\Big\}.
$$
The function $\widetilde G_{m,n}(Z,U,v) =  |v|\cdot G_{m,n}(Z,U,v)$ is an envelope of $\widetilde {\mathcal G}_{m,n}$. By \eqref{eq: entropy numbers Gmn} and Lemma \ref{Lemma:ProductEntropy}, the uniform entropy numbers of $\widetilde{\mathcal  G}_{m,n}$ satisfy
\begin{equation}\label{eq: entropy numbers tGmn}
\sup_Q \log N(\epsilon\|\widetilde G_{m,n}\|_{Q,2},\widetilde{\mathcal G}_{m,n},L_2(Q))\lesssim O(m)\log (1/\epsilon),\quad\text{uniformly over }0<\epsilon\leq 1.
\end{equation}
Further, consider the function class
$$
\mathcal G_{m,n,\gamma} = \Big\{(Z,U,v)\mapsto v\cdot(g_{\alpha,u}(Z,U,v) - g_{\alpha,\tilde u}(Z,U,v))\colon u,\tilde u\in\mathcal U,\alpha\in S^{m-1}, |u - \tilde u|\leq \gamma\Big\}.
$$
The function $2\widetilde G_{m,n}$ is its envelope. By \eqref{eq: entropy numbers tGmn} and Lemma \ref{Lemma:ProductEntropy}, the uniform entropy numbers of $\mathcal G_{m,n,\gamma}$ satisfy
\begin{equation}\label{eq: entropy numbers Gmng}
\sup_Q \log N(\epsilon\|2 \widetilde G_{m,n}\|_{Q,2},\mathcal G_{m,n,\gamma},L_2(Q))\lesssim O(m)\log(1/\epsilon),\ \text{uniformly over }0<\epsilon\leq 1.
\end{equation}
With this notation, we have
$$
\sup_{|u-\tilde u|\leq \gamma}\|\mathbb U_n(u) - \mathbb U_n(\tilde u)\| = \sup_{g\in\mathcal G_{m,n,\gamma}}|\mathbb G_n g|,
$$
and so to prove the asserted claim, we can apply the second part of Lemma \ref{Lemma:MaxIneq2} using the sequence of independent observations $(Z_i,U_i,v_i)_{i=1}^n$. Note that Lemma \ref{Lemma:MaxIneq2} does not require i.i.d. observations, and so it can be applied even though $Z_i$'s are non-stochastic.

By \eqref{eq: entropy numbers Gmng}, \eqref{Eq:J} is satisfied with $\omega = 1$, $J(m) = O(\sqrt m)$, and $F_m = 2\widetilde G_{m,n}$. Note that
$$
\max_{1\leq i\leq n} F_m(Z_i,U_i,v_i) \lesssim_P M_m  = \zeta_m \log n
$$
since $\max_{1\leq i\leq n}v_i \lesssim_P \log n$. Further,
\begin{align*}
\sup_{g\in\mathcal G_{m,n,\gamma}}\frac{1}{n}\sum_{i=1}^n \Ep[g(Z_i,U_i,v_i)^2]
&= \sup_{\alpha\in S^{m-1},|u - \tilde u|\leq \gamma}\frac{1}{n}\sum_{i=1}^n \Ep\Big[v_i^2(\alpha' Z_i)^2(\psi_i(u) - \psi_i(\tilde u))^2\Big]\\
& =\sup_{\alpha\in S^{m-1},|u - \tilde u|\leq \gamma}\frac{1}{n}\sum_{i=1}^n (\alpha' Z_i)^2 \Ep\Big[(\psi_i(u) - \psi_i(\tilde u))^2\Big]\\
&\leq \sup_{\alpha\in S^{m-1}}\frac{1}{n}\sum_{i=1}^n (\alpha' Z_i)^2 \gamma(1 - \gamma) \leq \varphi\gamma,
\end{align*}
where the second line holds because $Z_i$ is non-stochastic, $v_i$ is independent of $U_i$, and $\Ep[v_i^2] = 1$, and the third line holds because $(\psi_i(u) - \psi_i(\tilde u))^2 =_d (|u - \tilde u| - 1\{U_i \leq |u - \tilde u|\})^2$. Moreover,
\begin{align*}
\sup_{g\in\mathcal G_{m,n,\gamma}}\frac{1}{n}\sum_{i=1}^n \Ep[g(Z_i,U_i,v_i)^4]
&= \sup_{\alpha\in S^{m-1},|u - \tilde u|\leq \gamma}\frac{1}{n}\sum_{i=1}^n \Ep\Big[v_i^4(\alpha' Z_i)^4(\psi_i(u) - \psi_i(\tilde u))^4\Big]\\
& \lesssim \sup_{\alpha\in S^{m-1},|u - \tilde u|\leq \gamma}\frac{1}{n}\sum_{i=1}^n (\alpha' Z_i)^4 \Ep\Big[(\psi_i(u) - \psi_i(\tilde u))^4\Big]\\
& \lesssim \sup_{\alpha\in S^{m-1},|u - \tilde u|\leq \gamma}\frac{1}{n}\sum_{i=1}^n \zeta_m^2(\alpha' Z_i)^2 \Ep\Big[(\psi_i(u) - \psi_i(\tilde u))^2\Big]\\
&\leq \sup_{\alpha\in S^{m-1}}\frac{1}{n}\sum_{i=1}^n \zeta_m^2(\alpha' Z_i)^2 \gamma(1 - \gamma) \leq \zeta_m^2\varphi\gamma,
\end{align*}
where we used the same arguments as above in addition to the facts that $E[v_i^4]\lesssim 1$, $\max_{1\leq i\leq n}\|Z_i\|\leq \zeta_m$, and $|\psi_i(u) - \psi_i(\tilde u)|\leq 1$. Substituting these bounds into the second part of Lemma \ref{Lemma:MaxIneq2} gives
$$
\sup_{g\in\mathcal G_{m,n,\gamma}}|\mathbb G_n g| \lesssim_P \sqrt{\gamma \varphi m\log n} + \sqrt{\frac{m^2 \zeta^2_m \log^4 n}{n}}.
$$
This completes the proof of the lemma.
\end{proof}

\begin{lemma}[Infinite-Dimensional Approximation]\label{Lemma:Fact4}
Let $\Z_n\colon \mathcal{U} \to \RR^m$ be a zero-mean Gaussian process whose covariance structure is given by
$$
E\[ \Z_n(u) \Z_n(\tilde u)'\] = \En[Z_iZ_i'](u\wedge \tilde u - u \tilde u), \text{ for all }  u \text{ and } \tilde u\text{ in } \mathcal{U},
$$
where $(Z_i)_{i=1}^n$ is a non-stochastic sequence in $\RR^m$. Then for any $\gamma \in (0,1/2)$,
$$
\sup_{|u-\tilde u|\leq \gamma} \| \Z_n(u) - \Z_n(\tilde u) \| \lesssim_P \sqrt{\varphi \gamma m \log (m/\gamma)},
$$
where $\varphi = \sup_{\alpha\in S^{m-1}} \En[(\alpha'Z_i)^2]$.
\end{lemma}
\begin{proof}
We will use the following maximal inequality for Gaussian processes
(Proposition A.2.7 in van~der Vaart and Wellner \cite{vdV-W}). Let $X = (X_t)_{t\in T}$ be a separable zero-mean
Gaussian process indexed by a set $T$. Suppose that for some
$K>\sigma(X) = \sup_{t \in T}\sigma(X_t)$ and $0< \epsilon_0\leq
\sigma(X)$, we have
\begin{equation}\label{eq: entropy bound lem 11}
N(\epsilon, T,\rho)  \leq \left( \frac{K}{\epsilon}\right)^V, \ \mbox{for all} \ 0 <\epsilon < \epsilon_0,
\end{equation}
where $N(\epsilon,T,\rho)$ is the covering number of $T$ by $\epsilon$-balls with respect to the standard deviation metric $\rho(t,t') =\sigma(X_t - X_{t'})$. Then there exists a universal constant $D$ such that for every $\lambda \geq \sigma^2(X)(1+\sqrt{V})/\epsilon_0$,
\begin{equation}\label{eq: gaussian maximal inequality}
P\left( \sup_{t \in T} X_t > \lambda \right) \leq \left(\frac{DK\lambda}{\sqrt{V}\sigma^2(X)} \right)^V \bar{\Phi}(\lambda /\sigma(X)),
\end{equation}
where $\bar \Phi = 1 - \Phi$, and $\Phi$ is the cumulative distribution function of a standard Gaussian random variable.

We apply this result to the zero-mean Gaussian process $X_n\colon S^{m-1}\times \mathcal{U}\times \mathcal{U} \to \RR$ defined as
$$
X_{n,t} = \alpha'( \Z_n(u) - \Z_n(\tilde u)), \ \ t=(\alpha,u,\tilde u), \ \alpha \in S^{m-1}, \ |u-\tilde u|\leq \gamma,
$$
since this process is such that $\sup_{t \in T} X_{n,t} = \sup_{|u-\tilde u|\leq \gamma}\|\Z_n(u) - \Z_n(\tilde u)\|$. For this process, we have
$$
\sigma(X_n) \leq \sqrt{\gamma \sup_{\alpha\in S^{m-1}} \En[(\alpha'Z_i)^2]},
$$
and \eqref{eq: entropy bound lem 11} holds with
$$
\epsilon_0 = \sigma(X_n),\ \ K \lesssim \sqrt{\sup_{\alpha\in S^{m-1}} \En[(\alpha'Z_i)^2]},\ \ \mbox{and} \  V \lesssim m.
$$
Therefore, the result follows by setting
$$
\lambda = C\sqrt{\gamma m \log (m/\gamma) \sup_{\alpha\in S^{m-1}} \En[(\alpha'Z_i)^2]},
$$
where $C$ is a sufficiently large constant, and using \eqref{eq: gaussian maximal inequality}. This completes the proof of the lemma.
\end{proof}

In what follows, as before, given a process $\Z$ in $[\ell^{\infty}(\mathcal{U})]^m$ and its projection $\Z \circ \pi_j$, whose paths are step functions with at most $2^j$ steps, we shall identify the process $\Z \circ \pi_j$ with a random vector $\Z \circ \pi_j$ in $\Bbb{R}^{m 2^j}$, when convenient. Analogously, given a random vector $W$ in $\Bbb{R}^{m 2^j}$ we identify it with a process $W$ in $[\ell^{\infty}(\mathcal{U})]^{m}$, whose paths are step functions with at most $2^j$ steps.

\begin{lemma}\label{lemma:prescribed}(Construction of a Gaussian Process with a Pre-scribed Projection)
Let $\N_j$ be a given random vector such that $$\N_j =_d \tilde \Z \circ \pi_j =: N(0, \Sigma_j),$$ where $\Sigma_j:= {\rm Var}(\N_j)$ and $\tilde \Z$ is a zero-mean Gaussian process in $[\ell^{\infty}(\mathcal{U})]^m$ whose paths are a.s. uniformly continuous with respect to the Euclidian metric $|\cdot|$ on $\mathcal{U}$. There exists a zero-mean Gaussian process in $[\ell^{\infty}(\mathcal{U})]^m$, whose paths are a.s. uniformly continuous with respect to the Euclidian metric $|\cdot|$ on $\mathcal{U}$, such that
$$
\N_j = \Z \circ \pi_j \text{ and } \Z=_d \tilde \Z \text{ in } [\ell^{\infty}(\mathcal{U})]^m.
$$
 \end{lemma}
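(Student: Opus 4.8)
The plan is to realize $\Z$ through the usual Gaussian conditioning decomposition of $\tilde\Z$ relative to the finite set of grid values $\tilde\Z\circ\pi_j$, and then to substitute the prescribed vector $\N_j$ for those grid values. Since $\pi_j$ has finite range, identify $\tilde\Z\circ\pi_j$ with a centered Gaussian vector $G_j$ in $\RR^{2^jm}$ of covariance $\Sigma_j$. For each $u\in\mathcal{U}$ the conditional mean $\Ep[\tilde\Z(u)\mid G_j]$ is a deterministic linear image $L_uG_j$ of $G_j$, where $L_u$ has rows $\Ep[\tilde\Z_v(u)\,G_j']\,\Sigma_j^{+}$ with $\Sigma_j^{+}$ the Moore--Penrose pseudo-inverse; this is legitimate because $G_j$ lies a.s.\ in $\mathrm{range}(\Sigma_j)$. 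Put $W(u):=\tilde\Z(u)-L_uG_j$; by joint Gaussianity $W$ is independent of $G_j$, and $W\circ\pi_j\equiv 0$ since a grid value is its own conditional mean. On a (possibly enlarged) probability space carrying $\N_j$, let $W'$ be an independent copy of $W$, independent of $\N_j$, and define
\[ \Z(u):=L_u\N_j+W'(u),\qquad u\in\mathcal{U}. \]

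Next I would verify the three required properties. First, $\Z$ is centered Gaussian, being a deterministic linear image of the Gaussian vector $\N_j$ plus the independent Gaussian process $W'$. Second, $\Z\circ\pi_j=\N_j$: stacking $L_t\N_j$ over the grid points $t$ in the range of $\pi_j$ produces $\Sigma_j\Sigma_j^{+}\N_j$, which equals $\N_j$ a.s.\ because $\N_j=_d G_j\in\mathrm{range}(\Sigma_j)$ a.s., while $W'\circ\pi_j\equiv 0$ a.s.\ as for $W$. Third, $\Z=_d\tilde\Z$: since $\mathrm{Var}(\N_j)=\Sigma_j=\mathrm{Var}(G_j)$ we get $\mathrm{Cov}(L_u\N_j,L_{u'}\N_j)=\mathrm{Cov}(L_uG_j,L_{u'}G_j)=\mathrm{Cov}(\Ep[\tilde\Z(u)\mid G_j],\Ep[\tilde\Z(u')\mid G_j])$, and adding $\mathrm{Cov}(W'(u),W'(u'))=\mathrm{Cov}(W(u),W(u'))$ recovers $\mathrm{Cov}(\tilde\Z(u),\tilde\Z(u'))$ by the Pythagorean identity for the orthogonal projection $\tilde\Z(u)\mapsto\Ep[\tilde\Z(u)\mid G_j]$ in the Gaussian Hilbert space. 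Hence $\Z$ and $\tilde\Z$ have the same finite-dimensional distributions; since both have a.s.\ uniformly continuous paths on the totally bounded set $\mathcal{U}$, their laws live on the separable space of uniformly continuous $\RR^m$-valued functions, so equality of finite-dimensional distributions upgrades to $\Z=_d\tilde\Z$ in $[\ell^{\infty}(\mathcal{U})]^m$.

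Finally, I would confirm that $\Z$ has a.s.\ uniformly continuous paths. The copy $W'$ inherits uniform continuity from $\tilde\Z$. For the term $u\mapsto L_u\N_j$, it suffices that the covariance function of $\tilde\Z$ is continuous: if $u_n\to u$ then $\tilde\Z_v(u_n)\to\tilde\Z_v(u)$ a.s.\ by path continuity, and by Fernique's theorem $\sup_u|\tilde\Z_v(u)|$ has finite second moment, so the convergence holds in $L^2$ and the entries of $L_{u}$ are continuous in $u$; being continuous on a totally bounded set they are uniformly continuous. Thus $\Z=L_\cdot\N_j+W'$ has a.s.\ uniformly continuous paths, which completes the construction.

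I expect the one genuinely delicate point to be organizational rather than conceptual: carrying the possibly singular $\Sigma_j$ through the argument cleanly (handled by the pseudo-inverse together with the a.s.\ range condition for centered Gaussian vectors), and the passage from equality of finite-dimensional distributions to equality of laws in $[\ell^{\infty}(\mathcal{U})]^m$, which is exactly where the a.s.\ uniform path continuity — hence separability of the effective support — is invoked. No new probabilistic idea beyond the standard Gaussian conditioning/Pythagoras decomposition and a soft continuity-of-covariance argument is required.
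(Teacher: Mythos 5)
Your proof is correct, but it takes a genuinely different route from the paper's. The paper builds $\Z$ by iterated refinement: it extends $\N_j$ to vectors $\N_\ell$ on finer dyadic grids via the Gaussian conditional formula $\N_{\ell\setminus j}=\Sigma_{\ell\setminus j,j}\Sigma_{j,j}^{-1}\N_j+\eta_j$ with independent innovations $\eta_j$, takes the pointwise limit on the countable dense set $\mathcal{U}_0=\cup_\ell\cup_k u_{k\ell}$, proves a.s.\ uniform continuity there by transferring the modulus of continuity of $\tilde\Z$ through a Borel--Cantelli argument, extends by continuity, and finally identifies the law by testing against bounded Lipschitz functionals of $\Z\circ\pi_\ell\to\Z$. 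You instead perform a single conditional decomposition at level $j$, $\tilde\Z(u)=L_uG_j+W(u)$, and substitute $\N_j$ for $G_j$ while attaching an independent copy of the residual $W$; equality of laws then follows from matching covariances plus the standard fact that tight (separable) Borel laws on $\ell^\infty(\mathcal{U})$ are determined by their finite-dimensional marginals. Your version is shorter and avoids both the limiting construction and the Lipschitz-functional argument, at the price of two points you correctly flag and handle: continuity of $u\mapsto L_u$ (via a.s.\ convergence plus Fernique to get $L^2$ convergence, then uniform continuity on the compact $\mathcal{U}$) and the possibly singular $\Sigma_j$ (via the pseudo-inverse and the a.s.\ range condition), the latter being in fact slightly more careful than the paper, whose displayed formula uses $\Sigma_{j,j}^{-1}$. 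Both arguments implicitly enlarge the probability space to carry the independent Gaussian ingredients, which is permissible for the coupling statement being proved.
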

\begin{proof}  Consider a vector $\tilde \Z \circ \pi_\ell$ for $\ell  = j + 1$. Then $\tilde \N_j = \tilde \Z \circ \pi_j$ is a subvector of $\tilde \Z \circ \pi_\ell = \tilde \N_\ell$. Denote the remaining components of $\tilde \N_\ell$ as $\tilde \N_{\ell \setminus j}$. We can construct an identically distributed copy $\N_\ell$ of $\tilde \N_{\ell}$ such that $\N_j$ is a subvector of $\N_\ell$. Indeed, we
set $\N_{\ell}$ as a vector with components
$$
\N_j \text{ and } \N_{\ell\setminus j},
$$
arranged in appropriate order, namely that $\N_{\ell} \circ \pi_j = \N_j$,
where
$$
 \N_{\ell\setminus j} = \Sigma_{\ell \setminus j, j} \Sigma^{-1}_{j,j} \N_j + \eta_j,
$$
where $\eta_j \bot \N_j$ and  $\eta_j =_d N(0, \Sigma_{\ell \setminus j, \ell\setminus j} - \Sigma_{\ell\setminus j,j} \Sigma^{-1}_{j,j} \Sigma_{j, \ell\setminus j})$, where
$$
\left(
  \begin{array}{cc}
    \Sigma_{j,j} & \Sigma_{\ell\setminus j,j} \\
    \Sigma_{j, \ell\setminus j} &  \Sigma_{\ell \setminus j, \ell\setminus j}  \\
  \end{array}
\right)
 := \text{var}\left( \begin{array}{c}
            \tilde N_j \\
            \tilde N_{\ell \setminus j}
          \end{array} \right).
$$
Having constructed $\mathcal N_l = \mathcal N_{j + 1}$, we can proceed using the same procedure to construct $\mathcal N_{j + 2}$ from $\mathcal N_{j + 1}$. Repeating this procedure, we obtain the whole sequence $(\mathcal N_{l})_{l\geq j}$.

For each $l\geq j$, we then identify the vector $\N_\ell$ with a process $\N_\ell$  in $\ell^{\infty}(\mathcal{U})$, and define $G$ as the pointwise limit of this process:
$$
\Z(u):= \lim_{\ell \to \infty} \N_{\ell}(u) \text{ for each } u \in \mathcal{U}_0,
$$
where $\mathcal{U}_0 = (\cup_{j=1}^{\infty} \cup_{k=1}^{2^j} u_{kj})\cap \mathcal U$ is a countable dense subset of $\mathcal{U}$.  Note that the pointwise limit exists since by construction of $\{\pi_{\ell}\}$ and $\mathcal{U}_0$, for each $u \in \mathcal{U}_0$, we have that $\pi_{\ell}(u) = u$ for all $\ell \geq \ell(u)$, where $\ell(u)$ is a sufficiently large constant.

Next, we extend the process $G$ from $\mathcal U_0$ to $\mathcal U$ as follows.
By construction,  $\Z_{\ell} = \Z\circ \pi_{\ell} =_d \tilde \Z\circ \pi_{\ell}$.
Therefore, for each $\epsilon>0$, there exists $\eta(\epsilon)>0$ small enough such that
\begin{align*}
 \Pp\Big( \sup_{ u,\tilde u \in \mathcal{U}_0 : | u - \tilde u| \leq \eta(\epsilon) }  \| \Z (u) - \Z (\tilde u) \| \geq \epsilon \Big)
 & \leq \Pp\Big( \sup_{ | u - \tilde u| \leq \eta(\epsilon) } \sup_{k} \| \Z \circ \pi_k(u) - \Z \circ \pi_k(\tilde u) \| \geq \epsilon \Big) \\
 & \leq \Pp\Big( \sup_{ | u - \tilde u| \leq \eta(\epsilon) } \sup_{k} \| \tilde \Z \circ \pi_k(u) - \tilde \Z \circ \pi_k(\tilde u) \| \geq \epsilon \Big) \\
 & \leq \Pp\Big( \sup_{ | u - \tilde u| \leq \eta(\epsilon) }  \| \tilde \Z (u) - \tilde \Z (\tilde u) \| \geq \epsilon \Big) \leq \epsilon,
\end{align*}
where the last line holds because $\sup_{ | u - \tilde u| \leq \eta}  \| \tilde \Z (u) - \tilde \Z (\tilde u) \| \to 0$ as $\eta \to 0$ almost surely and thus also in probability,  by a.s. continuity of sample paths of $\tilde \Z$. Setting $\epsilon = 2^{-m}$ for each  $ m \in \mathbb{N}$ in the above display, and summing the resulting inequalities over $m$, we get a finite number on the right-hand side. Hence, by the Borel-Cantelli lemma,
$
|\Z(u) -\Z(\tilde u)| \leq 2^{-m} \text{ for all } | u- \tilde u| \leq \eta(2^{-m})
$
for all sufficiently large $m$ almost surely.  This implies that the sample path of the process $G$ is uniformly continuous on $\mathcal{U}_0$ almost surely, and so we can extend the process by continuity to a process $\{ \Z(u)\colon u \in \mathcal{U}\}$ such that its sample path is uniformly continuous almost surely.

Finally, in order to show that the law of $\Z$ is equal to the law of $\tilde \Z$ in $[\ell^{\infty}(\mathcal{U})]^m$, it suffices to demonstrate that
$$
\Ep[ g(\Z) ]= \Ep [g(\tilde \Z)] \text{ for all } g\colon [\ell^{\infty}(\mathcal{U})]^m \to \Bbb{R}\text{ with } |g(z) - g(\tilde z)| \leq \sup_{u \in \mathcal{U}} \| z(u) - \tilde z(u)\| \wedge 1.
$$
To do so, note that
\begin{align}\begin{split}\nonumber
\Big| \Ep [g(\Z)] - \Ep[ g(\tilde \Z)] \Big| & \leq \Big| \Ep[ g(\Z \circ \pi_{\ell}) ]  - \Ep[ g(\tilde \Z \circ \pi_{\ell})] \Big|  \\
& \quad +   \Ep \[ \sup_{u \in \mathcal{U}} \| \Z \circ \pi_\ell (u) - \Z (u)\|  \wedge 1 \]  \\
 & \quad + \Ep \[ \sup_{u \in \mathcal{U}} \| \tilde \Z \circ \pi_\ell (u) - \tilde \Z (u) \| \wedge 1 \]  \to 0 \text{ as } \ell \to \infty.
\end{split}\end{align}
Indeed, the first term on the right-hand side converges to zero by construction, and the second and
third terms converge to zero by the dominated convergence theorem since
$$
\Z \circ \pi_{\ell} \to \Z  \text{ and } \tilde \Z \circ \pi_{\ell} \to \tilde \Z  \text{ in } [\ell^{\infty}(\mathcal{U})]^m \text{ as } \ell \to \infty \text{ a.s.},
$$
holding due to a.s. uniform continuity of sample paths of $\Z$ and $\tilde \Z$. This completes the proof of the lemma.
\end{proof}

\section{Technical Lemmas on Bounding Empirical Errors}\label{App:EmpiricalProcess}


\comment{
\begin{lemma}[Control of Empirical Error]\label{Lemma:E3}
Under S.1-5, for any $t>0$ let
$$\epsilon(t):=\sup_{u \in \mathcal{U}, \|J^{1/2}_m(u) \delta \|\leq t} \left| \hat Q_u(\beta(u) + \delta) - Q_u(\beta(u) + \delta)-\left( \hat Q_u(\beta(u)) - Q_u(\beta(u))\right)\right|.$$
 Then, for $\kappa_0^2 = {\rm mineig}(\Sigma)$ we have
 $$\epsilon(t) \lesssim_P   \frac{ t }{\underline{f}^{1/2}\kappa_0}\sqrt{\frac{m \log( m\vee [L\underline{f}^{1/2}\kappa_0/t])}{n}}.$$
\end{lemma}
\begin{proof}
It follows from Lemma 5 in Belloni and Chernozhukov \cite{BC-SparseQR}  with $p=s=m$, $c_0 = 0$.
\end{proof}
}

\subsection{Some Preliminary Lemmas}

\begin{lemma}\label{Lemma:JJ}
Under Condition S, we have
$$
\|J(u) - \widetilde J(u)\| \lesssim  m^{-\kappa} = o(1),
$$
uniformly over $u\in\mathcal U$, where $\widetilde J(u) = \Ep[ f_{Y|X}(Z'\beta(u)|X)ZZ']$ and $J(u)$ is defined in \eqref{eq: J matrix}.
\end{lemma}
\begin{proof}
Note that $\|J(u) - \widetilde J(u)\| = \sup_{\alpha\in S^{m-1}}|\alpha'(J(u) - \widetilde J(u))\alpha|$. In addition,
\begin{align*}
|\alpha'(J(u) - \widetilde J(u))\alpha| & = \Ep\Big[ \Big|f_{Y|X}(Z'\beta(u)+R(u,X)|X) - f_{Y|X}(Z'\beta(u)|X)\Big| \cdot (Z'\alpha)^2 \Big]\\
& \lesssim E\Big[|R(u,X)| \cdot(Z'\alpha)^2\Big] \lesssim m^{-\kappa}
\end{align*}
uniformly over $u\in\mathcal U$ and $\alpha\in S^{m-1}$, where we used Condition S in the second line. The asserted claim follows.
\end{proof}

\begin{lemma}[Auxiliary Matrix]\label{Lemma:Auxiliary Matrix}
Suppose that Condition S holds. Then
$$
\|J(u_2) - J(u_1)\| \lesssim |u_2 - u_1|,\quad\text{uniformly over $u_1$ and $u_2$ in $\mathcal U$}.
$$
In addition,
$$
\left|\frac{z'( J^{-1}(u_2) - J^{-1}(u_1) ) \mathbb{U}(u_2)}{\sqrt{u_1(1-u_1)z' J^{-1}(u_1)\Sigma J^{-1}(u_1)z}}\right|\lesssim_P |u_2-u_1|\sqrt{m}
$$
uniformly over $z\in \{Z(x)\colon x\in\mathcal X\}$ and  $u_1, u_2 \in \mathcal{U}$ for $\mathbb U(u)$ defined in \eqref{Def:U}.
\end{lemma}
\begin{proof}
Recall that $J(u) = \Ep\[ f_{Y|X}(Q(u,X)|X)Z Z'\]$ for any $u
\in \mathcal{U}$. Moreover, we have $\sup_{u\in\mathcal{U}}\|J^{-1}(u)\mathbb{U}(u)\| \lesssim_P \sqrt{m}$
by Lemma \ref{Lemma:error-rate-0} since all eigenvalues of $J(u)$ are bounded below from zero uniformly over $u\in\mathcal U$. In addition, using the matrix identity $A^{-1} - B^{-1} = B^{-1}(B-A)A^{-1}$ with $A = J(u_2)$ and $B= J(u_1)$ gives
$$
J^{-1}(u_2) - J^{-1}(u_1)  =  J^{-1}(u_1) (J(u_1) - J(u_2) )J^{-1}(u_2).
$$
Also, since $|f_{Y|X}(Q(u_2,x)|x) - f_{Y|X}(Q(u_1,x)|x)| \lesssim |u_2-u_1|$ uniformly over $u_1,u_2\in\mathcal U$ and $x\in\mathcal X$ by Lemma
\ref{Lemma:PrimitiveD2}, it follows from the same argument as that used in the proof of Lemma \ref{Lemma:JJ} that
$$
\|J(u_2) - J(u_1)\| \lesssim |u_2-u_1|,\quad\text{uniformly over $u_1,u_2\in\mathcal U$,}
$$
which gives the first asserted claim. Further,
\begin{align*}
&\Big|\frac{z'( J^{-1}(u_2) - J^{-1}(u_1) ) \mathbb{U}(u_2)}{\sqrt{u_1(1-u_1)z' J^{-1}(u_1)\Sigma J^{-1}(u_1)z}}\Big|\\
&\qquad  = \Big| \frac{z' J^{-1}(u_1)}{\sqrt{u_1(1-u_1)z' J^{-1}(u_1)\Sigma J^{-1}(u_1)z}}( J(u_1) - J(u_2) ) J^{-1}(u_2)\mathbb{U}(u_2) \Big|\\
& \qquad \lesssim \frac{\|z' J^{-1}(u_1)\|}{\sqrt{z' J^{-1}(u_1)\Sigma J^{-1}(u_1)z}}\cdot\Big\| J(u_1) - J(u_2)\Big\|\cdot \Big\|J^{-1}(u_2) \mathbb U(u_2)\Big\| \lesssim_P |u_2 - u_1|\sqrt m,
\end{align*}
uniformly over $u_1,u_2\in\mathcal U$, where in the third line, we used the fact that $\mathcal U\subset (0,1)$ is compact to show that $u_1(1- u_1)$ is bounded away from zero uniformly over $u_1\in\mathcal U$ and also the fact that all eigenvalues of the matrix $\Sigma$ are bounded away from zero to show that
$$
\frac{\|z' J^{-1}(u_1)\|}{\sqrt{z' J^{-1}(u_1)\Sigma J^{-1}(u_1)z}} \lesssim 1
$$
uniformly over $u_1\in\mathcal U$. This gives the second asserted claim and completes the proof of the lemma.
\end{proof}

\begin{lemma}\label{Lemma:PrimitiveD2}
Suppose that Condition S holds. Then
\begin{align}
&|u_2-u_1| \lesssim |Q(u_2,x) - Q(u_1,x)| \lesssim |u_2-u_1|,\label{eq: Q lipschitz property}\\
&|f_{Y|X}(Q(u_2,x)|x) - f_{Y|X}(Q(u_1,x)|x)| \lesssim |u_2 - u_1|\label{eq: f lipschitz property},
\end{align}
uniformly over $u_1,u_2\in\mathcal U$ and $x\in\mathcal X$. In addition,
\begin{equation}\label{eq: second derivative of Q}
\left|\frac{\partial^2Q(u,x)}{\partial u^2}\right| \lesssim 1
\end{equation}
uniformly over $u\in\mathcal U$ and $x\in\mathcal X$.
\end{lemma}
\begin{proof}
Since $U | X\sim \Uniform(0,1)$, it follows that for all $u\in\mathcal U$ and $x\in\mathcal X$, we have
$$
u = \int_{-\infty}^{Q(u,x)} f_{Y|X}(y|x)d y,
$$
and so
\begin{equation}\label{eq: first derivative Q}
\frac{\partial Q(u,x)}{\partial u} = \frac{1}{f_{Y|X}(Q(u,x)|x)}.
\end{equation}
Hence,
$$
\frac{\partial^2 Q(u,x)}{\partial u^2} = - \frac{1}{f_{Y|X}^2(Q(u,x)|x)}\cdot f'_{Y|X}(Q(u,x)|x)\cdot \frac{\partial Q(u,x)}{\partial u} = - \frac{f_{Y|X}'(Q(u,x)|x)}{f_{Y|X}^3(Q(u,x)|x)},
$$
where $f_{Y|X}'(y|x)$ denotes the derivative of the function $y\mapsto f_{Y|X}(y|x)$. Hence, it follows from Condition S that \eqref{eq: second derivative of Q} holds uniformly over $u\in\mathcal U$.

Also, combining Condition S and \eqref{eq: first derivative Q} shows that $1\lesssim \partial Q(u,x)/\partial u \lesssim1$ uniformly over $u\in\mathcal U$ and $x\in\mathcal X$, and since $\mathcal U\subset (0,1)$ is a connected compact set (that is, closed interval), this implies that \eqref{eq: Q lipschitz property} holds uniformly over $u_1,u_2\in\mathcal U$ and $x\in\mathcal X$.

Finally, combining Condition S and \eqref{eq: Q lipschitz property} shows that \eqref{eq: f lipschitz property} hold uniformly over $u\in\mathcal U$ and $x\in\mathcal X$. This completes the proof of the lemma.
\end{proof}

\subsection{Maximal Inequalities}\label{Sec:MaxIneq}

In this section we derive some maximal inequalities that are useful to prove main results of the paper. Let $Z_1,\dots,Z_n$ be a sequence of independent random variables taking values in some set $\mathcal Z$.
Let $\mathcal{F}$ be a class of functions defined on $\mathcal Z$, and let
$$
F(z) \geq \sup_{f\in\mathcal F}|f(z)|,\quad z\in\mathcal Z,
$$
be an envelope of $\mathcal F$. Let $\mathbb P_n$ be the empirical measure corresponding to the sequence $Z_1,\dots,Z_n$, and let
$$
\mathbb G_n(f) = \frac{1}{\sqrt n}\sum_{i=1}^n\Big(f(Z_i) - E[f(Z_i)]\Big),\quad f\in\mathcal F,
$$
be the corresponding empirical process on $\mathcal F$. For a probability measure $Q$ and a constant $p>1$, such that $\|F\|_{Q,p} > 0$, we use $N(\varepsilon\|F\|_{Q,p}, \mathcal{F},L_p(Q))$ to denote the minimal number of $L_p(Q)$-balls of radius $\varepsilon\|F\|_{Q,p}$
needed to cover $\mathcal{F}$. Following literature, we refer to $\sup_Q \log N(\epsilon\|F\|_{Q,2},\mathcal F,L_2(Q))$, where the supremum is taken over all finitely-discrete probability measures $Q$, as a uniform entropy number of $\mathcal F$; see Dudley \cite{Dudley2000} for details of the definitions.

Below we derive some bounds on $\sup_{f\in\mathcal F}|\mathbb G_n(f)|$. Importantly, our bounds do not require that the random variables $Z_i$ are identically distributed, and it will be sufficient to assume only that these random variables are independent. In addition, the function class $\mathcal F$ will actually be allowed to depend on the sample size $n$ via the sequence $(m_n)_{n\geq 1}$, that is, we consider the case where $\mathcal F = \mathcal F_m$ and $m = m_n$.

Since $Z_i$'s are not necessarily identically distributed, we will use $E[f^2]$ and $E[f^4]$ to denote $n^{-1}\sum_{i=1}^n E[f(Z_i)^2]$ and $n^{-1}\sum_{i=1}^n E[f(Z_i)^4]$, respectively.
Further, we will say that the covering numbers $N(\epsilon\|F\|_{\mathbb P_{n,2}},\mathcal F, L_2(\mathbb P_n))$, corresponding to the empirical measure $\mathbb P_n$, satisfy the monotonicity hypotheses if
$$
N(\epsilon\|F\|_{\mathbb P_n,2}, \mathcal{F}, L_2(\mathbb P_n)) \leq
n(\epsilon, \mathcal{F},\mathbb P_n) \text{ for all } 0<\epsilon \leq 1,
$$
where
$n(\epsilon, \mathcal{F},\mathbb P_n)$ is such that (i) $\epsilon\mapsto n(\epsilon, \mathcal{F},\mathbb P_n)$ is decreasing, (ii) $\epsilon \mapsto \epsilon \sqrt{\log n(\epsilon, \mathcal{F},\mathbb P_n)}$ is increasing, and (iii) $\epsilon \sqrt{\log n(\epsilon, \mathcal{F},\mathbb P_n)}\to 0$ as $\epsilon \to 0$. Also, we define $\rho(\mathcal{F},\mathbb P_n) = \sup_{f \in \mathcal{F}} (\|f\|_{\mathbb P_n,2}/ \|F\|_{\mathbb P_n,2})$. In what follows, we refer to any function
$x\colon \mathcal Z^n \mapsto \RR$ as $k$-sub-exchangeable if for any $v, w \in
\mathcal Z^n$ and any vectors $\tilde v, \tilde w$ created by the
pairwise exchange of some components in $v$ with the corresponding components in $w$,
we have that $ x(\tilde v) \vee x(\tilde w) \geq [x(v) \vee
x(w)]/k$.

\begin{lemma}[Exponential inequality for separable empirical process]\label{expo3} Consider the setting specified above. Suppose that the empirical process $\{\mathbb G_n(f),f\in\mathcal F\}$ is separable. Also, suppose that the covering numbers $N(\epsilon\|F\|_{\mathbb P_{n,2}},\mathcal F, L_2(\mathbb P_n))$ satisfy the monotonicity hypotheses. Further, let $K>1$ and $\tau \in (0,1)$ be constants,  and  $e_n(\mathcal{F}, \mathbb{P}_n)=e_n(\mathcal{F},Z_1,\ldots,Z_n)$ be a $k$-sub-exchangeable random variable, such that
\begin{equation}\label{eq: exp ineq sym 1}
\|F\|_{\mathbb{P}_n,2} \int_{0}^{\rho(\mathcal{F},\mathbb{P}_n)/4}
\sqrt{\log  n (\epsilon, \mathcal{F}, \mathbb{P}_n)} d \epsilon  \leq
e_n(\mathcal{F}, \mathbb{P}_n)
\end{equation}
and
\begin{equation}\label{eq: exp ineq sym 2}
\sup_{f \in \mathcal{F}} \frac{1}{n}\sum_{i=1}^n \Big(E[f(Z_i)^2] - (E[f(Z_i)])^2\Big) \leq \frac{\tau}{2} (4kcK
e_n(\mathcal{F},\mathbb{P}_n) )^2
\end{equation}
for some universal constant $c>1$. Then
\begin{align*}
&\mathbb{P}\left\{\sup_{f \in \mathcal{F}} |\mathbb{G}_n(f)| \geq
4kcK e_n(\mathcal{F}, \mathbb{P}_n)\right\} \\
&\qquad  \leq  \frac{4}{\tau}
\Ep_{\mathbb{P}}\left(\left[ \int_{0}^{\rho(\mathcal{F},
\mathbb{P}_n)/2}  \epsilon^{-1} n (\epsilon, \mathcal{F}, \mathbb{P}_n)^{-\{ K^2
-1 \}} d\epsilon  \right]\wedge 1 \right) + \tau.
\end{align*}
\end{lemma}
\begin{proof}
See Lemma 18 in Belloni and Chernozhukov \cite{BC-SparseQR} and note that the proof there does not require identically distributed $Z_i$'s since only independence among $Z_i$'s is used.
\end{proof}

The next lemma establishes new maximal inequalities, which are most useful for the purposes of this paper.

\begin{lemma}\label{Lemma:MaxIneq2} Suppose that for all $n\geq 1$ and $0 < \epsilon \leq 1$, we have
\begin{equation}\label{Eq:J}
N(\epsilon\|F_m\|_{\mathbb P_n,2}, \mathcal{F}_m,L_2(\mathbb P_n)) \leq ( \omega / \epsilon )^{J(m)^2}
\end{equation}
\noindent
for some $\omega = \omega_n$ such that $\log \omega \lesssim \log n$ and some $J(m) = J(m_n)$ such that $J(m)\geq 1$, where $F_m \geq \sup_{f \in \mathcal{F}_m} |f|$ is an envelope of $\mathcal{F}_m$. Let $M_m$ be a random variable that satisfies $\max_{1\leq i\leq n}F_m(Z_i) \lesssim_P M_m$. Then we have the following results.

\noindent
1. A Maximal Inequality Based on Entropy and Moments:
{\small
\begin{align*}
&\sup_{f \in \mathcal{F}_m} | \Gn (f) |\\
&\quad  \lesssim_P  J(m) \sup_{f \in
 \mathcal{F}_m}\( \Ep [f^2]   +  J(m) \Big(\En [f^4] + \Ep[f^4] + n^{-1}\En[F_m^4] \Big)^{1/2}\Big(\frac{\log n}{n}\Big)^{1/2} \)^{1/2} \log^{1/2}n.
\end{align*}}
2. A Maximal Inequality Based on Entropy, Moments,
and Random Extremum:
{\small
$$
 \sup_{f \in \mathcal{F}_m} | \Gn (f ) | \lesssim_P J(m) \(\sup_{f \in \mathcal{F}_m}\(
 \Ep [f^2]   + J(m)\Big(\frac{E[f^4]\log n}{n}\Big)^{1/2}\) +  J(m)^2  \frac{M_m^2\log n}{n} \)^{1/2}\log^{1/2}n.
$$}
3. A Maximal Inequality Based on Entropy, Moments,
and Non-Random Extremum:
{\small
$$
\sup_{f \in \mathcal{F}_m} | \Gn (f ) | \lesssim_P J(m) \(\sup_{f \in \mathcal{F}_m}
 \Ep [f^2]  +  J(m)^2  \frac{\bar F_m^2\log n}{n} \)^{1/2}\log^{1/2}n,
$$
}
if there exists a non-stochastic constant $\bar F_m = \bar F_{m,n}\in\mathbb R$ such that $\sup_{z\in\mathcal Z}F_m(z)\leq \bar F_m$.

\end{lemma}
\begin{proof} Note that if the condition \eqref{Eq:J} holds for some $\omega$, it also holds with $\omega$ replaced by any $\omega' > \omega$. Therefore, it is without loss of generality to assume that $\omega\geq 1$, which we do. Also, since $\log \omega \lesssim \log n$, there exists a constant $C$ such that $\log \omega \leq C\log n$. This constant will be used later in the proof.

We divide the proof into three steps. Step 1 consists of the main argument, Step 2 is an application of Lemma \ref{expo3}, and Step 3 contains some auxiliary calculations.

\noindent
{\bf Step 1} (Main Argument). We first prove the first asserted claim. By Step 2 below, we have
\begin{equation}\label{Eq:First}
 \sup_{f \in \mathcal{F}_m} | \Gn (f ) | \lesssim_P J(m) \sup_{f\in \mathcal{F}_m} \Big(\En[f^2] + \Ep[f^2] + n^{-1}\En[F_m^2]\Big)^{1/2}\log^{1/2}n.
\end{equation}
Also by Step 2,
\begin{equation}\label{eq: exp ineq step 2 - 2}
\sup_{f \in \mathcal{F}_m} {|\Gn (f^2)|} \lesssim_P J(m) \sup_{f\in \mathcal{F}_m} \Big(\En[f^4] +\Ep[f^4] + n^{-1}\En[F_m^4]\Big)^{1/2}\log^{1/2}n.
\end{equation}
Now, by the triangle inequality and \eqref{eq: exp ineq step 2 - 2},
\begin{align}
&\sup_{f\in\mathcal F_m}\En[f^2]
\leq \sup_{f\in\mathcal F_m}\Ep[f^2] + n^{-1/2}\sup_{f\in\mathcal F_m}|\mathbb G_n(f^2)|\label{eq: fourth moment bound 1}\\
&\quad \lesssim_P \sup_{f\in\mathcal F_m}\Ep[f^2] + n^{-1/2} J(m) \sup_{f\in \mathcal{F}_m} \Big(\En[f^4] +\Ep[f^4] + n^{-1}\En[F_m^4]\Big)^{1/2}\log^{1/2}n.\label{eq: fourth moment bound 2}
\end{align}
Substituting this bound into \eqref{Eq:First} and using inequalities $\En[F_m^2] \leq (\En[F_m^4])^{1/2}$, $J(m)\geq 1$, and $\log^{1/2}n\geq 1$ gives the first asserted claim.

Next, we prove the second asserted claim. Using \eqref{eq: fourth moment bound 1}-\eqref{eq: fourth moment bound 2} and noting that $\mathbb E_n[f^4]\lesssim_P \mathbb E_n[f^2] M^2_m$ uniformly over $f\in\mathcal F_m$ and that $\En[F_m^4] \leq M_n^4$ gives
\begin{align*}
\sup_{f\in\mathcal F_m}\En[f^2]
&\lesssim_P \sup_{f\in\mathcal F_m}\Big(\Ep[f^2] + J(m) \Big(\frac{(E[f^4] + M_n^4/n)\log n}{n}\Big)^{1/2} \Big) \\
&\qquad + J(m) \Big(\frac{M_m^2 \log n}{n}\Big)^{1/2}\sup_{f \in \mathcal{F}_m}\Big(\En [f^2]\Big)^{1/2}.
\end{align*}
Since for positive numbers $a$, $c$, and $x$, $x\leq a + c|x|^{1/2}$ implies that $x \leq 4a + 4c^2$, this inequality gives
\begin{align*}
\sup_{f \in \mathcal{F}_m} \En [f^2]
&\lesssim_P  \sup_{f \in
\mathcal{F}_m}\left( \Ep [f^2] + J(m) \Big(\frac{(E[f^4] + M_n^4/n)\log n}{n}\Big)^{1/2}  \right) + J(m)^2 \frac{M_m^2\log n}{n}\\
&\lesssim_P  \sup_{f \in
\mathcal{F}_m}\left( \Ep [f^2] + J(m) \Big(\frac{E[f^4]\log n}{n}\Big)^{1/2}  \right) + J(m)^2 \frac{M_m^2\log n}{n}.
\end{align*}
Substituting this bound into (\ref{Eq:First}) and noting that $\En[F_m^2] \leq M_n^2$ gives the second asserted claim.

Finally, the third asserted claim follows from the second one by substituting $\bar F_m$ instead of $M_m$ and using the inequality $E[f^4]\leq \bar F_m E[f^2]$, which holds for any $f\in\mathcal F_m$.

\noindent
{\bf Step 2} (Applying Lemma \ref{expo3}). Here we prove \eqref{Eq:First} and \eqref{eq: exp ineq step 2 - 2}.
In fact, note that by Lemma \ref{Lemma:ProductEntropy}, \eqref{Eq:J} implies that
$$
N(\epsilon\|F^2_m\|_{\mathbb P_n,2}, \mathcal{F}^2_m,L_2(\mathbb P_n)) \leq ( 2\omega / \epsilon )^{2J(m)^2},
$$
so that  \eqref{eq: exp ineq step 2 - 2} follows from the same argument as that used for \eqref{Eq:First}, so we only prove \eqref{Eq:First}. To do so, we apply Lemma \ref{expo3} to $\mathcal F  = \mathcal{F}_m$ with
$$
\tau = \tau_m = \frac{4}{J(m)^2 (K^2-1)}
$$
for some large constant $K$ to be set later, and
$$
e_n(\mathcal{F}_m,\Pn)  =  \sqrt{C+2} J(m)\sup_{f\in \mathcal{F}_m}\left((\En[f^2])^{1/2} + (\Ep[f^2])^{1/2} + (n^{-1}\En[F_m^2])^{1/2}\right)\log^{1/2}n,
$$
where $C$ is the constant appearing in the beginning of the proof.

We first verify that conditions of Lemma \ref{expo3} hold for all $n$ large enough. Note that by (\ref{Eq:J}), the covering numbers $N(\epsilon\|F_m\|_{\mathbb P_{n,2}},\mathcal F_m, L_2(\mathbb P_n))$ satisfy the monotonicity hypotheses for $n(\epsilon,\mathcal F_m,\mathbb P_n) = (\omega/\epsilon)^{J(m)^2}$. Next, by Step 3 below, the function $(Z_1,\dots,Z_n)\mapsto \sup_{f \in \mathcal{F}_m}\|f\|_{\mathbb{P}_n,2}$ is $\sqrt{2}$-sub-exchangeable, and so is $e_n(\mathcal{F}_m,\mathbb{P}_n)$.
 Also, denoting $\tilde \rho = n^{-1/2} \vee \rho(\mathcal F_m,\mathbb P_n)/4$, we have
  \begin{align*}
   \|F_m\|_{\mathbb{P}_n,2} \int_{0}^{\rho(\mathcal{F}_m,\mathbb{P}_n)/4}  \sqrt{\log  n (\epsilon, \mathcal{F}_m,\mathbb P_n)} d \epsilon
 & \leq   \|F_m\|_{\mathbb{P}_n,2} \int_{0}^{\rho(\mathcal{F}_m,\mathbb{P}_n)/4} J(m)\sqrt{\log(\omega / \epsilon)} d\epsilon  \\
 &\leq \|F_m\|_{\mathbb P_n,2} \int_0^{\tilde \rho}J(m)\sqrt{\log(\omega/\epsilon)}d\epsilon\\
 &\leq \|F_m\|_{\mathbb P_n,2}J(m)\tilde\rho \sqrt{(C + 2)\log n}\\
& \leq  e_n(\mathcal{F}_m, \mathbb{P}_n),
\end{align*}
where the third line follows from
\begin{align*}
\int_0^{\tilde \rho}
\sqrt{\log(\omega/\epsilon)} d\epsilon
&\leq \Big( \int_0^{\tilde \rho} 1
d\epsilon \Big)^{1/2}\Big(\int_0^{\tilde \rho} \log(\omega/\epsilon)
d\epsilon \Big)^{1/2}\\
&=\tilde \rho^{1/2} \Big(\int_0^{\tilde \rho} (\log \omega + \log (1/\epsilon)) d\epsilon\Big)^{1/2}\\
&=\tilde \rho^{1/2}\Big(\tilde \rho\log \omega + \tilde \rho\log(1/\tilde \rho) + \tilde \rho\Big)^{1/2}
\leq \tilde\rho \sqrt{(C+ 2)\log n },
\end{align*}
since $\tilde \rho \geq n^{-1/2}$ and $\log\omega \leq C\log n$. Hence, condition \eqref{eq: exp ineq sym 1} is satisfied.
Moreover, the condition \eqref{eq: exp ineq sym 2} is satisfies by our choice of $e_n(\mathcal F_m,\mathbb P_n)$ for sufficiently large $n$.


Therefore, all conditions of Lemma \ref{expo3} are satisfied and its application gives
\begin{align}
\mathbb{P} \Big( \sup_{f \in \mathcal{F}_m} |
\mathbb{G}_n(f)|  >   4 \sqrt{2}c K e_n(\mathcal{F}_m,
\mathbb{P}_n)\Big)
&  \leq  \frac{4}{\tau_m}\int_0^{1/2} \frac{\omega^{-J(m)^2(K^2 - 1)}}{\epsilon^{1 - J(m)^2(K^2-1)}}  d \epsilon + \tau_m \notag\\
&  \leq  \frac{4}{\tau_m} \frac{(1/(2\omega))^{J(m)^2[K^2-1]}}{J(m)^2[K^2-1]} + \tau_m\notag\\
& = \Big(\frac{1}{2\omega}\Big)^{J(m)^2[K^2-1]} + \frac{4}{J(m)^2(K^2 - 1)},\label{eq: exp ineq step 2 last}
\end{align}
where we used the inequality $\rho(\mathcal{F}_m,\Pn)\leq 1$. Since $\omega\geq 1$ and $J(m)\geq 1$, the expression in \eqref{eq: exp ineq step 2 last} can be made arbitrarily small by setting $K$ sufficiently large. Hence, \eqref{Eq:First} follows.

\noindent
{\bf Step 3} (Auxiliary calculations). Here we establish that the function $(Z_1,\dots,Z_n)\mapsto \sup_{f\in\mathcal F_m}(\En[f(Z_i)^2 + F_m(Z_i)^2/n])^{1/2}$ mapping $\mathcal Z^n$ into $\mathbb R$ is $\sqrt 2$-sub-exchangeable. Indeed, let $Z = (Z_1,\dots,Z_n)$ and $Y = (Y_1,\dots,Y_n)$ be two elements of $\mathcal Z^n$ and define $\tilde Z$ and $\tilde Y$
by exchanging some components in $Z$ with corresponding components in $Y$. Then
\begin{align*}
&2\Big(\sup_{f\in\mathcal F_m}(\En[f(\tilde Z_i)^2 + F_m(\tilde Z_i)^2/n]) \vee \sup_{f\in\mathcal F_m}(\En[f(\tilde Y_i)^2 + F_m(\tilde Y_i)^2/n])\Big)\\
&\quad \geq \sup_{f\in\mathcal F_m}(\En[f(\tilde Z_i)^2 + F_m(\tilde Z_i)^2/n]) + \sup_{f\in\mathcal F_m}(\En[f(\tilde Y_i)^2 + F_m(\tilde Y_i)^2/n]) \\
&\quad \geq \sup_{f\in\mathcal F_m}\Big((\En[f(\tilde Z_i)^2 + F_m(\tilde Z_i)^2/n]) + (\En[f(\tilde Y_i)^2 + F_m(\tilde Y_i)^2/n])\Big)\\
&\quad \geq \sup_{f\in\mathcal F_m}\Big((\En[f(Z_i)^2 + F_m(Z_i)^2/n]) + (\En[f(Y_i)^2 + F_m(Y_i)^2/n])\Big)\\
&\quad \geq \sup_{f\in\mathcal F_m}(\En[f(Z_i)^2 + F_m(Z_i)^2/n]) \vee \sup_{f\in\mathcal F_m} (\En[f(Y_i)^2 + F_m(Y_i)^2/n]).
\end{align*}
This gives the asserted claim and completes the proof of the lemma.
\end{proof}

\subsection{Uniform Entropy Numbers}

\begin{lemma}[Uniform Entropy of VC classes]\label{VCclass} Suppose that the class of functions $\mathcal{F}$ has VC index $V$ and an envelope $F$. Then for some absolute constants $c$ and $C$,
$$
\sup_Q N(\epsilon\|F\|_{Q,2},
\mathcal{F},L_2(Q)) \leq ( C/\epsilon )^{c V},\quad\text{for all }0<\epsilon\leq 1,
$$
where $Q$ ranges over all finitely-discrete probabilities measures.
\end{lemma}
\begin{proof}
The bound follows from Theorem 2.6.7 in van~der Vaart and Wellner \cite{vdV-W}.
\end{proof}


\begin{lemma}[Uniform entropy for products and sums]\label{Lemma:ProductEntropy}
Let $\mathcal{F}$ and $\mathcal{G}$ be two classes of functions with envelopes $F$ and $G$ respectively. Then the uniform entropy numbers of
$ \mathcal{F} \mathcal{G} = \{ f g\colon f \in \mathcal{F}, \ g \in \mathcal{G} \}$
satisfy
\begin{align*}
&\sup_Q \log N(\epsilon\|FG\|_{Q,2},
\mathcal{F}\mathcal{G},L_2(Q))\\
&\qquad  \leq \sup_Q \log N\Big(\frac{\epsilon \|F\|_{Q,2}}{2},
\mathcal{F},L_2(Q)\Big) + \sup_Q \log N\Big(\frac{\epsilon\|G\|_{Q,2}}{2},
\mathcal{G},L_2(Q)\Big)
\end{align*}
for all $\epsilon > 0$. Also, the uniform entropy numbers of $\mathcal F + \mathcal G = \{f + g\colon f\in\mathcal F, \ g\in\mathcal G\}$ satisfy
\begin{align*}
&\sup_{Q} \log N(\epsilon\|F + G\|_{Q,2},\mathcal F + \mathcal G,L_2(Q))\\
&\qquad  \leq \sup_Q \log N\Big(\frac{\epsilon\|F\|_{Q,2}}{2},
\mathcal{F},L_2(Q)\Big) + \sup_Q \log N\Big(\frac{\epsilon\|G\|_{Q,2}}{2},
\mathcal{G},L_2(Q)\Big)
\end{align*}
for all $\epsilon > 0$. In both cases, $Q$ ranges over all finitely-discrete probability measures.
\end{lemma}
\begin{proof}
The result is proven in the proof of Theorem 3 of Andrews \cite{Andrews1994}.
\end{proof}


\begin{lemma}\label{Lemma:VC-F}
For any $r>0$, define the class of functions
\begin{multline*}
\mathcal{F}_{m,n} = \Big\{ (Z,Y)\mapsto (\alpha'Z)\cdot \Big(1\{Y\leq Z'\beta\} - 1\{Y \leq Z'\beta(u)\}\Big) \colon \\
 u \in \mathcal{U}, \  \alpha\in S^{m-1}, \  \| \beta - \beta(u) \| \leq
r  \Big\},
\end{multline*}
mapping $\mathbb R^m\times \mathbb R$ into $\mathbb R$, and let $F_{m,n}(Z,Y) = \|Z\|$ be its envelope. The uniform entropy numbers of $\mathcal  F_{m,n}$ satisfy
$$
\sup_Q \log N(\epsilon\|F_{m,n}\|_{Q,2},\mathcal F_{m,n},L_2(Q))\lesssim m\log (1/\epsilon),\quad\text{uniformly over }0<\epsilon\leq 1.
$$
\end{lemma}
\begin{proof}
Consider function classes
$$
\mathcal{W}_1= \Big\{(Z,Y)\mapsto \alpha'Z\colon
\alpha \in \RR^m\Big\}
\text{ and }\mathcal{V}_1= \Big\{(Z,Y)\mapsto 1\{Y \leq
Z'\beta \}\colon \beta \in \RR^m\Big\}.
$$
Their VC indices are bounded by $m+2$ by Lemmas 2.6.15 and 2.6.18 in van~der Vaart and Wellner \cite{vdV-W}. Hence, since any $f\in\mathcal F_{m,n}$ can be written as $f = g\cdot(v - p)$ for $g\in\mathcal W_1$, $v\in\mathcal V_1$, and $p\in\mathcal V_1$, the asserted claim follows from Lemmas \ref{VCclass} and \ref{Lemma:ProductEntropy}.
\end{proof}





\begin{lemma}\label{Lemma:VC-H} 
For any $r,h >0$, define the class of
functions
$$
\mathcal{H}_{m,n} =\Big\{(Z,Y)\mapsto (\alpha'Z)^2\cdot 1\{|Y - Z'\beta|\leq h\}
\colon u\in\mathcal U, \ \alpha\in S^{m-1}, \ \| \beta - \beta(u)\| \leq r \Big\},
$$
mapping $\mathbb R^m\times\mathbb R$ into $\mathbb R$, and let $H_{m,n}(Z,Y) = \|Z\|$ be its envelope. The uniform entropy numbers of $\mathcal  H_{m,n}$ satisfy
$$
\sup_Q \log N(\epsilon\|H_{m,n}\|_{Q,2},\mathcal H_{m,n},L_2(Q))\lesssim m\log (1/\epsilon),\quad\text{uniformly over }0<\epsilon\leq 1.
$$
\end{lemma}

\begin{proof}
Consider function classess
\begin{align*}
\mathcal{W}_2&= \Big\{(Z,Y)\mapsto 1\{Y - Z'\beta - a\leq
0 \}\colon \beta \in \RR^m, \ a\in\mathbb R\Big\},\\
\mathcal{V}_2&= \Big\{(Z,Y)\mapsto 1\{Y - Z'\beta - a<
0 \}\colon \beta \in \RR^m, \ a\in\mathbb R\Big\}.
\end{align*}
Their VC indices are $m+3$ by Lemmas 2.6.15 and 2.6.18 in van~der Vaart and Wellner \cite{vdV-W}. Hence, since any $f\in\mathcal H_{m,n}$ can be written as $f = g^2\cdot(v - p)$ for $g\in\mathcal W_1$, $v\in\mathcal W_2$, and $p\in \mathcal V_2$, the asserted claim follows from Lemmas \ref{VCclass} and \ref{Lemma:ProductEntropy}.
 \end{proof}

\begin{lemma}\label{Lemma:VC-G}
Define the class of functions
$$
\mathcal{G}_{m,n} = \Big\{(Z,Y)\mapsto (\alpha'Z)\cdot \Big(1\{Y\leq Z'\beta(u)\} - u\Big)\colon  u \in
\mathcal{U}, \  \alpha \in S^{m-1} \Big\},
$$
mapping $\mathbb R^m\times\mathbb R$ into $\mathbb R$, and let $G_{m,n}(Z,Y) = \|Z\|$ be its envelope. The uniform entropy numbers of $\mathcal  G_{m,n}$ satisfy
$$
\sup_Q \log N(\epsilon\|G_{m,n}\|_{Q,2},\mathcal G_{m,n},L_2(Q))\lesssim m\log (1/\epsilon),\quad\text{uniformly over }0<\epsilon\leq 1.
$$
\end{lemma}
\begin{proof}
Consider the function class
$$
\mathcal W_3 = \Big\{(Z,Y)\mapsto u\colon u\in\mathcal U\Big\}.
$$
Its VC index is $O(1)$. Hence, since any function $f\in\mathcal G_{m,n}$ can be written as $f = g\cdot(v - p)$ for $g\in\mathcal W_1$, $v\in\mathcal V_1$, and $p\in\mathcal W_3$, the asserted claim follows from Lemmas \ref{VCclass} and \ref{Lemma:ProductEntropy}.
\end{proof}

\begin{lemma}\label{Lemma:VC-A}
Define the class of functions
$$
\mathcal{A}_{m,n} = \Big\{(X,Y)\mapsto (\alpha'Z(X))\cdot \Big( 1\{Y\leq Q(u,X) \} -  1\{Y\leq Z(X)'\beta(u) \}\Big) \colon u \in \mathcal{U}, \  \alpha \in S^{m-1} \Big\},
$$
mapping $\mathcal X\times \mathbb R$ into $\mathbb R$, and let $A_{m,n}(X,Y) = \|Z(X)\|$ be its envelope. The uniform entropy numbers of $\mathcal  A_{m,n}$ satisfy
$$
\sup_Q \log N(\epsilon\|A_{m,n}\|_{Q,2},\mathcal A_{m,n},L_2(Q))\lesssim m\log (1/\epsilon),\quad\text{uniformly over }0<\epsilon\leq 1.
$$
\end{lemma}
\begin{proof}
Consider the function class
$$
\mathcal V_3 = \Big\{(X,Y)\mapsto 1\{Y\leq Q(u,X)\} \colon u\in\mathcal U\Big\}.
$$
Note that since $u\mapsto Q(u,x)$ is increases for all $x\in\mathcal X$, it follows that
$$
\Big\{(X,Y)\colon Y \leq Q(u_1,X)\Big\}\subset \Big\{(X,Y)\colon Y \leq Q(u_2,X)\Big\}
$$
for all $u_1,u_2\in\mathcal U$ with $u_1<u_2$. Therefore, VC index of $\mathcal V_3$ is $O(1)$. Hence, the asserted claim follows from the same argument as that used in the proof of Lemma \ref{Lemma:VC-F}.
\end{proof}


\subsection{Eigenvalues of Gram Matrices}
Consider the maximum between the maximum eigenvalues associated with the empirical Gram matrix and the population Gram matrix:
\begin{equation}\label{Def:phi}
\phi_n=\max_{\alpha \in S^{m-1}} \En\[ (\alpha'Z_i)^2 \] \vee \Ep\[ (\alpha'Z_i)^2 \].
\end{equation}
The factor $\phi_n$ will be used to bound the quantities $\epsilon_0(m,n)$ and $\epsilon_1(m,n)$ in the analysis for the rates of convergence. To bound $\phi_n$, we use the following result due to Gu\'edon and Rudelson \cite{GuedonRudelson} specialized to our framework:

\begin{lemma}[Gu\'edon and Rudelson \cite{GuedonRudelson}]\label{Lemma:Rudelson} Let $(Z_i)_{i=1}^n$ be i.i.d. random vectors in $\mathbb R^m$. Suppose that
$$
\delta^2 := \frac{\log n}{n} \cdot \frac{\Ep[\max_{1\leq i\leq n} \|Z_i\|^2]}{\max_{\alpha \in S^{m-1}} \Ep[(Z_1'\alpha)^2]} < 1.
$$
Then we have
$$\Ep\[ \max_{\alpha \in S^{m-1}} \left| \frac{1}{n}\sum_{i=1}^n (Z_i'\alpha)^2 - \Ep[ (Z_i'\alpha)^2] \right| \] \leq 2\delta  \cdot \max_{\alpha \in S^{m-1}} \Ep[(Z_1'\alpha)^2].$$
\end{lemma}

\begin{corollary}\label{Corollary:phin}
Denote $\lambda_{max} = \max_{\alpha \in S^{m-1}} \Ep[(Z_1'\alpha)^2]$. Suppose that Condition S holds. In addition, suppose that $\zeta_m^2\log n = o(n)$. Then for all sufficiently large $n$, we have for $\phi_n$ defined in (\ref{Def:phi}) that
$$
\Ep\[\phi_n\] \leq \( 1 +  2\sqrt{\frac{\zeta_m^2\log n}{n\lambda_{max}}} \) \lambda_{max}  \ \ \mbox{and}\ \ P( \phi_n > 2 \lambda_{max} ) \leq 2\sqrt{\frac{\zeta_m^2\log n}{n\lambda_{max}}}.
$$
\end{corollary}
\begin{proof}
Let $\delta$ be defined as in Lemma \ref{Lemma:Rudelson}. Note that $\|Z_i\|^2 \leq \zeta_m^2$ for all $i=1,\dots,n$ and $1\lesssim \lambda_{\max}\lesssim 1$ by Condition S. Hence, $\delta^2 \lesssim \zeta_m^2\log n/n$, and so $\delta^2<1$ for all $n$ large enough under our assumption that $\zeta_m^2 \log n = o(n)$. Therefore, the first result follows by applying Lemma \ref{Lemma:Rudelson} and the triangle inequality.

To show the second result, note that the event $\{\phi_n>2\lambda_{max}\}$ cannot occur if $\phi_n = \max_{\alpha \in S^{m-1}} \Ep[(Z_1'\alpha)^2] = \lambda_{max}$. Thus,
\begin{align*}
P( \phi_n > 2 \lambda_{max} ) & = P\Big( \max_{\alpha \in S^{m-1}} \En[(Z_i'\alpha)^2] > 2 \lambda_{max} \Big) \\
&  \leq P\Big( \max_{\alpha \in S^{m-1}} \left| \En[(Z_i'\alpha)^2] - \Ep\[ (Z_i'\alpha)^2\] \right| > \lambda_{max}\Big) \\
&  \leq  \Ep\Big[ \max_{\alpha \in S^{m-1}} \left| \En[(Z_i'\alpha)^2] - \Ep\[ (Z_i'\alpha)^2\] \right| \Big]/ \lambda_{max} \leq 2\delta,
\end{align*}
 by the triangle inequality, Markov's inequality, and Lemma \ref{Lemma:Rudelson}. This completes the proof of the corollary.
\end{proof}


\subsection{Estimation of Matrices}

Some of the procedures proposed in this paper rely on estimators of the Gram and Jacobian matrices defined in (\ref{inf-est-hatSigma0}) and \eqref{inf-est-hatJ}, respectively. The following result
establishes rates of convergence of these estimators.

\begin{lemma}[Estimation of Gram and Jacobian Matrices]\label{covariance}
Suppose that Condition S holds. Then
\begin{equation}\label{eq: sigma estimator bound}
\|\hat \Sigma - \Sigma\| \lesssim_P \sqrt{\frac{\zeta_m^2\log n}{n}}
\end{equation}
for $\hat\Sigma$ defined in \eqref{inf-est-hatSigma0} as long as $\zeta_m^2 \log n = o(n)$. In addition,\begin{equation}\label{eq: j estimator bound}
 \sup_{u\in \mathcal{U}}  \|\hat J(u) - J(u)\| \lesssim_P \sqrt{\frac{m \zeta_m^2 \log n}{n\hn}} + m^{-\kappa} + \hn
\end{equation}
for $\hat J(u)$ defined in \eqref{inf-est-hatJ} as long as $\hn = o(1)$, $m\zeta^2_m\log^2 n = o(n h)$, and $m^{-\kappa} \log n = o(1)$.
\end{lemma}
\begin{proof}
Note that \eqref{eq: sigma estimator bound} follows from Lemma \ref{Lemma:Rudelson}. Hence, it suffices to show \eqref{eq: j estimator bound}. To do so, observe that conditions of the lemma imply Conditions of Theorem \ref{Thm:SeriesRates}, and so for any $\varphi\in(0,1)$, there exists $B > 0$ such that with probability at least $1 - \varphi$, we have $(u,\widehat \beta(u))\in R_{n,m}$ for all $u\in\mathcal U$ for $R_{n,m}$ defined in \eqref{eq: Rnm definition} with $r_n = B\sqrt{m/n} = o(1)$. For this $R_{n,m}$, define
\begin{eqnarray}\label{EQ: est Q and J appendix}
\label{EQ: est J} \epsilon_4(m,n) &=&
\frac{1}{2\sqrt n\hn} \sup_{ { \alpha \in S^{m-1}, \atop (u, \beta) \in
R_{m,n} } } \Big|  \Gn\Big(1\{ |Y_i - Z_i'\beta| \leq \hn \}
(\alpha' Z_i)^2\Big) \Big|, \\
\label{eq: epsilon 6 definition}
\epsilon_5(m,n) &=&  \sup_{ { \alpha \in S^{m-1}, \atop (u,
\beta) \in R_{n,m} } }\left| \frac{1}{2\hn}\ E\Big[1\{ |Y_i - Z_i'\beta| \leq
\hn \} (\alpha' Z_i)^2\Big] - \alpha' J(u) \alpha  \right|.
\end{eqnarray}
Then on the event that $(u,\hat\beta(u))\in R_{n,m}$ for all $u\in\mathcal U$,
$$
\sup_{u\in\mathcal{U}}\|\hat J(u) - J(u)\| \leq \epsilon_4(m,n) + \epsilon_5(m,n).
$$
But by Lemma \ref{bound e6 and e7},
\begin{align*}
\epsilon_4(m,n) + \epsilon_5(m,n)
& \lesssim_P \sqrt{\frac{m \zeta_m^2\log n}{n\hn}} + \frac{m\zeta_m^2\log n}{n\hn} + m^{-\kappa} + \zeta_m \sqrt{m/n}+ \hn \notag\\
& \lesssim \sqrt{\frac{m \zeta_m^2 \log n}{n\hn}} + m^{-\kappa} + \hn.
\end{align*}
Hence, \eqref{eq: epsilon 6 definition} follows. This completes the proof of the lemma.
\end{proof}

\subsection{Bounds on Various Empirical Errors}\label{Sec:VerifyE-Q}
Here we provide probabilistic bounds for the error terms $\epsilon_0(m,n)$--$\epsilon_5(m,n)$ along with some auxiliary bounds. Our results rely on empirical processes techniques; in particular, they rely on the maximal inequalities derived in Section \ref{Sec:MaxIneq}.

\begin{lemma}[Bounds on Approximation Error for Uniform Linear Approximation]\label{bound ApproxULA} Under Condition S,
\begin{equation}\label{eq: ru definition}
\tilde r_u := \frac{1}{\sqrt{n}} \sum_{i=1}^n Z_i\Big(1\{ Y_i \leq Q(u,X_i)\}-1\{ Y_i \leq Z_i'\beta(u)\}\Big), \ \ u \in \mathcal{U},
\end{equation}
satisfies
\begin{equation}\label{eq: ru bound uniform}
\sup_{u \in \mathcal{U}} \|\tilde r_u\| \lesssim_P \sqrt{m^{1-\kappa} \log n }
 +  \frac{\zeta_m m \log n}{\sqrt{n}}.
\end{equation}
In addition, $E[\tilde r_u] = 0$ for all $u\in\mathcal U$.
\end{lemma}

\begin{proof} We first prove the second asserted claim. Fix $u\in\mathcal U$. Note that
$$
E\Big[Z \cdot1\{Y \leq Q(u,X)\}\Big] = E\Big[Z \cdot1\{Q(U,X) \leq Q(u,X)\}\Big] = E\Big[Z\cdot 1\{U \leq u\}]\Big] = u\Ep[Z].
$$
Also, since $\beta(u)$ is the solution of the optimization problem \eqref{define beta}, the first-order conditions imply that
$$
E\Big[Z\cdot 1\{Y\leq Z'\beta(u)\}\Big] = uE[Z].
$$
Thus,
$$
E\Big[Z\cdot (1\{Y\leq Q(u,X)\} - 1\{Y\leq Z'\beta(u)\})\Big] = 0,
$$
and so $E[\tilde r_u] = 0$, which is the second asserted claim.

To prove the first asserted claim, recall the class of functions $\mathcal A_{m,n}$ defined in Lemma \ref{Lemma:VC-A}. Since $E[\widetilde r_u]=0$ for all $u\in\mathcal{U}$, it follows that
\begin{align*}
\sup_{u \in \mathcal{U}} \|\tilde r_u\| & = \sup_{f \in \mathcal{A}_{m,n}} |
\Gn(f)| \\
& \lesssim_P  J(m)  \Big(\sup_{f \in \mathcal{A}_{m,n}}
 \Ep\[ f^2\]   +  n^{-1}   J(m)^2 \bar F_m^2  \log n  \Big)^{1/2}\log^{1/2} n
\end{align*}
by the third maximal inequality in Lemma \ref{Lemma:MaxIneq2}, where $\bar F_m = \zeta_m$ by Condition $S$ and $J(m)\lesssim \sqrt m$ by Lemma \ref{Lemma:VC-A}. Hence, \eqref{eq: ru bound uniform} holds provided we can show that
\begin{equation} \label{Eq:upper bound on f-squaredSeries}
\sup_{f \in \mathcal{A}_{m,n}}
 \Ep \[f^2\]  \lesssim m^{-\kappa}. 
\end{equation}
In turn, to show \eqref{Eq:upper bound on f-squaredSeries}, note that for any $f \in \mathcal{A}_{m,n}$, there exist $\alpha\in S^{m-1}$ and $u\in\mathcal U$ such that
\begin{align*}
| f(X,Y)| &= |\alpha' Z|\cdot \Big| 1\{Y\leq Q(u,X) \} -  1\{Y\leq Z'\beta(u) \}\Big|\\
& \leq   |\alpha' Z| \cdot 1\Big\{ |Y - Z'\beta(u)| \leq |R(u,X)| \Big\},
\end{align*}
where the second line holds because $R(u,X) = Q(u,X) - Z'\beta(u)$. Thus,
$$
E[|f(X,Y)|^2] \lesssim m^{-\kappa} E[|\alpha' Z|^2] \lesssim m^{-\kappa},
$$
since $Z = Z(X)$ and we have by Condition S that (i) the conditional pdf of $Y$ given $X$ is bounded from above, (ii) $|R(u,X)|\lesssim m^{-\kappa}$, and (iii) all eigenvalues of the matrix $E[Z Z']$ are bounded from above. This gives \eqref{Eq:upper bound on f-squaredSeries} and completes the proof of the lemma.
\end{proof}

\begin{lemma}[Bounds on $\epsilon_0(m,n)$ and $\sup_{u\in\mathcal U}\|U(u)\|$]\label{Lemma:error-rate-0}
Under Condition S, we have
$$
 \epsilon_0(m,n) \lesssim_P \displaystyle  \sqrt{m}\left(1+   \sqrt{m^{-\kappa}\log n} + \frac{\sqrt{m}\zeta_m \log n}{\sqrt{n}}\right) \ \ \mbox{and} \ \ \sup_{u\in\mathcal{U}}\| \mathbb{U}(u)\| \lesssim_P \sqrt{m},  \\
$$
for $\epsilon_0(m,n)$ and $\mathbb U(u)$ defined in \eqref{eq: three epsilons} and \eqref{Def:U}, respectively.
\end{lemma}
\begin{proof}
First, we establish the bound on $\sup_{u\in\mathcal U}\|\mathbb U(u)\|$. We have
$$
\sup_{u\in\mathcal{U}}\| \mathbb{U}(u)\|^2 =  \sup_{u\in\mathcal{U}}\frac{1}{n}\sum_{j=1}^m\left(\sum_{i=1}^n Z_{ij} (u-1\{U_i \leq u\})\right)^2.
$$
Therefore, by the triangle inequality,
$$
E\left[\sup_{u\in\mathcal{U}}\| \mathbb{U}(u)\|^2\right] \leq \frac{1}{n} \sum_{j=1}^m E\left[\sup_{u\in\mathcal{U}}\Big(\sum_{i=1}^n Z_{ij} (u-1\{U_i \leq u\})\Big)^2\right].
$$
In addition, since $E[Z_{i j}(u-1\{U_i \leq u\})]=0$, by Theorem 2.14.1 in van~der Vaart and Wellner \cite{vdV-W},
$$
E\left[\sup_{u\in\mathcal U}\Big(\sum_{i=1}^n Z_{i j}(u - 1\{U_i \leq u\})\Big)^2\right] \lesssim n E[|Z_{1 j}|^2]
$$
uniformly over $j=1,\dots,m$ since the function class
$$
\mathcal G_n = \Big\{(Z,U)\mapsto Z\cdot (u - 1\{U \leq u\}\colon u\in\mathcal U\Big\},
$$
mapping $\mathbb R\times[0,1]$ into $\mathbb R$, has an envelope $F(Z,U) = |Z|$, and its uniform entropy numbers satisfy
$$
\sup_Q N(\epsilon\|G_{n}\|_{Q,2},\mathcal G_{n},L_2(Q))\lesssim (1/\epsilon)^{O(1)},\quad \text{uniformly over }0<\epsilon\leq 1,
$$
by Lemmas \ref{VCclass} and \ref{Lemma:ProductEntropy} (here we used the fact that the function class  $\{U \mapsto 1\{U\leq u\}\colon u\in\mathcal U\}$, mapping $[0,1]$ into $\mathbb R$, has VC index $O(1)$).
In addition, $E[|Z_{1 j}|^2] \leq \|E[Z_1 Z_1']\| \lesssim 1$ uniformly over $j=1,\dots,m$ by Condition S. Combining these inequalities shows that
$E[\sup_{u\in\mathcal{U}}\| \mathbb{U}(u)\|^2] \lesssim  m$,
and so $\sup_{u\in\mathcal{U}}\| \mathbb{U}(u)\| \lesssim_P \sqrt m$ by Markov's inequality, which is the asserted claim for $\sup_{u\in\mathcal{U}}\| \mathbb{U}(u)\|$.

Second, to bound $\epsilon_0(m,n)$, observe that $\epsilon_0(m,n)$ is equal to
\begin{align*}
 & \sup_{u\in\mathcal U}\Big\|\frac{1}{\sqrt n}\sum_{i=1}^n\Big(Z_i\cdot(1\{Y_i\leq Z_i\beta(u)\} - u) -  E\Big[Z_i\cdot(1\{Y_i\leq Z_i\beta(u)\} - u)\Big]\Big)\Big\|\\
&\quad  = \sup_{u\in\mathcal U}\Big\|\frac{1}{\sqrt n}\sum_{i=1}^n\Big(Z_i\cdot(1\{Y_i \leq Q(u,X_i)\} - u) - E\Big[Z_i\cdot(1\{Y_i \leq Q(u,X_i)\} - u)\Big]\Big) - \tilde r_u\Big\|\\
&\quad = \sup_{u\in\mathcal U}\Big\|\frac{1}{\sqrt n}\sum_{i=1}^n Z_i\cdot (1\{U_i \leq u\} - u) -\tilde r_u\Big\| = \sup_{u\in\mathcal U}\Big\|\mathbb U(u) + \tilde r_u\Big\| \leq \sup_{u\in\mathcal U}\|\mathbb U(u)\| + \sup_{u\in\mathcal U}\|\tilde r_u\|,
\end{align*}
where the second line follows from the definition of $\tilde r_u$ in \eqref{eq: ru definition} and the fact that $E[\tilde r_u] = 0$ for all $u\in\mathcal U$, which is proven in Lemma \ref{bound ApproxULA}, and the third line follows from the definition of $\mathbb U(u)$ and the fact that $Y_i = Q(U_i,X_i)$ with the function $u\mapsto Q(u,X_i)$ being increasing for all $i=1,\dots,n$. Hence,
\begin{align*}
\epsilon_0(m,n) &\leq \sup_{u\in\mathcal U}\|\mathbb U(u)\| +\sup_{u\in\mathcal{U}}\| \tilde r_u\|  \\
& \lesssim_P  \sqrt m + \sqrt{m^{1-\kappa}\log n} + \frac{\zeta_m m \log n}{\sqrt{n}}
\end{align*}
by the bound on $\sup_{u\in\mathcal U}\|\mathbb U(u)\|$ derived above and the bound on $\sup_{u\in\mathcal U}\|\tilde r_u\|$ derived in Lemma \ref{bound ApproxULA}. This completes the proof of the lemma.
\end{proof}

\begin{lemma}[Bounds on $\epsilon_1(m,n)$ and $\epsilon_2(m,n)$]\label{bound e1 and e2} Under Condition S, we have
\begin{align}
&\epsilon_1(m,n) \lesssim_P \Big(m  r_n \zeta_m \log n\Big)^{1/2}
 + \frac{m  \zeta_m}{\sqrt{n}} \log n,\label{eq: epsilon1 bound}\\
&\epsilon_2(m,n) \lesssim_P \sqrt{n}r_n^2\zeta_m + \sqrt{n}m^{-\kappa}r_n\label{eq: epsilon2 bound}
\end{align}
for $\epsilon_1(m,n)$ and $\epsilon_2(m,n)$ defined in \eqref{eq: three epsilons}.
\end{lemma}
\begin{proof}
We first prove \eqref{eq: epsilon1 bound}. Recall the class of functions $\mathcal F_{m,n}$ defined in Lemma \ref{Lemma:VC-F}. It follows that
\begin{align*}
\epsilon_1(m,n)
&=\sup_{f\in\mathcal F_{m,n}}|\mathbb G_n(f)|\\
&\lesssim_P J(m)\Big(\sup_{f\in\mathcal F_{m,n}} E[f^2] + n^{-1}J(m)^2\bar F_m^2\log n\Big)^{1/2}\log^{1/2}n
\end{align*}
by the third maximal inequality in Lemma \ref{Lemma:MaxIneq2}, where $\bar F_m = \zeta_m$ by Condition S and $J(m)\lesssim \sqrt m$ by Lemma \ref{Lemma:VC-F}. Hence, \eqref{eq: epsilon1 bound} holds provided we can show that
\begin{equation} \label{Eq:upper bound on f-squared}
\sup_{f \in \mathcal{F}_{m,n}} \Ep [f^2]  \lesssim r_n\zeta_m.
\end{equation}
In turn, to show \eqref{Eq:upper bound on f-squared}, note that for any $f\in\mathcal F_{m,n}$, there exist $\alpha\in S^{m-1}$, $u\in\mathcal U$, and $\beta\in\mathbb R^m$ such that $\|\beta - \beta(u)\|\leq r_n$ and
\begin{align*}
|f(Z,Y)|
& = |\alpha'Z|\cdot \Big|1\{Y\leq Z'\beta\} - 1\{Y\leq Z'\beta(u)\}\Big|\\
&\leq |\alpha'Z|\cdot \Big|1\{|Y - Z'\beta(u)|\leq |Z'(\beta - \beta(u))|\}\Big|\\
&\leq |\alpha'Z|\cdot 1\{|Y - Z'\beta(u)|\leq r_n\zeta_m\},
\end{align*}
where the third line follows from the Cauchy-Schwarz inequality and the observation that $\|Z\| = \|Z(X)\|\leq \zeta_m$, almost surely. Thus,
$$
E[|f(Z,Y)|^2] \lesssim r_n\zeta_m E[|\alpha'Z|^2] \lesssim r_n \zeta_m,
$$
like in the proof of Lemma \ref{bound ApproxULA}. This gives \eqref{Eq:upper bound on f-squared} and completes the proof of \eqref{eq: epsilon1 bound}.

Second, we prove \eqref{eq: epsilon2 bound}. By Lemma \ref{Lemma:JJ}, the matrix  $\widetilde J(u) = E[f_{Y|X}(Z'\beta(u)| X)Z Z']$ satisfies
$$
 \sqrt{n}\Big|\alpha'(J(u)-\widetilde J(u))(\beta -
\beta(u))\Big| \leq \sqrt n\|\alpha\|\cdot\|J(u) - \widetilde J(u)\|\cdot\|\beta - \beta(u)\| \lesssim \sqrt{n}m^{-\kappa}r_n
$$
uniformly over $\alpha\in S^{m-1}$ and $(u,\beta)\in R_{n,m}$. Thus, if we define
$$
\epsilon_2(m,n,\alpha,u,\beta) = \sqrt n\Big|
\alpha'\Big(E[\psi_i(\beta, u)] -
E[\psi_i(\beta(u),u)]\Big) - \alpha'\widetilde J(u)(\beta -
\beta(u))\Big|,
$$
then
$$
 \epsilon_2(m,n) \lesssim \sup_{\alpha \in S^{m-1}, (u,\beta) \in R_{n,m}} \epsilon_2(m,n, \alpha,u,\beta) + \sqrt{n}m^{-\kappa}r_n.
$$
Next, note that by the law of iterated expectations and the mean-value theorem,
$$
\alpha'\Big(E[\psi_i(\beta, u)] - E[\psi_i(\beta(u),u)]\Big) = E\Big[f_{Y|X}(Z'\widetilde\beta|X)\cdot (\alpha' Z)\cdot (Z'(\beta - \beta(u)))\Big]
$$
for some $\widetilde \beta$ on the line segment between $\beta(u)$ and $\beta$. In addition, it follows from Condition S that
$$
\Big|f_{Y|X}(Z'\widetilde\beta|X) - f_{Y|X}(Z'\beta(u)|X)\Big|\lesssim |Z'(\widetilde \beta - \beta(u))| \leq |Z'(\beta - \beta(u))| \lesssim \zeta_m r_n
$$
uniformly over $X\in\mathcal X$, $\alpha\in S^{m-1}$, and $(u,\beta)\in R_{n,m}$. Hence,
\begin{align*}
\epsilon_2(m,n,\alpha,u,\beta) &= \sqrt n \Big| E\Big[ (f_{Y|X}(Z'\widetilde\beta|X) - f_{Y|X}(Z'\beta(u)|X)) \cdot(\alpha'Z) \cdot(Z' (\beta - \beta(u))) \Big]   \Big|\\
&  \lesssim   \sqrt n \zeta_m r_n \cdot E\Big[ \alpha'Z Z'(\beta - \beta(u)) \Big] \lesssim \sqrt n \zeta_m r_n^2
\end{align*}
uniformly over $\alpha\in S^{m-1}$ and $(u,\beta)\in R_{n,m}$ by Condition S. Combining presented inequalities gives \eqref{eq: epsilon2 bound} and completes the proof of the lemma.
\end{proof}


\begin{lemma}[Bound on $\epsilon_3(m,n)$]\label{bound e3} Let $\hat \beta(u)$ be a solution to the perturbed QR problem \eqref{eq: perturbed optimization problem}. If the data are in general position so that (\ref{def:genpos}) holds, we have
$$
\epsilon_3(m,n)  \leq \min\left( \frac{\zeta_m m }{\sqrt{n}}, \phi_n \sqrt m\right)
$$
holds with probability $1$ for $\epsilon_3(m,n)$ and $\phi_n$ defined in \eqref{eq: epsilon_3 definition} and \eqref{Def:phi}, respectively.
\end{lemma}
\begin{proof}
Note that the dual problem associated with the perturbed QR problem \eqref{eq: perturbed optimization problem} is
$$ \max_{(u-1)\leq a_i \leq  u} \En[Y_ia_i] : \En[Z_ia_i] = - \mathcal{A}_n(u).$$
Letting $\widehat a(u)$ denote the solution for the dual problem above, and letting $a_i(\widehat\beta(u)) := ( u - 1\{Y_i \leq Z_i'\hat\beta(u)\} )$, we have by the triangle inequality that
\begin{align*}
\epsilon_3(m,n)
&\leq  \sup_{\alpha\in S^{m-1}, u\in \mathcal{U}} \sqrt{n}\left|\En\[ (Z_i'\alpha) ( a_i(\widehat\beta(u)) - \hat a_i(u))\]\right| \\
& \qquad  + \sup_{u\in \mathcal{U}}\sqrt{n}\| \En\[ Z_i\hat a_i(u)\] + \mathcal{A}_n(u)\|.
\end{align*}
By dual feasibility $\En\[ Z_i\hat a_i(u) \]= -\mathcal{A}_n(u)$, and the second term is identically equal to zero.

Also, note that  $ a_i(\widehat\beta(u))  \neq \hat a_i(u)$ only if the $i$th point is interpolated. Since the data are in general
position, with probability one the quantile regression interpolates $m$ points ($Z_i'\hat\beta(u) = Y_i$ for $m$ points for every $u\in \mathcal{U}$).

Therefore, noting that $| a_i(\widehat\beta(u)) -\hat a_i(u)|\leq 1$, we have
$$
\epsilon_3(m,n) \leq \sup_{\|\alpha\|\leq 1, u\in \mathcal{U}} \sqrt{n}\sqrt{\En\[ (Z_i'\alpha)^2 \]}\sqrt{\En\[ \{  a_i(\widehat\beta(u)) - \hat a_i(u)\}^2\]} \leq \phi_n \sqrt{m}
$$
and, with probability $1$,
$$
\epsilon_3(m,n) \leq \sup_{\|\alpha\|\leq 1, u\in \mathcal{U}} \sqrt{n} \En\[ 1\{ a_i(\widehat\beta(u))  \neq \hat a_i(u)\} \max_{1\leq i \leq n} \|Z_i\|\]
\leq \frac{ m }{\sqrt{n}} \max_{1\leq i \leq n} \|Z_i\|.
$$
This completes the proof of the lemma.
\end{proof}

\begin{lemma}[Bounds on $\epsilon_4(m,n)$ and $\epsilon_5(m,n)$]\label{bound e6 and e7} Suppose that Condition S holds. In addition, suppose that $\hn = h_n = o(1)$ and $r_n=o(1)$. Then
$$
\epsilon_4(m,n) \lesssim_P \sqrt{\frac{\zeta_m^2m\log n}{n \hn
}}  + \frac{m \zeta_m^2}{n \hn}\log n \ \ \mbox{
and} \ \
\epsilon_5(m,n) \lesssim m^{-\kappa} + r_n \zeta_m + \hn
$$
for $\epsilon_4(m,n)$ and $\epsilon_5(m,n)$ defined in \eqref{EQ: est J} and \eqref{eq: epsilon 6 definition}, respectively.
\end{lemma}
\begin{proof} To bound $\epsilon_4(m,n)$, consider the function class
$$
\mathcal{H}_{m,n} =\Big\{(Z,Y)\mapsto (\alpha'Z)^2\cdot 1\{|Y - Z'\beta|\leq h\}
\colon  \alpha\in S^{m-1}, \ (u,\beta)\in R_{n,m}\Big\}
$$
for $R_{n,m}$ defined in \eqref{eq: Rnm definition}.  Then
\begin{align*}
\epsilon_4(m,n) & =  \frac{1}{2\sqrt n \hn}\sup_{f \in
\mathcal{H}_{n,m}} |\Gn(f)|  \\
& \lesssim_P  \frac{1}{\sqrt n \hn} J(m) \(\sup_{f \in
\mathcal{H}_{m,n}}
 \Ep [f^2]   +  n^{-1}   J(m)^2 \bar F_,^2  \log n \)^{1/2}  \log^{1/2} n
\end{align*}
by the third maximal inequality in Lemma \ref{Lemma:MaxIneq2},
where $\bar F_m = \zeta_m^2$ by Condition S and $J(m) \lesssim \sqrt{m}$ by Lemma \ref{Lemma:VC-H}. In addition,
\begin{align*}
\sup_{f \in \mathcal{H}_{m,n}} \Ep [f^2]
& = \sup_{\alpha\in S^{m-1}}\sup_{(u,\beta)\in R_{n,m}} E\Big[(\alpha'Z)^4\cdot 1^2\{|Y - Z'\beta|\leq h\}\Big]  \\
&=\sup_{\alpha\in S^{m-1}}\sup_{(u,\beta)\in R_{n,m}} E\Big[(\alpha'Z)^4\cdot 1\{|Y - Z'\beta|\leq h\}\Big]\\
& \lesssim h\sup_{\alpha\in S^{m-1}}E\Big[(\alpha' Z)^4\Big] \lesssim h\zeta_m^2 \sup_{\alpha\in S^{m-1}} E\Big[(\alpha' Z)^2\Big]\lesssim h\zeta_m^2
\end{align*}
by Condition S. The bound for $\epsilon_4(m,n)$ follows from combining the inequalities above.

To show the bound on $\epsilon_5(m,n)$, let $f'_{Y|X}(y|x)$ denote the derivative of the function $y\mapsto f_{Y|X}(y|x)$. By Condition S, we have for some finite constant $\bar f'$ that $|f_{Y|X}'(y|x)|\leq \bar{f'}$ for all $y\in \mathcal Y_x$ and $x\in \mathcal X$. Therefore,
\begin{align*}
E\Big[ 1\{ |Y-Z'\beta| \leq \hn\} (\alpha'Z)^2\Big] & =  E\Big[ (\alpha'Z)^2 \int_{- \hn}^{\hn} f_{Y|X}(Z'\beta+t|X)dt\Big] \\
& =  E\Big[ (\alpha'Z)^2 \int_{- \hn}^{\hn} \Big(f_{Y|X}(Z'\beta|X) + t f_{Y|X}'(Z'\beta+\tilde t|X)\Big) dt\Big] \\
& =  2\hn E\Big[ f_{Y|X}(Z'\beta|X)(\alpha'Z)^2 \Big] + O\Big( \hn^2\bar f' E[ (Z'\alpha)^2]\Big)\\
& =  2\hn E\Big[ f_{Y|X}(Z'\beta|X)(\alpha'Z)^2 \Big] + O( \hn^2)
\end{align*}
for some $\tilde t$ between 0 and $t$ by the mean-value theorem.
Moreover, for any $(u,\beta) \in R_{m,n}$,
\begin{align*}
E\Big[ f_{Y|X}(Z'\beta|X)(\alpha'Z)^2 \Big] & =  E\Big[ f_{Y|X}(Z'\beta(u)|X)(\alpha'Z)^2 \Big]  \\
& \quad +   E\Big[ ( f_{Y|X}(Z'\beta|X)- f_{Y|X}(Z'\beta(u)|X))(\alpha'Z)^2 \Big] \\
& =  E\Big[ f_{Y|X}(Z'\beta(u)|X)(\alpha'Z)^2 \Big] +  O\Big(E[ \bar{f'}Z'(\beta-\beta(u))(\alpha'Z)^2 ]\Big) \\
& =  \alpha'\widetilde J(u)\alpha + O\Big(r_n \zeta_m E[(\alpha'Z)^2]\Big)\\
& =  \alpha'J(u)\alpha + O(m^{-\kappa}) + O(r_n\zeta_m),
\end{align*}
by Condition S, where $\widetilde J(u) = \Ep[ f_{Y|X}(Z'\beta(u)|X)ZZ']$ and where the last line follows from Lemma \ref{Lemma:Auxiliary Matrix}. Thus, $\epsilon_5(m,n) \lesssim m^{-\kappa} + r_n\zeta_m + h$. This completes the proof of the lemma.
\end{proof}

\section{A Lemma on Strong Approximation for a Process in the Sup-Norm}\label{YSup}
In this section, we develop a novel technique to construct a Gaussian coupling for certain empirical processes in the sup-norm. The technique is based on an extension of the Yurinskii's coupling in the Euclidean norm to arbitrary norm (including the sup-norm). Although the technique can be applied to general empirical processes and may be of independent interest, we apply it to a particular empirical process, which is needed to establish Theorems \ref{thm: gaussian coupling for t process} and \ref{thm: weighted bootstrap alternative conditions}, for brevity of the paper.

\begin{lemma}\label{lemma:LinearStrong} (Approximation of Linear Functionals of a Sequence of Empirical Processes of Increasing Dimension by a Sequence of Gaussian Processes)
Let $(Z_i)_{i=1}^n$ be a sequence of non-stochastic vectors in $\mathbb R^m$ and consider the empirical process $\mathbb{U}_{n}$ in $[\ell^{\infty}(\mathcal{U})]^m$, $\mathcal{U} \subseteq (0,1)$, defined by
$$
\mathbb{U}_{n}(u) = \mathbb{G}_n\(v_iZ_i \psi_i(u)\), \ \ \psi_i(u) = u-1\{U_i \leq u\},\quad u\in\mathcal U,
$$
where $(U_i,v_i)_{i=1}^n$ is an i.i.d. sequence of pairs of independent random variables where $U_i\sim\Uniform (0,1)$, $E[v_i^2] = 1$, $E[|v_i|^4]\lesssim 1$, and $\max_{1\leq i\leq n}|v_i| \lesssim_P \log n$.  Suppose that the vectors $Z_i$ are such that
$$
\sup_{\alpha\in S^{m-1}}\En\[(\alpha'Z_i)^2\] \lesssim 1 \ \text{and} \ \max_{1\leq i\leq n}\|Z_i\| \lesssim \zeta_m,
$$
where $\zeta_m$ satisfies $1/\zeta_m\lesssim 1$. Also, let $\mathcal W$ be a set in $\mathbb R^d$ and let $I$ be a subset of $\mathcal U\times\mathcal W$ whose dimension $d_I$ is independent of $n$ and whose diameter is bounded uniformly over $n$. Moreover, let $\ell\colon I \to \mathbb R^m$ be a function such that
$$
\|\ell(u,w)\|\leq \xi_{\ell} \ \text{and} \ \|\ell(u,w) - \ell(\tilde u,\tilde w)\| \leq L_{\ell}\|(u, w) - (\tilde u, \tilde w)\|
$$
for all $(u,w)$ and $(\tilde u,\tilde w)$ in $I$, where $\xi_{\ell}$ and $L_{\ell}$ satisfy $\xi_{\ell} \lesssim 1$ and $L_{\ell}\gtrsim 1$, and consider the process $\mathbb L_n$ in $\ell^{\infty}(I)$ defined by
$$
\mathbb L_n(u,w) = \ell(u,w)'\mathbb U_n(u),\quad (u,w)\in I.
$$
Finally, suppose that
\begin{equation}\label{eq: growth condition lem 31}
L_{\ell}^{2 d_I}\zeta_m^2 = o(n^{1 - \varepsilon})
\end{equation}
for some constant $\varepsilon > 0$. Then there exists a sequence of zero-mean Gaussian processes $(\Z_n)_{n\geq 1}$ with a.s. continuous paths such that (i) the covariance functions of $\Z_n$ coincide with those of $\mathbb L_n$, namely,
$$
E[\Z_n(u,w) \Z_n(\tilde u,\tilde w)] = \En[\ell(u,w)'Z_i Z_i'\ell(\tilde u,\tilde w)] ( u \wedge \tilde u -  u  \tilde u),
$$
for all   $(u,w)$ and $(\tilde u,\tilde w)$ in $I$,
and (ii) $\Z_n$ approximates $\mathbb{L}_{n}$, namely,
$$
\sup_{(u,w) \in I} \Big| \mathbb L_n(u,w) - \Z_n(u,w) \Big| \lesssim_P  o(n^{-\varepsilon'}),
$$
where $\varepsilon'$ is some constant.
\end{lemma}

\begin{proof} The proof is based on the use of maximal inequalities and our extension of Yurinskii's coupling to the sup norm; see Lemma \ref{Lemma:YCdmetric} below. Although the proof is closely related to that of Lemma \ref{lemma: strong}, the details of calculations are rather different.

For each $j\geq 1$, we define a sequence of projections $\pi_j\colon I \to I$ associated with a covering of the set $I$ by balls of radius $2^{-j}$, where each element of the set $I$ is mapped to the center of the ball containing this element (if an element is contained by several balls, choose one according to some predetermined rule). We assume that all balls used to construct the projection $\pi_j$ are also used to construct projections $\pi_{j'}$ for all $j' > j$. Since the set $I$ is such that its dimension $d_I$ is independent of $n$ and its diameter is bounded uniformly over $n$, it is standard to show that the projection $\pi_j$ can be constructed using $k_j$ balls with $k_j\lesssim 2^{j d_I}$.

In what follows, given a process $\Z$ in $\ell^{\infty}(I)$ and its projection $\Z \circ \pi_j$, we shall identify the process $\Z \circ \pi_j$ with a random vector $\Z \circ \pi_j$ in $\Bbb{R}^{k_j}$, when convenient. Analogously, given a random vector $W$ in $\Bbb{R}^{k_j}$, we shall identify it with a process $W$ in $\ell^{\infty}(I)$ that is piece-wise constant.

The following relations will be proven below for some $\varepsilon'>0$ and $j = j_n \to \infty$:
\begin{enumerate}
\item (Finite-Dimensional Approximation)
$$
r_1 =\sup_{(u,w)  \in I}|\mathbb L_n(u,w) - \mathbb L_n\circ \pi_j(u,w)| \lesssim_P o(n^{-\varepsilon'});
$$
    \item (Coupling with a Normal Vector) there exists $\N_{nj} =_d N(0, \text{var}[\mathbb L_n \circ \pi_j])$ such that
$$
r_2 = \| \N_{nj} - \mathbb L_n\circ \pi_j \|_{\infty} \lesssim_P o(n^{-\varepsilon'});
$$
 \item (Embedding a Normal Vector into a Gaussian Process) there exists a Gaussian process $\Z_n$ with properties stated in the lemma such that $\N_{nj} = \Z_n \circ \pi_j \text{ a.s.};$
      \item (Infinite-Dimensional Approximation)
$$r_3=\sup_{(u,w) \in I}|\Z_n(u,w) - \Z_n\circ \pi_j(u,w)| \lesssim_P o(n^{-\varepsilon'}).$$
\end{enumerate}
The result then follows from the triangle inequality:
$$
\sup_{(u,w) \in I} |\mathbb L_n(u,w) - \Z_n(u,w)| \leq r_1 + r_2 + r_3.
$$

We now prove relations (1)-(4). Relation (1) follows from
\begin{align}\begin{split}\nonumber
r_1  & = \sup_{(u,w) \in I} |\mathbb L_n(u,w) - \mathbb L_n\circ \pi_j(u,w)|\\
  & \leq  \sup_{\|(u,w) - (\tilde u,\tilde w)\| \leq 2^{-j} } |  \mathbb L_n(u,w)- \mathbb L_n(\tilde u,\tilde w) |  \\
  & \lesssim_P  \sqrt{(L_{\ell}^2 2^{-2j} + 2^{-j})\log n} + \sqrt{\frac{\zeta_m^2\log^4 n}{n}} \leq o(n^{-\varepsilon'}),
\end{split}\end{align}
where the first inequality in the third line holds by Lemma \ref{Lemma:Fact1linear} and the second by substituting $2^j = L_{\ell} n^{\tilde \varepsilon}$ for an appropriate $\tilde \varepsilon>0$ and using \eqref{eq: growth condition lem 31}.

Relation (2) follows from the use of Yurinskii's coupling extended to the sup norm; see Lemma \ref{Lemma:YCdmetric} with $\|\cdot\|_d = \|\cdot\|_{\infty}$. In order to apply the coupling, set $\widehat I_j = \{\pi_j(u,w)\colon (u,w)\in I\}$ and for all $i=1,\dots,n$, let
$$
\xi_i = \Big(v_i\cdot(\ell(u,w)'Z_i)\cdot \psi_i(u)/\sqrt{n}\Big)_{(u,w)\in\widehat I_j},
$$
so that $\xi_i$ is a zero-mean random vector in $\mathbb R^{k_j}$, and we have $\mathbb L_n\circ\pi_j =_d \sum_{i=1}^n \xi_i$. Then
\begin{align*}
&\sum_{i=1}^n E\Big[ \|\xi_i\|^2\|\xi_i\|_{\infty} \Big]\\
& \quad = \frac{1}{n^{3/2}}\sum_{i=1}^n E\Big[ |v_i|^3\Big(\sum_{(u,w)\in\widehat I_j} |\ell(u,w)'Z_i)|^2|\psi_i(u)|^2\Big)\Big(\max_{(u,w)\in\widehat I_j}|\ell(u,w)'Z_i|\cdot |\psi_i(u)|\Big)\Big]\\
&\quad \lesssim \frac{1}{n^{3/2}}\sum_{i=1}^n E\Big[ \Big(\sum_{(u,w)\in \widehat I_j}|\ell(u,w)'Z_i|^2\Big)\Big(\max_{(u,w)\in\widehat I_j}|\ell(u,w)'Z_i|\Big) \Big]\\
&\quad \leq \frac{1}{n^{3/2}}\sum_{i=1}^n \xi_{\ell}\zeta_m \sum_{(u,w)\in \widehat I_j}\ell(u,w)'Z_i Z_i'\ell(u,w) = \frac{k_j \xi_{\ell}^3\zeta_m}{\sqrt n}.
\end{align*}
Further, for $i=1,\dots,n$, let $g_i\sim N(0,\Sigma_i)$, where $\Sigma_i = \text{var}(\xi_i)$. Note that the elements of the matrix $\Sigma_j$ are given by
$$
\ell(u,w)'Z_i Z_i'\ell(\tilde u,\tilde w)(u\wedge \tilde u - u\tilde u)/n,\quad \text{for $(u,w)$ and $(\tilde u,\tilde w)$ in $\widehat I_j$.}
$$
Hence,
$$
\Big(E[\|g_i\|_{\infty}^2]\Big)^{1/2} \lesssim \frac{\xi_{\ell}\zeta_m \sqrt{\log k_j}}{\sqrt n}.
$$
Also,
\begin{align*}
\sum_{i=1}^n \Big(E[\|g_i\|^4]\Big)^{1/2}
& = \sum_{i=1}^n \Big(E\Big[\Big(\sum_{l = 1}^{k_j}g_{i l}^2\Big)^2\Big]\Big)^{1/2} \leq \sum_{i=1}^n \sum_{l=1}^{k_j}\Big(E[g_{i l}^4]\Big)^{1/2}\\
&\lesssim \sum_{i=1}^n\sum_{l=1}^{k_j}E[g_{i l}^2] = \sum_{i=1}^n E[\|g_i\|^2] = \sum_{i=1}^n E[\|\xi_i\|^2]\\
& \lesssim \frac{1}{n}\sum_{i=1}^n\sum_{(u,w)\in\widehat I_j}|\ell(u,w)'Z_i|^2 \lesssim k_j \xi_{\ell}^2.
\end{align*}
Therefore,
$$
\sum_{i=1}^n E\Big[\|g_i\|^2\|g_i\|_{\infty}\Big] \leq \sum_{i=1}^n \Big(E[\|g_i\|^4]\Big)^{1/2}\Big(E[\|g_i\|_{\infty}^2]\Big)^{1/2} \lesssim \frac{k_j \xi_{\ell}^3 \zeta_m\sqrt{\log k_j}}{\sqrt n}.
$$
So, applying Lemma \ref{Lemma:YCdmetric} with
$$
\beta = \sum_{i=1}^n E\Big[\|\xi_i\|^2\|\xi_i\|_{\infty}\Big] + \sum_{i=1}^n E\Big[\|g_i\|^2\|g_i\|_{\infty}\Big] \lesssim \frac{k_j \xi_{\ell}^3\zeta_m\sqrt{\log k_j}}{\sqrt n}
$$
shows that for any $\delta > 0$, there exists a vector $\mathcal N_{n j} =_d N(0,\text{var}[\mathbb L_n\circ \pi_j])$ such that
\begin{align}
P\Big(\|\mathbb L_n\circ\pi_j - \mathcal N_{n j}\|_{\infty} > 3\delta\Big)
&\leq \min_{t\geq 0}\left(2P(\|Z\|_{\infty} > t) + \frac{\beta}{\delta^3}t^2\right)\notag \\
&= o(1) + O\left(\frac{k_j\xi_{\ell}^3\zeta_m(\log k_j)^{3/2}}{\delta^3 \sqrt n}\right)\label{eq: yur coupling rhs},
\end{align}
where $Z$ is a standard Gaussian $k_j$-dimensional random vector and where in the second line we set $t = C(\log k_j)^{1/2}$ for a sufficiently large constant $C$. Since $\xi_{\ell} \lesssim 1$ and $k_j \lesssim 2^{j d_I}$, substituting $2^j = L_{\ell} n^{\tilde \varepsilon}$ and using \eqref{eq: growth condition lem 31} shows that there exists $\delta = o(n^{-\varepsilon'})$ such that the expression in \eqref{eq: yur coupling rhs} is $o(1)$ leading to relation (2).

Relation (3) follows from the a.s. embedding of a finite-dimensional random normal vector into a path of a continuous Gaussian process, which is possible by Lemma \ref{lemma:prescribed} applied with $m = 1$ and $\mathcal U$ replaced by $I$.

Finally, relation (4) follows from
\begin{align*}
r_3  & = \sup_{(u,w) \in I} |\Z_n(u,w) - \Z_n\circ \pi_j(u,w)| \\
& \leq  \sup_{\|(u,w) - (\tilde u,\tilde w)\| \leq 2^{-j} } |  \Z_n(u,w) - \Z_n(\tilde u,\tilde w) |  \\
& \lesssim_P  \sqrt{ L_{\ell} 2^{-j} \log(2^j)} \leq o (n^{-\varepsilon'}),
\end{align*}
where the first inequality in the third line holds by Lemma \ref{Lemma:Fact4-b extension} and the second by substituting $2^j = L_{\ell} n^{\tilde \varepsilon}$ and using \eqref{eq: growth condition lem 31}. This completes the proof of the lemma.
\end{proof}

\begin{lemma}[Finite-Dimensional Approximation]\label{Lemma:Fact1linear}
Consider the setting of Lemma \ref{lemma:LinearStrong}. Then for any $\gamma > 0$, the process $\mathbb L_n$ satisfies
$$
 \sup \Big| \mathbb L_n(u,w) - \mathbb L_n(\tilde u,\tilde w) \Big| \lesssim_P \sqrt{(L_{\ell}^2\gamma^2 + \gamma)\log n} + \sqrt{\frac{\zeta_m^2\log^4 n}{n}},
$$
where the supremum is over all $(u,w)$ and $(\tilde u,\tilde w)$ in $I$ such that $\|(u,w) - (\tilde u,\tilde w)\|\leq \gamma$.
\end{lemma}
\begin{proof}
Consider the function class
$$
\mathcal L_{m,n} = \Big\{(Z,U,v)\mapsto f_{u,w}(Z,U,v) = \ell(u,w)'Z\colon (u,w)\in I\Big\},
$$
mapping $B_m(0,\zeta_m)\times[0,1]\times\mathbb R$. Note that $L_{m,n}(Z,U,v) = (\xi_{\ell} \vee 1)\zeta_m$ is its envelope. Also, since (i) for all $(u,w)$ and $(\tilde u,\tilde w)$ in $I$ we have
$$
\Big|\ell(u,w)'Z - \ell(\tilde u,\tilde w)'Z\Big| \leq \|\ell(u,w) - \ell(\tilde u,\tilde w)\|\cdot \|Z\|\leq L_{\ell}\|(u,w) - (\tilde u,\tilde w)\| \zeta_m
$$
and (ii) the set $I$ is such that its dimension is independent of $n$ and its diameter is bounded uniformly over $n$, it follows that the uniform entropy numbers of $\mathcal L_{m,n}$ satisfy
\begin{equation}\label{eq: FC1 - lem 32}
\sup_Q \log N(\epsilon\|L_{m,n}\|_{Q,2},\mathcal L_{m,n},L_2(Q)) \lesssim \log(L_{\ell}/\epsilon),\quad\text{uniformly over }0<\epsilon\leq 1.
\end{equation}
Next, consider the function class
$$
\widetilde {\mathcal L}_{m,n} = \Big\{(Z,U,v)\mapsto v\cdot(\ell(u,w)'Z)\cdot (u - 1\{U\leq u\})\colon (u,w)\in I\Big\}.
$$
The function $\widetilde L_{m,n}(Z,U,v) = |v|\cdot L_{m,n}(Z,U,v)$ is an envelope of $\widetilde {\mathcal L}_{m,n}$. By \eqref{eq: FC1 - lem 32} and Lemma \ref{Lemma:ProductEntropy}, the uniform entropy numbers of $\widetilde{\mathcal L}_{m,n}$ satisfy
\begin{equation}\label{eq: FC2 lem 32}
\sup_Q \log N(\epsilon\|\widetilde L_{m,n}\|_{Q,2},\widetilde{\mathcal L}_{m,n},L_2(Q))\lesssim \log(L_{\ell}/\epsilon),\quad\text{uniformly over }0<\epsilon\leq 1.
\end{equation}
Further, consider the function class
\begin{multline*}
\overline{\mathcal L}_{m,n} = \Big\{(Z,U,v)\mapsto v\cdot(\ell(u,w)'Z)\cdot (u - 1\{U \leq u\}) \\
- v\cdot(\ell(\tilde u,\tilde w)'Z)\cdot (\tilde u - 1\{U \leq \tilde u\}\colon (u,w)\in I, \ (\tilde u,\tilde w)\in I, \  \|(u,w) - (\tilde u,\tilde w)\|\leq \gamma \Big\}.
\end{multline*}
The function $2\widetilde L_{m,n}$ is its envelope. By \eqref{eq: FC2 lem 32} and Lemma \ref{Lemma:ProductEntropy}, the uniform entropy numbers of $\overline{\mathcal L}_{m,n}$ satisfy
\begin{equation}\label{eq: FC3 lem 32}
\sup_Q \log N(\epsilon\|2\widetilde L_{m,n}\|_{Q,2},\overline {\mathcal L}_{m,n},L_2(Q))\lesssim \log(L_{\ell}/\epsilon),\quad\text{uniformly over }0<\epsilon\leq 1.
\end{equation}
With this notation, we have
$$
\sup_{\|(u,w) - (\tilde u,\tilde w)\|\leq \gamma}\Big|\mathbb L_n(u,w) - \mathbb L_n(\tilde u,\tilde w)\Big| = \sup_{f\in\overline {\mathcal L}_{m,n}}|\mathbb G_n f|,
$$
and so to prove the asserted claim, we can apply the second part of Lemma \ref{Lemma:MaxIneq2} using the sequence of independent observations $(Z_i,U_i,v_i)_{i=1}^n$.

By \eqref{eq: FC3 lem 32}, \eqref{Eq:J} is satisfied with $\omega = L_{\ell}$, $J(m) = O(1)$, and $F_m = 2\widetilde {\mathcal L}_{m,n}$. Also, note that
$$
\max_{1\leq i\leq n}F_m(Z_i,U_i,v_i) \lesssim_P M = \xi_{\ell}\zeta_m \log n
$$
since $\max_{1\leq i\leq n}v_i \lesssim_P \log n$. Moreover, $\log \omega \lesssim \log n$ since $\log L_{\ell} \lesssim \log n$. Further,
\begin{align*}
&\sup_{f\in\overline{\mathcal L}_{m,n}}\frac{1}{n}\sum_{i=1}^n \Ep[f(Z_i,U_i,v_i)^2]\\
&\qquad = \sup_{\|(u,w) - (\tilde u,\tilde w)\|\leq \gamma}\frac{1}{n}\sum_{i=1}^n \Ep\Big[v_i^2\Big((\ell(u,w)'Z_i)\cdot\psi_i(u) - (\ell(\tilde u,\tilde w)' Z_i)\cdot \psi_i(\tilde u)\Big)^2\Big]\\
&\qquad = \sup_{\|(u,w) - (\tilde u,\tilde w)\|\leq \gamma}\frac{1}{n}\sum_{i=1}^n \Ep\Big[\Big((\ell(u,w)'Z_i)\cdot\psi_i(u) - (\ell(\tilde u,\tilde w)' Z_i)\cdot \psi_i(\tilde u)\Big)^2\Big]\\
&\qquad \lesssim \sup_{\|(u,w) - (\tilde u,\tilde w)\|\leq \gamma}\frac{1}{n}\sum_{i=1}^n \Big((\ell(u,w) - \ell(\tilde u,\tilde w))' Z_i\Big)^2\\
&\qquad\quad + \sup_{\|(u,w) - (\tilde u,\tilde w)\|\leq \gamma}\frac{1}{n}\sum_{i=1}^n \Ep\Big[ (\ell(\tilde u,\tilde w)' Z_i)^2 \cdot (\psi_i(u) - \psi_i(\tilde u))^2 \Big]\\
&\qquad \lesssim L_{\ell}^2 \gamma^2 \varphi + \gamma(1 - \gamma)\xi_{\ell}^2\varphi
\lesssim L_{\ell}^2 \gamma^2 + \xi_{\ell}^2 \gamma,
\end{align*}
where $\varphi = \sup_{\alpha\in S^{m-1}}\En[(\alpha' Z_i)^2]\lesssim 1$ and where we used arguments similar to those used in the proof of Lemma \ref{Lemma:Fact1}. Moreover,
\begin{align*}
&\sup_{f\in\overline{\mathcal L}_{m,n}}\frac{1}{n}\sum_{i=1}^n \Ep[f(Z_i,U_i,v_i)^4]\\
&\qquad = \sup_{\|(u,w) - (\tilde u,\tilde w)\|\leq \gamma}\frac{1}{n}\sum_{i=1}^n \Ep\Big[v_i^4\Big((\ell(u,w)'Z_i)\cdot\psi_i(u) - (\ell(\tilde u,\tilde w)' Z_i)\cdot \psi_i(\tilde u)\Big)^4\Big]\\
&\qquad = \sup_{\|(u,w) - (\tilde u,\tilde w)\|\leq \gamma}\frac{1}{n}\sum_{i=1}^n \Ep\Big[\Big((\ell(u,w)'Z_i)\cdot\psi_i(u) - (\ell(\tilde u,\tilde w)' Z_i)\cdot \psi_i(\tilde u)\Big)^4\Big]\\
&\qquad \lesssim \sup_{\|(u,w) - (\tilde u,\tilde w)\|\leq \gamma}\frac{1}{n}\sum_{i=1}^n \xi_{\ell}^2\zeta_m^2\Ep\Big[\Big((\ell(u,w)'Z_i)\cdot\psi_i(u) - (\ell(\tilde u,\tilde w)' Z_i)\cdot \psi_i(\tilde u)\Big)^2\Big]\\
&\qquad \lesssim \xi_{\ell}^2\zeta_m^2 (L_{\ell}^2\gamma^2 + \xi_{\ell}^2 \gamma),
\end{align*}
where we used the bound above and the inequalities $\|\ell(u,w)\| \lesssim \xi_{\ell}$, $\max_{1\leq i\leq n}\Ep[v_i^4] \lesssim 1$ and $\max_{1\leq i\leq n}\|Z_i\| \lesssim \zeta_m$. Substituting these bounds into the second part of Lemma \ref{Lemma:MaxIneq2} gives
\begin{align*}
\sup_{f\in\overline{\mathcal L}_{m,n}}|\mathbb G_n f|
& \lesssim_P \sqrt{(L_{\ell}^2\gamma^2 + \xi_{\ell}^2\gamma)\log n} + \sqrt{\frac{\xi_{\ell}^2\zeta_m^2\log^4 n}{n}}\\
& \lesssim \sqrt{(L_{\ell}^2\gamma^2 + \gamma)\log n} + \sqrt{\frac{\zeta_m^2\log^4 n}{n}}.
\end{align*}
This completes the proof of the lemma.
\end{proof}

\begin{lemma}[Infinite-Dimensional Approximation]\label{Lemma:Fact4-b extension}
Consider the setting of Lemma \ref{lemma:LinearStrong}, and let $\Z_n\colon I \to \mathbb R$ be a zero-mean Gaussian process whose covariance structure is given by
$$
E[G_n(u,w) G_n(\tilde u,\tilde w)] = \En[\ell(u,w)'Z_i Z_i'\ell(\tilde u,\tilde w)](u\wedge \tilde u - u \tilde u),
$$
for all $(u,w)$ and $(\tilde u,\tilde w)$ in $I$. Then for any $\gamma\in(0,1/2)$,
$$
\sup_{\|(u,w) - (\tilde u,\tilde w)\|\leq \gamma} |G_n(u,w) - G_n(\tilde u,\tilde w)| \lesssim_P \sqrt{ L_{\ell}\gamma\log(1 / \gamma)}.
$$
\end{lemma}
\begin{proof}
To prove the asserted claim, we apply the maximal inequality \eqref{eq: gaussian maximal inequality} from Lemma \ref{Lemma:Fact4} to the zero-mean Gaussian process $X_n\colon I\times I \to \mathbb R$ defined by
$$
X_{n,t} = G_n(u,w) - G_n(\tilde u,\tilde w), \ t = (u,w,\tilde u,\tilde w), \ \|(u,w) - (\tilde u,\tilde w)\|\leq \gamma.
$$
Note that for any $(u,w)$ and $(\tilde u,\tilde w)$ in $I$ with $\|(u,w) - (\tilde u,\tilde w)\|\leq \gamma$ and $u\leq \tilde u$, we have
\begin{align*}
E\Big[\Big(G_n(u,w) - G_n(\tilde u,\tilde w)\Big)^2\Big]
& = \En\Big[\ell(u,w)'Z_i Z_i'\ell(u,w)\Big] (u - u^2)\\
&\quad + \En\Big[\ell(\tilde u,\tilde w)'Z_i Z_i'\ell(\tilde u,\tilde w)\Big](\tilde u - \tilde u^2)\\
&\quad - 2\En\Big[\ell(u,w)'Z_i Z_i'\ell(\tilde u,\tilde w)\Big](u - u\tilde u)\\
&\leq I_1 + I_2 \lesssim \xi_{\ell}(\xi_{\ell} + L_{\ell})\gamma,
\end{align*}
where
\begin{align*}
I_1
& = \Big|\En\Big[\ell(u,w)'Z_i Z_i'\ell(u,w)\Big](u - u^2) - \En\Big[\ell(u,w)'Z_i Z_i'\ell(\tilde u,\tilde w)\Big](u - u\tilde u)\Big|\\
& \leq 2\Big|\En\Big[\ell(u,w)'Z_i Z_i'(\ell(u,w) - \ell(\tilde u,\tilde w))\Big]\Big|
 + \Big|\En\Big[\ell(u,w)'Z_i Z_i'\ell(\tilde u,\tilde w)\Big] (u - \tilde u)\Big|\\
&\lesssim \xi_{\ell}L_{\ell}\gamma + \xi_{\ell}^2\gamma
\end{align*}
and
\begin{align*}
I_2
& = \Big|\En\Big[\ell(\tilde u,\tilde w)'Z_i Z_i'\ell(\tilde u,\tilde w)\Big](\tilde u - \tilde u^2) - \En\Big[\ell(u,w)'Z_i Z_i'\ell(\tilde u,\tilde w)\Big](u - u\tilde u)\Big|\\
& \leq 2\Big|\En\Big[(\ell(\tilde u,\tilde w) - \ell(u,w))'Z_i Z_i'\ell(\tilde u,\tilde w)\Big]\Big|
 + 2\Big|\En\Big[\ell(u,w)'Z_i Z_i'\ell(\tilde u,\tilde w)\Big] (u - \tilde u)\Big|\\
&\lesssim \xi_{\ell}L_{\ell}\gamma + \xi_{\ell}^2\gamma.
\end{align*}
Therefore,
$$
\sigma(X_n)= \left(\sup_{t\in I\times I}E[X_{n,t}^2]\right)^{1/2} \lesssim \Big(\xi_{\ell}(\xi_{\ell} + L_{\ell})\gamma\Big)^{1/2}.
$$
The calculation above also implies that \eqref{eq: entropy bound lem 11} holds for the process $X_n$ with
$$
\epsilon_0 =\sigma(X_n), \ K\lesssim \sqrt{\xi_{\ell}(\xi_{\ell} + L_{\ell})},\  \text{ and }V = d_I,
$$
since the set $I$ has dimension $d_I$, which is independent of $n$, and its diameter is bounded from above uniformly over $n$. Therefore, the result follows by setting
$$
\lambda = C\sqrt{\xi_{\ell}(\xi_{\ell} + L_{\ell})\gamma\log(1 / \gamma)} \lesssim \sqrt{L_{\ell}\gamma\log(1/\gamma)},
$$
where $C$ is a sufficiently large constant, and using \eqref{eq: gaussian maximal inequality}. This completes the proof of the lemma.
\end{proof}

\subsection{Yurinskii's coupling for $d$-norm}
In this section, we extend the original Yurinskii's coupling, used in Appendix \ref{App:StrongGaussianApprox}, from the Euclidean norm $\|\cdot\|$ to arbitrary norm $\|\cdot\|_d$ on $\mathbb R^k$. The new coupling is used in Section \ref{Sec:Functionals} with the sup-norm $\|\cdot\|_{\infty}$ to construct Gaussian couplings for functionals. Replacing the Euclidean norm by the sup-norm is important because it allows us to construct couplings under substantially weakened side conditions.

\begin{lemma}[Yurinskii's coupling, $d$-norm]\label{Lemma:YCdmetric}
Let $\xi_1,\ldots,\xi_n$ be independent random zero-mean $k$-vectors, and let
$$
\beta = \sum_{i=1}^n E[\|\xi_i\|^2 \|\xi_i\|_d]+ \sum_{i=1}^n E[\|g_i\|^2\|g_i\|_d]
$$
be finite, where $g_1,\dots,g_n$ is a sequence of independent random $k$-vectors such that $g_i\sim N(0,{\rm var}(\xi_i))$ for all $i=1,\dots,n$. Let $S = \xi_1 + \cdots + \xi_n$. Then for each $\delta>0$, there exists a random vector $T$ with a $N(0,{\rm var}(S))$ distribution such that
$$P( \| S - T \|_d > 3\delta ) \leq \min_{t\geq 0} \left(2P(\|Z\|_d > t ) + \frac{\beta}{\delta^3}t^2\right),$$
where $Z\sim N(0,I_k)$.
\end{lemma}

\begin{proof}[Proof of Lemma \ref{Lemma:YCdmetric}]
Fix $\delta>0$. By Strassen's theorem (see, for example, Theorem 8 in Section 10 of Pollard \cite{PollardUsers}), one can construct $T$ such that
$P(\|S - T\| > 3\delta ) \leq \epsilon'$ if and only if $P(S \in A) \leq P(T\in A^{3\delta}) + \epsilon'$ for all Borel subsets $A$ of $\RR^k$, where $A^{\delta}=\{ x \in \RR^k\colon  d(x,A)\leq \delta\}$ and the metric $d(\cdot,\cdot)$ is induced by the norm $\|\cdot\|$. Below we will apply this theorem.

Fix a Borel subset $A$ of $\mathbb R^k$. Using Lemma \ref{Lemma:ApproxLinfty}, construct a smooth function $f\colon \mathbb R^k\to \mathbb R$ that approximates the indicator function of $A$, namely, a function $f$ such that for all $x,y\in\mathbb R^k$,
\begin{align*}
&\Big|f(x+y) - f(x) - y'\nabla f(x) - \frac{1}{2}y'\nabla^2f(x)y\Big| \leq \frac{\|y\|^2\|y\|_d}{\sigma^2\delta},\\
&(1-\epsilon)1\{x\in A\} \leq f(x) \leq \epsilon + (1-\epsilon)1\{x\in A^{3\delta}\},
\end{align*}
where $\epsilon = P(\|Z\|_d > \delta/\sigma)$ and $\sigma$ is to be chosen below. 
Then we have
\begin{align*}
P(S \in A) & = E[ 1\{ S\in A\} - f(S) ] + E[ f(S) - f(T) ] + E[f(T)] \\
& \leq \epsilon E[ 1\{ S\in A\}] + (\sigma^2\delta)^{-1}\sum_{i=1}^n E\Big[\|\xi_i\|^2\|\xi_i\|_d +\|g_i\|^2\|g_i\|_d\Big] \\
& \quad + \epsilon + (1-\epsilon)E[1\{T \in A^{3\delta}\}]\\
& \leq P(T \in A^{3\delta}) + 2\epsilon + (\sigma^2\delta)^{-1} \sum_{i=1}^n E\Big[\|\xi_i\|^2\|\xi_i\|_d + \|g_i\|^2\|g_i\|_d\Big]\\
& = P(T \in A^{3\delta}) + 2\epsilon + (\sigma^2\delta)^{-1}\beta,
\end{align*}
where in the first inequality we used
\begin{align*}
E[f(S) - f(T)]
&= \sum_{i=1}^n E[f(X_i + Y_i) - f(X_i + W_i)]\\
&\leq \sum_{i=1}^n E\Big[f(X_i) + Y_i'\nabla f(X_i) + \frac{1}{2}Y_i'\nabla^2 f(X_i) Y_i + \frac{\|Y_i\|^2\|Y_i\|_d}{\sigma^2\delta}\Big]\\
&\quad - \sum_{i=1}^n E\Big[f(X_i) + W_i'\nabla f(X_i) + \frac{1}{2}W_i'\nabla^2 f(X_i) W_i - \frac{\|W_i\|^2\|W_i\|_d}{\sigma^2\delta}\Big]\\
& = \sum_{i=1}^n \frac{\|Y_i\|^2\|Y_i\|_d + \|W_i\|^2\|W_i\|_d}{\sigma^2 \delta}
\end{align*}
for $X_i = \xi_1 + \dots + \xi_{i - 1} + g_{i + 1} + \dots + g_n$, $Y_i = \xi_i$, $W_i = g_i$, and we assumed (without loss of generality) that the sequences $(\xi_i)_{i=1}^n$ and $(g_i)_{i=1}^n$ are independent. Therefore, setting $\sigma = \delta / t$ for $t > 0$, we obtain
$$
P(S\in A) \leq P(T\in A^{3\delta}) + 2P(\|Z\|_d > t) + \frac{\beta}{\delta^3}t^2.
$$
The asserted claim now follows by minimizing the right-hand of this inequality with respect to $t>0$ and applying Strassen's theorem.
\end{proof}

In the next lemma, we construct a function $f$ used in the previous lemma.
\begin{lemma}[Smooth Approximation, $d$-norm]\label{Lemma:ApproxLinfty} Let $A$ be a Borel subset of $\RR^k$, $Z$ be a $N(0,I_k)$ random vector, and let $d(\cdot,\cdot)$ be the metric on $\mathbb R^k$ induced by a norm $\|\cdot\|_d$. Also, for positive constants $\sigma$ and $\delta$, define
$$ g(x) = \left(1 - \frac{d(x,A^\delta)}{\delta} \right)_+ \ \ \mbox{and} \ \  f(x) = E[ g(x+\sigma Z)].$$
Then for all $x,y\in\mathbb R^k$, the function $f$ satisfies
\begin{align*}
&\Big|f(x+y) - f(x) - y'\nabla f(x) - \frac{1}{2}y'\nabla^2f(x)y\Big| \leq \frac{\|y\|^2\|y\|_d}{\sigma^2\delta},\\
&(1-\epsilon)1\{x\in A\} \leq f(x) \leq \epsilon + (1-\epsilon)1\{x\in A^{3\delta}\},
\end{align*}
where $\epsilon = P(\|Z\|_d > \delta / \sigma )$. 
\end{lemma}
\begin{proof} The proof follows closely the proof of Lemma 18 in Section 10 of Pollard \cite{PollardUsers}, with the difference that we allow for a general metric $d(\cdot,\cdot)$. Let $\phi_\sigma$ denote the pdf of a $N(0,\sigma^2 I_k)$ random vector. We have
$$ \frac{\partial }{\partial z}\phi_\sigma(z)= -\frac{z}{\sigma^2}\phi_\sigma(z) \ \ \ \mbox{and} \ \ \ \frac{\partial^2}{\partial zz'}\phi_\sigma(z) = \left(\frac{zz'}{\sigma^4}-\frac{I_k}{\sigma^2}\right)\phi_\sigma(z).$$
Further, for fixed $x$ and $y$, consider the function $h(t):= f(x+t y)$, $0\leq t\leq 1$. Its second derivative is
$$
\ddot{h}(t) = \sigma^{-2} E[ g(x+ty+\sigma Z) \left( (y'Z)^2 - \|y\|^2 \right)].
$$
In addition, the function $g$ has the following Lipschitz property with respect to the metric $d$:
\begin{align*}
 |g(x+ty+\sigma Z) - g(x+\sigma Z)| &\leq \left|\frac{d(x+ty+\sigma Z,A^\delta) - d(x+\sigma Z, A^\delta)}{\delta} \right| \\
 &\leq d(x+ty+\sigma Z,x+\sigma Z) / \delta = td(y,0) / \delta.
\end{align*}
Therefore,
$$
|\ddot{h}(t) - \ddot{h}(0) | \leq \frac{td(y,0)}{\sigma^2\delta}E\left[ (y'Z)^2 + \|y\|^2 \right] = \frac{2t \|y\|^2d(y,0)}{\sigma^2\delta}.
$$
The first asserted claim now follows from a Taylor expansion for the function $h$, namely, we have for some $t^* \in (0,1)$ that
$$
| h(1) - h(0) - \dot{h}(0) - \frac{1}{2} \ddot{h}(0)| = \frac{1}{2}|\ddot{h}(t^*) - \ddot{h}(0)|.
$$

To establish the second asserted claim, we proceed as follows. By construction, we have for any $x \in A$ that $g(x+\sigma Z) = 1$ if $d(Z,0) \leq \delta/\sigma$ since $x+\sigma Z \in A^\delta$ in this case. Therefore,
$$ f(x) \geq E\Big[g(x+\sigma Z) \cdot 1\{d(Z,0) \leq \delta/\sigma\}\Big] \geq P( d(Z,0) \leq \delta / \sigma ) = 1 - P( d(Z,0) > \delta / \sigma ).$$

Also by construction, we have for any $x \notin A^{3\delta}$ that $g(x+\sigma Z) = 0$ if $d(Z,0) \leq \delta/\sigma$ since $d(x+\sigma Z,A^\delta) > \delta$ in this case. Therefore
$$ f(x) = E\Big[g(x+\sigma Z) \cdot 1\{d(Z,0) \leq \delta/\sigma\}\Big] + E\Big[g(x+\sigma Z) \cdot 1\{d(Z,0) > \delta/\sigma\}\Big] \leq P( d(0,Z) > \delta / \sigma ).$$
This completes the proof of the lemma.
\end{proof}

\section{A Lemma on Gaussian Approximation for the Supremum of a Weighted Bootstrap Process}\label{sec: weighted bootstrap}

\begin{lemma}\label{lemma:SupremumCoupling} (Gaussian Approximation for Supremum of Weighted Bootstrap Process)
Let $(Z_i)_{i=1}^n$ be a sequence of non-stochastic vectors in $\mathbb R^m$ and consider the empirical process $\mathbb{U}_{n}$ in $[\ell^{\infty}(\mathcal{U})]^m$, $\mathcal{U} \subseteq (0,1)$, defined by
$$
\mathbb{U}_{n}(u) = \mathbb{G}_n\(v_iZ_i \psi_i(u)\), \ \ \psi_i(u) = u-1\{U_i \leq u\},\quad u\in\mathcal U,
$$
where $(U_i,v_i)_{i=1}^n$ is an i.i.d. sequence of pairs of independent random variables where $U_i\sim\Uniform (0,1)$, $E[v_i] = 0$, $E[v_i^2] = 1$, $E[|v_i|^4]\lesssim 1$, and $\max_{1\leq i\leq n}|v_i| \lesssim_P \log n$.  Suppose that the vectors $Z_i$ are such that
$$
\sup_{\alpha\in S^{m-1}}\En\[(\alpha'Z_i)^2\] \lesssim 1 \ \text{and} \ \max_{1\leq i\leq n}\|Z_i\| \lesssim \zeta_m,
$$
where $\zeta_m$ satisfies $1/\zeta_m\lesssim 1$ and $\zeta_m^2 = o(n^{1 - \varepsilon})$ for some $\varepsilon >0 $. Also, let $\mathcal W$ be a set in $\mathbb R^d$ and let $I$ be a subset of $\mathcal U\times\mathcal W$ whose dimension $d_I$ is independent of $n$ and whose diameter is bounded uniformly over $n$. Moreover, let $\ell\colon I \to \mathbb R^m$ be a function such that
$$
\|\ell(u,w)\|\leq \xi_{\ell} \ \text{and} \ \|\ell(u,w) - \ell(\tilde u,\tilde w)\| \leq L_{\ell}\|(u, w) - (\tilde u, \tilde w)\|
$$
for all $(u,w)$ and $(\tilde u,\tilde w)$ in $I$, where $\xi_{\ell}$ and $L_{\ell}$ satisfy $\xi_{\ell} \lesssim 1$ and $\log L_{\ell}\lesssim \log n$, and define
$$
V_n = \sup_{(u,w)\in I}|\ell(u,w)'\mathbb U_n(u)|.
$$
Then for each $n$, there exists a random variable $\bar V_n$ that is (i) such that
\begin{equation}\label{eq: suprema approximation main}
|V_n - \bar V_n| = o_P(n^{-\varepsilon'})
\end{equation}
for some $\varepsilon' >0$, (ii) independent of $(U_i)_{i=1}^n$, and (iii) equal in distribution to the random variable $\sup_{(u,w)\in I} |G_n(u,w)|$, where $G_n(\cdot,\cdot)$ is a zero-mean Gaussian process with a.s. continuous sample paths and the covariance function
$$
E[\Z_n(u,w) \Z_n(\tilde u,\tilde w)] = \En[\ell(u,w)'Z_i Z_i'\ell(\tilde u,\tilde w)] ( u \wedge \tilde u -  u  \tilde u),
$$
for all   $(u,w)$ and $(\tilde u,\tilde w)$ in $I$.
\end{lemma}
\begin{proof}
Throughout the proof, we will take the process $G_n(\cdot,\cdot)$ from the statement of the lemma to be independent of $(U_i,v_i)_{i=1}^n$. In addition, we will use $\mathcal B$ to denote the class of all Borel sets in $\mathbb R$. 

To prove the asserted claim, we will show that there exists a sequence of positive numbers $(\eta_n)_{n\geq 1}$ converging to zero $\varepsilon' > 0$ such that for
$$
\mathcal U_1^n = \left\{(U_i)_{i=1}^n\colon P(V_n \in A|(U_i)_{i=1}^n) \leq P\left(\sup_{(u,w)\in I} |G_n(u,w)| \in A^{\delta_n}\right) + \eta_n\text{ for all }A\in\mathcal B \right\}
$$
we have
\begin{equation}\label{eq: strassen conditional}
P(\mathcal U_1^n) \geq 1 - o(1),
\end{equation}
where $\delta_n = n^{-\varepsilon'}$ and $A^{\delta_n} = \{x\in \mathbb R\colon \inf_{y\in A}|x - y|\leq \delta_n\}$. Then applying Strassen's theorem conditional on $(U_i)_{i=1}^n$ on the event $\mathcal U_1^n$ shows that on this event one can construct a random variable $\bar V_n$ that is conditional on $(U_i)_{i=1}^n$ equal in distribution to $\sup_{(u,w)\in I}|G_n(u,w)|$ and is such that
\begin{equation}\label{eq: conditional on U}
P\Big(|V_n - \bar V_n|> \delta_n | (U_i)_{i=1}^n\Big) \leq \eta_n \ \text{ on }\mathcal U_1^n.
\end{equation}
Outside of the event $\mathcal U_1^n$, we define $\bar V_n = \sup_{(u,w)\in I}|G_n(u,w)|$. Then combining \eqref{eq: strassen conditional} with \eqref{eq: conditional on U} gives \eqref{eq: suprema approximation main} and since the conditional distribution of $\bar V_n$ given $(U_i)_{i=1}^n$ is equal to that of $\sup_{(u,w)\in I}|G_n(u,w)|$, it follows that $\bar V_n$ is independent of $(U_i)_{i=1}^n$ and is equal in distribution to  $\sup_{(u,w)\in I}|G_n(u,w)|$. Thus, it remains to prove \eqref{eq: strassen conditional}.

To prove \eqref{eq: strassen conditional}, recall that $\log L_{\ell} \lesssim \log n$ by assumption, and so $\log L_{\ell} \leq C_L\log n$ for some constant $C_L$. Let $\gamma_n = 1/n^{C_L + 1}$ and note that since the set $I$ is such that its dimension is independent of $n$ and its diameter is bounded uniformly over $n$, there exists a sequence $(u_j,w_j)_{j=1}^{k_n}$ in $I$ with $\log k_n \lesssim \log n$ such that balls with centers at $(u_j,w_j)$ and radius $\gamma_n$ cover $I$. Then applying Lemma \ref{Lemma:Fact1linear} shows that
\begin{align*}
&\sup_{\|(u,w) - (\tilde u,\tilde w)\|\leq \gamma_n}\Big|\ell(u,w)'\mathbb U_n(u) - \ell(\tilde u,\tilde w)'\mathbb U_n(\tilde u)\Big|\\
&\quad  \lesssim_P \sqrt{(L_{\ell}^2\gamma^2_n + \gamma_n)\log n} + \sqrt{\frac{\zeta_m^2\log^4 n}{n}} = o(n^{-\varepsilon'})
\end{align*}
for some $\varepsilon' >0$, and so there exists a sequence of positive numbers $(\eta_{n,1})_{n\geq 1}$ converging to zero such that the event
$$
\left|\sup_{(u,w)\in I}|\ell(u,w)'\mathbb U_n(u)| - \max_{1\leq j\leq k_n}|\ell(u_j,w_j)\mathbb U_n(u_j)|\right| > \delta_{n,1} = n^{-\varepsilon'}
$$
holds with probability at most $\eta_{n,1}$. Hence, there exists a set of values of $(U_i)_{i=1}^n$, say $\mathcal U_{1,1}^n$, such that $P(\mathcal U_{1,1}^n) \geq 1 - \eta_{n,1}^{1/2} = 1 - o(1)$ and
\begin{equation}\label{eq: approx 1 lem 38}
P\left(\Big|\sup_{(u,w)\in I}|\ell(u,w)'\mathbb U_n(u)| - \max_{1\leq j\leq k_n}|\ell(u_j,w_j)\mathbb U_n(u_j)|\Big| > \delta_{n,1} |(U_i)_{i=1}^n\right) \leq \eta_{n,1}^{1/2} \ \text{ on }\mathcal U_{1,1}^n.
\end{equation}

Next, let $\Sigma^X$ and $\Sigma^Y$ be $k_n\times k_n$-dimensional matrices given by
\begin{align}
&\Sigma^X_{j,l} =  \En\Big[\ell(u_j,w_j)'Z_i Z_i'\ell(u_l,w_l)(u_j - 1\{U_i \leq u_j\})(u_l - 1\{U_i \leq u_l\})\Big],\label{eq: sigma x}\\
&\Sigma^Y_{j,l} = \En\Big[\ell(u_j,w_j)'Z_i Z_i'\ell(u_l,w_l)\Big](u_j\wedge u_l - u_j u_l),\label{eq: sigma y}
\end{align}
for all $j,l = 1,\dots,k_n$. Note that
$
E[\Sigma^X_{j,l}] = \Sigma^Y_{j,l}
$
for all $j,l = 1,\dots,k_n$. In addition,
\begin{align*}
\sigma^2
& = \max_{1\leq j, l\leq k_n} \sum_{i=1}^n E\Big[ (\ell(u_j,w_j)'Z_i Z_i'\ell(u_l,w_l))^2(u_j - 1\{U_i \leq u_j\})^2(u_l - 1\{U_i \leq u_l\})^2 \Big]\\
& \leq \max_{1\leq j,l\leq k_n}\sum_{i=1}^n (\ell(u_j,w_j)'Z_i Z_i'\ell(u_l,w_l))^2 \lesssim n\zeta_m^2.
\end{align*}
Moreover,
$$
M = \max_{1\leq i\leq n}\max_{1\leq j,l\leq k_n}\Big| \ell(u_j,w_j)'Z_i Z_i'\ell(u_l,w_l)(u_j - 1\{U_i \leq u_j\})(u_l - 1\{U_i \leq u_l\}) \Big| \lesssim \zeta_m^2.
$$
Hence, Lemma \ref{lem: CCK maximal inequality} gives
$$
\max_{1\leq j,l\leq k_n} |\Sigma^X_{j,l} - \Sigma^Y_{j,l}| \lesssim_P \sqrt{\frac{\zeta_m^2\log n}{n}} +  \frac{\zeta_m^2\log n}{n} = o(n^{-\varepsilon'})
$$
for some $\varepsilon' >0$. Hence, there exists a set of values of $(U_i)_{i=1}^n$, say $\mathcal U_{1,2}^n$, such that $P(\mathcal U_{1,2}^n) = 1 - o(1)$ and
$$
\max_{1\leq j,l\leq k_n}|\Sigma^X_{j,l} - \Sigma^Y_{j,l}| \leq \delta_{n,2} = n^{-\varepsilon'} \ \text{ on }\mathcal U_{1,2}^n.
$$
In the rest of the proof, we will show that
\begin{equation}\label{eq: remaining inclusion}
\mathcal U_{1,1}^n \cap \mathcal U_{1,2}^n \subset\mathcal U_1^n
\end{equation}
if $\varepsilon'$ is small enough and $\eta_n $ converges to zero slowly enough in the definition of $\mathcal U_1^n$. The desired inequality \eqref{eq: strassen conditional} then follows since $P(\mathcal U_{1,1}^n) = 1 - o(1)$ and $P(\mathcal U_{1,2}^n) = 1 - o(1)$.

To prove \eqref{eq: remaining inclusion}, we proceed in several steps.
First, we apply Lemma \ref{lem: CCK finite dimensional gaussian approximation} conditional on $(U_i)_{i=1}^n$ with vectors $X_i = (\ell(u_j,w_j)'Z_i v_i(u_j - 1\{U_i\leq u_j\}))_{j=1}^{k_n}$ and $Y_i =  (\ell(u_j,w_j)'Z_i e_i(u_j - 1\{U_i\leq u_j\}))_{j=1}^{k_n}$, where $(e_i)_{i=1}^n$ is a sequence of $n$ independent $N(0,1)$ random variables that are independent of $(U_i)_{i=1}^n$. Note that
\begin{align*}
L_n
& = \max_{1\leq j\leq k_n} \frac{1}{n}\sum_{i=1}^n E[|X_{i j}|^3|(U_i)_{i=1}^n] \\
& = \max_{1\leq j\leq k_n} \frac{1}{n}\sum_{i=1}^n E\Big[|\ell(u_j,w_j)'Z_i v_i (u_j - 1\{U_i \leq u_j\})|^3|(U_i)_{i=1}^n\Big]\\
& =  \max_{1\leq j\leq k_n} \frac{1}{n}\sum_{i=1}^n |\ell(u_j,w_j)'Z_i|^3 \leq \max_{1\leq j\leq k_n} \frac{\zeta_m}{n}\sum_{i=1}^n |\ell(u_j,w_j)'Z_i|^2 \lesssim \zeta_m.
\end{align*}
Further, uniformly over $\delta>0$,
\begin{align*}
M_{n,X}(\delta)
& = \frac{1}{n}\sum_{i=1}^n E\Big[\max_{1\leq j\leq k_n} |X_{i j}|^3\cdot 1\Big\{\max_{1\leq j\leq k_n}|X_{i j}| > \delta \sqrt n/\log k_n\Big\}|(U_i)_{i=1}^n\Big]\\
& \leq \frac{\log k_n}{\delta n^{3/2}}\sum_{i=1}^n E\Big[\max_{1\leq j\leq k_n}|X_{i j}|^4 |(U_i)_{i=1}^n\Big] \\
& \lesssim \frac{\log k_n}{\delta n^{3/2}}\sum_{i=1}^n \max_{1\leq j\leq k_n} |\ell(u_j,w_j)'Z_i|^4 \lesssim\frac{\zeta_m^4 \log n}{\delta \sqrt n}.
\end{align*}
In addition, by the same argument, uniformly over $\delta>0$,
$$
M_{n,Y}(\delta) = \frac{1}{n}\sum_{i=1}^n E\Big[\max_{1\leq j\leq k_n} |Y_{i j}|^3\cdot 1\Big\{\max_{1\leq j\leq k_n}|Y_{i j}| > \delta \sqrt n/\log k_n\Big\}|(U_i)_{i=1}^n\Big] \lesssim \frac{\zeta_m^4\log n}{\delta \sqrt n}.
$$
Hence, given that $\zeta_m^2 = o(n^{1 - \varepsilon})$ and that
$$
 \ell(u_j,w_j)'\mathbb U_n(u_j) = \frac{1}{\sqrt n}\sum_{i=1}^n X_{i j},\quad j=1,\dots,k_n,
$$
and denoting
$$
\bar G_{n,j} = \frac{1}{\sqrt n}\sum_{i=1}^n Y_{i j},\quad j=1,\dots,k_n,
$$
and $\delta_{n,3} = n^{-\varepsilon'}$ for sufficiently small $\varepsilon'>0$,
we obtain via Lemma \ref{lem: CCK finite dimensional gaussian approximation} that
\begin{equation}\label{eq: approx 2 lem 38}
P\left(\max_{1\leq j\leq k_n}|\ell(u_j,w_j)'\mathbb U_n(u_j)| \in A |(U_i)_{i=1}^n\right) \leq P\left(\max_{1\leq j\leq k_n}|\bar G_{n,j}| \in A^{\delta_{n,3}} |(U_i)_{i=1}^n\right) + \eta_{n,2}
\end{equation}
for all $A \in \mathcal B$, where $(\eta_{n,2})_{n\geq 1}$ is a sequence of positive numbers converging to zero.

Second, we apply Lemma \ref{lem: CCK gaussian comparison inequality} conditional on $(U_i)_{i=1}^n$ on $\mathcal U_1^n$ with $X = (\bar G_{n,j})_{j=1}^{k_n}$ and $Y = (G_n(u_j,w_j))_{j=1}^{k_n}$. Note that conditional covariance matrices of the vectors $X$ and $Y$ are given by
$$
E[X_{j}X_{l} | (U_i)_{i=1}^n] = \Sigma^X_{j,l},\quad E[Y_{j}Y_{l} | (U_i)_{i=1}^n] = E[Y_j Y_l] = \Sigma^Y_{j,l},\quad j,l = 1,\dots,k_n
$$
for $\Sigma^X_{j,l}$ and $\Sigma^Y_{j,l}$ defined in \eqref{eq: sigma x} and \eqref{eq: sigma y}, respectively. Since on the event $\mathcal U_{1,2}^n$ we have
$$
\max_{1\leq j,l\leq k_n} |\Sigma^X_{j,l} - \Sigma^Y_{j,l}| \leq \delta_{n,2},
$$
it follows from Lemma \ref{lem: CCK gaussian comparison inequality} that on the same event we have
\begin{equation}\label{eq: approx 3 lem 38}
P\left(\max_{1\leq j\leq k_n}|\bar G_{n,j}| \in A |(U_i)_{i=1}^n\right) \leq P\left(\max_{1\leq j\leq k_n}|G_n(u_j,w_j)| \in A^{\delta_{n,2}^{1/3}}\right) + \eta_{n,3}
\end{equation}
for all $A\in\mathcal B$, where $(\eta_{n,3})_{n\geq 1}$ is a sequence of positive numbers converging to zero.

Third, by Lemma \ref{Lemma:Fact4-b extension},
$$
\sup_{\|(u,w) - (\tilde u,\tilde w)\|\leq \gamma_n}\Big|G_n(u,w) - G_n(\tilde u,\tilde w)\Big| \lesssim_P \sqrt{L_{\ell} \gamma_n\log(1/\gamma_n)} = o(n^{-\varepsilon'})
$$
for some $\varepsilon' >0$,  and so
\begin{equation}\label{eq: approx 4 lem 38}
P\left(\Big| \sup_{(u,w)\in I} |G_n(u,w)| - \max_{1\leq j\leq k_n} |G_n(u_j,w_j)|  \Big| > \delta_{n,4} \right) \leq \eta_{n,4},
\end{equation}
where $\delta_{n,4} = n^{-\varepsilon'}$ and $(\eta_{n,4})_{n\geq 1}$ is a sequence of positive numbers converging to zero.

Finally, combining \eqref{eq: approx 1 lem 38}, \eqref{eq: approx 2 lem 38}, \eqref{eq: approx 3 lem 38}, and \eqref{eq: approx 4 lem 38} shows that for all $A\in\mathcal B$,
\begin{align*}
P\Big(\sup_{(u,w)\in I}|\ell(u,w)'\mathbb U_n(u)| \in A|(U_i)_{i=1}^n\Big)
&\leq P\Big(\sup_{(u,w)\in I}|G_n(u,w)|\in A^{\delta_{n,1} + \delta_{n,2}^{1/3} + \delta_{n,3} + \delta_{n,4}}\Big)\\
&\quad + \eta_{n,1}^{1/2} + \eta_{n,2} + \eta_{n,3} + \eta_{n,4} \ \text{ on }\mathcal U_{1,1}^n\cap \mathcal U_{1,2}^n.
\end{align*}
Thus, \eqref{eq: remaining inclusion} follows if in the definition of $\mathcal U_1^n$ we set $\eta_n = \eta_{n,1}^{1/2} + \eta_{n,2} + \eta_{n,3} + \eta_{n,4}$ and $\delta_n = n^{-\varepsilon'}$ for sufficiently small $\varepsilon' > 0$ so that $\delta_n \geq \delta_{n,1} + \delta_{n,2}^{1/3} + \delta_{n,3} + \delta_{n,4}$. This completes the proof of the lemma.
\end{proof}

\subsection{Technical Lemmas}
\begin{lemma}\label{lem: CCK maximal inequality}
Let $Z_1,\dots,Z_n$ be independent random vectors in $\mathbb R^p$ with $p\geq 2$. Define $M = \max_{1\leq i\leq n}\max_{1\leq j\leq p} |Z_{i j}|$ and $\sigma^2 = \max_{1\leq j\leq p}\sum_{i=1}^n E[Z_{i j}^2]$. Then
$$
E\left[\max_{1\leq j\leq p}\left| \sum_{i=1}^n (Z_{i j} - E[Z_{i j}]) \right|\right] \leq C\Big(\sigma\sqrt{\log p} + \sqrt{E[M^2]}\log p\Big),
$$
where $C$ is an absolute constant.
\end{lemma}
\begin{proof}
See Lemma 8 in Chernozhukov, Chetverikov and Kato \cite{CCK15}.
\end{proof}

\begin{lemma}\label{lem: CCK finite dimensional gaussian approximation}
Let $X_1,\dots,X_n$ be independent centered random vectors in $\mathbb R^p$, $p\geq 2$, with finite absolute third moments. Consider the statistic $Z = \max_{1\leq j\leq p} n^{-1/2}\sum_{i=1}^n X_{i j}$. Let $Y_1,\dots,Y_n$ be independent random vectors in $\mathbb R^p$ with $Y_i \sim N(0_p, E[X_i X_i'])$, and define $\widetilde Z = \max_{1\leq j\leq p} n^{-1/2} \sum_{i=1}^n Y_{i j}$. Then for every $\delta>0$ and every Borel subset $A$ of $\mathbb R$, we have
$$
P(Z\in A)\leq P(\widetilde Z\in A^{C_1 \delta}) + \frac{C_2\log^2 p}{\delta^3 \sqrt n}\cdot\Big(L_n + M_{n,X}(\delta) + M_{n,Y}(\delta)\Big),
$$
where $C_1$ and $C_2$ are absolute constants and
\begin{align*}
&L_n = \max_{1\leq j\leq p}\frac{1}{n}\sum_{i=1}^n E[|X_{i j}|^3],\\
&M_{n,X}(\delta) = \frac{1}{n}\sum_{i=1}^n E\Big[\max_{1\leq j\leq p}|X_{i j}|^3 \cdot 1\Big\{\max_{1\leq j\leq p}|X_{i j}| > \delta \sqrt n/\log p\Big\}\Big],\\
&M_{n,Y}(\delta) = \frac{1}{n}\sum_{i=1}^n E\Big[\max_{1\leq j\leq p}|Y_{i j}|^3 \cdot 1\Big\{\max_{1\leq j\leq p}|Y_{i j}| > \delta \sqrt n/\log p\Big\}\Big].
\end{align*}
\end{lemma}
\begin{proof}
See Theorem 3.1 in Chernozhukov, Chetverikov and Kato \cite{CCK2016}.
\end{proof}

\begin{lemma}\label{lem: CCK gaussian comparison inequality}
Let $X = (X_1,\dots,X_p)'$ and $Y = (Y_1,\dots,Y_p)'$ be random vectors in $\mathbb R^p$, $p\geq 2$, with $X\sim N(0_p,\Sigma^X)$ and $Y\sim N(0_p,\Sigma^Y)$. Let $\Delta = \max_{1\leq j,k\leq p}|\Sigma_{j k}^X - \Sigma_{j k}^Y|$, where $\Sigma_{j k}^X$ and $\Sigma_{j k}^Y$ denote the $(j,k)$-th elements of $\Sigma^X$ and $\Sigma^Y$, respectively. Define $Z = \max_{1\leq j\leq p} X_j$ and $\widetilde Z = \max_{1\leq j\leq p} Y_j$. Then for every $\delta>0$ and every Borel subset $A$ of $\mathbb R$,
$$
P(Z\in A) \leq P(\widetilde Z \in A^{\delta}) + C\delta^{-1}\sqrt{\Delta\log p},
$$
where $C$ is an absolute constant.
\end{lemma}
\begin{proof}
See Theorem 3.2 in Chernozhukov, Chetverikov and Kato \cite{CCK2016}.
\end{proof}

\begin{figure}[!hp]
\centering
 \includegraphics[width=.9\textwidth, height=.9\textheight, angle=0]{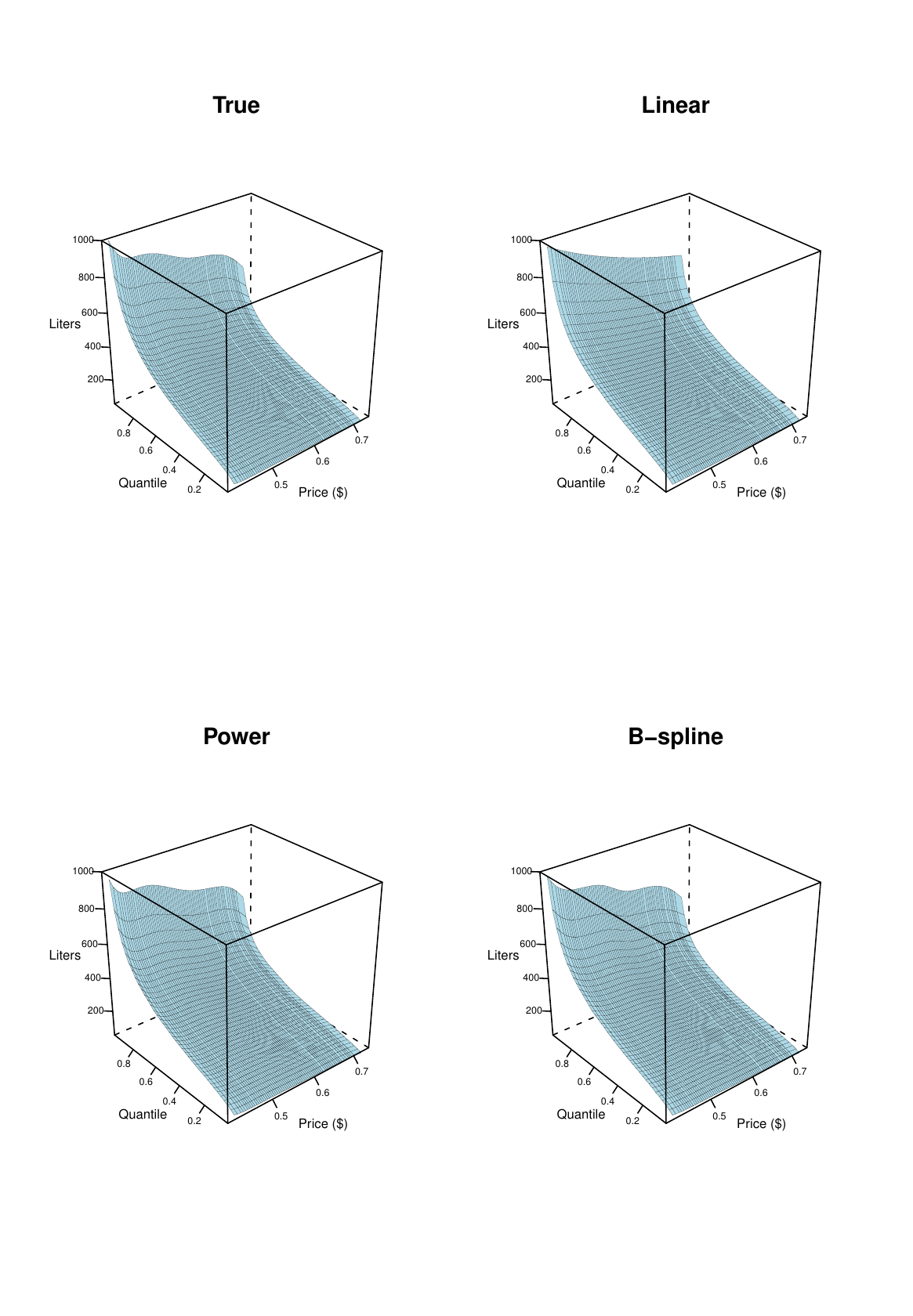}
\caption{\label{fig: estimands demand} Estimands of the quantile
demand surface. Estimands for the linear, power and B-spline series
estimators are obtained numerically using 500,100 simulations.}
\end{figure}

\begin{figure}[!hp]
\centering
 \includegraphics[width=.9\textwidth, height=.9\textheight]{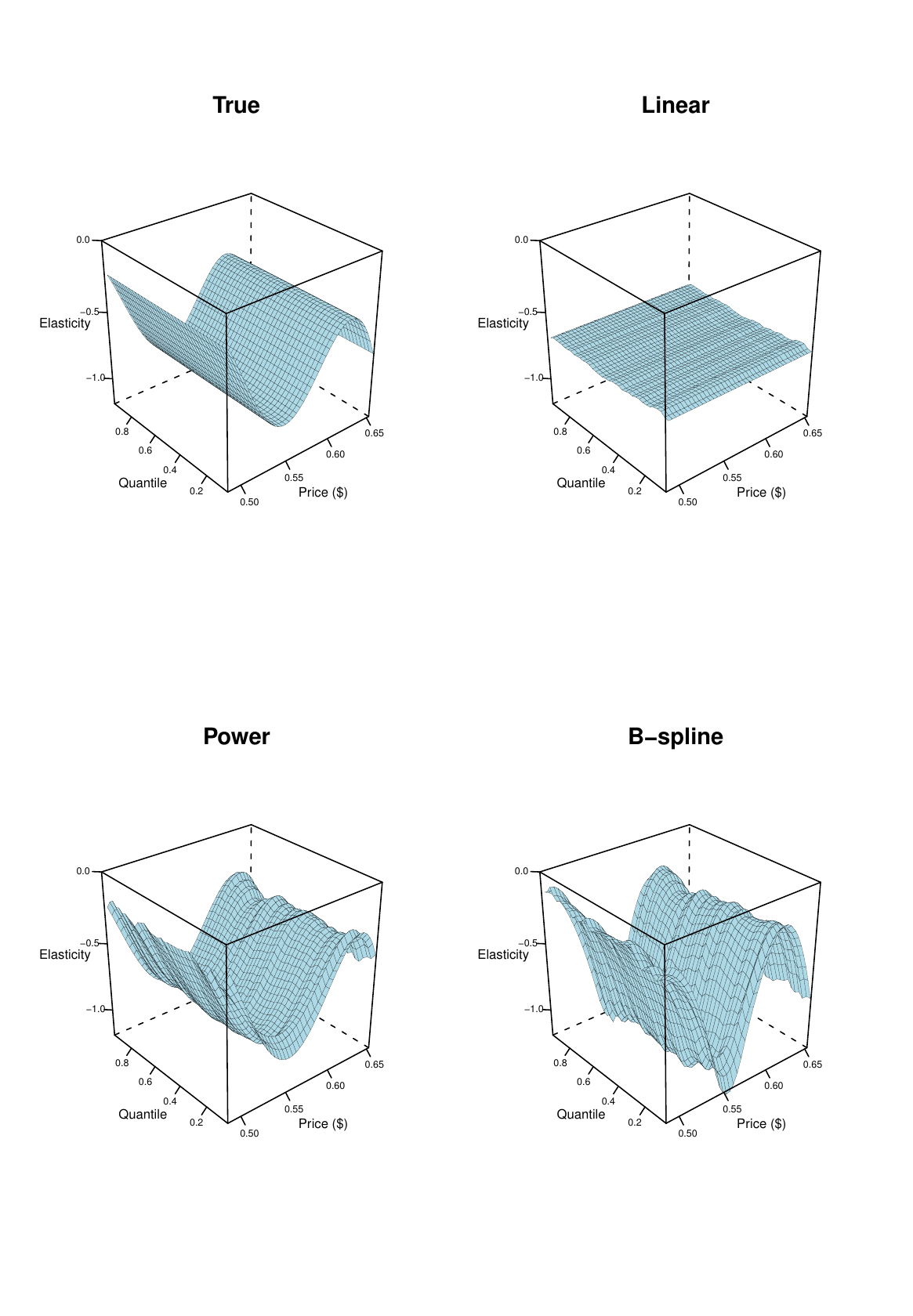}
\caption{\label{fig: estimands elasticity} Estimands of the quantile
price elasticity surface. Estimands for the linear, power and B-spline
series estimators are obtained numerically using 500,100
simulations.}
\end{figure}

\newpage

\bibliographystyle{plain}

\end{document}